%% file: main.tex
\documentclass[prd,superscriptaddress,aps,amsfonts,showpacs,longbibliography,notitlepage,twocolumn,nofootinbib]{revtex4-1}
\pdfoutput=1

\usepackage{CJKutf8}

\include{preamble}
\renewcommand{\revadd}[1]{{\color{red}#1}}

\renewcommand{\Bsectorname}{bit-type}
\renewcommand{\Csectorname}{check-type}
\hypersetup{
    pdftitle={Spatial overhead reduction for 2D hypergraph product codes}
}
\begin{document}
\title{Spatial overhead reduction for 2D hypergraph product codes}

\author{Aarav Pabla}
\email[Current affiliation and contact: UCLA, ]{apabla@ucla.edu}
\affiliation{Joint Center for Quantum Information and Computer Science, NIST \& University of Maryland, College Park, MD 20742, USA/}

\author{\begin{CJK}{UTF8}{gbsn}Yu-Xin Wang (王语馨)\end{CJK}}
\affiliation{Joint Center for Quantum Information and Computer Science, NIST \& University of Maryland, College Park, MD 20742, USA}

\author{Yifan Hong}
\email[Current affiliation and contact: NVIDIA, ]{yihong@nvidia.com}
\affiliation{Joint Center for Quantum Information and Computer Science, NIST \& University of Maryland, College Park, MD 20742, USA}
\affiliation{Joint Quantum Institute, NIST \& University of Maryland, College Park, MD 20742, USA}

\begin{abstract}
The hypergraph product creates a quantum stabilizer code from two input classical linear codes; a paradigmatic example being the surface code as a hypergraph product of two classical repetition codes. Many properties of the hypergraph product code can be inherited from those of the classical codes such as the code dimension, minimum distance and certain fault-tolerant gadgets. We investigate ways to reduce the number of physical qubits in hypergraph product codes while maintaining some of their useful properties for fault tolerance. We show that the code dimension, canonical logical basis, and minimum distances of the hypergraph product code are preserved through this reduction. We also provide syndrome-measurement schedules that preserve the effective distance as well as examples of reduced hypergraph product codes with parameter improvements such as $[\![610,64,6]\!] \rightarrow [\![441,64,6]\!]$ and $[\![1225,49,11]\!] \rightarrow [\![931,49,11]\!]$. In memory simulations with circuit-level depolarizing noise, the reduced codes can exhibit logical error rates comparable to their unreduced versions at the sampled low physical-error rates, while using fewer physical qubits. Finally, we show how overhead reduction can be compatible with homomorphic measurement gadgets, fold-transversal gates and automorphisms, which extends the savings to logical computation.
\end{abstract}

\date{\today}

\maketitle

\tableofcontents

\section{Introduction}

Quantum error correction (QEC) protects quantum information by encoding it in the nonlocal degrees of freedom of quantum error-correcting codes~\cite{Shor_1995_QEC, Cerf_1997}. A class of codes known as stabilizer codes~\cite{gottesman1997thesis} repeatedly measures commuting observables to extract syndromes to infer errors. One of the earliest quantum error-correcting codes is the surface code, which encodes one logical qubit among a square lattice of physical qubits~\cite{Kitaev_1997_QC, Kitaev_1997_QEC, Kitaev_2003, bravyi1998surface}. It is one of the leading candidates for fault-tolerant quantum computation due to its weight-4 stabilizer checks, relatively high fault-tolerance threshold, planar connectivity and suite of fault-tolerant gadgets~\cite{Horsman_2012, Brown_2017_poking, Litinski_2019, Brown_2020, Chamberland_2022, gidney2024cultivation, cain2025fast, serraperalta2025, Turner_2026}. However, one of the challenges of the surface code is its demanding physical overhead: a distance-$L$ surface code requires at least $L^2$ physical qubits per logical qubit, which is around $10^3$ in the practical regime~\cite{Gidney_2021}. It has also been shown that this overhead scaling cannot be parametrically improved for any geometrically local stabilizer code in two spatial dimensions~\cite{BPT_2010}. 

If the relevant hardware supports long-range connectivity, then one can reduce this spatial overhead to a constant by using constant-rate quantum low-density parity-check (qLDPC) codes~\cite{Freedman_2002_Z2, Delfosse_2013, Breuckmann_2016, HGP_codes, gottesman2014constant, Fawzi_2020_constant, LP_codes, qTanner_codes, DHLV_code, Tamiya_2026, Dinur_2024_cube, nguyen2025constant}. Like the surface code, these codes also possess constant-weight stabilizer checks; but unlike the surface code, the support of these checks could not be geometrically constrained~\cite{Breuckmann_2021_review}. However, since the physical-to-logical overhead is asymptotically constant, one can simply grow the length of the code block until a certain code distance is achieved. Despite these asymptotic promises, the underlying constants can still be quite formidable in practice and present a significant challenge for realizing many qLDPC code families in near-term hardware. For instance, hypergraph product (HGP) codes~\cite{HGP_codes} require a physical qubit block count exceeding a thousand before seeing advantages over the surface code~\cite{Xu_2024_constant}. More modern constructions such as the lifted product~\cite{LP_codes, Panteleev_2022_good} and quantum Tanner~\cite{qTanner_codes} begin to show similar advantages with a few hundred qubits~\cite{Xu_2024_constant, BB_codes, radebold2025explicit}.

Although the HGP may not be as efficient as other qLDPC constructions in terms of spatial overhead for memory, it makes up for this difference in other ways. For instance, due to the natural product structure in the HGP, the support of logical Pauli operators on rows and columns resemble those of the classical input codewords~\cite{quintavalle2022reshape, Quintavalle_2023_fold}. This canonical logical Pauli basis is especially convenient for proving guarantees on fault-tolerant syndrome extraction~\cite{Manes_2025, Tan_2025_eff}, possible transversal gates~\cite{zhu2025fountain, Burton_2022_nogo, fu2025nogo}, as well as the analyses of logical gadgets~\cite{Quintavalle_2023_fold, Xu_2025_fast, zheng2025high, berthusen2025auto}. Similar statements for other qLDPC codes are often weaker or lacking altogether.

For an $L\times L$ (unrotated) surface code, which can be viewed as a HGP of two length-$L$ repetition codes with $L^2+(L-1)^2$ physical qubits, there exists a ``rotated'' version~\cite{Bombin_2007, anderson2011, Kovalev_2012, Tomita_2014} with only $L^2$ physical qubits and also weight-4 checks for the same code distance $L$ and performance under circuit-level noise~\cite{orourke2024}. Instead of a geometric rotation, this qubit reduction can also be viewed as a puncture, i.e.~removing physical qubits, followed by a local repair operation on the punctured stabilizer checks so that they commute~\cite{qTanner_rot_surface, Leverrier_2025_efficient}. In general, any quantum code is amenable to puncturing; however, doing so carelessly will often result in a lower code distance due to the possibility of puncturing a physical qubit in the support of a minimum-weight logical operator. Furthermore, there is also no guarantee on the new weight of the repaired checks, which may no longer be LDPC. The surface code nonetheless has a special structure that allows both its distance and LDPC property to be preserved when performing the puncture that transforms it to the rotated surface code. Due to these practical savings in addition to other reasons, this rotated version has been the code of choice for experiments with the surface code~\cite{Google_2023_SC, Google_2024_SC}. A natural question to ask then is: 

\begin{quote}
    Can the qubit reduction of the rotated surface code be generalized to generic HGP codes? If such a reduction exists, can it preserve most--if not all--of the existing HGP fault-tolerant gadgets?
\end{quote}

In this work, we present a procedure to reduce the number of physical qubits in generic HGP codes while maintaining many of their fault-tolerant gadgets. This reduction comes at a cost of increased qubit and check weights in the modified code which is at most double the original weights. Our numerical simulations show that the performance cost of the reduction under circuit-level noise depends strongly on the syndrome-extraction schedule. For a similar number of data qubits, reducing a larger code can provide a higher distance, and in our finite-size data the larger reduced block has the lower logical error rate at the smallest sampled noise strengths.

Our reduction procedure can, in some sense, be viewed as the converse of weight reduction~\cite{hastings2016weight, hastings2023weight, Sabo_2024, hsieh2025weight}. In weight reduction for classical and quantum codes, one typically introduces extra (qu)bits which partition large checks into sets of smaller checks while maintaining the code dimension and distance. In contrast, we are combining smaller checks into larger checks and discarding physical qubits.

The rest of the paper is structured as follows. In Section \ref{sec:prelims}, we review the HGP construction. In Section \ref{sec:total procedure sec}, we describe the reduction procedure and prove analytical guarantees on code parameters. In Section \ref{sec:gadget compatibility}, we demonstrate compatibility with existing fault-tolerant HGP gadgets such as distance-preserving syndrome extraction, fold-transversal gates, automorphisms and homomorphic measurements. In Section \ref{sec:examples}, we show how to perform qubit reduction on example HGP codes built from random, quasi-cyclic and cycle input codes. In Section \ref{sec:numerics}, we perform numerical QEC memory simulations under circuit-level depolarizing noise to benchmark the performance cost of qubit reduction. Finally, we give concluding remarks and future directions in Section \ref{sec:outlook}.

\subsection{Summary of main results}

The main results of this work are summarized informally in the following statements. A technical assumption we require is that all logical qubits reside in a single HGP sector, which is satisfied when the input parity-check matrices have full row rank. Our first result concerns the maximum qubit and check weights after overhead reduction.

\begin{claim}[Preservation of LDPC --- Claim~\ref{claim:LDPC preservation}]
    If the qubit and check weights are $w_q$ and $w_c$ respectively in the original HGP code, then the new qubit and check weights are at most $\tilde{w}_q \leq 2w_q$ and $\tilde{w}_c \leq \max\{w_c,2(w_c-1)\}$ in the reduced HGP code.
\end{claim}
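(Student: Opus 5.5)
The plan is to read the bound directly off the explicit form of the reduced stabilizer generators produced by the reduction procedure of Section~\ref{sec:total procedure sec}. Every reduced check should be one of two kinds. It is either an original HGP check untouched by the puncture, or a \emph{repaired} check: the product of two original checks that both met a punctured check-type qubit, with that qubit deleted from each.

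I will first fix notation. $H_1\in\mathbb{F}_2^{m_1\times n_1}$ and $H_2\in\mathbb{F}_2^{m_2\times n_2}$ are the inputs. The \Bsectorname{} qubits are indexed by $[n_1]\times[n_2]$ and the \Csectorname{} qubits by $[m_1]\times[m_2]$. The $X$-checks are indexed by $[m_1]\times[n_2]$ and the $Z$-checks by $[n_1]\times[m_2]$. Full row rank places all logical qubits in the \Bsectorname{} sector, and the punctured qubits are taken from the \Csectorname{} sector. I will then list the possible new generators as the two cases above.

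\textbf{Check weights.} Untouched checks have weight at most $w_c$. A repaired check is $S_aS_b$ restricted to the surviving qubits, where $S_a$ and $S_b$ each contained the punctured qubit. Each factor keeps at most $w_c-1$ qubits, so the product has weight at most $2(w_c-1)$, and any overlap between the factors only cancels further. This gives $\tilde w_c\le\max\{w_c,2(w_c-1)\}$.

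\textbf{Qubit weights.} Consider a surviving qubit $q$ with at most $w_q$ incident original checks of each Pauli type. I must show that each original check is used in at most two reduced generators. For this I will use the pairing structure of the procedure: each original check containing a punctured qubit is merged along a fixed matching and appears in at most two products. Then the number of reduced checks touching $q$ is at most twice the number of original checks touching it, so $\tilde w_q\le 2w_q$. Some of these memberships may vanish by cancellation, which only helps.

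The main obstacle I expect is this last multiplicity count. If a single original check contains several punctured \Csectorname{} qubits, it might naively be recruited into more than two repairs, which would break the factor of~$2$. I would first try to show directly from the construction that the repairs are organized along one direction of the product. For example, $X$-checks $(c_1,j)$ are merged only with partners in the same row $c_1$, so each check lies in at most two merged blocks. If that fails, I would instead bound the incidence by counting how many punctured qubits lie in a common row or column of the HGP grid, and use the matching chosen in the procedure.
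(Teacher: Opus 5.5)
Your argument is the paper's. The two-case bound on check weights matches it exactly: an uncombined check only loses support, and a combined check is $(h_1+h_2)V$ with $\abs{h_iV}\le w_c-1$. Your incidence count for qubit weights is the paper's estimate $\abs{W_XH_XV}_{\rm c}\le\abs{W_X}_{\rm c}\abs{H_X}_{\rm c}\abs{V}_{\rm c}$, with $\abs{W_X}_{\rm c}\le 2$ and $\abs{V}_{\rm c}=1$, phrased combinatorially.

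The one step to repair is your sketched resolution of the multiplicity obstacle.

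\emph{Why your sketch does not work.} Noting that an $X$-check $(c_1,b_2)$ is merged only with partners in row $c_1$ bounds nothing. That check meets every check-type qubit $(c_1,c_2)$ with $c_2$ a check of $H_2$ containing $b_2$, and all of these lie in the same row. Moreover, ``at most two merged blocks'' would not give the factor $2$. An interior column of the repetition block $w$ already has weight $2$, so two blocks could put one original check into four new checks, giving only $\tilde w_q\le 4w_q$.

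\emph{What you actually need.} Each original check must lie in at most \emph{one} block. The checks $c_2$ containing $b_2$ pairwise share the bit $b_2$, so they receive distinct colors in the coloring of $H_2$. Hence the qubits $(c_1,c_2)$ lie in distinct color groups of a single color row. $X$-cleaning uses at most one color group per color row; this is the matching constraint at $\mathcal{X}_i$. So at most one of these qubits is $X$-cleaned, and any others are merely punctured (if they were $Z$-cleaned), which does not recruit the $X$-check into another product. The same argument with color columns handles $Z$-checks.

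This is exactly what the paper encodes in the block-diagonal form $W\sim w\oplus w\oplus\ident\oplus\cdots$. Each original check lies in one block, and each block has column weight at most $2$.
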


Preserving the LDPC property of the HGP is important for maintaining a threshold under local-stochastic noise and single-ancilla syndrome extraction~\cite{Kovalev_2013_bad, gottesman2014constant}. Note that the actual qubit and check weights depend strongly on the Tanner graph structure of the classical input codes, and can often be lower than the upper bounds given above. Our next result concerns the structure of logical Pauli operators.

\begin{thm}[Preservation of canonical logical basis --- Theorem~\ref{thm:basis preservation}]
    The canonical logical Pauli basis~\cite{quintavalle2022reshape, Quintavalle_2023_fold} remains a valid logical Pauli basis on the reduced HGP code.
\end{thm}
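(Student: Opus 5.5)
The plan is to use the single-sector assumption stated in the summary: the input parity-check matrices have full row rank, so all logical qubits live in the \Bsectorname{} sector. The canonical basis is then
$\bar X_{ij} = x_i \otimes e_{j}$ and $\bar Z_{ij} = e_{i'} \otimes z_{j}$ (with the usual choice of information sets~\cite{quintavalle2022reshape, Quintavalle_2023_fold}), and every one of these operators is supported entirely on \Bsectorname{} qubits. The reduction removes only \Csectorname{} qubits and repairs the checks around them. Hence every canonical logical is still a well-defined Pauli operator on the reduced qubit set, acting exactly as before. What remains is to check that these operators are (i) logical, meaning they commute with every reduced stabilizer, (ii) nontrivial, and (iii) a complete independent set.

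\textbf{Step (i): commutation.} I would first establish a structural fact about the repair step. Each reduced $X$-check (respectively $Z$-check) is the restriction, to the surviving qubits, of a product of original $X$-checks (respectively $Z$-checks). This is what ``combining smaller checks into larger checks'' should produce: merged checks are products of the originals, and the punctured qubits are simply dropped. Take a reduced $Z$-check $\tilde S = S|_{\text{surv}}$ with $S=\prod_a S_a$, and a canonical $\bar X$. The support of $\bar X$ lies in the \Bsectorname{} sector, and all of those qubits survive. So the overlap of $\tilde S$ with $\bar X$ equals the overlap of $S$ with $\bar X$. That overlap is even, because $\bar X$ commutes with each original $S_a$. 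The same argument with $X$ and $Z$ exchanged handles the other type.

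The main obstacle is this structural fact. If the repair also adds operators on qubits that are not punctured, those additions must still be confined away from the \Bsectorname{} support of the logicals, or be products of original stabilizers there. I would verify it directly from the explicit reduction procedure of Section~\ref{sec:total procedure sec}. Concretely, I would track each merged check as a sum of rows of $H_X$ or $H_Z$ with the punctured columns deleted.

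\textbf{Step (ii): nontriviality.} None of the qubits carrying the logicals were touched, so the symplectic pairing between canonical $\bar X_{ij}$ and $\bar Z_{kl}$ is the same as in the original HGP code, namely $\langle \bar X_{ij}, \bar Z_{kl}\rangle = \delta_{ik}\delta_{jl}$. By Step (i), every element of the reduced stabilizer group commutes with every canonical logical. Each $\bar X_{ij}$ anticommutes with its partner $\bar Z_{ij}$, so no canonical logical can lie in the reduced stabilizer group.

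\textbf{Step (iii): independence and completeness.} The identity pairing matrix also shows that the $\bar X_{ij}$ are independent modulo the reduced stabilizers and $\bar Z$-type operators, and likewise for the $\bar Z_{ij}$. There are $k$ pairs, and the earlier result on preservation of code dimension gives $\tilde k = k$. Therefore the $k$ pairs span the entire logical Pauli group of the reduced code, which completes the plan.
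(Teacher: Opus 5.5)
Your proposal is correct and follows the paper's proof essentially step for step: commutation via $\widetilde{H}_Z(\widetilde{G}^{\mathscr{B}}_X)^\transp = W_ZH_ZVV^\transp(G^{\mathscr{B}}_X)^\transp = W_ZH_Z(G^{\mathscr{B}}_X)^\transp = 0$, nontriviality from the preserved identity pairing, and completeness from $\tilde{k}=k$. The ``structural fact'' you flag as the main obstacle is not actually an obstacle, because it is exactly the defining relation $\widetilde{H}_X = W_XH_XV$, $\widetilde{H}_Z = W_ZH_ZV$ of the reduction. Separately, your tensor factors are swapped relative to $G^{\mathscr{B}}_X = E^{\mathscr{B}}_1\otimes G_2$, which is harmless for this argument.
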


The preservation of the canonical logical Pauli basis will allow us to port over many of the fault-tolerant gadgets of HGP codes to their reduced versions. The next result concerns the minimum distances of the reduced HGP codes.

\begin{thm}[Preservation of code distance --- Theorem~\ref{thm:distance preservation}]
    If the minimum $X$ and $Z$-distances of the original HGP code are $d_X$ and $d_Z$ respectively, then the minimum distances of the reduced HGP code are $\tilde{d}_X=d_X$ and $\tilde{d}_Z=d_Z$.
\end{thm}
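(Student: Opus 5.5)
The plan is to prove the two inequalities $\tilde d_X\le d_X$ and $\tilde d_X\ge d_X$ separately, and then get $\tilde d_Z=d_Z$ by the symmetric argument. Throughout I assume the single-sector hypothesis: the input parity-check matrices have full row rank, so every logical qubit lives in the \Bsectorname{} sector.

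\textbf{Upper bound.} Here I would use Theorem~\ref{thm:basis preservation}. The canonical logical operators of the original HGP code have the form $\bar X = e_j\otimes c$ (and analogously for $\bar Z$). They are supported entirely on the \Bsectorname{} sector, and the reduction never discards \Bsectorname{} qubits, only \Csectorname{} ones. So each canonical representative is still a valid, nontrivial logical operator of the reduced code, and its weight is unchanged. Choosing $c$ to be a minimum-weight classical codeword gives reduced logicals of weight exactly $d_X$ (resp.\ $d_Z$), so $\tilde d_X\le d_X$ and $\tilde d_Z\le d_Z$.

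\textbf{Lower bound.} Take an arbitrary nontrivial reduced logical, say a $Z$-type operator $\tilde L$. I would follow the Tillich--Z\'emor row/column argument.
\begin{enumerate}
\item Use the fact that every reduced $X$-check is a product (sum) of original $X$-checks restricted to the surviving qubits. Merging checks is exactly what makes the punctured checks commute. This should let me show that the \Bsectorname{} part of $\tilde L$ still satisfies, row by row, the commutation constraints needed in the argument.
\item By Theorem~\ref{thm:basis preservation}, $\tilde L$ anticommutes with some canonical $\bar X = e_j\otimes c^\perp$, where $c^\perp$ is a codeword of the appropriate classical code.
\item Let $T$ be the set of indices $i$ for which the $i$-th column of the \Bsectorname{} restriction of $\tilde L$ meets $\mathrm{supp}(c^\perp)$ nontrivially. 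Multiply $\tilde L$ by suitable reduced stabilizers to clear the \Csectorname{} support lying in rows indexed by $T$.
\item Show that this leaves a nonzero classical codeword in each such column. Conclude that either $\tilde L$ contains at least $d$ nonzero rows, or one row restricts to a nonzero codeword of weight at least $d$; in both cases $|\tilde L|\ge d_Z$.
\end{enumerate}
An equivalent route is to construct a lift: extend $\tilde L$ by some pattern on the removed qubits to an original logical $L$ in the same class, with $|L_{\Bsectorname}|=|\tilde L_{\Bsectorname}|$. The original distance argument actually lower-bounds the \Bsectorname{} weight alone, which gives the result.

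\textbf{Main obstacle.} The hard step is controlling the discarded \Csectorname{} qubits. In the unreduced proof, the \Csectorname{} components of a logical are what transport constraints between rows and columns. After puncturing, a reduced logical may look lighter because part of its original support was deleted. So I must show that the standard counting bound can be stated purely in terms of \Bsectorname{} support. Concretely, I need that for any original logical $L$, the number of nonzero rows of $L_{\Bsectorname}$ meeting $\mathrm{supp}(c^\perp)$, plus the row weights there, already gives at least $d$. I would try to prove this first by projecting $L$ onto $\ker(H_1)\otimes\mathbb F_2^{n_2}$ via the canonical conjugate pairing. This pairing only sees \Bsectorname{} qubits, since the conjugate partners are \Bsectorname{}-supported. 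The full-row-rank assumption is what guarantees there are no \Csectorname{} logicals that could escape this projection.
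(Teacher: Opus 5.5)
\textbf{Upper bound.} This part is fine and is exactly the paper's argument.

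\textbf{Lower bound.} This has a genuine gap. In your primary route, step~1 fails as stated. A $Z$-type $\tilde L\in\ker\widetilde H_X$ only satisfies $W_XH_XV\tilde L^\transp=\mathbf 0$. Around each $X$-cleaned qubit with original checks $h_1,\dots,h_\Delta$, this tells you only that the parities $h_iV\cdot\tilde L$ are all \emph{equal}, not that they vanish. So the individual constraints of $H_X$ used by the Tillich--Z\'emor projection hold only after you assign a value on the removed qubit, i.e.\ after lifting. Steps~3--4, clearing \Csectorname{} support with combined reduced stabilizers, are also unsubstantiated.

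Everything therefore rests on your ``equivalent route''. There, the one step that actually needs proof is simply asserted: that $\tilde L$ extends to a \emph{nontrivial} logical of the original code. This is not automatic. It could fail if the reduction created new logical classes, so some input such as $\tilde k=k$ is indispensable.

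\textbf{How the paper closes the gap.} By $\tilde k=k$ and Theorem~\ref{thm:basis preservation}, the restricted canonical operators span all reduced logical classes. Every reduced stabilizer is, by construction, a restriction $\mathbf uW_ZH_ZV$ of an original stabilizer $\mathbf uW_ZH_Z$. Hence any nontrivial reduced representative equals $(\mathbf aG_Z^{\mathscr B}+\mathbf uW_ZH_Z)V$ with $\mathbf a\neq\mathbf 0$. The lift is then explicit: $\mathbf aG_Z^{\mathscr B}+\mathbf uW_ZH_Z$ is a nontrivial original $Z$-logical with the same \Bsectorname{} restriction, because $V$ is the identity on those qubits.

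Alternatively, you can build the lift by hand. Put zero on the $Z$-cleaned removed qubits, and on each $X$-cleaned one put the common value of the parities $h_iV\cdot\tilde L$. Then every original $X$-check is satisfied, and nontriviality follows from your step~2 pairing with a \Bsectorname{}-supported canonical $\bar X$. That pairing already uses the completeness part of Theorem~\ref{thm:basis preservation}.

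\textbf{Your ``main obstacle''.} This is already Proposition~\ref{prop:HGP logical sector weights}. Under the full-row-rank assumption $d_2^\transp$ is infinite, so every nontrivial original $Z$-logical satisfies $\abs{\mathbf z^{\mathscr B}}_{\rm r}\geq d_1=d_Z$. That is a bound on \Bsectorname{} support alone, so no new counting over $\mathrm{supp}(c^\perp)$ and no projection onto $\ker H_1\otimes\F_2^{n_2}$ is needed.
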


The preservation of the code distance assures us that we do not lose any inherent error-correcting capabilities upon performing overhead reduction. However, due to the increased check and qubit weights, one may worry that the constant savings we achieve in practice may be diminished by a corresponding deterioration in error-correcting performance under circuit-level noise. The next result shows that the gates can be scheduled so that ancillary hook errors do not reduce the effective distance.

\begin{thm}[Distance-preserving syndrome extraction --- Theorem~\ref{thm:d_circ preservation}]
    The reduced HGP codes possess a CNOT scheduling order for single-ancilla syndrome extraction such that the effective distance is the same as the code distance.
\end{thm}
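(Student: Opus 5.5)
Our plan follows the strategy used for the rotated surface code and for HGP codes in the style of Manes--Claes and Tan et al.~\cite{Manes_2025, Tan_2025_eff}. We first fix a CNOT schedule adapted to the product structure of the reduced code. Then we show that any hook error, meaning an ancilla fault that spreads to a subset of a check's support, is equivalent up to stabilizers to a data error that cannot lower the distance.

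Concretely, we treat $X$-checks and $Z$-checks separately, by symmetry the $Z$-checks first. Each reduced check is a merged pair of original HGP checks, following the reduction procedure of Section~\ref{sec:total procedure sec}. Its support therefore splits into a ``row'' part and a ``column'' part, inherited from the bit-type and check-type sectors of the original code. We will order the CNOTs so that, within each check, all qubits lying in the same row or column relative to the product structure are interacted with consecutively. Specifically, we first address the qubits in the horizontal (bit-type) direction, then those in the vertical direction, following the ``column-then-row'' rule that preserves effective distance in unreduced HGP codes. With this ordering, a single ancilla fault partway through the circuit propagates to a data error of one of two kinds. It is either supported inside a single row or column of the product grid, or it is the complement of such a set within the check, which is equivalent to the former up to multiplication by the check itself.

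The key step is then a projection argument built on Theorem~\ref{thm:basis preservation} and Theorem~\ref{thm:distance preservation}. Any nontrivial logical $Z$ operator $\bar L$ anticommutes with some canonical logical $X$ operator. That $X$ operator is supported on a single row or column, taken from the classical codeword structure, so $\bar L$ restricted to that line must contain a classical codeword. Its weight along that line is therefore at least the classical distance, which equals $d_Z$. Suppose a fault pattern has $t$ hook errors, each confined to one line in the orthogonal direction, together with data errors. Each hook error meets the chosen line in at most one qubit. Hence the restriction to that line has weight at most the number of faults, and a logical error needs at least $d_Z$ faults. We must choose the orientation of the schedule for $Z$-checks so that hooks lie transverse to the relevant canonical $X$ logicals, and make the opposite choice for $X$-checks.

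The main obstacle is the merging step. A reduced check combines two original checks whose supports sit in different rows or columns. We must show that splitting it into two ``half-checks'' still places each hook within a single transverse line, and that the $X$- and $Z$-schedules can be interleaved so that every pair of overlapping checks still commutes as circuits. For the latter we would try the standard approach: verify that each overlapping $X$/$Z$ pair shares an even number of qubits interacted in opposite relative orders, as in the rotated surface code's N/Z pattern. We expect this to require a case analysis over how the punctured qubits were repaired.
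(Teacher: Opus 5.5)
\textbf{There is a genuine gap in your key step.} You claim that a nontrivial logical $Z$ operator $\bar L$, restricted to the line carrying an anticommuting canonical $\bar X$, ``must contain a classical codeword'' and so has weight at least $d_Z$ on that line. This is false.

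\begin{itemize}
\item A canonical $\bar X$ lies in a single \Bsectorname{} row $i$: it is $\mathbf e_i\otimes\mathbf g$ with $\mathbf g\in\ker H_2$. A canonical $\bar Z$ lies in a single column. Anticommutation only forces an odd overlap, and the partner $\bar Z$ meets row $i$ in exactly one qubit. Also, that line is indexed by $B_2$, so the relevant classical distance would be $d_2$, not $d_Z=d_1$.
\item The weight of an error on a fixed line is not invariant under multiplication by stabilizers, so it cannot be the quantity you count.
\end{itemize}

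The invariant that works is the one in Proposition~\ref{prop:HGP logical sector weights}. Under the full-row-rank assumption, every representative of a nontrivial $\bar X$ ($\bar Z$) occupies at least $d_2$ \Bsectorname{} columns ($d_1$ rows). To apply it to the reduced code you must lift to the original code. Write the reduced error as $(\mathbf a G_X^{\mathscr B}+\mathbf u W_XH_X)V$ with $\mathbf a\neq 0$, and note that $V$ leaves the \Bsectorname{} restriction unchanged. Then a fault pattern whose representative occupies at most $t$ columns (rows) needs $t\geq d_2$ ($t\geq d_1$).

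\textbf{Because the counting lemma is wrong, your prescribed orientation is reversed.} You ask that $Z$-hooks lie transverse to the canonical $\bar X$ lines, i.e.\ in columns. But $Z$-logicals are column-like, so this is exactly the bad rotated-surface-code schedule that halves the effective distance.

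The correct requirement is the opposite. Every $X$-hook needs a representative inside one \Bsectorname{} column, and every $Z$-hook one inside one \Bsectorname{} row. A combined $X$-check has \Bsectorname{} support in two columns, say $B_1(c)$ and $B_2(c)$, and \Csectorname{} support in one row. It therefore suffices to apply all CNOTs on $B_1(c)$ before any CNOT on $B_2(c)$. A hook is a suffix of the ordered support; either it or its complementary prefix (obtained by multiplying by the check) touches only one of the two columns. The $Z$-checks are handled the same way with rows.

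\textbf{Two smaller points.}
\begin{itemize}
\item Such a hook generally does not lie in a single line, contrary to your claim. It can be a piece of a \Bsectorname{} column together with a piece of a \Csectorname{} row. This is harmless because only \Bsectorname{} columns (rows) are counted.
\item Your worry about interleaving the $X$- and $Z$-schedules so that overlapping checks commute as circuits is unnecessary. Measuring the $X$-checks and $Z$-checks in separate blocks, as Algorithm~\ref{alg:split SE} does, avoids it entirely.
\end{itemize}
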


In all, the above results provide overhead savings when using a HGP code as a logical memory. A natural follow-up is to ask whether these savings are compatible with or extend to existing fault-tolerant gadgets for HGP codes. We answer this question in the affirmative for a class of transversal gadgets based on code homomorphisms~\cite{Xu_2025_fast}, where we show that the ancillary gadget block can also be reduced in an equivariant manner to the data block.

\begin{thm}[Preservation of code homomorphisms --- Theorem~\ref{thm:augmentation homomorphism} and \ref{thm:puncturing homomorphism}]
    There exist quantum code homomorphisms between reduced HGP codes that correspond to classical code homomorphisms such as puncturing and augmentation.
\end{thm}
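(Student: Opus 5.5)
Since the precise reduction procedure is only defined later in the paper, I will fix a model of it and argue equivariance against that model. The model is as follows. The HGP code $\mathrm{HGP}(H_1,H_2)$ has qubits on $B_1\times B_2$ (bit-type sector) and $C_1\times C_2$ (check-type sector). We assume full-row-rank inputs, so all logical qubits live in the bit-type sector. The reduction punctures a chosen subset of check-type qubits $(c_1,c_2)$. It then repairs the stabilizers by merging the $X$-checks (resp. $Z$-checks) that touched each punctured qubit into a single product check.

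A classical homomorphism $f\colon \mathcal{C}_1\to\mathcal{C}_1'$ is given by a pair of maps $(f_B,f_C)$ on bits and checks with $H_1' f_B = f_C H_1$. Examples are puncturing (restricting to a subset of bits and deleting the dependent checks) and augmentation (appending rows or columns). It induces the quantum homomorphism $F = f_B\otimes \mathrm{id} \oplus f_C\otimes \mathrm{id}$ of Ref.~\cite{Xu_2025_fast} on the chain complexes. The plan is to show that $F$ descends to the reduced complexes, and to prove this in three steps.

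\textbf{Step 1: equivariant reduction data.} Choose the puncture set in the target code to be the image of the puncture set in the source. Concretely, the source punctures pairs $(c_1,c_2)$ with $c_1$ ranging over a matching $M_1$ between checks of $H_1$, and the target uses $f_C(M_1)$ together with the same pairing on the second factor. For augmentation, $f_C$ is an inclusion of check sets, so the image is a valid reduction pattern. For puncturing, I expect $f_C$ to be a projection that deletes checks. There one must choose the source pattern avoiding the deleted checks, or show that a deleted pair maps to zero consistently. Verifying that admissibility of the pattern (disjointness, and the commutation condition used for repair) is preserved under $f_C$ is the first thing I would check.

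\textbf{Step 2: commutation with boundary maps.} Write the reduced complex as $\tilde{\mathcal{C}} = P\,\mathcal{C}\,P^{\dagger}$. Here $P$ restricts to the unpunctured qubits, and the merged checks are the images of a summing map $S$ on check spaces. I would then show $\tilde F\,\tilde\partial = \tilde\partial'\,\tilde F$, where $\tilde F$ is $F$ restricted to surviving qubits and checks and $S$ is intertwined with $f_C\otimes\mathrm{id}$. This reduces to the identity $S' (f_C\otimes \mathrm{id}) = (f_C\otimes\mathrm{id}) S$ on merged checks, together with the statement that $F$ maps punctured qubits only into punctured qubits. Both follow from Step 1 by a direct block-matrix computation. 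I expect this step to be the main obstacle: the repair introduces cross terms, because merged checks act on the bit-type qubits of both merged neighbours. I would handle this by tracking each merged check as a sum of two original checks and using linearity of $F$.

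\textbf{Step 3: logical action.} Finally, I check that $\tilde F$ induces the intended logical map. By Theorem~\ref{thm:basis preservation}, the canonical basis $x\otimes e_j$ survives reduction unchanged on the bit-type sector. $F$ acts on this sector exactly as in the unreduced code, so the induced logical action agrees with that of Ref.~\cite{Xu_2025_fast}, namely $f_B$ applied to classical codewords. Distance guarantees for the gadget then follow from Theorem~\ref{thm:distance preservation}.
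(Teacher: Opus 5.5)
Your proposal has a genuine gap, and it is in the puncturing case, which is where the paper does its real work.

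\textbf{The reduction model.} The model you fix is neither the paper's reduction nor a valid one. Merging all $\Delta$ checks on a removed check-type qubit into one product check leaves support on that qubit whenever $\Delta$ is odd. It also discards $\Delta-1$ generators while removing a single qubit, so $\tilde k\neq k$ once $\Delta>2$. The paper instead recombines the $\Delta$ checks with the parity-check matrix $w$ of the length-$\Delta$ repetition code (adjacent pairs, giving $\Delta-1$ new checks). The qubits to clean are product color groups selected by a maximum-weight matching, not a matching of checks of $H_1$. This specific form of $w$ is exactly what the puncturing argument relies on.

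\textbf{What puncturing deletes.} Puncturing informational bits deletes bits of the classical code, not checks; Lemma~\ref{lem:cannot puncture check} gives $\rank H'=\rank H$. In the HGP, however, the deleted bits index checks of one Pauli type (the $X$-checks in $C_1\times B_2$ when $H_2$ is punctured). These are precisely checks that $W_X$ pairs up. A merged check $h_1+h_2$ whose partner $h_2$ is deleted has no counterpart, and what remains of it, $h_1$, still touches the removed qubit. So the reduced map cannot be ``$F$ restricted to surviving qubits and checks.'' The identity $S'(f_C\otimes\mathrm{id})=(f_C\otimes\mathrm{id})S$ to which you reduce Step 2 fails there. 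Choosing the source pattern to avoid such checks would only cover special reductions of the data block, not the given $\widetilde{\mathcal{Q}}$.

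\textbf{The missing idea} is the paper's local repair.
Keep the same removed check-type qubits, but recombine the $\Delta'$ surviving checks with the length-$\Delta'$ repetition check matrix $w'$, dropping the lone check when $\Delta'=1$. Here $\Delta'\geq1$ because, by Lemma~\ref{lem:cannot puncture check}, no check of $H_2$ loses its whole support.
After zero-padding, $\rs w'\subseteq\rs w$; for example, $h_1+h_3=(h_1+h_2)+(h_2+h_3)$. Hence there are local matrices $\tilde\gamma_X$ with $\tilde\gamma_Xw_X=w'_X\gamma_X$, and $\widetilde\Gamma_X$ is their direct sum. It is sparse, but it sends merged checks to sums of merged checks rather than restricting them.
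The global relation $\widetilde\Gamma_XW_X=W'_X\Gamma_X$, combined with the unreduced chain map of~\cite{Xu_2025_fast} and the common choice of removed qubits, gives the $X$-square. The $Z$-square is immediate, since $W_Z$ is unaffected by puncturing $H_2$.

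\textbf{Augmentation and Step 3.} For augmentation your Step 1 is essentially the paper's argument. The same reduction is used, and by Lemma~\ref{lem:augmented coloring} added checks do not change adjacency among the original ones. In the paper's convention (augmenting $H_2$), $\widetilde\Gamma_X=\ident$ and the other two maps are projections. Your Step 3 matches the paper's closing remark. Note, however, that the condition you wrote, $H_1'f_B=f_CH_1$, fails for an inclusion $f_C$, because the added row is nonzero. Since the first factor enters $\mathcal{C}_1^\transp\otimes\mathcal{C}_2$ as a cochain complex, the relevant condition is $f_BH_1^\transp=H_1^{\prime\transp}f_C$.
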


These code homomorphisms, combined with the preservation of the canonical logical Pauli basis, enable grid-addressable logical Pauli-product measurements~\cite{Xu_2025_fast} on the reduced HGP codes. These logical Pauli-product measurements then allow us to perform logical Clifford gates via Pauli-based computation~\cite{Litinski_2019}.

\subsection{Related works}

Shortly after the breakthroughs on asymptotically good LDPC codes, Leverrier and Z\'emor~\cite{Leverrier_2025_efficient} discovered a link between their quantum Tanner codes~\cite{qTanner_codes} and the lifted product codes by Panteleev and Kalachev~\cite{Panteleev_2022_good}, which also extends to the case of HGP codes and is the main motivator for our work. It was later pointed out by Breuckmann and Leverrier~\cite{qTanner_rot_surface} that this general procedure encompasses the rotated surface code. However, their transformation is restricted to input LDPC codes with a bipartite Tanner graph structure, and we generalize their approach to arbitrary input LDPC codes.

Kovalev and Pryadko~\cite{Kovalev_2012} previously proposed schemes to improve the rate of HGP codes by modifying the input codes but still retaining the physical HGP structure. In their symmetric square-matrix method, the logical qubits are split evenly between the \Bsectorname{} and \Csectorname{} sectors. In comparison, all of the logical qubits in our formulation reside solely in the \Bsectorname{} sector. For their approach to apply generically, one needs to symmetrize a non-symmetric LDPC matrix via $H^\transp H$, which can lead to a quadratic increase in the LDPC weights. In contrast, our approach can only double the LDPC weights. However, our reduction does not maintain the physical HGP structure and so certain HGP gadgets, while naturally compatible with their scheme, may not be compatible with our scheme.

Berthusen et al.~\cite{Berthusen_2025_adaptive} introduced a local concatenation scheme to lower the spacetime overhead of HGP codes in terms of circuit volume, which can be thought of as dynamical overhead reduction in the time direction. In comparison, our scheme lowers the static spatial overhead. It would be interesting to combine the two strategies to potentially achieve an even lower spacetime overhead.


\section{Preliminaries}\label{sec:prelims}

\subsection{Classical and quantum codes}

\begin{figure}
    \centering
    \includegraphics[width=1\linewidth]{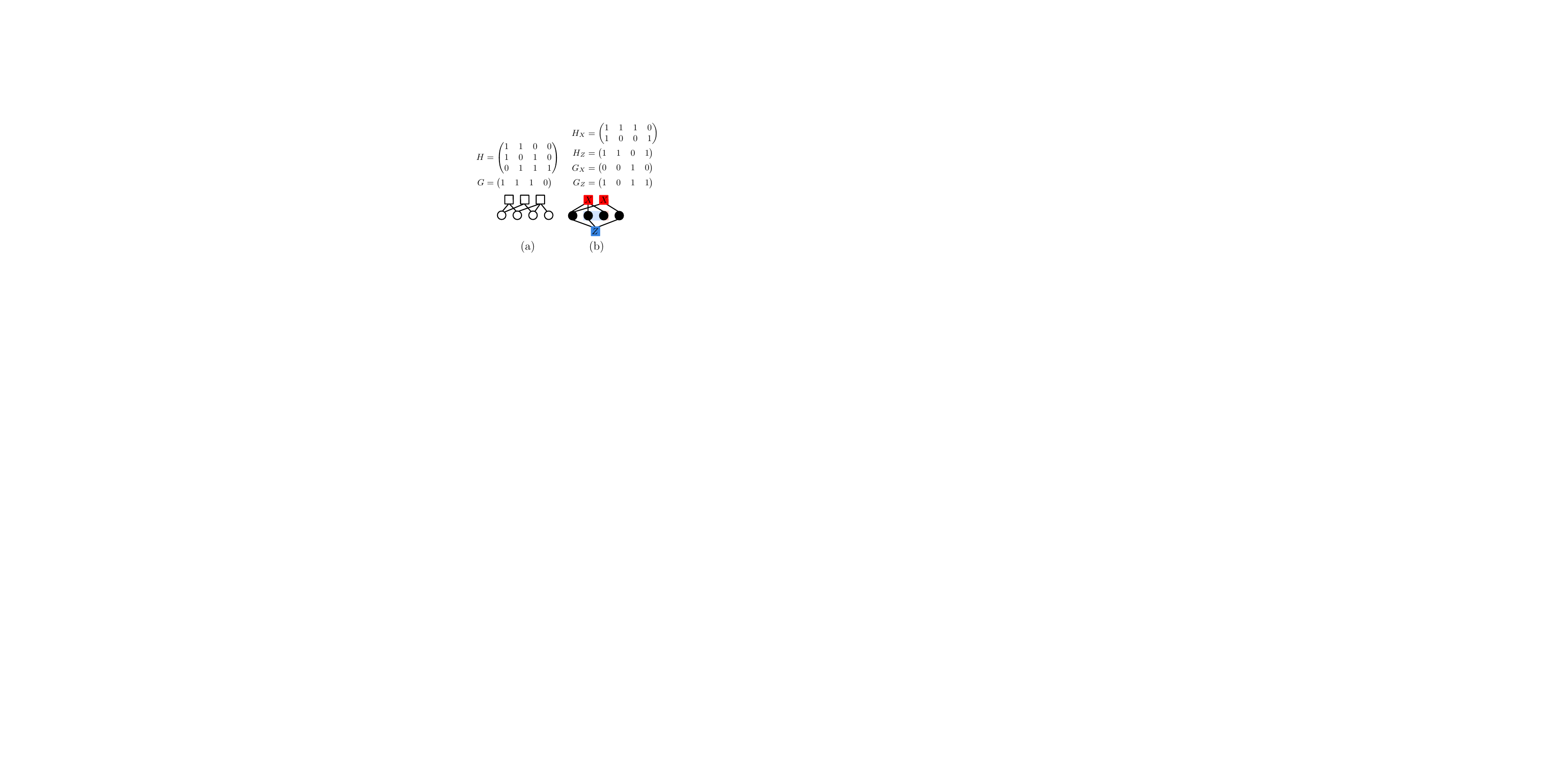}
    \caption{\textbf{(a)} Tanner graph for a classical linear code with parity check matrix $H$ and generator matrix $G$. The circular vertices are bits and squares classical checks. Corresponding rows and columns to checks and bits are ordered left-to-right. \textbf{(b)} Tanner graph for a quantum stabilizer CSS code with parity check matrices $H_X$, $H_Z$ and generator matrices $G_X$, $G_Z$. The filled circular vertices are qubits, red squares $X$ checks, and blue squares $Z$ checks. Corresponding rows and columns to checks and qubits are ordered left-to-right. The support of the logical $\bar{X}$ and $\bar{Z}$ operators are highlighted in red and blue respectively, and can be seen to overlap on qubit 3.}
    \label{fig:tanner graphs}
\end{figure}

An $[n,k,d]$ classical linear code encodes $k$ logical bits in $n$ physical bits. The logical bits span a $k$-dimensional subspace $\mathcal{C} \simeq \mathbb{F}^k_2$, called the codespace, of the vector space of all length-$n$ bitstrings, $\mathbb{F}^n_2$. The elements of this subspace are called codewords. The minimum distance, or code distance $d$, is the minimum nonzero Hamming weight among all codewords. A complete basis for the codespace can be specified by the rows of a generator matrix $G \in \mathbb{F}^{k\times n}_2$ such that $\im G^\transp = \rs G = \mathcal{C}$. Under this basis, a logical message $\mathbf{y} \in \mathbb{F}^k_2$ is translated into its corresponding codeword $\mathbf{x} \in \mathbb{F}^n_2$ via $\mathbf{x} = G^\transp \mathbf{y}$. For a given generator matrix $G$, there exists a dual matrix called a parity-check matrix $H \in \mathbb{F}^{m\times n}_2$ whose rows span $\ker G$; i.e. $HG^\transp=0$. This suggests that all codewords (rows of $G$) satisfy all of the parity checks (rows of $H$). The rank-nullity theorem implies that $m \geq n-k$. A graphical representation of a classical code is given by the code's Tanner graph $\mathcal{G} = \mathpzc{T}(B, C, E)$, a bipartite graph with vertex set $B\cup C$ and edge set $E$ such that the vertices in $B$ represents bits, $C$ parity checks, and the edges connect checks to the bits that they check; see Figure~\ref{fig:tanner graphs}.

An $\llbracket n,k,d \rrbracket$ quantum code encodes $k$ logical qubits in a $2^k$-dimensional subspace, or codespace, of the physical $2^n$-dimensional Hilbert space on $n$ qubits. The code distance, or minimum distance, $d$ is the minimum number of local errors needed to nontrivially affect its codespace, or corrupt its encoded information. A stabilizer code is a quantum error-correcting code defined by its stabilizer group---an abelian subgroup of the Pauli group on $n$ physical qubits~\cite{gottesman1997thesis}. A generating set for the stabilizer group is often specified, and its elements are called check operators, or colloquially ``stabilizers''. The codespace of a stabilizer code is the mutual +1 eigenspace of all check operators and thereby also all Pauli operators in the stabilizer group. A Calderbank-Shor-Steane (CSS) stabilizer code is a stabilizer code whose stabilizer group can be generated by a set of $X$-type and $Z$-type Pauli operators~\cite{Calderbank_1996, Steane_1996_1, Steane_1996_2}. For a CSS code, it is customary to specify two binary parity-check matrices $H_X \in \mathbb{F}^{m_X\times n}_2$ and $H_Z \in \mathbb{F}^{m_Z\times n}_2$ whose rows specify the support of $X$-type and $Z$-type check operators respectively. The $X$-type and $Z$-type stabilizer subspaces are given by $\mathcal{S}_X=\rs{H_X}$ and $\mathcal{S}_Z=\rs{H_Z}$ respectively. The stabilizer commutation condition is then equivalent to the orthogonality condition $H^{}_X H^\transp_Z = 0$. Since each independent $X$-check and $Z$-check constrains the Hilbert space, the number of logical qubits is
\begin{align}\label{eq:CSS k}
    k = n - \rank{H_X} - \rank{H_Z} \, .
\end{align}
For a CSS code, we also have two types of logical operators corresponding to each Pauli type. Since elements of $\mathcal{S}_X$ act trivially on the codespace, logical $\bar{X}$ operators are grouped according to equivalence classes under $\mathcal{S}_X$. The group of distinct logical $\bar{X}$ actions is isomorphic to $\ker{H_Z}/\rs{H_X}$. Note that this group is distinct from the subspace of all nontrivial $\bar{X}$ operators $\ker{H_Z}\backslash\rs{H_X} \subset \mathbb{F}^n_2$.
For a CSS code, we also have two types of minimum distances corresponding to each Pauli type. The $X$-type minimum distance, or $X$-distance, is defined as the minimum Hamming weight of any nontrivial logical $\bar{X}$ operator:
\begin{align}\label{eq:CSS d_X}
    d_X := \min_{\mathbf{x}\in\ker{H_Z}\backslash\rs{H_X}} \abs{\mathbf{x}} \, ,
\end{align}
where $\abs{\cdot}$ denotes the Hamming weight. The $Z$-distance $d_Z$ is defined analogously upon swapping the roles of $X$ and $Z$. 

An algebraic formulation for classical linear and quantum CSS codes in terms of chain complexes is reviewed in Appendix \ref{app:codes algebraic}.

\subsection{Hypergraph product codes}\label{subsec:HGP review}

We now briefly review the HGP construction as well as recite some relevant results for our purposes. For a more in-depth introduction with derivations, we refer the reader to both Appendix \ref{app:HGP} and the original paper by Tillich and Z\'emor~\cite{HGP_codes}.

Given two input classical linear codes with parity-check matrices $H_1 \in \mathbb{F}^{m_1\times n_1}_2$ and $H_2 \in \mathbb{F}^{m_2\times n_2}_2$, their corresponding HGP code $\text{HGP}(H_1, H_2)$ has CSS parity-check matrices
\begin{subequations}\label{eq:HGP H_X,H_Z}
\begin{align}
    H_X &= \big(\, H_1\otimes\ident_{n_2} \,\mid\, \ident_{m_1} \otimes H^\transp_2 \,\big) \, , \label{eq:HGP H_X} \\
    H_Z &= \big(\, \ident_{n_1} \otimes H_2 \,\mid\, H^\transp_1 \otimes \ident_{m_2} \,\big) \, .  \label{eq:HGP H_Z}
\end{align}
\end{subequations}
The CSS orthogonality condition is satisfied since $H^{}_X H^\transp_Z = 2H^{}_1\otimes H^\transp_2 = 0$ over $\mathbb{F}_2$. Geometrically, the HGP resembles a Cartesian graph product between the Tanner graphs $\mathcal{G}_1$, $\mathcal{G}_2$ of the classical input codes; see Figure \ref{fig:HGP} for an illustration. The number of physical or data qubits is the number of columns in \eqref{eq:HGP H_X,H_Z}:
\begin{align}
    n = \underbrace{n_1n_2}_{\mathscr{B}\text{-type}} + \underbrace{m_1m_2}_{\mathscr{C}\text{-type}} \, ,
\end{align}
where $n_1n_2$ is the number of \Bsectorname{} ($\mathscr{B}$-type) qubits, and $m_1m_2$ is the number of \Csectorname{} ($\mathscr{C}$-type) qubits, corresponding to the left and right sides of \eqref{eq:HGP H_X,H_Z} with respect to the vertical partition. It will be useful to label the data qubits according to the 2D layout of a graph product: the left and right sides of the tensor products in \eqref{eq:HGP H_X,H_Z} index rows and columns respectively (Figure \ref{fig:HGP}).

\begin{figure}[t]
  \centerline{\includegraphics[width=0.35\textwidth]{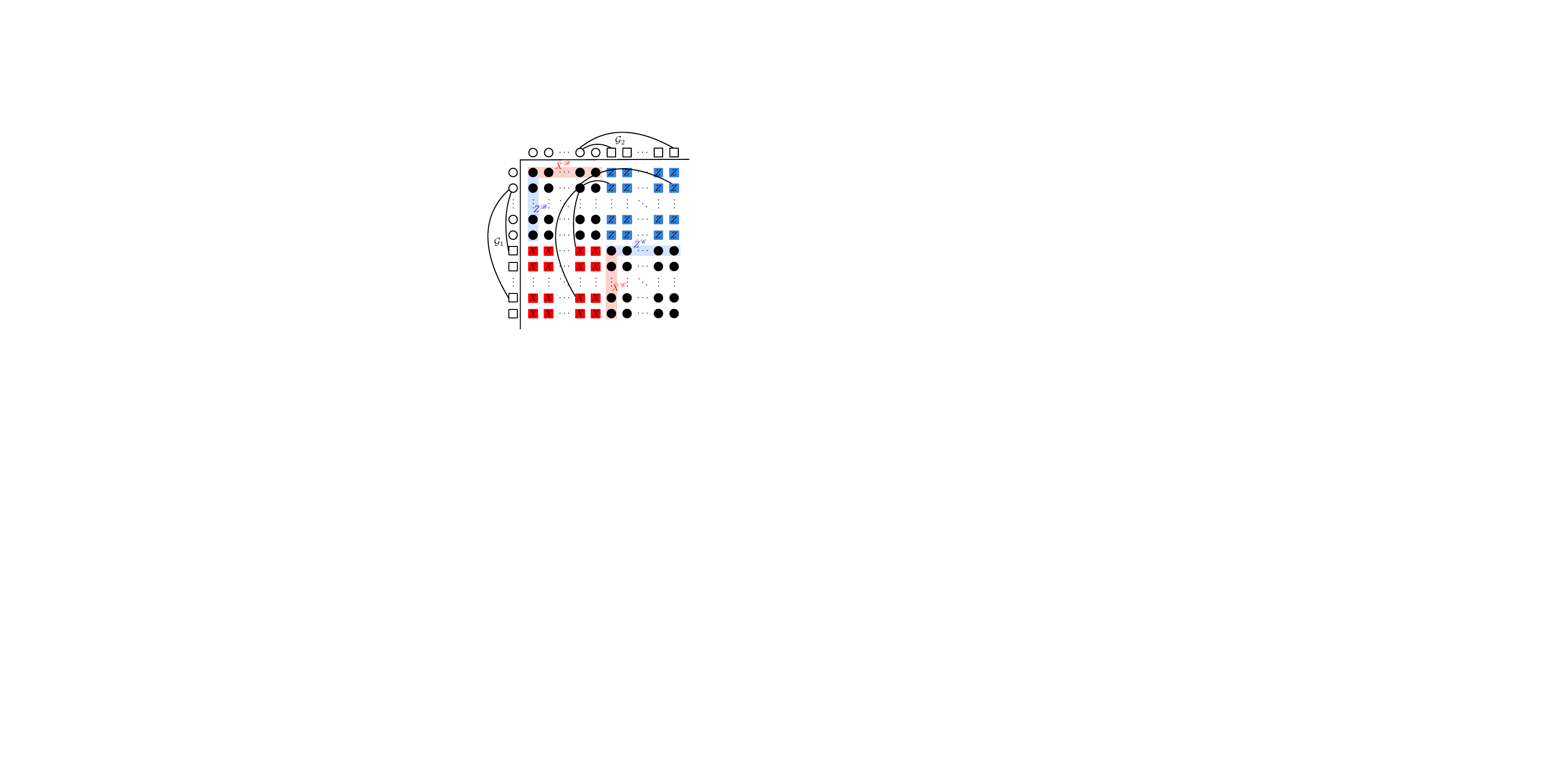}}
  \caption{General structure of the hypergraph product. The top-left section of qubits are the \Bsectorname{} qubits, and the bottom-right section of qubits are the \Csectorname{} qubits. Some edges are explicitly shown; the rest of the edges (not shown in the figure) inherit the structure from the classical input codes. The rows/columns that contain the support of the \Bsectorname{} logical $\bar{X}^\mathscr{B}$, $\bar{Z}^\mathscr{B}$, and \Csectorname{} logical $\bar{X}^\mathscr{C}$, $\bar{Z}^\mathscr{C}$ operators are shown.}
  \label{fig:HGP}
\end{figure}

For an $[n, k, d]$ classical code with parity-check matrix $H \in \mathbb{F}^{m\times n}_2$, let $[m,k^\transp,d^\transp]$ denote the parameters of the transpose code with parity-check matrix $H^\transp$. From this, the number of logical qubits in the HGP code \eqref{eq:HGP H_X,H_Z} is
\begin{align}\label{eq:HGP k}
    k = \underbrace{k^{}_1k^{}_2}_{\mathscr{B}\text{-type}} + \underbrace{k^\transp_1 k^\transp_2}_{\mathscr{C}\text{-type}} \, .
\end{align}
As hinted by the underbraces in \eqref{eq:HGP k}, our logical qubits can be partitioned into those that ``live'' on the \Bsectorname{} and \Csectorname{} qubits. Nearly all HGP codes of interest, and especially those from random classical LDPC codes, have $k^{}_1k^{}_2 \gg k^\transp_1k^\transp_2$ (often $k^\transp_1k^\transp_2=0$), and so we will mainly be interested in the $k_1k_2$ \Bsectorname{} logical qubits. A canonical basis of logical Pauli operators for the $k_1k_2$ \Bsectorname{} logical qubits is given by~\cite{quintavalle2022reshape}
\begin{subequations}\label{eq:HGP G_X,G_Z B}
    \begin{align}
        G_X^\mathscr{B} &= \big(\, E_1^\mathscr{B} \otimes G_2 \,\mid\, \mathbf{0} \,\big) \, ,  \label{eq:HGP G_X} \\
        G_Z^\mathscr{B} &= \big(\, G_1 \otimes E_2^\mathscr{B} \,\mid\, \mathbf{0} \,\big) \, ,  \label{eq:HGP G_Z}
    \end{align}
\end{subequations}
where $G_i$ ($i=1,2$) is a generator matrix for $H_i$, and $E^\mathscr{B}_i \in \mathbb{F}^{k_i\times n_i}_2$ is a matrix whose rows are unit vectors chosen outside $\rs H_i$ and form a basis for the complementary subspace; i.e. $\mathbb{F}^{n_i}_2 = \rs H_i \oplus \rs E^\mathscr{B}_i$, and is chosen so that $E_i^\mathscr{B}G_i^\transp=\ident_{k_i}$. If $G_i$ is in canonical form $G_i=(\ident_k\,|\,A)$, then $E^\mathscr{B}_i = (\ident_k \,|\,\mathbf{0})$. A canonical basis for the $k^\transp_1k^\transp_2$ \Csectorname{} logical qubits can be constructed similarly by exchanging the left and right sides of \eqref{eq:HGP G_X,G_Z B} as well as using the transpose codes. From the canonical basis \eqref{eq:HGP G_X}, we see that our canonical \Bsectorname{} logical $\bar{X}$ operators resemble $k_1$ copies of codewords of the second classical input code supported on $k_1$ rows of the \Bsectorname{} qubits. Likewise, our canonical \Bsectorname{} logical $\bar{Z}$ operators resemble codewords of the first input code supported on \Bsectorname{} columns. The structure on the \Csectorname{} qubits follows similarly; see Appendix \ref{app:HGP} for details.

Since the canonical Pauli basis for our logical qubits mimic the codewords of the classical input codes, we obtain the following upper bounds on the $X$ and $Z$ distances of the HGP code:
\begin{subequations}\label{eq:HGP d upper bounds}
\begin{align}
    d_Z &\leq \min\left(d^{}_1,d^\transpose_2\right) \\
    d_X &\leq \min\left(d^\transpose_1,d^{}_2\right) \, .
\end{align}
\end{subequations}
We now recite a useful result, first shown by Quintavalle and Campbell (\cite{quintavalle2022reshape}, Prop. 2), concerning the structure of logical operators within the \Bsectorname{} and \Csectorname{} sectors. For a vector $\mathbf{x} \in \mathbb{F}^n_2$, let $\mathbf{x}^\mathscr{B}$ and $\mathbf{x}^\mathscr{C}$ denote its restriction to the \Bsectorname{} and \Csectorname{} qubits respectively. Furthermore, let $\abs{\cdot}_{\rm r}$ and $\abs{\cdot}_{\rm c}$ denote the number of supported rows and columns respectively.
\begin{prop}[HGP logical sector weights~\cite{quintavalle2022reshape}]
\label{prop:HGP logical sector weights}
    For any nontrivial logical $\bar{X}$ with $X$-support $\mathbf{x} \in \ker{H_Z}\setminus \rs{H_X}$, we have
    \begin{align}\label{eq:X sector weights}
        \abs{\mathbf{x}^\mathscr{B}}_{\rm c} \geq d_2 \quad\text{or}\quad \abs{\mathbf{x}^\mathscr{C}}_{\rm r} \geq d_1^\transp\, .
    \end{align}
    Likewise, for any nontrivial logical $\bar{Z}$ with $Z$-support $\mathbf{z} \in \ker{H_X}\setminus \rs{H_Z}$,
    \begin{align}\label{eq:Z sector weights}
        \abs{\mathbf{z}^\mathscr{B}}_{\rm r} \geq d_1 \quad\text{or}\quad \abs{\mathbf{z}^\mathscr{C}}_{\rm c} \geq d_2^\transp\, .
    \end{align}
\end{prop}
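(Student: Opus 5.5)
We prove the $X$ statement; the $Z$ statement follows by the symmetric argument, exchanging the roles of the two input codes and taking transposes. We argue by contrapositive. Let $\mathbf{x}\in\ker H_Z$ with $\abs{\mathbf{x}^\mathscr{B}}_{\rm c} < d_2$ and $\abs{\mathbf{x}^\mathscr{C}}_{\rm r} < d_1^\transp$. We will show that $\mathbf{x}\in\rs H_X$, so $\bar{X}$ is trivial.

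Index the \Bsectorname{} qubits by $(i,j)\in[n_1]\times[n_2]$ and the \Csectorname{} qubits by $(a,b)\in[m_1]\times[m_2]$. View $\mathbf{x}^\mathscr{B}$ as an $n_1\times n_2$ matrix $B$ and $\mathbf{x}^\mathscr{C}$ as an $m_1\times m_2$ matrix $C$. In matrix form, the condition $H_Z\mathbf{x}=0$ reads
\begin{align}
    B H_2^\transp + H_1^\transp C = 0 \, .
\end{align}
This is an $n_1\times m_2$ matrix equation. Likewise, a product $H_X^\transp \mathbf{y}$ with $\mathbf{y}$ reshaped as an $m_1\times n_2$ matrix $Y$ gives the pair $(H_1^\transp Y,\; Y H_2^\transp)$. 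Thus we must find $Y$ with $H_1^\transp Y = B$ and $Y H_2^\transp = C$.

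\textbf{Step 1: clear the \Csectorname{} part.} The nonzero rows of $C$ lie in a row set $R\subseteq[m_1]$ with $|R|<d_1^\transp$. Since $d_1^\transp$ is the distance of $\ker H_1^\transp$, the rows of $H_1^\transp$ indexed by $R$ are linearly independent. Hence there is an $m_1\times n_1$ matrix $P$ with $P H_1^\transp$ equal to the identity on the coordinates in $R$ (a left-inverse restricted to $R$).

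Now apply $P$ to the syndrome equation:
\begin{align}
    C = P H_1^\transp C = P B H_2^\transp \, .
\end{align}
The first equality uses that $C$ is supported on the rows in $R$. The second follows from $H_1^\transp C = -B H_2^\transp$ over $\mathbb{F}_2$. Set $Y_0 := PB$. Then $Y_0 H_2^\transp = C$, so subtracting $H_X^\transp\mathbf{y}_0$ from $\mathbf{x}$ produces an equivalent operator $\mathbf{x}'$ with zero \Csectorname{} part.

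Adding $H_X^\transp\mathbf{y}_0$ can change the \Bsectorname{} part to $B' = B + H_1^\transp P B$. Because the new term is $H_1^\transp P$ times $B$, any column of $B$ that is zero stays zero. Hence $\abs{\mathbf{x}'^\mathscr{B}}_{\rm c}\le\abs{\mathbf{x}^\mathscr{B}}_{\rm c}<d_2$.

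\textbf{Step 2: the \Bsectorname{} part.} Now $\mathbf{x}'\in\ker H_Z$ with $C'=0$, so $B'H_2^\transp=0$, i.e.\ every row of $B'$ lies in $\ker H_2$. The support of $B'$ is confined to fewer than $d_2$ columns. A nonzero codeword of $H_2$ has weight at least $d_2$, so it cannot fit inside fewer than $d_2$ columns. Therefore every row of $B'$ is zero, $B'=0$, and $\mathbf{x}'=0$. It follows that $\mathbf{x}\in\rs H_X$, which is the contrapositive we wanted.

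\textbf{Main obstacle.} The delicate step is Step 1. We must check that the cleaning operator $H_1^\transp P B$ does not enlarge the column support of the \Bsectorname{} part. It does not, because it acts by left multiplication on $B$. We must also justify the restricted left-inverse $P$, which comes from the linear independence of $|R|<d_1^\transp$ columns of $H_1$, with the convention $d_1^\transp=\infty$ when $\ker H_1^\transp=0$.
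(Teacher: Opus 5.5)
Your proof is correct, and it takes a genuinely different route from the paper.

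\textbf{The paper's route.} It works through the K\"unneth decomposition of $\ker H_Z/\rs H_X$ into two summands: a bit-type summand $(\F_2^{n_1}/\rs H_1)\otimes\ker H_2$ and a check-type summand $\ker H_1^\transp\otimes(\F_2^{m_2}/\rs H_2^\transp)$. It then uses two stabilizer invariants of a representative. One is the columns of the bit-type block modulo $\rs H_1$. The other is the rows of the check-type block modulo $\rs H_2^\transp$. A case split on which summand is nonzero gives either the column bound or the row bound.

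\textbf{Your route.} You give a direct cleaning argument in contrapositive form. You build the partial left inverse $P$ and remove the check-type part with the explicit stabilizer $\mathrm{vec}(PB)$. This stabilizer only left-multiplies $B$, so it cannot create new columns. The distance of $\ker H_2$ then kills what remains.

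\textbf{What each buys.} Your argument is elementary and self-contained. It needs no homology and produces the cleaning stabilizer explicitly. It also avoids a step the paper leaves implicit: identifying the abstract K\"unneth components with those concrete invariants. The paper's invariants, in turn, give a sector-resolved statement. Whenever the class has a nonzero bit-type component, every representative has at least $d_2$ bit-type columns, whatever its check-type part looks like. This is the form quoted right after the proposition. Your disjunction gives it only when the other alternative is vacuous. That is the case under the full-row-rank assumption ($d_1^\transp=\infty$) used in the later theorems.

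\textbf{One small slip.} Since $R\subseteq[m_1]$ indexes \emph{columns} of $H_1^\transp$, the independence you need is that of the vectors $H_1^\transp\mathbf{e}_a$ for $a\in R$. These are the rows of $H_1$ indexed by $R$, not the ``rows of $H_1^\transp$'' or ``columns of $H_1$'' that you wrote. Their independence is exactly what $|R|<d_1^\transp$ gives. $P$ can then be any linear map sending $H_1^\transp\mathbf{e}_a\mapsto\mathbf{e}_a$ for $a\in R$.
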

Every representative of a \Bsectorname{} logical $\bar X$ ($\bar Z$) has support on at least $d_2$ columns ($d_1$ rows) of the \Bsectorname{} data qubits. Stabilizer deformations therefore cannot reduce these row or column counts below the lower bounds. The same follows for the \Csectorname{} logical operators.

An immediate consequence of Proposition~\ref{prop:HGP logical sector weights} is that
\begin{subequations}
\begin{align}
    d_Z &\geq \min\left(d^{}_1,d^\transpose_2\right)  \\
    d_X &\geq \min\left(d^\transpose_1,d^{}_2\right) \, .
\end{align}
\end{subequations}
Combined with the upper bounds \eqref{eq:HGP d upper bounds}, we arrive at the expression for the minimum distances of an HGP code:
\begin{subequations}
\begin{align}
    d_Z &= \min\left(d^{}_1,d^\transpose_2\right)  \label{eq:HGP d_Z}  \\
    d_X &= \min\left(d^\transpose_1,d^{}_2\right) \, . \label{eq:HGP d_X}
\end{align}
\end{subequations}
For our purposes, we are only interested in the $k_1k_2$ \Bsectorname{} logical qubits; recall that this is a mild restriction since we typically have $k^{}_1k^{}_2 \gg k^\transp_1k^\transp_2$. This restriction can be satisfied explicitly by enforcing either $H_1$ or $H_2$ to have full row rank so that $k^\transp_1k^\transp_2=0$. Alternatively, we can simply ignore all \Csectorname{} logical qubits and treat them as gauge degrees of freedom, akin to a subsystem code. Proposition~\ref{prop:HGP logical sector weights} assures us that this gauging does not affect the \Bsectorname{} sector.


\section{Qubit reduction procedure}\label{sec:total procedure sec}

We motivate this section with a simple observation. Recall that we are only storing logical information in the \Bsectorname{} logical sector. Since Proposition~\ref{prop:HGP logical sector weights} tells us that the minimum weight of logical operators in the \Bsectorname{} sector is a stabilizer-invariant, we can freely remove/puncture the check-type qubits without worrying about lowering the minimum distances of the HGP code. However, these \Csectorname{} qubits were needed to ensure commutation between $X$-checks and $Z$-checks. Thus, we desire a procedure that can simultaneously remove \Csectorname{} qubits and maintain stabilizer commutation. Ideally, such a procedure can also preserve the locality of the checks and hence the LDPC property of the code.

The main idea of removing \Csectorname{} qubits is to pick a check-type qubit and a Pauli type ($X$ or $Z$), take linear combinations of stabilizers of that Pauli type that are connected to that check-type qubit to remove support from it, and discard support of the other Pauli type. This removes all support off that check-type qubit and it can thus be discarded. It is enough to clean only one Pauli family from a qubit before deleting it, since, for any $X$-check $x$ and $Z$-check $z$, deleting a check-type qubit at coordinate $q$ changes their overlap parity by $x_qz_q$. If every transformed $X$-check has $x_q=0$, or every transformed $Z$-check has $z_q=0$, this contribution vanishes for every pair and commutation is unchanged.

One naive approach for combining stabilizer checks might be to (\emph{i}) take any \Csectorname{} qubit that you want to remove, (\emph{ii}) take linear combinations of the stabilizers connected to that qubit to remove all support off of it, (\emph{iii}) discard that \Csectorname{} qubit, and (\emph{iv}) repeat for every \Csectorname{} qubit, thereby serially removing the \Csectorname{} qubits. However, in most cases, combining any two checks will increase the weight of the merged check on the \Bsectorname{} qubits. If we combine any single check too many times, we risk losing the LDPC property of the code. This leads us to our first restriction for the procedure. 
\begin{quote} 
\textbf{Restriction \#1:} Each new check is the product of at most two original checks, and each original check enters at most two new checks.
\end{quote}
We combine a check at most once since, as we will see later, this enables us to construct distance-preserving syndrome extraction circuits.

Our second restriction comes from noting the fact that this ``parallel reduction'' does not generically work when \Csectorname{} qubits share stabilizers. The issue is that we can only safely remove a \Csectorname{} qubit when we clean all $X$-checks or $Z$-checks off of it; otherwise we risk breaking stabilizer commutation. If we simultaneously try to clean a \Csectorname{} qubit that shares, for instance, an $X$-check with another \Csectorname{} qubit, we risk creating a new $X$-check with support on that second qubit. Removing this second qubit may cause the new $X$-check to fail to commute with its neighboring $Z$-checks. This obstruction leads to our second restriction for the procedure.
\begin{quote}
\textbf{Restriction \#2:} The procedure must work on a carefully--selected subset of \emph{disjoint} stabilizers that do not share support on any qubits. 
\end{quote}

Specifically, the way we select this subset is as follows. If we partition the \Csectorname{} qubits into groups that do not share any stabilizer support, we can combine stabilizers in those groups in parallel. At a high level, we choose this partition by coloring the check nodes of the classical input codes' Tanner graphs such that two classical checks that have the same color do not share any bits in common. Such colorings can be found by coloring codes' check-adjacency graphs, defined below. Finding a minimal coloring of a graph $G = (V, E)$ is NP-hard, but greedy heuristics can be used to find a valid (not necessarily minimal) coloring in $\mathcal{O}(|V| + |E|)$ time. In our case, fewer colors will equate to more \Csectorname{} qubits removed.

\begin{figure}[t]
    \centerline{\includegraphics[scale=1]{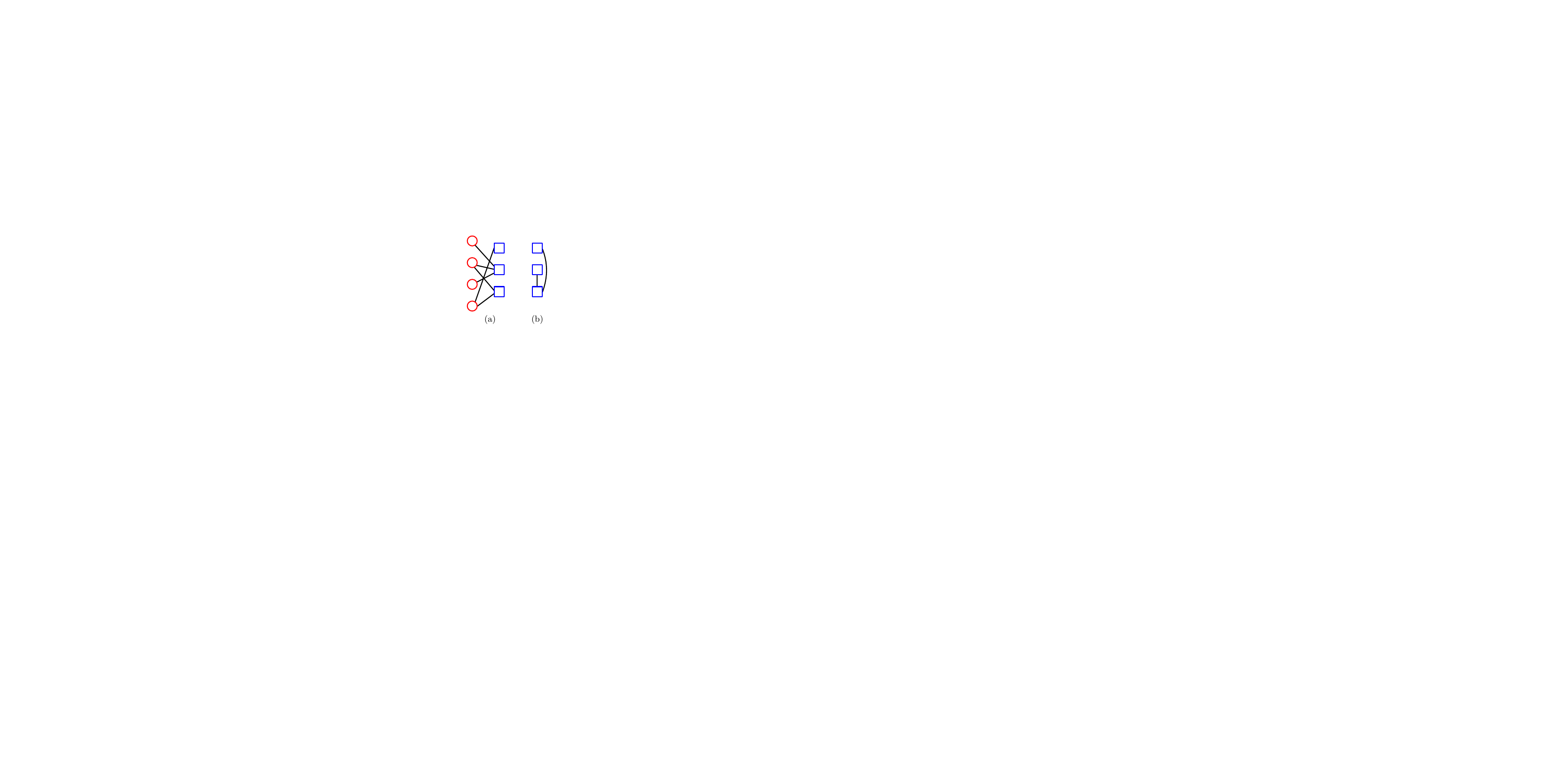}}
    \caption{ \textbf{(a)} Tanner graph $\mathcal{G} = \mathpzc{T}(B, C, E)$; \textbf{(b)} $\mathcal{G}$'s check-adjacency graph $\mathcal{G}_C = (C, E_C)$.}
    \label{bex}
\end{figure}

\begin{defn}[Check-adjacency graph]\label{def:check-adjacency graph}
    Given a Tanner graph $\mathcal{G} = \mathpzc{T}(B, C, E)$, define the check-adjacency graph as the simple graph
    \begin{align}
        \mathcal{G}_C = (C, E_C) \label{ce}
    \end{align}
    where an edge $(c_1, c_2) \in E_C$ exists iff there exists a bit $b \in B$ such that $(b, c_1) \in E$ and $(b, c_2) \in E$. In other words, edges in the check-adjacency graph only exist if the checks share a mutual bit in the Tanner graph. See Figure~\ref{bex} as an example.
\end{defn}

The result is that if we can $\chi_1$-color the checks of $H_1$ and $\chi_2$-color the checks of $H_2$, the \Csectorname{} qubits of the HGP code will be divided into $\chi_1 \chi_2$ product-color groups, within which there are no shared ($X$ or $Z$) stabilizer support. Due to the restriction, we can only choose to combine $X$ ($Z$) stabilizers in one color group per row (column). Additionally, let us restrict ourselves to never combine $X$ and $Z$ in the same color group. Doing so would be suboptimal since that would clearly equate to less \Csectorname{} qubits removed.

Finally, we note that combining $X$ ($Z$) stabilizers may change the color groups in the same row (column), which is demonstrated with a minimal example in Figure~\ref{ce2}. Thus, if we wish to perform more than one round of reduction, we have to re-color at the end of each round. However, we just concern ourselves with one round here as, which we will see later, one round guarantees distance-preserving syndrome extraction.

\begin{figure}[t]
  \centerline{\includegraphics[width=0.5\textwidth]{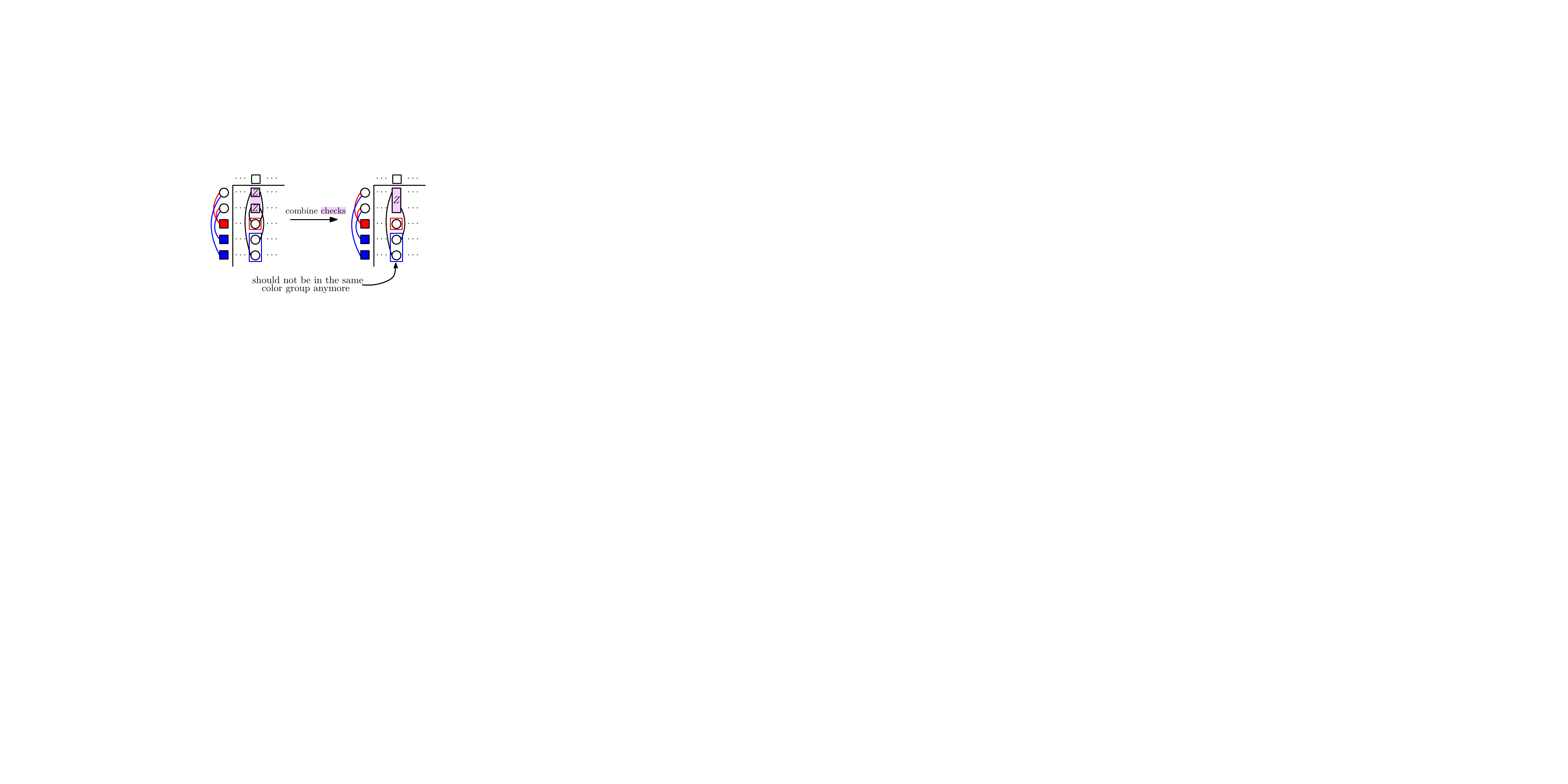}}
  \caption{Combining $Z$ stabilizers to remove the \Csectorname{} qubit boxed in red. As a side effect, this nullifies the previous color group of the \Csectorname{} qubits boxed in green.}
  \label{ce2}
\end{figure}

We can mathematically represent this single-round, parallel reduction using linear algebra. Let $\tilde{n}$ denote the number of data qubits after the reduction. Define the matrices $W_X \in \F_2^{\tilde{m}_X\times m_X}$, $W_Z \in \F_2^{\tilde{m}_Z\times m_Z}$ whose rows label the linear combinations of $X$ and $Z$ stabilizers taken, or equivalently, rows in $H_X$ and $H_Z$ respectively. In addition, define the matrix $V\in \F_2^{n \times \tilde{n}}$ whose columns select the qubits that are kept. We now have the following relations between the old parity-check matrices $H_X$, $H_Z$ and the new parity-check matrices $\widetilde{H}_X$, $\widetilde{H}_Z$:
\begin{subequations}\label{eq:tildeH,H relation}
    \begin{align}
        \widetilde{H}_X &= W_X H_X V  \label{eq:tildeHx} \\
        \widetilde{H}_Z &= W_Z H_Z V \, .  \label{eq:tildeHz}
    \end{align}
\end{subequations}
Note that $V$ is of the form (upon appropriate rearranging of its columns based on qubit labeling)
$V \sim \big( \ident_{\tilde{n}} \,\mid\, \mathbf{0}_{(n-\tilde{n})\times \tilde{n}} \big)^\transp$
to indicate that we select $\tilde n$ qubits after combining and discard the rest.

With the above motivation in mind, we now introduce how we construct $W_X,W_Z$ and $V$ in the next section.

\subsection{General procedure} \label{sec:procedure}


    Let us come up with some more precise language for this coloring procedure. For an integer $n$, let $[n] = \{1, \dots, n\}$. Suppose we have colorings of the checks of the classical Tanner graphs as $\Gamma^{(i)} = \{\Gamma^{(i)}_j\}_{j = 1}^{\chi_i}$ ($i = 1, 2$) with $|\Gamma^{(i)}| = \chi_i$. In other words, the Tanner graphs can be $\chi_i$-colored into sets $\Gamma^{(i)}_j$ for $j \in [\chi_i]$. This coloring will sector the \Csectorname{} qubits of the HGP code into a product coloring $\Gamma = \Gamma^{(1)} \times \Gamma^{(2)} = \{\Gamma_{i, j}\}_{i = 1,\,j = 1}^{\chi_1,\,\chi_2}$; i.e. the coloring contains $|\Gamma|=\chi_1\chi_2$ color groups of the form $\Gamma_{i,j}$ for $i, j \in [\chi_1] \times [\chi_2]$, each with $|\Gamma_{i, j}|$ \Csectorname{} qubits. We next need to decide which color groups we choose to combine $X$ and $Z$ stabilizers within.
    

    

    \begin{figure}[t]
      \centerline{\includegraphics[width=0.5\textwidth]{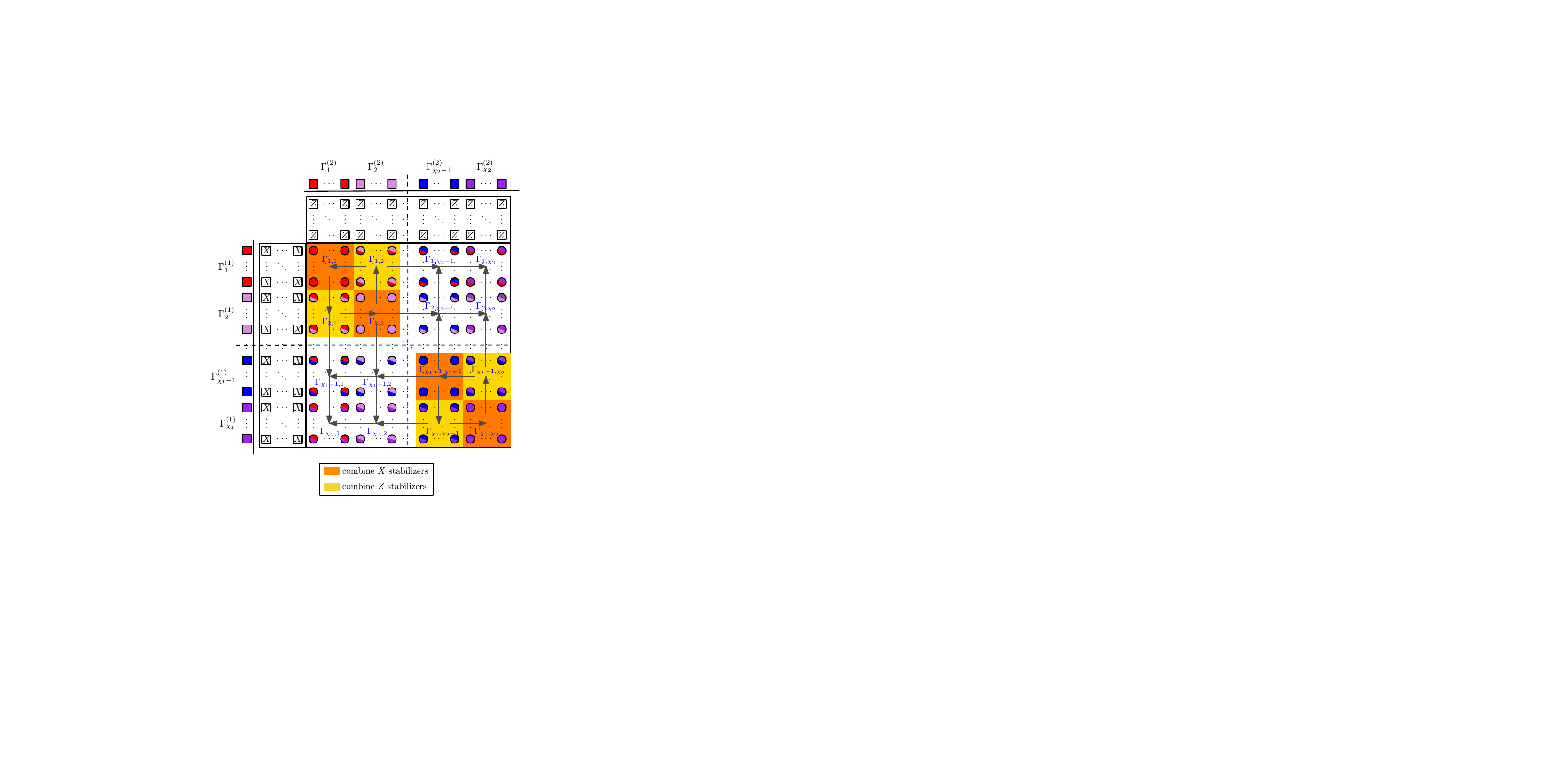}}
      \caption{Example of parallel qubit reduction based on check-colorings of the classical Tanner graphs. Here we assume that $\chi_1 = \chi_2$ and that they are both even, and choose to combine $X$ checks along the main diagonal color groups $\{\Gamma_{i, i}\}_{i=1}^{i=\min\{\chi_1, \chi_2\}}$ and $Z$ in color groups $\{\Gamma_{i, i+1}, \Gamma_{i+1, i}\}_{i=1,3,5,\dots}^{i=\min\{\chi_1, \chi_2\}}$. Stabilizer support from combining in a particular color group is pushed into the color groups indicated by the outgoing gray arrow starting not in the center of the color group.}
      \label{blockcolor}
    \end{figure}

    \subsubsection{Choosing the color groups}\label{optimization section}
    
    Considering the fact that not all color groups have the same number of \Csectorname{} qubits, we optimize the number of \Csectorname{} qubits removed by reducing the problem in polynomial time to maximum-weight matching on a bipartite graph $\mathpzc{G} = (V, E)$, which is solvable in $O(|V|^3)$ time via the Hungarian algorithm~\cite{Edmonds_1972, Tomizawa_1971}.

\begin{figure*}
    \centering
    \includegraphics[width=1\linewidth]{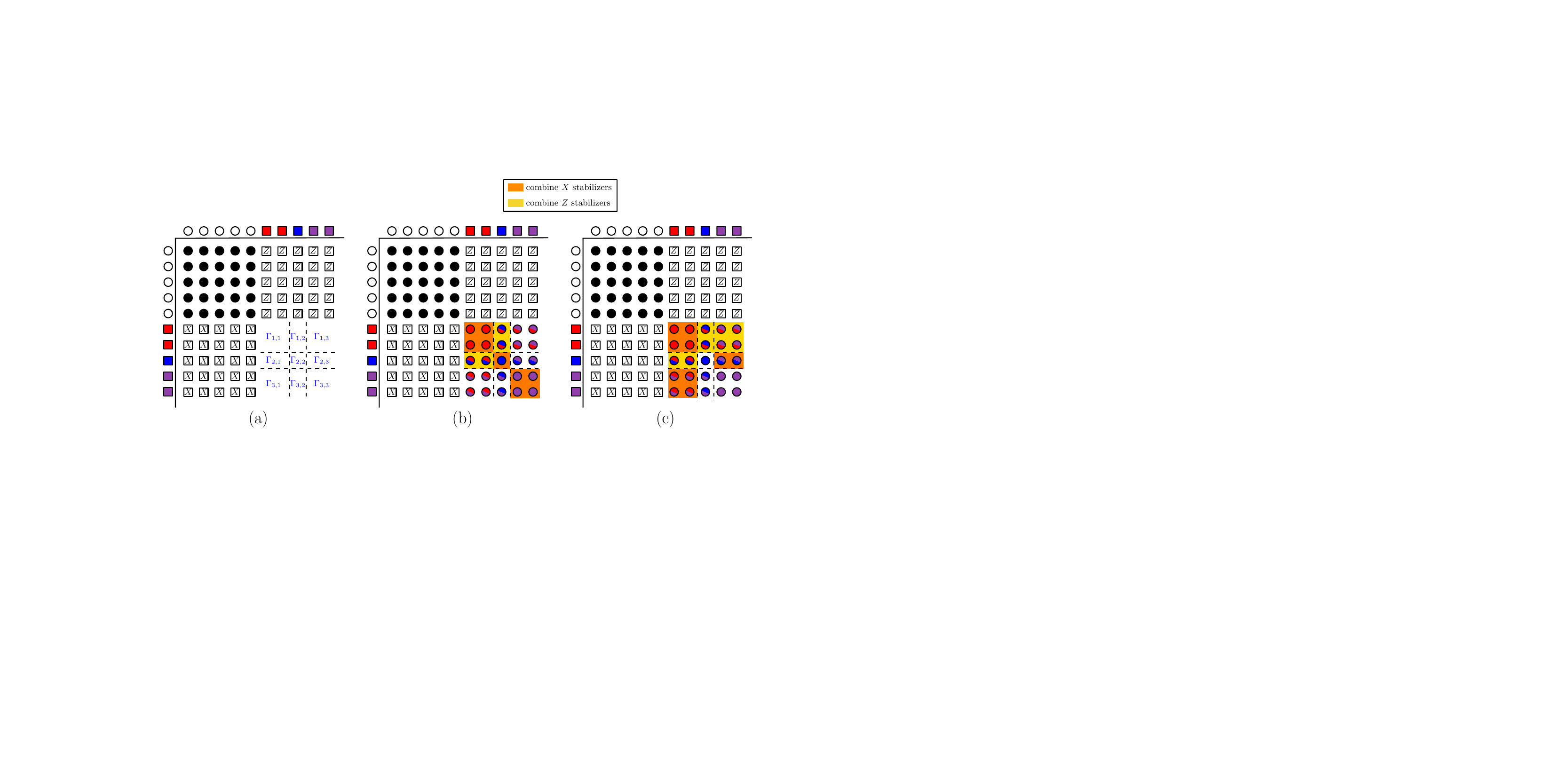}
    \caption{Example HGP graph with colorings of \Csectorname{} qubits denoted. Since the classical checks are 3-colorable ($\chi_1=\chi_2=3$), the \Csectorname{} qubits are split into 9 color groups. \textbf{(a)} The labeling of the 9 color groups $\Gamma_{i, j}$ for $i, j\in \{1, 2, 3\}$. \textbf{(b)} The (unoptimal) example schedule for combining stabilizers introduced in Figure~\ref{blockcolor}. We combine $X$ stabilizers on color groups $\Gamma_{1, 1}$, $\Gamma_{2, 2}$, and $\Gamma_{3, 3}$; and combine $Z$ stabilizers on color groups $\Gamma_{1, 2}$ and $\Gamma_{2, 1}$. This schedule gets rid of 13 \Csectorname{} qubits. \textbf{(c)} An optimal schedule to remove the most \Csectorname{} qubits. By combining $X$ stabilizers on color groups $\Gamma_{1, 1}$, $\Gamma_{2, 3}$, $\Gamma_{3, 1}$ and combining $Z$ stabilizers on color groups $\Gamma_{2, 1}$, $\Gamma_{1, 2}$, $\Gamma_{1, 3}$, we remove 18 \Csectorname{} qubits.}
    \label{hgpextboptimized}
\end{figure*}

In line with restriction \#2, all we need to enforce for our schedule is to combine $X$ ($Z$) stabilizers in at most one color group per row (column) so that we don't combine any $X$ ($Z$) that share \Csectorname{} qubits in common. We present a polynomial reduction of this problem to a maximum-weight matching problem on a bipartite graph.

\begin{lem}\label{max-wm proof}
    Solving the stabilizer combination problem proposed at the end of Section~\ref{optimization section} reduces to finding a maximum-weight matching on a bipartite graph.
\end{lem}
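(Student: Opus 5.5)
The plan is to encode a feasible schedule as a matching in an explicitly constructed bipartite graph, in such a way that the weight of the matching equals the number of \Csectorname{} qubits removed. The first step is to state the stabilizer combination problem precisely. The \Csectorname{} qubits are partitioned into the color groups $\Gamma_{i,j}$, $(i,j)\in[\chi_1]\times[\chi_2]$, and each group gets weight $w_{i,j}=|\Gamma_{i,j}|$. A schedule is a pair of disjoint sets $S_X,S_Z\subseteq[\chi_1]\times[\chi_2]$, listing the groups in which we combine $X$ and $Z$ stabilizers respectively. By Restriction \#2 and the discussion preceding this lemma, $S_X$ contains at most one group per row $i$ and $S_Z$ at most one group per column $j$. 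Disjointness of $S_X$ and $S_Z$ is the rule that we never combine $X$ and $Z$ in the same group. The objective is $\sum_{(i,j)\in S_X\cup S_Z} w_{i,j}$. This sum is exactly the number of removed qubits: within a single combined group every \Csectorname{} qubit is cleaned and discarded, and distinct color groups are disjoint, so the counts add.

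Next I would build the bipartite graph $\mathpzc{G}=(V,E)$. One side consists of the $\chi_1\chi_2$ ``cell'' vertices $g_{i,j}$, one for each color group. The other side consists of the $\chi_1+\chi_2$ ``agent'' vertices: a row agent $r_i$ for each $i\in[\chi_1]$ and a column agent $c_j$ for each $j\in[\chi_2]$. Each cell $g_{i,j}$ is joined to exactly two agents, $r_i$ and $c_j$, and both edges carry weight $w_{i,j}$. In this encoding, matching $g_{i,j}$ to $r_i$ means combining $X$ stabilizers in $\Gamma_{i,j}$, and matching it to $c_j$ means combining $Z$ stabilizers there.

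The core step is to check that this correspondence is a weight-preserving bijection between matchings and feasible schedules. Given a matching $M$, set $S_X=\{(i,j):(g_{i,j},r_i)\in M\}$ and $S_Z=\{(i,j):(g_{i,j},c_j)\in M\}$. Each constraint then follows from a degree bound in the matching:
\begin{itemize}
\item each $r_i$ has degree at most one, so $S_X$ has at most one group per row;
\item each $c_j$ has degree at most one, so $S_Z$ has at most one group per column;
\item each $g_{i,j}$ has degree at most one, so $S_X\cap S_Z=\emptyset$.
\end{itemize}
Conversely, any feasible $(S_X,S_Z)$ defines a matching in the same way, and the two maps are mutually inverse. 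The weight of $M$ equals the objective because every cell appears in at most one matched edge. A maximum-weight matching therefore gives an optimal schedule. The graph has $|V|=\chi_1\chi_2+\chi_1+\chi_2$ vertices and $2\chi_1\chi_2$ edges, so the construction is polynomial and the Hungarian algorithm solves it in $O(|V|^3)$ time.

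I expect the main obstacle to be the first step, not the graph-theoretic part. It has to be argued that the row and column constraints together with disjointness exactly capture validity under Restrictions \#1 and \#2. In particular, an $X$-group in row $i$ and a $Z$-group in column $j$ at a different cell must never interfere. The reason is that $X$ stabilizers combined inside one color group act along a row and touch only \Csectorname{} qubits of that row, and $Z$ stabilizers act analogously along columns. Beyond that, I would treat validity as given by the preceding discussion, since the lemma only asserts the reduction of the already-formulated combinatorial problem.
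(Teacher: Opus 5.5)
Your proposal is correct and is essentially the paper's proof. The paper builds the same bipartite graph: row vertices $\mathcal{X}_i$ and column vertices $\mathcal{Z}_j$ on one side, and color-group vertices $\Gamma_{i,j}$ on the other, with edges of weight $|\Gamma_{i,j}|$. It then shows feasibility in both directions using the same three degree constraints, and optimality from the fact that the matching weight equals the number of removed \Csectorname{} qubits.
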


\begin{proof}
    Recall that $\chi_i$-colorable classical checks on $\mathcal{G}_i$ ($i=1, 2$) create $\chi_1\chi_2$ color groups on the \Csectorname{} qubits. Due to restriction $\#2$, we cannot combine $X$ ($Z$) stabilizers on more than $\chi_1$ ($\chi_2$) color groups, which is a maximum of $\chi_1 + \chi_2$ color groups. Construct a bipartite graph $\mathpzc{G} = (V_L\cup V_R, E)$ as follows:
    \begin{enumerate}
        \item \textbf{Left vertex set.} Let the left vertex set $V_L$ have $\chi_1+\chi_2$ nodes. We are choosing up to this many color groups to combine both $X$ and $Z$ stabilizers, so we will end up matching the vertices of this set.
        \item \textbf{Right vertex set.} Let the right vertex set $V_R$ have $\chi_1\chi_2$ nodes. These vertices represent the color group candidates that we can choose for each vertex in $V_L$, so let us label them as the same thing:
        \begin{equation}
            V_R = \{\Gamma_{i, j}\}_{i,j =1}^{i = \chi_1,\,j=\chi_2}\, . 
        \end{equation}
        \item \textbf{Edge set.} Arrange the first $\chi_1$ ($\chi_2$) vertices of $V_L$ into sets $\mathcal{X}$ ($\mathcal{Z}$) such that \begin{equation}
            V_L = \mathcal{X} \cup \mathcal{Z} = \{\mathcal{X}_i\}_{i=1}^{i=\chi_1} \cup \{\mathcal{Z}_j\}_{j=1}^{j=\chi_2}\, .
        \end{equation}
        Let the $i^\text{th}$ ($j^\text{th}$) element of $\mathcal{X}$, $\mathcal{X}_i$ ($\mathcal{Z}_j$), have edges connected to the vertices in $V_R$ corresponding to the color groups of the $i^\text{th}$ row ($j^\text{th}$ column). Mathematically, the edge set is 
        \begin{equation}
        E=\Big\{(\mathcal{X}_i, \Gamma_{i, j}), (\mathcal{Z}_j, \Gamma_{i, j})\Big\}_{i, j = 1}^{i=\chi_1,\,j=\chi_2}\, ,\label{reductionedges}
        \end{equation}
        which is bipartite between $V_L$ and $V_R$. These connections represent the possible candidates for color group to combine on the $i^\text{th}$ row ($j^\text{th}$ column).
        \item \textbf{Edge weights.} Edges $(v_{i, j}\in V_L, \Gamma_{i, j} \in V_R) \in E$ have weight $w_{i, j} = |\Gamma_{i, j}|$. If $v_{i, j} = \mathcal{X}_i$ ($\mathcal{Z}_j$), this represents the number of \Csectorname{} qubits that one can remove by choosing to combine $X$ ($Z$) stabilizers in the color group $\Gamma_{i, j}$.
    \end{enumerate}
    \begin{figure*}
        \centering
        \includegraphics[width=.75\linewidth]{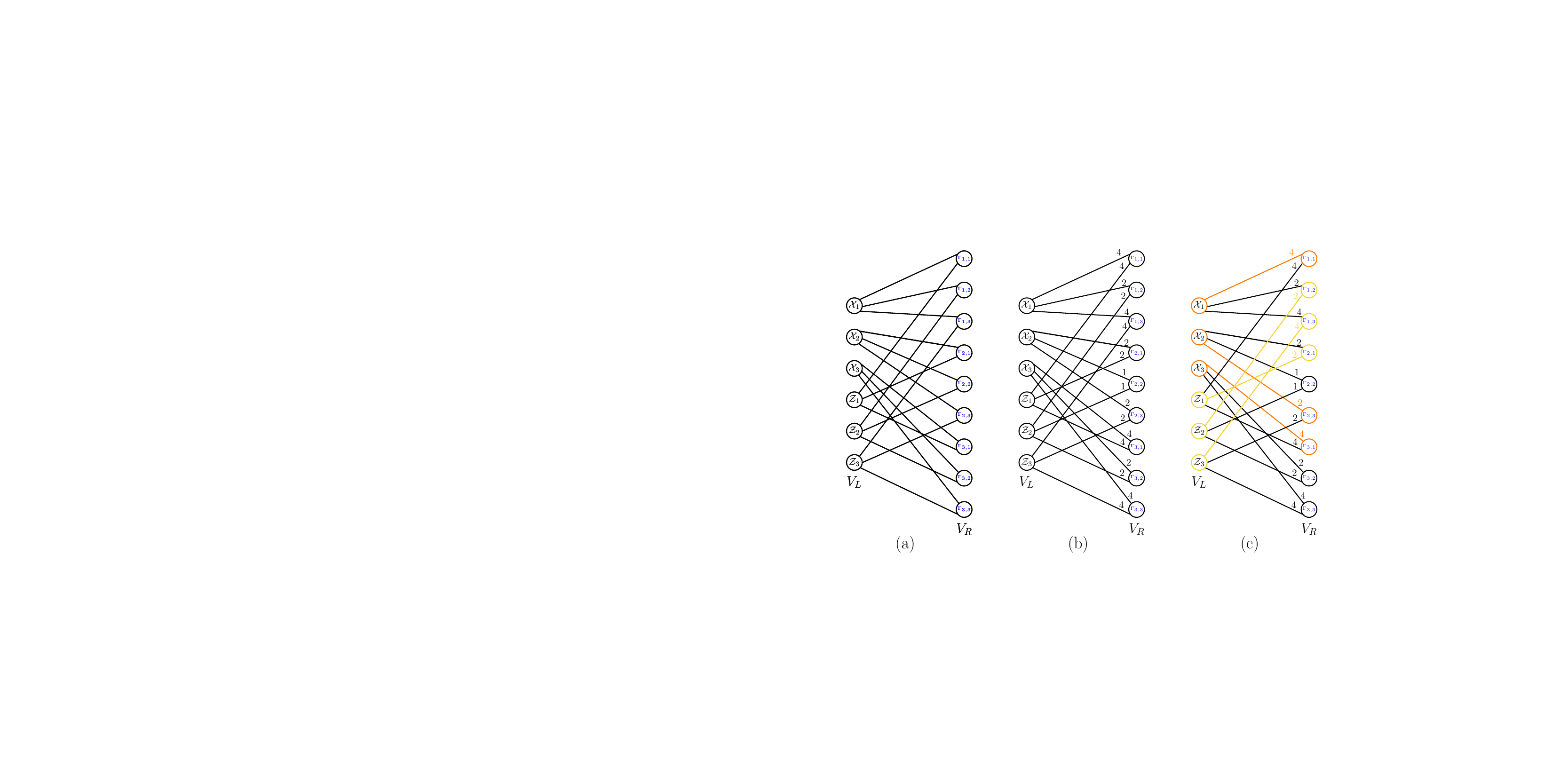}
        \caption{Example bipartite graph $\mathpzc{G}$ constructed from the HGP in Figure~\ref{hgpextboptimized}. \textbf{(a)} Organization of the vertex and edge orientation of the graph. An edge that ends at vertex $\Gamma_{i, j} \in V_R$ exists if the color group $\Gamma_{i, j}$ is a candidate for being combined by the vertex that the edge starts at, consistent with \eqref{reductionedges}. \textbf{(b)} Weights of the edges in the graph of (a). The weight of edge ending at vertex $\Gamma_{i, j} \in V_R$ is the number of \Csectorname{} qubits inside the color group $\Gamma_{i, j}$. \textbf{(c)} A maximum-weight matching of the graph in (b), denoted by the vertices and edges in orange and yellow. Note that this perfectly corresponds to the optimal stabilizer combining schedule in Figure~\ref{hgpextboptimized}(c), where the same color groups that we combine $X$ ($Z$) stabilizers in, colored in orange (yellow), correspond to the same right-side vertices in the matching. If we sum the weights of the matching, this corresponds to the 18 \Csectorname{} qubits removed.}
        \label{hgpoptreduction}
    \end{figure*}
    This bipartite graph construction leads us to the following claim for matching on the graph, which is sufficient in proving the lemma:
    \begin{claim}
        \textbf{Feasibility:} A stabilizer combination schedule is valid iff it corresponds to a matching on $\mathcal{G}$. \textbf{Optimality:} A maximum-weight matching on $\mathcal{G}$ corresponds to the optimal number of \Csectorname{} qubits removed.
    \end{claim}
    \begin{proof}
        \textbf{Feasibility.}
        
        ($\Longrightarrow$) Suppose we have a valid stabilizer combination schedule $S = (\Gamma^{(X)}, \Gamma^{(Z)})$ where $\Gamma^{(X)}$ ($\Gamma^{(Z)}$) $\subseteq \{\Gamma_{i, j}\}_{i = 1,\, j = 1}^{i = \chi_1,\, j = \chi_2}$ contains the color groups that we combine $X$ ($Z$) within. The correspondent matching is 
        \begin{equation}
            M(S)=\Big\{ (\mathcal{X}_i, \Gamma_{i, j}) \mid \Gamma_{i, j} \in \Gamma^{(X)} \Big \} \cup \Big\{ (\mathcal{Z}_j, \Gamma_{i, j}) \mid \Gamma_{i, j} \in \Gamma^{(Z)} \Big\} \,.
        \end{equation}
        This is a valid matching because we meet the following constraints.
        \begin{itemize}
            \item \emph{No vertex in $V_R$ is matched to two vertices in $V_L$.} Since $S$ is valid, we must have $\Gamma^{(X)} \cap \Gamma^{(Z)} = \emptyset$ as we don't combine $X$ and $Z$ on the same color groups.
            \item \emph{No vertex in $V_L$ is matched to two vertices in $V_R$.} Since $S$ is valid, we choose at most one color group in row $i \in [\chi_1]$ (column $j \in [\chi_2]$) to combine $X$ ($Z$) stabilizers. Hence no $\mathcal{X}_i$ ($\mathcal{Z}_j$) $\in V_L$ is incident to two chosen edges.
        \end{itemize}
        Thus $M(S)$ is a matching on $\mathpzc{G}$.
        
        ($\Longleftarrow$) Suppose we have any matching $M$ of $\mathpzc{G}$. We then have $S(M) = (\Gamma^{(X)}, \Gamma^{(Z)})$ with
        \begin{equation}
            \begin{aligned}
                \Gamma^{(X)} &= \Big\{\Gamma_{i, j} \mid (\mathcal{X}_i, \Gamma_{i, j}) \in M\Big\}\,, \\
                \Gamma^{(Z)} &= \Big\{\Gamma_{i, j} \mid (\mathcal{Z}_j, \Gamma_{i, j}) \in M\Big\}
            \end{aligned}
        \end{equation}
        as our corresponding stabilizer combination schedule. This is a valid schedule because we meet the following constraints.
        \begin{itemize}
            \item \emph{We cannot combine $X$ and $Z$ stabilizers in the same color group:} Since $M$ is a matching, it will not be the case that the specific color group $\Gamma_{i, j} \in V_R$ is matched with more than one vertex in $V_L$.
            \item \emph{We cannot combine $X$ ($Z$) stabilizers on color groups in the same row (column):} Since $M$ is a matching, each $\mathcal{X}_i$ ($\mathcal{Z}_j$) is matched with at most one $\Gamma_{i, \cdot}$ ($\Gamma_{\cdot, j}$), corresponding to one color group per row (column).
        \end{itemize}
        Thus $S(M)$ is a valid stabilizer combination schedule.
        
        \textbf{Optimality.} First we show that for \textit{any} matching $M$ and its corresponding schedule $S$, we show that the weight of that matching is the number of \Csectorname{} qubits removed in schedule $S$; i.e.
        \begin{equation}\label{eq:optimality}
            \operatorname{wt}M = \sum_S|\Gamma_{i, j}|\; .
        \end{equation}
        We can do this as follows. For any schedule $S = (\Gamma^{(X)}, \Gamma^{(Z)})$, define the total number of \Csectorname{} qubits removed $R(S)$ as
        \begin{equation}
            R(S) = \sum_{\Gamma_{i, j} \in \Gamma^{(X)}}|\Gamma_{i, j}| + \sum_{\Gamma_{i, j} \in \Gamma^{(Z)}}|\Gamma_{i, j}|\, .
        \end{equation}
        Let $M$ be a matching of $\mathpzc{G}$, and let $S(M)$ be the corresponding schedule. By construction of the edge weights, each $(\mathcal{X}_i, \Gamma_{i, j})$ or $(\mathcal{Z}_j, \Gamma_{i, j})$ in $M$ contributes $w_{i, j}$ to $\operatorname{wt}(M)$. But since $w_{i, j} = |\Gamma_{i, j}|$, this is the exact number of \Csectorname{} qubits we remove from combining in that group. Since $M$ is a matching, no color group is counted twice and so
        \begin{equation}
            R(S(M)) = \sum_S |\Gamma_{i, j}|= \sum_{M} w_{i, j} = \operatorname{wt} M\, .
        \end{equation} 
        This proves \eqref{eq:optimality}. The argument for optimality is now straightforward: since we have a maximum-weight matching, this must be an optimal stabilizer combination schedule.
    \end{proof}

    Now, by reducing the stabilizer combination problem to maximum weight matching on $\mathcal{G}$ and solving that optimally, we solve the stabilizer combination problem optimally.
\end{proof}



\begin{algorithm}[t]\label{alg:max_color_groups}
    \DontPrintSemicolon
    \caption{Choosing combination schedule by maximum-weight bipartite matching}
    \Input{The product-color groups $\Gamma=\{\Gamma_{i,j}\}_{i=1,\,j=1}^{i=\chi_1,\, j=\chi_2}$.}
    \Return{The sets $\Gamma^{(X)}$ and $\Gamma^{(Z)}$ of color groups where $X$-checks and $Z$-checks are combined, respectively.}

    Construct
    $V_L=\mathcal{X}\cup\mathcal{Z}
    =\{\mathcal{X}_i\}_{i=1}^{\chi_1}\cup\{\mathcal{Z}_j\}_{j=1}^{\chi_2}$\;
    Construct
    $V_R=\{\Gamma_{i,j}\}_{i=1,\,j=1}^{i=\chi_1,\,j=\chi_2}$\;

    Initialize a weighted adjacency matrix
    $W\in\mathbb{Z}_{\ge 0}^{(\chi_1+\chi_2)\times (\chi_1\chi_2)}$
    with all entries equal to $0$\;
    
    \For{$i=1$ \KwTo $\chi_1$}{
        \For{$j=1$ \KwTo $\chi_2$}{
            \If{$|\Gamma_{i,j}|>0$}{
                $W_{\mathcal{X}_i,\Gamma_{i,j}}\gets |\Gamma_{i,j}|$\;
                $W_{\mathcal{Z}_j,\Gamma_{i,j}}\gets |\Gamma_{i,j}|$\;
            }
        }
    }

    $M\gets \texttt{maximum\_weight\_bipartite\_matching}(W)$\;
    
    $\Gamma^{(X)}\gets \emptyset$\;
    $\Gamma^{(Z)}\gets \emptyset$\;

    \For{$i=1$ \KwTo $\chi_1$}{
        \If{there exists $x_i\in[\chi_2]$ such that $(\mathcal{X}_i,\Gamma_{i,x_i})\in M$}{
            $\Gamma^{(X)}\gets \Gamma^{(X)}\cup \{\Gamma_{i,x_i}\}$\;
        }
    }

    \For{$j=1$ \KwTo $\chi_2$}{
        \If{there exists $z_j\in[\chi_1]$ such that $(\mathcal{Z}_j,\Gamma_{z_j,j})\in M$}{
            $\Gamma^{(Z)}\gets \Gamma^{(Z)}\cup \{\Gamma_{z_j,j}\}$\;
        }
    }

    \Return{$\Gamma^{(X)},\Gamma^{(Z)}$}
\end{algorithm}

An example of the maximal qubit reduction problem is shown in Figure~\ref{hgpextboptimized}, and the corresponding graph $\mathpzc{G}$ on which we perform maximum-weight matching is shown in Figure~\ref{hgpoptreduction}.  
Additionally, the procedure is succinctly summarized in pseudocode in Algorithm \ref{alg:max_color_groups}.

\subsubsection{Cleaning stabilizers off \Csectorname{} qubits}

Now we discuss how the stabilizers are actually combined in each color group. For a given \Csectorname{} qubit, let $\Delta$ denote the number of $X$ (or $Z$) checks that it participates in. Suppose we choose to combine $X$ or $Z$ checks connected to that \Csectorname{} qubit. The relation between the new (combined) and old checks is represented by the linear transformation $w \in \F^{(\Delta-1)\times\Delta}_2$ with nonzero entries
\begin{align}\label{eq:1D rep PCM}
    w = \begin{pmatrix}
        1 & 1 &&&& \\
        &1&1&&& \\
        &&&\ddots&&
    \end{pmatrix} \, ,
\end{align}
which is equivalent to the parity-check matrix of the 1D repetition code of length $\Delta$. In other words, for each \Csectorname{} qubit that we want to remove, we have that the first new stabilizer is the combination of the first old stabilizer and the second old stabilizer, the second new stabilizer is the combination of the second old stabilizer and the third old stabilizer, etc. So for each putative \Csectorname{} qubit, the number of stabilizers per \Csectorname{} qubit decreases by one.

Because we only perform this reduction on disjoint checks, the entire reduction can be described by a matrix $W$ which, upon reordering its columns, is of block-diagonal form with either $w$ or $\ident$ as the blocks. In other words, we have $W \sim w \oplus w \oplus \ident \oplus \cdots$.
Since each local matrix $w$ has full (row) rank, so does the entire matrix $W$.

\begin{claim}[LDPC weights of reduced HGP] \label{claim:LDPC preservation}
    If the original HGP code was $(w_q,w_c)$-LDPC, then the reduced HGP code is $(\tilde{w}_q, \tilde{w}_c)$-LDPC with $\tilde{w}_q \leq 2w_q$ and $\tilde{w}_c \leq \max\{w_c,2(w_c-1)\}$.
\end{claim}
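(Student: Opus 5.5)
The plan is to bound the two weights separately, using only the block structure $\widetilde{H}_X = W_X H_X V$ (and likewise for $Z$), where $W_X$ is block-diagonal with blocks $w$ or $\ident$, and Restriction \#1.

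\textbf{Check weight.} A new check is either an untouched original check or a row of some block $w$, namely $s_a + s_{a+1}$ for two original checks $s_a, s_{a+1}$ of the same Pauli type. Both of these checks contain the \Csectorname{} qubit $q$ being cleaned.

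\begin{itemize}
\item An untouched check has weight at most $w_c$. Multiplying by $V$ only deletes columns, so it cannot raise the weight.
\item For a merged check, $\abs{s_a + s_{a+1}} \le \abs{s_a} + \abs{s_{a+1}} - 2\abs{s_a \cap s_{a+1}}$. The overlap contains at least $q$, so the support of $s_a+s_{a+1}$ has size at most $2w_c - 2 = 2(w_c-1)$. Restricting by $V$ removes $q$, which is already absent, and possibly other qubits, so the bound persists.
\end{itemize}

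I must check that no check is merged twice. This is where Restriction \#2 enters: $X$-checks are combined in only one color group per row, and the qubits in a color group share no stabilizer. Hence each original check lies in at most one block $w$. Inside a block $w$, a middle check $s_a$ appears in two new rows ($s_{a-1}+s_a$ and $s_a+s_{a+1}$), but each new row is still a sum of exactly two originals. Taking the max over both cases gives $\tilde{w}_c \le \max\{w_c, 2(w_c-1)\}$.

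\textbf{Qubit weight.} Fix a surviving qubit $p$ and count the new $X$-checks containing it; $Z$ is symmetric. Every new check containing $p$ is a row $r$ of $W_X$ with $(W_X H_X)_{r,p}=1$. Such a row must have at least one constituent original check $s$ with $p\in s$. So the number of new checks on $p$ is at most the sum, over original $X$-checks $s\ni p$, of the number of rows of $W_X$ that use $s$.

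By the structure of $w$ (a repetition-code parity-check matrix), each original check appears in at most two rows of $W_X$. The number of new $X$-checks on $p$ is therefore at most $2\cdot\#\{s\ni p\}$. I would take $w_q$ as the per-Pauli-type degree (or the total degree, treated consistently), which gives $\tilde{w}_q \le 2w_q$. Cancellations in the sum can only lower the count.

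\textbf{Main obstacle.} The delicate step is confirming that the overlap-by-one gain in the check bound is legitimate in every block: the two summed checks must genuinely share the cleaned qubit $q$, and not merely share some coordinate. This holds because all checks in a block $w$ are the $\Delta$ checks incident to $q$. The other point needing care is making the "each original check in at most one block" statement precise from the product coloring.
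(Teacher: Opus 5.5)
Your proposal is correct and follows essentially the same approach as the paper. For check weight, the two merged checks share the removed \Csectorname{} qubit, which the paper writes as $\abs{h_iV}\le w_c-1$. For qubit weight, your per-qubit count is the entrywise form of the paper's bound $\abs{W_XH_XV}_{\rm c}\le\abs{W_X}_{\rm c}\abs{H_X}_{\rm c}\abs{V}_{\rm c}$, with $\abs{W_X}_{\rm c}\le 2$ and $\abs{V}_{\rm c}=1$.
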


\begin{proof}
    Let $\abs{A}_{\rm r}$ and $\abs{A}_{\rm c}$ denote the maximum row and column weight respectively of a binary matrix $A$. Since the entries are non-negative, we have $\abs{A}_{\rm r} = \norm{A}_\infty$ and $\abs{A}_{\rm c} = \norm{A}_1$, where $\norm{\cdot}_p$ denotes the vector-induced $p$-norm and we have implicitly embedded its argument over the real numbers. Submultiplicativity of $\norm{\cdot}_p$ then implies that $\abs{AB}_{\rm r} \leq \abs{A}_{\rm r} \abs{B}_{\rm r}$ and $\abs{AB}_{\rm c} \leq \abs{A}_{\rm c} \abs{B}_{\rm c}$.

    We first show that $\tilde{w}_c \leq \max\{w_c,2(w_c-1)\}$. Consider any row $\tilde{h}$ of $\widetilde{H}_X$. If $\tilde{h}$ is an uncombined check, puncturing cannot increase its weight, so $|\tilde{h}|\leq w_c$. Otherwise, $\tilde{h}=(h_1+h_2)V$ for two original checks $h_1,h_2$ incident on the \Csectorname{} qubit being removed. Since that qubit is removed by $V$, we have $|h_1V|,|h_2V|\leq w_c-1$, and therefore
$
    |\tilde{h}| \leq |h_1V|+|h_2V| \leq 2(w_c-1).
$
Thus,
$
    \tilde{w}_c \leq \max\{w_c,2(w_c-1)\}.
$

    Now we show that $\tilde{w}_q \leq 2w_q$. By the same argument as for the row weight, we have $\abs{W_X}_{\rm c}\leq2$. Since $V$ is a restriction, $\abs{V}_{\rm c} = 1$. Thus, $\tilde{w}_q = \abs{W_XH_XV}_{\rm c} \leq \abs{W_X}_{\rm c} \abs{H_X}_{\rm c} \abs{V}_{\rm c} \leq 2\abs{H_X}_{\rm c} = 2w_q$. The same arguments hold using $\widetilde{H}_Z$.
\end{proof}

\subsubsection{Qubit-count and check-weight tradeoff}\label{sec:weight tradeoff}

The maximum-weight matching above optimizes the number of removed qubits. If a smaller check weight is preferred, one can instead impose a cap $w_{\max} \ge w_c$. Each matching edge is oriented: $(\mathcal X_i,\Gamma_{i,j})$ selects $X$-cleaning, whereas $(\mathcal Z_j,\Gamma_{i,j})$ selects $Z$-cleaning. For each oriented edge, we choose the local ordering in~\eqref{eq:1D rep PCM} that minimizes the largest weight of the adjacent pairwise sums in the corresponding Pauli family, and discard that edge if any resulting check exceeds $w_{\max}$. The unchanged checks must also satisfy the cap. Since the degree $\Delta$ is bounded for an LDPC family, the local ordering search has constant cost. A maximum-weight matching on the remaining oriented edges then gives the largest reduction among schedules that satisfy the cap.

More generally, any subset of a valid schedule remains valid. One may therefore restore selected \Csectorname{} qubits and their original checks. The cap controls the maximum resulting check weight, while the matching weight $|\Gamma_{i,j}|$ can be replaced by an additive objective that also penalizes the sum of the resulting check weights. This gives a direct tradeoff between spatial overhead and syndrome-extraction cost.

\subsection{Parameter preservation} \label{sec:param preservation}

Taking products of stabilizers preserves commutation before puncturing, but we must also check that deleting coordinates preserves it. Let $R$ denote the removed coordinates. Since $V$ selects the complementary coordinates, $VV^\transp=\ident_n+\sum_{q\in R}\mathbf e_q\mathbf e_q^\transp$ over $\F_2$. Therefore,
{
\begin{align}
\widetilde H_X\widetilde H_Z^\transp
&=W_XH_XVV^\transp H_Z^\transp W_Z^\transp \notag\\
&=\sum_{q\in R}(W_XH_X\mathbf e_q)(W_ZH_Z\mathbf e_q)^\transp=0\, . \label{eq:reduced commutation}
\end{align}
}
The term containing $H_XH_Z^\transp$ vanishes by the original CSS condition. For each $q\in R$, the construction cleans either every transformed $X$-check or every transformed $Z$-check from $q$, so one factor in the corresponding outer product is zero. Thus, the reduced matrices define a valid CSS code.

Let $\llbracket \tilde{n}, \tilde{k}, \tilde{d} \rrbracket$ be the code parameters of the reduced HGP code $\widetilde{\mathcal{Q}}$, and ${\widetilde{H}_X} \in \F_2^{\tilde{m}_X \times \tilde{n}}$ and ${\widetilde{H}_Z} \in \F_2^{\tilde{m}_Z \times \tilde{n}}$ the reduced $X$ and $Z$ parity-check matrices. As mentioned at the end of Section~\ref{subsec:HGP review}, we will utilize the following assumption:
\begin{quote}
    \textbf{Assumption:} The classical input parity-check matrices $H_1$ and $H_2$ have full row rank. It follows from \eqref{eq:HGP H_X,H_Z} that $H_X$ and $H_Z$ for the HGP code also have full row rank and so $k_1^\transp k_2^\transp = 0$; i.e. there are no \Csectorname{} logical qubits.
\end{quote}

We first show that the code dimension, or number of logical qubits, is unchanged by the reduction.

\begin{thm}[Logical dimension preservation]\label{thm: k preservation}
    $\tilde{k}=k$. In other words, the number of \Bsectorname{} logical qubits stays the same after applying the qubit reduction procedure in Section~\ref{sec:procedure}.
\end{thm}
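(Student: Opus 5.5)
The plan is to count dimensions via \eqref{eq:CSS k}: $\tilde k = \tilde n - \rank\widetilde H_X - \rank\widetilde H_Z$, and to compare each term with the original $k = n - \rank H_X - \rank H_Z$. The removed coordinates $R$ form a set of \Csectorname{} qubits, so $\tilde n = n - |R|$. For each $q\in R$ we combined $\Delta_q$ checks of one Pauli type into $\Delta_q-1$ new checks. The block-diagonal form $W\sim w\oplus w\oplus\ident\oplus\cdots$ then shows that the number of rows drops by exactly one per removed qubit: $\tilde m_X + \tilde m_Z = m_X + m_Z - |R|$. Hence it suffices to show that $\widetilde H_X$ and $\widetilde H_Z$ still have full row rank. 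The counts then match: $\tilde k = (n-|R|) - (m_X+m_Z-|R|) = n - m_X - m_Z = k$, where the last equality uses the full-row-rank assumption on $H_X, H_Z$.

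To prove $\widetilde H_X = W_X H_X V$ has full row rank, I will take a vector $\mathbf u$ with $\mathbf u^\transp W_X H_X V = 0$ and show $\mathbf u = 0$.

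\textbf{Step 1: the combination $\mathbf v^\transp H_X$ vanishes away from the $Z$-cleaned qubits.} Write $\mathbf v = W_X^\transp \mathbf u$. Then $\mathbf v^\transp H_X$ is supported only on $R$.
\begin{itemize}
\item On the $X$-cleaned qubits in $R$, every row of $W_X H_X$ vanishes, so $\mathbf v^\transp H_X$ vanishes there too.
\item What remains is that $\mathbf v^\transp H_X$ is supported only on $R_Z$, the qubits at which $Z$-checks were combined.
\end{itemize}

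\textbf{Step 2: such a combination is zero.} Restriction \#2 says the qubits of $R_Z$ share no stabilizer support with each other, and each lies in a single $Z$-cleaned color group. I will use this to show that a nonzero combination of $X$-checks cannot be supported solely on $R_Z$. Concretely, $\mathbf v^\transp H_X$ lies in $\rs H_X$. Because $H_X$ has full row rank, $\mathbf v^\transp H_X = 0$ would force $\mathbf v = 0$. Then $W_X^\transp \mathbf u = 0$ gives $\mathbf u = 0$, since $W_X$ has full row rank. So the task reduces to showing that no nonzero element of $\rs H_X$ is supported inside $R_Z$.

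\textbf{Step 3 (main obstacle): rule out a nonzero $X$-stabilizer supported in $R_Z$.} I expect this to be the crux. The plan is to exploit the product structure.
\begin{itemize}
\item An $X$-stabilizer supported only on \Csectorname{} qubits has the form $\mathbf a^\transp H_X$ with $\mathbf a\in\F_2^{m_1}\otimes\F_2^{n_2}$ and $(H_1\otimes\ident)^\transp\mathbf a$ vanishing on \Bsectorname{} qubits.
\item Full row rank of $H_1$ gives $\ker H_1^\transp = 0$, which forces $\mathbf a = 0$.
\end{itemize}
So in fact no nonzero $X$-stabilizer is supported purely on \Csectorname{} qubits. This is cleaner than a coloring argument and makes Step 2 immediate.

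The symmetric argument, using full row rank of $H_2$, handles $\widetilde H_Z$. As a cross-check I will also argue the reverse direction. The canonical logical operators \eqref{eq:HGP G_X,G_Z B} remain valid and independent after reduction, since they are supported on the untouched \Bsectorname{} qubits and pair as $\ident_{k_1k_2}$. This gives $\tilde k \ge k$, consistent with the equality above.
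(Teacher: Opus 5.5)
Your proposal is correct and follows essentially the paper's argument: the number of checks drops by exactly one per removed qubit, and $W_XH_XV$ keeps full row rank because $H_XV$ still contains the full-row-rank block $H_1\otimes\ident_{n_2}$ on the untouched bit-type qubits (your Step 3) and $W_X$ has full row rank, with the symmetric argument for $Z$. Your detour in Steps 1--2 through $R_Z$ and the coloring is unnecessary, as you noticed yourself: $\mathbf v^\transp H_X$ vanishing on all bit-type coordinates already forces $\mathbf v=0$.
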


\begin{proof}
    Let $n_{\rm rem} := n-\tilde{n}$ be the number of removed \Csectorname{} qubits from the reduction procedure. By construction, we also have $(m_X+m_Z) - (\tilde{m}_X + \tilde{m}_Z) = n_{\rm rem}$ since we are removing one stabilizer per removed qubit. The restriction $V$ retains the entire bit-type sector. Thus, $H_XV$ contains the block $H_1\otimes\ident_{n_2}$ and $H_ZV$ contains the block $\ident_{n_1}\otimes H_2$. These blocks have full row rank under our assumption, and hence so do $H_XV$ and $H_ZV$. The matrices $W_X$ and $W_Z$ also have full row rank. Indeed, $\mbf uWA=0$ implies $\mbf uW=0$ and then $\mbf u=0$ whenever $A$ and $W$ have full row rank. It follows that $\rank W_XH_XV=\tilde m_X$ and $\rank W_ZH_ZV=\tilde m_Z$.

    In all, this gives 
    \begin{align}
        \tilde{k} &= \tilde{n} - \rank({W_XH_XV}) - \rank({W_ZH_ZV}) \notag \\ 
        &= n - n_{\rm rem} - \tilde{m}_X - \tilde{m}_Z \notag \\ 
        &= n - n_{\rm rem} + n_{\rm rem} - m_X - m_Z \notag \\ 
        &= n - \rank H_X - \rank H_Z \notag \\ 
        &= k. \label{k'}
    \end{align}
\end{proof}

Recall that the original logical operators are defined by the rows of $G_X^\mathscr{B}$ and $G_Z^\mathscr{B}$ given in \eqref{eq:HGP G_X,G_Z B}. 
Let $\tilde{Q}$ be the sector of qubits that are preserved after the reduction, and let $H_X|_{\tilde{Q}}$ and  $H_Z|_{\tilde{Q}}$ represent the restriction of $H_X$ and $H_Z$ to the qubits in $\tilde{Q}$, so that the number of columns match those of ${\widetilde{H}_X}$ and ${\widetilde{H}_Z}$. Likewise, let $G_X^\mathscr{B}|_{\tilde{Q}}$ and $G_Z^\mathscr{B}|_{\tilde{Q}}$ represent the restriction of $G_X^\mathscr{B}$ and $G_Z^\mathscr{B}$ to the qubits in $\tilde{Q}$.

We first derive a useful relation between the restricted stabilizer subspaces of the original and reduced HGP codes. Recall that $\widetilde{H}_X = W_XH_XV$ and $\widetilde{H}_Z = W_ZH_ZV$ from \eqref{eq:tildeH,H relation}. Since $V$ is identity on the \Bsectorname{} qubits, $\widetilde{H}_X|_\mathscr{B} = W_XH_X|_\mathscr{B}$ and $\widetilde{H}_Z|_\mathscr{B} = W_ZH_Z|_\mathscr{B}$. It follows that
\begin{subequations}\label{eq:rs containment}
\begin{align}
    \rs{\widetilde{H}_X}|_\mathscr{B} &\subset \rs{H_X}|_\mathscr{B}  \label{eq:X rs containment} \\
    \rs{\widetilde{H}_Z}|_\mathscr{B} &\subset \rs{H_Z}|_\mathscr{B} \, . \label{eq:Z rs containment}
\end{align}
\end{subequations}

We now show that the original HGP logical basis \eqref{eq:HGP G_X,G_Z B} is still a valid logical basis for the reduced HGP code upon restricting to the qubits in $\tilde{Q}$.

\begin{thm}[Canonical logical basis preservation] \label{thm:basis preservation}
    $\widetilde{G}^\mathscr{B}_X := G_X^\mathscr{B}|_{\tilde{Q}}$ and $\widetilde{G}^\mathscr{B}_Z := G_Z^\mathscr{B}|_{\tilde{Q}}$ form a valid logical Pauli basis for the \Bsectorname{} qubits in the reduced HGP code with parity-check matrices $\widetilde{H}_X$ and $\widetilde{H}_Z$.
\end{thm}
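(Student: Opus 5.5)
We need to show three things: (i) the rows of $\widetilde{G}^\mathscr{B}_X$ lie in $\ker\widetilde{H}_Z$ and the rows of $\widetilde{G}^\mathscr{B}_Z$ lie in $\ker\widetilde{H}_X$; (ii) they pair correctly, i.e.\ $\widetilde{G}^\mathscr{B}_X(\widetilde{G}^\mathscr{B}_Z)^\transp=\ident_{k_1k_2}$ up to a fixed invertible change of basis; (iii) there are exactly $\tilde k$ of each. Since Theorem~\ref{thm: k preservation} gives $\tilde k=k=k_1k_2$, condition (ii) will imply that the $k$ rows of each matrix are independent modulo the reduced stabilizers. Together with (i), they then form a complete logical basis.

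\textbf{Step 1: commutation with the reduced checks.} The canonical operators \eqref{eq:HGP G_X,G_Z B} are supported only on bit-type qubits, and $V$ keeps every bit-type qubit. So restricting to $\tilde Q$ loses no support, and $G^\mathscr{B}_X|_{\tilde Q}=G^\mathscr{B}_X V$, where the right-hand side is the vector obtained by dropping the (zero) removed coordinates. Then
\[
\widetilde{H}_Z(\widetilde{G}^\mathscr{B}_X)^\transp = W_Z H_Z V V^\transp (G^\mathscr{B}_X)^\transp .
\]
Using $VV^\transp=\ident_n+\sum_{q\in R}\mathbf e_q\mathbf e_q^\transp$, the correction terms vanish because $G^\mathscr{B}_X\mathbf e_q=0$ for every removed (check-type) $q$. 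What remains is $W_Z H_Z (G^\mathscr{B}_X)^\transp = W_Z\cdot 0=0$, since the canonical logicals lie in $\ker H_Z$. Equivalently, one can argue directly on the bit-type block: $\widetilde{H}_Z|_\mathscr{B}=W_ZH_Z|_\mathscr{B}$, and containment \eqref{eq:Z rs containment} gives the result immediately. The $Z$ case is symmetric.

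\textbf{Step 2: the symplectic pairing.} Both restricted bases are supported on $\mathscr{B}\subset\tilde Q$. Their overlap matrix is therefore unchanged:
\[
\widetilde{G}^\mathscr{B}_X(\widetilde{G}^\mathscr{B}_Z)^\transp = (E^\mathscr{B}_1G_1^\transp)\otimes(G_2(E^\mathscr{B}_2)^\transp) = \ident_{k_1}\otimes\ident_{k_2},
\]
using $E^\mathscr{B}_iG_i^\transp=\ident_{k_i}$. This shows the rows of $\widetilde{G}^\mathscr{B}_X$ are linearly independent. It also shows that no nonzero combination of them lies in $\rs\widetilde{H}_X$: any element of $\rs\widetilde{H}_X$ commutes with every row of $\widetilde{G}^\mathscr{B}_Z$ by Step 1, whereas a nonzero combination $\mathbf u\widetilde{G}^\mathscr{B}_X$ has nonzero overlap $\mathbf u$ with $\widetilde{G}^\mathscr{B}_Z$. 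Symmetrically, the rows of $\widetilde{G}^\mathscr{B}_Z$ are independent modulo $\rs\widetilde{H}_Z$.

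\textbf{Step 3: completeness.} We now have $k_1k_2$ $X$-logicals that are independent in $\ker\widetilde{H}_Z/\rs\widetilde{H}_X$, a space of dimension $\tilde k$. By Theorem~\ref{thm: k preservation}, $\tilde k=k=k_1k_2$, where the last equality uses the full-row-rank assumption, which forces $k_1^\transp k_2^\transp=0$. The rows therefore span this quotient, and likewise for $Z$. Combined with the identity pairing, they form a valid canonical logical Pauli basis.

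The only real subtlety is the Step 1 bookkeeping with $VV^\transp$, namely making sure the removed-coordinate corrections vanish. They do so trivially here because the canonical logicals have no check-type support. So the main load-bearing ingredient is really the dimension count from Theorem~\ref{thm: k preservation}, which the theorem already supplies.
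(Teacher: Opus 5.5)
Your proposal is correct and follows essentially the same route as the paper. Both arguments use three ingredients: the $VV^\transp$ computation placing the restricted canonical logicals in $\ker\widetilde H_Z$ (resp.\ $\ker\widetilde H_X$), the unchanged identity pairing $\widetilde G^\mathscr{B}_X(\widetilde G^\mathscr{B}_Z)^\transp=\ident_k$, and completeness via $\tilde k=k$ from Theorem~\ref{thm: k preservation}. Your Step 2 is slightly tighter than the paper's, which only rules out individual rows being stabilizers: you show every nonzero combination $\mathbf u\widetilde G^\mathscr{B}_X$ lies outside $\rs\widetilde H_X$, and that is exactly the independence modulo stabilizers the dimension count needs.
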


\begin{proof}
    First, we note that the restrictions that define $\widetilde{G}^\mathscr{B}_X$ and $\widetilde{G}^\mathscr{B}_Z$ only remove coordinates in the \Csectorname{} sector where the entries are all zero, and so the $k$-qubit Pauli algebra $\widetilde{G}^\mathscr{B}_X (\widetilde{G}^\mathscr{B}_Z)^\transp = {G}^\mathscr{B}_X ({G}^\mathscr{B}_Z)^\transp = \ident_k = \ident_{\tilde k}$ is still satisfied.
    
    We will now show that the original $\bar{X}^\mathscr{B}$ logicals are still valid under the transformed codes by showing that both
    \begin{enumerate}
        \item rows of $\widetilde{G}^\mathscr{B}_X$ are contained in $\ker{\widetilde{H}_Z}$
        \item rows of $\widetilde{G}^\mathscr{B}_X$ are not contained in $\rs{\widetilde{H}_X}$
    \end{enumerate}
    and claim that the same holds for $\bar{Z}^\mathscr{B}$ \Bsectorname{} logicals.
    We have that
    \begin{enumerate}
        \item Since $\widetilde{G}_X^\mathscr{B}=G_X^\mathscr{B}V$ and the removed coordinates have zero support in $G_X^\mathscr{B}$, we have $VV^\transp(G_X^\mathscr{B})^\transp=(G_X^\mathscr{B})^\transp$. Therefore,
        \begin{align}
            \widetilde{H}_Z(\widetilde{G}_X^\mathscr{B})^\transp
            &=W_ZH_ZVV^\transp(G_X^\mathscr{B})^\transp \notag\\
            &=W_ZH_Z(G_X^\mathscr{B})^\transp=0,
        \end{align}
        so the rows of $\widetilde{G}_X^\mathscr{B}$ are contained in $\ker{\widetilde{H}_Z}$.
        \item Each row of $\widetilde G_X^\mathscr{B}$ anticommutes with its paired row of $\widetilde G_Z^\mathscr{B}$. If it belonged to $\rs\widetilde H_X$, it would be an $X$-stabilizer and would commute with every $Z$ logical, which is a contradiction. Thus, no row of $\widetilde G_X^\mathscr{B}$ is a stabilizer.
    \end{enumerate}
    The same argument holds for $\widetilde{G}_Z^\mathscr{B}$ with $X$ and $Z$ exchanged. Since the $k=\tilde{k}$ logical pairs have the identity pairing above, they form a complete logical Pauli basis for the reduced code.
\end{proof}

We now show that the minimum distances of the reduced HGP code are the same as those of the original HGP code, making use of both the logical sector weights (Proposition~\ref{prop:HGP logical sector weights}) and the preserved canonical logical basis (Theorem \ref{thm:basis preservation}).

\begin{thm}[Minimum distance preservation]\label{thm:distance preservation}
    $\tilde{d}_X=d_X$ and $\tilde{d}_Z=d_Z$. In other words, the minimum $X$ and $Z$ distances stay the same after applying the qubit reduction procedure in Section \ref{sec:procedure}.
\end{thm}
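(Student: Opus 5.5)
The plan is to prove matching upper and lower bounds for $\tilde d_X$; the case of $\tilde d_Z$ follows by exchanging the roles of $X$ and $Z$, rows and columns, and $H_1$ and $H_2$. Under the full-row-rank assumption, $H_1^\transp$ and $H_2^\transp$ have trivial kernels, so $d_1^\transp=d_2^\transp=\infty$. By \eqref{eq:HGP d_X} and \eqref{eq:HGP d_Z} this gives $d_X=d_2$ and $d_Z=d_1$. So it suffices to show $\tilde d_X=d_2$.

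\textbf{Upper bound.} Choose the canonical basis so that some row of $G_2$ is a minimum-weight codeword $\mathbf c$ of $H_2$. The corresponding row of $\widetilde G^\mathscr{B}_X$ is $\mathbf e\otimes\mathbf c$ for a unit vector $\mathbf e$, and it is supported entirely on \Bsectorname{} qubits, which are untouched by $V$. By Theorem~\ref{thm:basis preservation} it is a nontrivial logical $\bar X$ of the reduced code, and its weight is $d_2$. Hence $\tilde d_X\le d_2$.

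\textbf{Lower bound.} Let $\tilde{\mathbf x}\in\ker\widetilde H_Z\setminus\rs\widetilde H_X$, and let $T\subseteq[n_2]$ be the set of \Bsectorname{} columns on which $\tilde{\mathbf x}$ has support. Suppose for contradiction that $|T|<d_2$. For each $j\notin T$ and each row $\mathbf g$ of $G_1$, consider the $Z$-operator $\mathbf g\otimes\mathbf e_j$ placed on the \Bsectorname{} qubits, with zeros elsewhere.
\begin{enumerate}
\item It lies in $\ker H_X$, because $(H_1\otimes\ident)(\mathbf g\otimes\mathbf e_j)^\transp=0$ and $(\ident\otimes H_2^\transp)$ acts only on the \Csectorname{} block, where the operator is zero.
\item It therefore lies in $\ker\widetilde H_X=\ker W_XH_XV$, since the restriction by $V$ loses nothing: the operator has no support on removed coordinates.
\item It commutes with $\tilde{\mathbf x}$, because their supports are disjoint (column $j\notin T$).
\end{enumerate}

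Next I compute the symplectic pairing of these operators with the canonical $X$ basis $\widetilde G^\mathscr{B}_X=E_1^\mathscr{B}\otimes G_2$. The pairing of $E_1^\mathscr{B}\otimes G_2$ against $\mathbf g\otimes\mathbf e_j$ factors as $(E_1^\mathscr{B}\mathbf g^\transp)\otimes(G_2\mathbf e_j)$. Ranging over all rows $\mathbf g$ of $G_1$, the first factor gives $E_1^\mathscr{B}G_1^\transp=\ident_{k_1}$. Ranging over $j\notin T$, the second factor gives the columns of $G_2$ indexed by $[n_2]\setminus T$. These columns span $\F_2^{k_2}$: otherwise some nonzero $\mathbf y$ satisfies $\mathbf y^\transp G_2\mathbf e_j=0$ for all $j\notin T$, so $\mathbf y^\transp G_2$ would be a nonzero codeword supported inside $T$, contradicting $|T|<d_2$. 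Hence the full pairing matrix has rank $k_1k_2=\tilde k$ (Theorem~\ref{thm: k preservation}).

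By Theorem~\ref{thm:basis preservation}, the rows of $\widetilde G^\mathscr{B}_X$ represent all logical $X$ classes of the reduced code. A nondegenerate pairing with them means the operators $\{\mathbf g\otimes\mathbf e_j\}$ represent all $\tilde k$ logical $Z$ classes. Since $\tilde{\mathbf x}\in\ker\widetilde H_Z$ commutes with all of them, $\tilde{\mathbf x}$ acts trivially on every logical qubit, so $\tilde{\mathbf x}\in\rs\widetilde H_X$. This contradicts the choice of $\tilde{\mathbf x}$. Therefore $|\tilde{\mathbf x}|\ge|\tilde{\mathbf x}^\mathscr{B}|\ge|T|\ge d_2$, which gives $\tilde d_X\ge d_2$.

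\textbf{Main obstacle.} The delicate step is the last deduction: that an element of $\ker\widetilde H_Z$ commuting with a complete set of logical $Z$ representatives must lie in $\rs\widetilde H_X$. This is the reduced-code analogue of Proposition~\ref{prop:HGP logical sector weights}, and it must be established directly for $\widetilde H_X,\widetilde H_Z$. I will prove it by counting dimensions. The space $\ker\widetilde H_Z$ decomposes as $\rs\widetilde H_X$ plus a $\tilde k$-dimensional complement on which the pairing with the chosen $Z$ representatives is nondegenerate. Since $\tilde{\mathbf x}$ has zero pairing with all of them, its component in that complement vanishes, so $\tilde{\mathbf x}\in\rs\widetilde H_X$.
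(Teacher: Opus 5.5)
Your proposal is correct. Your upper bound is essentially the paper's, but your lower bound takes a genuinely different route.

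The paper argues by lifting. Any reduced representative $\mbf x+\mbf uW_XH_XV$ is the restriction of the original representative $\mbf x+\mbf uW_XH_X$ of the same canonical class, and its \Bsectorname{} part is identical. Proposition~\ref{prop:HGP logical sector weights}, with $d_1^\transp=\infty$ ruling out the \Csectorname{} alternative, then gives at least $d_2$ columns immediately.

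You never return to the original code. Instead you run a cleaning/duality argument inside the reduced code. For any set $T$ of fewer than $d_2$ columns, the column-supported operators $\mbf g\otimes\mbf e_j$ with $j\notin T$ survive the reduction, since they vanish on removed coordinates. They pair with the canonical $X$ basis through $\ident_{k_1}\otimes G_2|_{[n_2]\setminus T}$, which has full rank $k_1k_2=\tilde k$. Nondegeneracy of the logical pairing then forces $\tilde{\mbf x}\in\rs\widetilde H_X$. Your final linear-algebra step is sound. The pairing descends to $\ker\widetilde H_Z/\rs\widetilde H_X$ because the $\mbf g\otimes\mbf e_j$ lie in $\ker\widetilde H_X$. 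The rows of $\widetilde G^\mathscr{B}_X$ form a basis of that quotient by Theorems~\ref{thm: k preservation} and~\ref{thm:basis preservation}.

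The trade-off is as follows. The paper's route is shorter and reuses the existing sector-weight proposition, but it depends on every reduced stabilizer lifting to an original one with the same \Bsectorname{} support. Your route effectively reproves the \Bsectorname{} half of that proposition directly for $\widetilde H_X,\widetilde H_Z$. It would apply to any repair of the \Csectorname{} sector that keeps $\tilde k=k_1k_2$ and keeps those column-supported $Z$ operators in $\ker\widetilde H_X$, whether or not reduced stabilizers lift.

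One small point in your upper bound: you do not need $\mbf c$ to be literally a row of $G_2$. With unit-vector $E_2^\mathscr{B}$, that would require an information set meeting $\operatorname{supp}\mbf c$ in exactly one position, which you did not justify. It is unnecessary, because $\mbf e\otimes\mbf c$ is a nonzero combination of rows of $\widetilde G^\mathscr{B}_X$ for any choice of $G_2$, and is therefore already nontrivial.
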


\begin{proof}
    Let $(d_X, d_Z)$ and $(\tilde{d}_X,\tilde{d}_Z)$ denote the minimum distances of the original and reduced HGP codes respectively. We will prove $\tilde{d}_X=d_X$, and the argument for $\tilde{d}_Z=d_Z$ is nearly identical. 

    Under the full-row-rank assumption, $d_1^\transp$ is undefined/infinity, so~\eqref{eq:HGP d_X} gives $d_X=d_2$. By Theorem~\ref{thm:basis preservation}, every nontrivial reduced $X$-logical class is represented by a nonzero linear combination $\mbf{x}$ of the rows of $G_X^\mathscr{B}$. Any representative of that class has the form $\mbf{x}+\tilde{\mbf{s}}_X$ with $\tilde{\mbf{s}}_X\in\rs\widetilde H_X$. Write $\tilde{\mbf{s}}_X=\mbf{u}W_XH_XV$ and lift it to the original stabilizer $\mbf{s}_X=\mbf{u}W_XH_X$. Since $V$ leaves the bit-type sector unchanged, $(\mbf{x}+\tilde{\mbf{s}}_X)^\mathscr{B}=(\mbf{x}+\mbf{s}_X)^\mathscr{B}$. This lift is a representative of the same nontrivial \Bsectorname{} logical class in the original HGP code. Proposition~\ref{prop:HGP logical sector weights} therefore gives $\abs{(\mbf{x}+\tilde{\mbf{s}}_X)^\mathscr{B}}_{\rm c}\geq d_2$, so every reduced representative has weight at least $d_2$. Hence, $\tilde d_X\geq d_2$.

    For the reverse inequality, choose a minimum-weight codeword $\mbf{c}\in\ker H_2\setminus\{\mathbf 0\}$ and a unit vector $\mbf{e}_i\notin\rs H_1$. Then $(\mbf{e}_i\otimes\mbf{c}\mid\mathbf 0)$ represents a \Bsectorname{} logical $\bar X$ of weight $d_2$. The reduction removes only \Csectorname{} coordinates, so the same representative has weight $d_2$ in the reduced code. Thus, $\tilde d_X=d_2=d_X$. Exchanging the two inputs and $X$ with $Z$ gives $\tilde d_Z=d_1=d_Z$, and hence $\tilde d=d$.
\end{proof}

\subsection{Special case: Quantum Tanner transformation}\label{qTanner transformation}

\begin{figure}[t]
  \centering
  \includegraphics[width=0.4\textwidth]{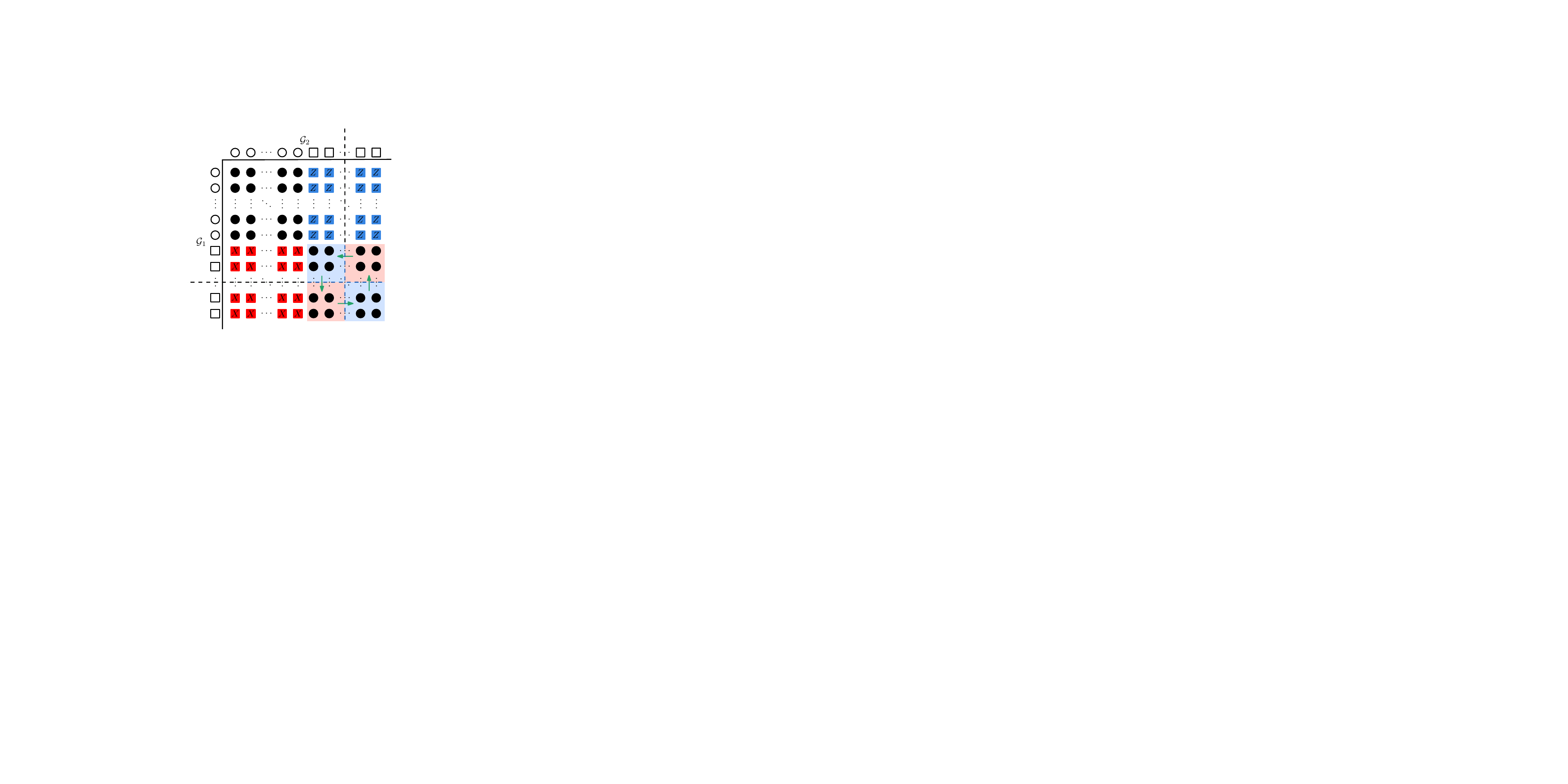}
  \caption{When the checks of the classical codes can be 2-colored, the \Csectorname{} qubits are sectored into four quadrants that do not share any stabilizer support. Then, after combining $Z$ stabilizers in blue quadrants (quadrants I and IV) and $X$ stabilizers in red quadrants (quadrants II and III), their supports are moved in the direction of the green arrows. One can then remove all \Csectorname{} qubits.}
  \label{fig:trans}
\end{figure}


We now describe how the qubit reduction procedure of Section \ref{sec:procedure} encompasses the quantum Tanner transformation~\cite{Leverrier_2025_efficient} for HGP codes, in the special case where the classical input codes are built from simple bipartite graphs.

Suppose we can define the classical input codes $\mathcal{C}_1$, $\mathcal{C}_2$ as Tanner codes $\mathcal{T}_1\big(\mathpzc{G}_1,\hat{\mathcal{C}}_1\big)$, $\mathcal{T}_2\big(\mathpzc{G}_2,\hat{\mathcal{C}}_2\big)$ (see Appendix~\ref{app:Tanner codes}) on $\Delta_1$-regular, $\Delta_2$-regular, bipartite base graphs $\mathpzc{G}_1 = (V^{(1)}_\ell \cup V^{(1)}_r, E_1)$, $\mathpzc{G}_2 = (V^{(2)}_\ell \cup V^{(2)}_r, E_2)$ where edges do not exist between vertices within $V^{(i)}_\ell$ or $V^{(i)}_r$ ($    i= 1, 2$). Note that one can create a $\Delta$-regular, bipartite graph by taking the bipartite double cover an arbitrary $\Delta$-regular graph (see Appendix~\ref{app:double cover}). For each Tanner code, each vertex of the bipartite graph hosts the parity-checks of the local code $\hat{\mathcal{C}}$, and each edge is a physical bit. Now, rather than directly coloring the parity checks, we color each vertex cluster of parity checks. Since the underlying graph is bipartite, we only need two colors. The HGP code now has a similar coloration structure to Figure \ref{fig:trans}, but each \Csectorname{} ``qubit'' is now a cluster of qubits corresponding to the cluster of checks in the classical Tanner codes.


Additionally, defining the classical codes in this way allows us to easily understand how we can combine stabilizers. Let $\hat{g}_1$, $\hat{g}_2$ and $\hat{h}_1, \hat{h}_2$ be the generator and parity-check matrices of the local vertex codes which satisfy
\begin{align}
    \hat{h}_1\hat{g}_1^\transpose = \hat{h}_2\hat{g}_2^\transpose = 0 \, . \label{localtrans}
\end{align}
In the transpose Tanner codes $\mathcal{T}_1^\transpose\big(\mathpzc{G}_1,\hat{\mathcal{C}}_1\big)$, $\mathcal{T}_2^\transpose\big(\mathpzc{G}_2,\hat{\mathcal{C}}_2\big)$, where the roles of checks and bits are swapped, local check support looks like $\hat{h}_1^\transpose$, $\hat{h}_2^\transpose$, whose structure is mimicked on the \Csectorname{} qubits of the HGP code. Noting that \eqref{localtrans} is equivalent to $\hat{g}_1\hat{h}_1^\transpose = \hat{g}_2\hat{h}_2^\transpose = 0$ after transposing both sides, we can combine $Z$ stabilizers according to the rows of $\hat{g}_1$ and combine $X$ stabilizers according to the rows of $\hat{g}_2$, which will shift support on \Csectorname{} qubits like in Figure~\ref{fig:trans}. When each vertex only hosts a single parity check, i.e. the local code is a single parity-check code, then the Tanner code is simply a cycle code on the bipartite graph, and the local transformation \eqref{localtrans} reduces to the repetition-code transformation \eqref{eq:1D rep PCM} of the general procedure in Section \ref{sec:procedure}.

Since \eqref{localtrans} is a linear transformation, the entire quantum Tanner transformation is still described in terms of matrices \eqref{eq:tildeH,H relation}. As such, all of the parameter preservation results in Section \ref{sec:param preservation} follow through as well.


\section{Compatibility with existing fault-tolerant gadgets} \label{sec:gadget compatibility}

In this section, we show how the HGP physical qubit reduction procedure of Section~\ref{sec:procedure} is compatible with several existing fault-tolerant gadgets for HGP codes. In subsection~\ref{sec:hook errors}, we show how the reduction procedure admits distance-preserving stabilizer measurement schedules that mitigate hook errors. In subsection~\ref{sec:fold-transversal}, we show how the reduction procedure can be modified to preserve fold-transversal gates~\cite{Quintavalle_2023_fold, Quintavalle_2023_fold}. In subsection~\ref{sec:automorphisms}, we show how the reduction procedure can be adapted to HGP automorphism gates~\cite{SHYPS_codes, HGP_aut_gadgets}. In subsection~\ref{sec:GPPM}, we show that the reduction procedure can be adapted to HGP code homomorphisms which allow homomorphic grid-Pauli-product measurement (GPPM) gadgets~\cite{Xu_2025_fast}.

The fault-tolerant gadgets, in the order of most compatible to least compatible, are the distance-preserving syndrome extraction and homomorphic gadgets, followed by the fold-transversal and lastly the automorphism gates. The first two are directly compatible with any reduction scheduling, whereas the last two restrict the possible schedules and so are weaker in their compatibilities.

\subsection{Distance-preserving syndrome extraction}\label{sec:hook errors}


The error syndrome of a stabilizer code is the measurement outcomes of all its stabilizer checks. For a qLDPC code with low-weight stabilizers, it is customary to perform single-ancilla syndrome extraction: introduce an ancillary ``syndrome'' qubit for each stabilizer check, couple each syndrome qubit to its corresponding data qubits, and finally measure all syndrome qubits. For a stabilizer check with weight $w$, the middle coupling step involves $w$ CNOT gates. From the definition of the code distance, we know that we need at least $d$ single-qubit errors on the data qubits to be undetectable by the syndrome. If we think about repeated syndrome measurements in the context of a quantum memory, then potential faults in one round of syndrome extraction can lead to additional data errors in the next round. If we also consider errors on the syndrome qubits, then this radius of detection can decrease, since a single error on a syndrome qubit during the syndrome extraction circuit can propagate to $\lfloor w/2\rfloor$ data ``hook'' errors under stabilizer equivalence; see Figure \ref{fig:logi_hooks}(a) for an illustration with an $X$-check measurement. So in the worst case, an undetectable error (during the next QEC cycle) may be caused by only $\lceil d/\lfloor w/2\rfloor \rceil$ elementary hook errors if they happen to align in the direction of a minimum-weight logical operator. This observation motivates the following notion of \emph{effective} distance.

\begin{defn}[Effective distance~\cite{Manes_2025}]\label{def:d_eff}
    For a $\llbracket n,k,d \rrbracket$ stabilizer code with single-ancilla syndrome-extraction circuit $U_{\rm SE}$, the effective distance $d_{\rm eff}(U_{\rm SE})$ is defined as the minimum number of single-qubit errors, on both data and syndrome qubits at any point in $U_{\rm SE}$, such that the ensuing data error following $U_{\rm SE}$ is a nontrivial logical operator (modulo stabilizers). If $d_{\rm eff}(U_{\rm SE})=d$, then we say that $U_{\rm SE}$ is distance-preserving.
\end{defn}

Note that the effective distance is different from the circuit-level fault distance since the latter also includes undetectable timelike (measurement) errors.
The freedom in constructing a single-ancilla syndrome-extraction circuit $U_{\rm SE}$ lies in the ordering or scheduling of the CNOT gates. For a weight-$w$ stabilizer check, there are $w!$ different ways to schedule the CNOTs, which will affect the potential patterns for hook errors. It has been previously shown that any CNOT-scheduling of $U_{\rm SE}$ for 2D~\cite{Manes_2025} and higher-dimensional~\cite{Tan_2025_eff} HGP codes are distance-preserving. The argument in 2D follows straightforwardly from the logical sector weights of Proposition \ref{prop:HGP logical sector weights}: since each $X$-check has support on at most one column (row) in the \Bsectorname{} (\Csectorname{}) sector, any single ancillary hook error is only supported within a single column (row). At the same time, each logical $\bar{X}$ operator is supported on at least $d_X$ columns (rows). Hence, at least $d_X$ errors on both data and syndrome qubits are required to effect a logical operation and be undetectable by the next round of QEC.

In the reduced HGP code, the combined stabilizer checks will have support on two rows or columns on the \Bsectorname{} qubits. So a generic CNOT schedule for these checks may result in hook errors that have support on more than one row or column, leading to a possible decrease in the effective distance. For example, in the rotated surface code, a ``bad'' syndrome measurement schedule leads to a 50\% decrease in the effective distance. To maintain a distance-preserving syndrome extraction, the CNOTs are zigzagged so that hook errors lie in an orthogonal direction to the relevant logical operators~\cite{Tomita_2014}. The solution for the rotated surface code can be straightforwardly generalized to our reduced HGP codes. The idea is that for each combined $Z$-check ($X$-check), we first apply CNOTs to one entire row (column) before proceeding to the other row (column); see Algorithm \ref{alg:split SE} for the detailed procedure for $X$-checks; for $Z$-checks, the ancilla is initialized and measured in the $Z$-basis, and the directions of the CNOTs are reversed. In this manner, any single $Z$ ($X$) hook error is restricted to just one row (column).

\begin{algorithm}[t]\label{alg:split SE}
    \DontPrintSemicolon
    \caption{Split $X$-syndrome extraction}
    \Input{The reduced code's $X$-Tanner graph $T_X=(Q\cup C_X,E_X)$; the subset $C_X^{\mathrm{comb}}\subseteq C_X$ of combined $X$-checks; for each $c\in C_X^{\mathrm{comb}}$, a partition of its incident edges $E(c)=E_1(c)\sqcup E_2(c) \sqcup E_{\rm rest}(c)$ into the two \Bsectorname{} data-qubit columns and the rest.}
    \Return{The measurement outcome of all $X$-check generators in $C_X$.}
    $E_{X,1}\gets \bigcup_{c\in C_X\setminus C_X^{\mathrm{comb}}} E(c) \,\cup\, \bigcup_{c\in C_X^{\mathrm{comb}}} E_1(c)$\;
    $E_{X,2}\gets \bigcup_{c\in C_X^{\mathrm{comb}}} E_{\rm rest}(c)$\;
    $E_{X,3}\gets \bigcup_{c\in C_X^{\mathrm{comb}}} E_2(c)$\;
    Construct subgraphs $T_{X,1}=(Q\cup C_X,E_{X,1})$, $T_{X,2}=(Q\cup C_X^{\mathrm{comb}},E_{X,2})$, and $T_{X,3}=(Q\cup C_X^{\mathrm{comb}},E_{X,3})$\;
    
    Compute minimum edge colorings $\chi^{}_{1}, \chi^{}_{2}, \chi^{}_{3}$ of $T_{X,1}, T_{X,2}, T_{X,3}$ respectively.\;
    
    Prepare an ancilla in $\ket{+}$ for each check $c\in C_X$.\;
    \For{$\alpha\in\{1,2,3\}$}{
    \For{color $\xi\in \chi^{}_{\alpha}$}{
      Simultaneously apply all gates $\mathrm{CNOT}_{i\rightarrow j}$ from the $i$th ancilla
      to the $j$th data qubit supported on an edge $\{i,j\}\in E_{X,\alpha}$ with color $\xi$.\;
    }}
    Measure each ancilla in the $X$ basis.\;
\end{algorithm}

We briefly comment on the syndrome extraction depth. For $P\in\{X,Z\}$, let $T_{P,\alpha}$ where $\alpha = 1, 2, 3$ be the three bipartite subgraphs in the split schedule. By K\"{o}nig's line-coloring theorem, a minimum edge coloring of each subgraph has $\Delta(T_{P,\alpha})$ colors. The CNOT depth used by Algorithm~\ref{alg:split SE} is therefore
\begin{align}
    D_P^{\rm split}=\sum_{\alpha=1}^3\Delta(T_{P,\alpha})\, . \label{eq:split depth}
\end{align}
If $\Delta(T_P)$ is the depth of an unconstrained edge coloring of the full $P$-Tanner graph, then
\begin{align}
    \Delta(T_P)\leq D_P^{\rm split}\leq3\Delta(T_P)\, . \label{eq:split depth bounds}
\end{align}
Each of the three blocks is depth-optimal, but their concatenation need not be globally optimal, since layers from different blocks can be interleaved when the order $E_{P,1}\prec E_{P,2}\prec E_{P,3}$ is retained at every combined-check ancilla. Finding the shortest such schedule is generically a more complex problem, which we leave to future work.

\begin{figure}[t]
  \centerline{\includegraphics[width=0.4\textwidth]{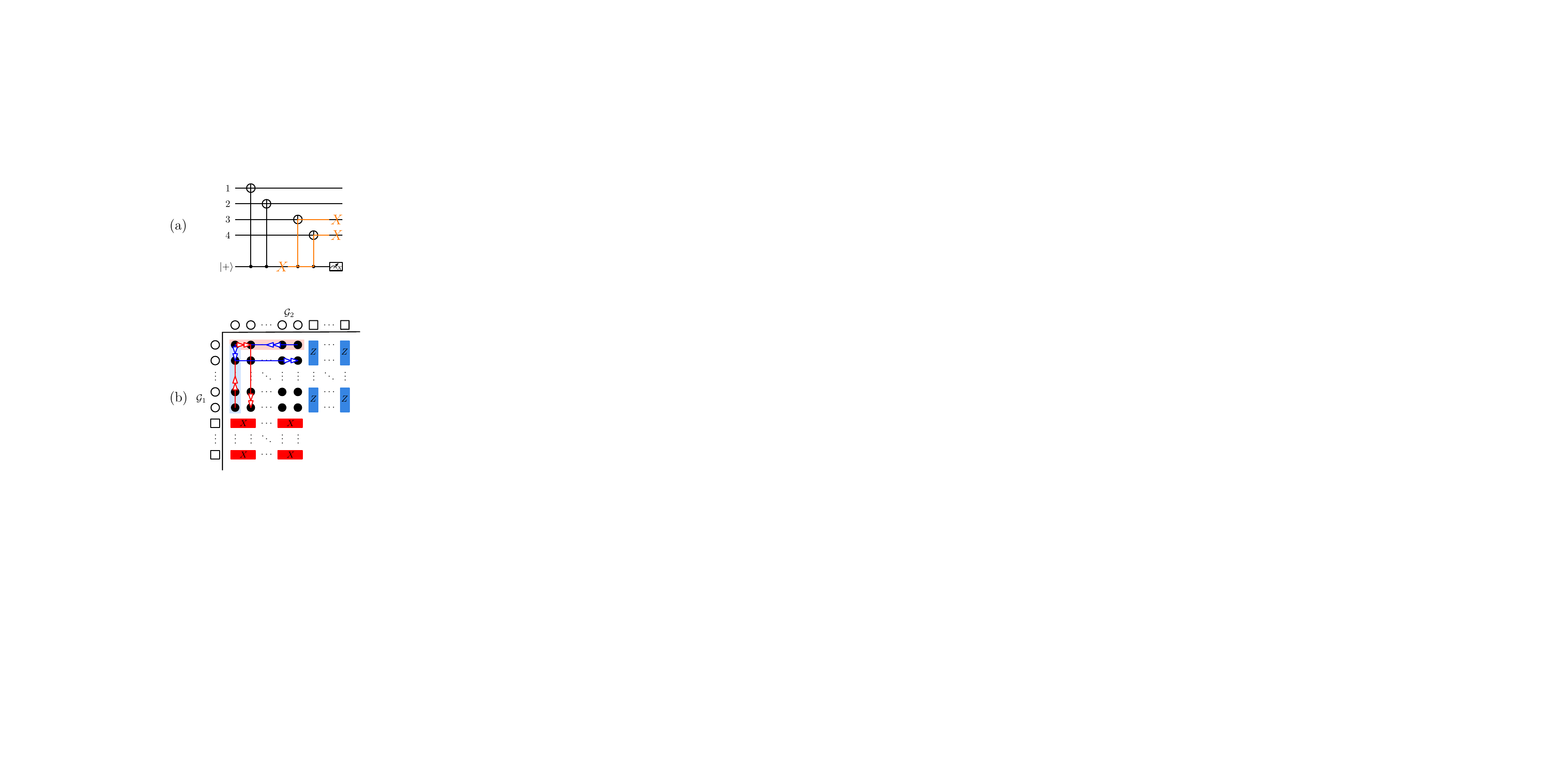}}
  \caption{\textbf{(a)} A CNOT measurement schedule of an $XXXX$ stabilizer check. A single ancilla $X$ error in the middle of the circuit propagates to two data qubit $X$ ``hook'' errors. If a minimum-weight logical operator has support on those two data qubits, then the code's effective distance decreases, as we can't reduce this logical error by stabilizer equivalence. \textbf{(b)} A \Bsectorname{} logical $\bar{X}^{\mathscr{B}}$ operator will only have support on the row in red; a $\bar{Z}^{\mathscr{B}}$ logical operator will only have support on the column in blue. Possible distance-preserving measurement schedules for a combined $X$-check that spans columns 1 and 2 (red arrows) and for a combined $Z$-check that spans rows 1 and 2 (blue arrows). This schedule will ensure that propagated errors for an $X$ ($Z$) stabilizer stay contained within one column (row), up to stabilizer equivalence.}
  \label{fig:logi_hooks}
\end{figure}

\begin{thm}[Effective-distance preservation under the reduction] \label{thm:d_circ preservation}
    The reduced HGP code after the reduction in Section~\ref{sec:procedure} admits a distance-preserving stabilizer measurement schedule given by Algorithm \ref{alg:split SE} and its $Z$-version.
\end{thm}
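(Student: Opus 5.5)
The plan is to reduce the statement to a column-counting (respectively row-counting) argument, using the lifting step from the proof of Theorem~\ref{thm:distance preservation} together with Proposition~\ref{prop:HGP logical sector weights}. I treat the $X$-type case: $X$-check circuits from Algorithm~\ref{alg:split SE}, $X$ data errors, and the bound $d_X=d_2$. The $Z$-type case follows by exchanging rows and columns, $X$ and $Z$, and the two inputs. Since the code is CSS and the $X$- and $Z$-check circuits interact with the data only through CNOTs of fixed orientation, $X$-type and $Z$-type faults can be tracked separately.

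\textbf{Step 1: single-fault propagation.} First I classify what a single fault can do to the final $X$-type data error.
\begin{itemize}
\item An $X$ error on a data qubit does not propagate through CNOTs whose data qubit is the target, and a $Z$ error on a data qubit only flips ancilla outcomes. So a data fault contributes a single-qubit $X$ error, which lies in one column of the \Bsectorname{} sector, or in none if it sits on a \Csectorname{} qubit.
\item An $X$ fault on the ancilla of a check $c$ after some prefix of its CNOTs spreads to the suffix of $\operatorname{supp}(c)$ that has not yet been coupled. Modulo the reduced stabilizer $c$, this equals the complementary prefix.
\end{itemize}

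\textbf{Step 2: case analysis under the split schedule.} For an uncombined check, the whole \Bsectorname{} support lies in a single column, since $H_1\otimes\ident_{n_2}$ has one nonzero per column block. So any hook is already confined to one column. For a combined check, Algorithm~\ref{alg:split SE} orders its edges as $E_1(c)\prec E_{\rm rest}(c)\prec E_2(c)$. There are two cases.
\begin{itemize}
\item If the fault occurs during phase $\alpha=1$, I replace the suffix by the prefix. That prefix lies inside $E_1(c)$, which is a single column.
\item If the fault occurs during phase $2$ or $3$, the suffix lies in $E_{\rm rest}(c)\cup E_2(c)$. Its \Bsectorname{} part lies in the single column of $E_2(c)$.
\end{itemize}
In every case, each fault contributes an $X$ error whose \Bsectorname{} restriction, modulo $\rs\widetilde H_X$, touches at most one column.

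\textbf{Step 3: counting.} Suppose $f$ faults produce a final error $\mathbf e$ that is a nontrivial logical. Then $\mathbf e=\mathbf x+\tilde{\mathbf s}_X$, where $\mathbf x$ is a nonzero combination of the rows of $\widetilde G^\mathscr{B}_X$ (by Theorem~\ref{thm:basis preservation}) and $\tilde{\mathbf s}_X\in\rs\widetilde H_X$. I write $\mathbf e=\mathbf e'+\tilde{\mathbf s}$, where $\mathbf e'$ is the sum of the single-column representatives from Step 2 and $\tilde{\mathbf s}\in\rs\widetilde H_X$. Then $\mathbf e'$ is another representative of the same class. As in the proof of Theorem~\ref{thm:distance preservation}, I lift $\tilde{\mathbf s}=\mathbf uW_XH_XV$ to $\mathbf uW_XH_X$. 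The \Bsectorname{} restriction of $\mathbf e'$ then coincides with that of a nontrivial \Bsectorname{} representative in the original HGP code. Proposition~\ref{prop:HGP logical sector weights} gives $\abs{\mathbf e'^{\mathscr B}}_{\rm c}\ge d_2$, and the \Csectorname{} alternative is vacuous because $d_1^\transp=\infty$ under the full-rank assumption. Since each fault contributes to at most one column, $f\ge d_2=d_X$. Conversely, the weight-$d_X$ representative of Theorem~\ref{thm:distance preservation} shows $d_{\rm eff}\le d$, so $d_{\rm eff}=d$.

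\textbf{Main obstacle.} The delicate point is Step 2 for combined checks. I must verify that every bit-type edge of a combined check lies in $E_1(c)$ or $E_2(c)$, so that $E_{\rm rest}(c)$ is purely \Csectorname{}. Concretely, a combined $X$-check $h_1+h_2$ from Restriction \#1 has its \Bsectorname{} support in exactly the two columns of $h_1$ and $h_2$. This holds because the two original checks sit in distinct $\ident_{m_1}\otimes H_2^\transp$ row-blocks, which share a \Csectorname{} qubit, and their \Bsectorname{} parts are in the columns indexed by their $H_1$-rows. I must also check that interleaving phases across different ancillas does not matter, since each fault is attributed to one ancilla's own prefix or suffix. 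Finally, the complement trick uses the reduced check $c$ itself, which is in $\rs\widetilde H_X$, so it is compatible with the lift.
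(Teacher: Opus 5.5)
Your proposal is correct and follows the paper's proof essentially step for step. The $E_1(c)\prec E_{\rm rest}(c)\prec E_2(c)$ ordering with the prefix/suffix swap confines each fault to one \Bsectorname{} column modulo $\rs\widetilde H_X$, the lift $\mathbf uW_XH_XV\mapsto\mathbf uW_XH_X$ with Proposition~\ref{prop:HGP logical sector weights} forces at least $d_2$ faults (and $d_1$ in the $Z$ case), and a minimum-weight canonical logical applied after the last CNOT layer gives the matching upper bound.

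One correction to your ``main obstacle'' justification: the $X$-checks combined at a removed qubit $(c_1,c_2)$ are $(c_1,b_2)$ and $(c_1,b_2')$ with $b_2\neq b_2'$ in the support of row $c_2$ of $H_2$. So they share the same $H_1$-row $c_1$ (the same $\ident_{m_1}$ block), and their \Bsectorname{} supports lie in columns $b_2$ and $b_2'$, indexed by the $\ident_{n_2}$ factor rather than by their $H_1$-rows. The structural fact you need still holds.
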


\begin{proof} 
    First, recall from Proposition \ref{prop:HGP logical sector weights} that \Bsectorname{} logical $\bar{X}$ ($\bar{Z}$) operators will have column (row) weight at least $d_2$ ($d_1$). For a combined $X$-check $c$, let $E(c) = E_1(c) \sqcup E_2(c) \sqcup E_{\rm rest}(c)$ be the partition of checks according to the input of Algorithm \ref{alg:split SE}, and let $B_1(c), B_2(c), B_{\rm rest}(c)$ be their data-qubit supports respectively. Recall that $B_1(c)$ and $B_2(c)$ are individually supported in single columns of the \Bsectorname{} sector, and $B_{\rm rest}(c)$ is in a row of the \Csectorname{} sector. By definition of the HGP checks \eqref{eq:HGP H_X,H_Z}, we have $\abs{B_1(c)}=\abs{B_2(c)}$. The CNOTs are scheduled in the order of $E_1(c)$, then $E_{\rm rest}(c)$, and then finally $E_2(c)$. An ancilla fault propagates to a suffix of this ordered support. Multiplication by the check being measured replaces that suffix by its complementary prefix. If the fault occurs during $E_1(c)$, choose the prefix representative, whose \Bsectorname{} support lies in $B_1(c)$. If it occurs during $E_{\rm rest}(c)$ or $E_2(c)$, choose the suffix representative, whose \Bsectorname{} support lies in $B_2(c)$. In every case, the \Csectorname{} support lies in one row. Thus, each ancilla fault, and also each single-qubit data fault, has a representative supported on at most one \Bsectorname{} column and one \Csectorname{} row.
    A combination of $t$ such faults therefore has a representative supported on at most $t$ \Bsectorname{} columns. Suppose that the resulting $X$ error is a nontrivial logical operator of the reduced code. Since the canonical \Bsectorname{} basis spans its logical operators, this error can be written as $(\mbf aG_X^{\mathscr B}+\mbf uW_XH_X)V$, with $\mbf a\neq\mathbf 0$. The vector $\mbf aG_X^{\mathscr B}+\mbf uW_XH_X$ is a representative of the same nontrivial logical operator in the original HGP code, and its \Bsectorname{} restriction is unchanged by $V$. Proposition~\ref{prop:HGP logical sector weights} then gives $t\geq d_2$. The analogous argument for $Z$ errors gives $t\geq d_1$.
    Any nontrivial Pauli logical operator has a nontrivial $X$ or $Z$ component, so one of these two lower bounds applies.
    Conversely, a minimum-weight canonical logical operator can be applied by placing one data fault on each qubit of its support after the last CNOT layer. Thus the effective distance is at most $\min\{d_1,d_2\}=d$. Together with the preceding lower bounds, this proves $d_{\rm eff}=d$.
\end{proof}

\subsection{Fold-transversal gates}\label{sec:fold-transversal}

When the classical input parity-check matrices to the HGP are the same, the corresponding HGP code exhibits a $ZX$-duality~\cite{Breuckmann_2024_ZX} given by a reflection along the ``main diagonal''~\cite{Quintavalle_2023_fold}. In other words, the $X$-type and $Z$-type stabilizer checks map onto one another across the diagonal fold. This fold symmetry gives rise to a Hadamard-type and a phase-type fold-transversal gate, which induces logical Hadamard-SWAP and logical CZ-S gates respectively~\cite{Quintavalle_2023_fold}.

\begin{figure*}
    \centering
    \includegraphics[width=.75\linewidth]{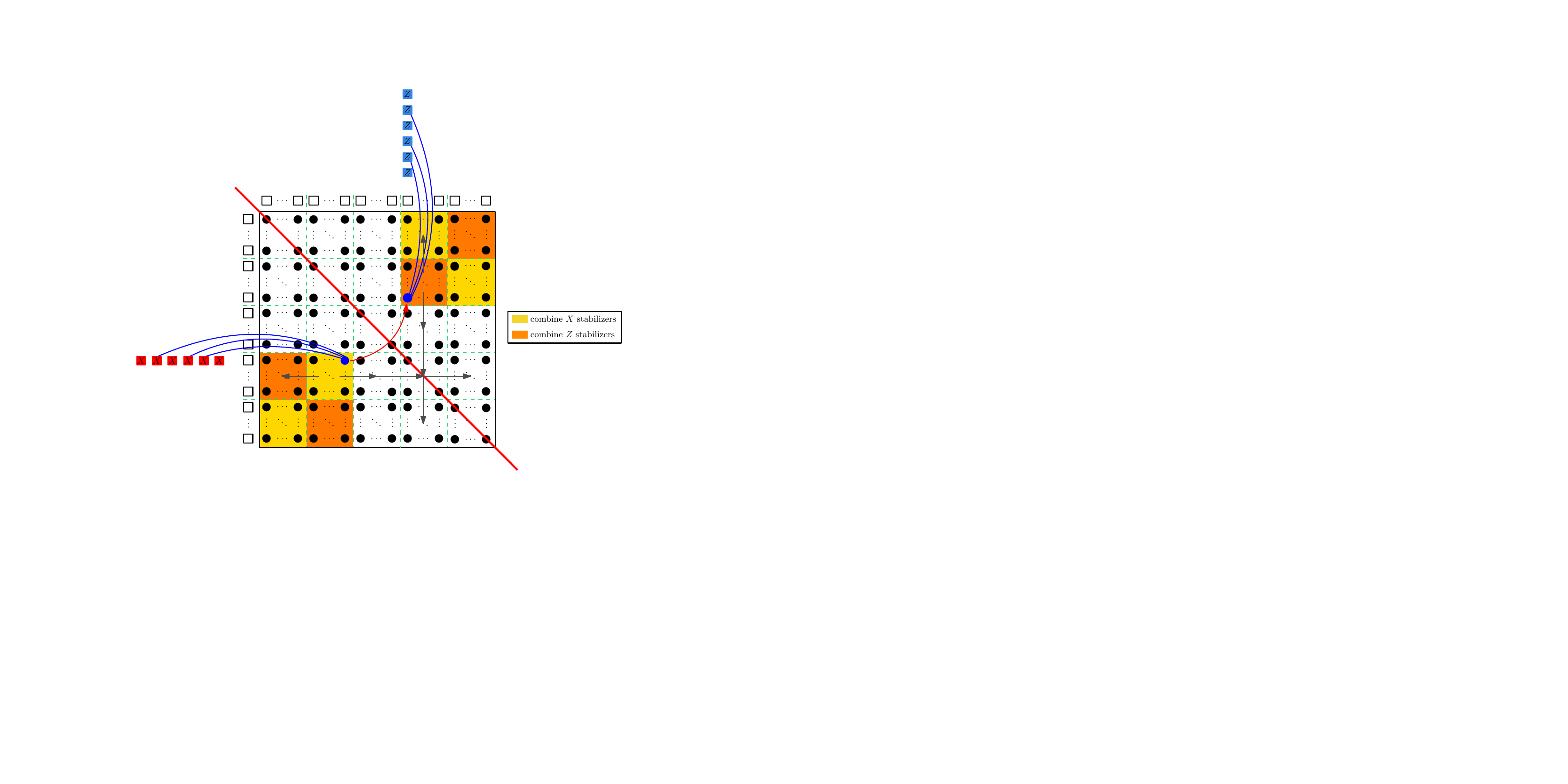}
     \caption{Example \Csectorname{} qubits with 25 color groups showing the main diagonal upon which we are ``fold-symmetric'' highlighted in red. ``Mirror'' qubits are those upon the reflection of this diagonal. (An example of two corresponding mirror qubits are in blue. It is also shown that the $X$ ($Z$) support of one qubit is the same as the $Z$ ($X$) support of its mirror qubit, since the HGP code is symmetric.) To enact a logical $\bar{H}$ + logical $\overline{\text{SWAP}}$, we first apply physical $H$ on every physical qubit then do physical SWAPs between each pair of mirror qubits. To enact a logical $\overline{\text{CZ}}$ + logical $\bar{S}$, we first apply physical $S$ on the qubits on the left of the diagonal, physical $S^\dag$ on the qubits on the right of the diagonal, and then physical CZ between mirror qubits. The baseline way to combine stabilizers to preserve fold symmetry is also shown, where we combine along the off-diagonal and alternate so that color group $\Gamma_{i, j}$ gets $X$ ($Z$) stabilizers combined and color group $\Gamma_{j, i}$ gets $Z$ ($X$) stabilizers combined, for some $i, j\in 1, \dots, 5,\ i\neq j$. This will push support in the direction of the outgoing gray arrows from the example color groups, which can be seen to push in symmetric directions such that the $ZX$-duality is preserved.}
    \label{fig:foldtransversal}
\end{figure*}

Specifically, the reason as to why these transversal gates arise is that the $X$ ($Z$) support of one qubit is exactly the same as the $Z$ ($X$) support of its mirror qubit across the diagonal---shown in Figure~\ref{fig:foldtransversal}---which is a consequence of using the same input parity-check matrix in the HGP. The qubits directly along the main diagonal do not have mirror qubits, and so we should not combine on any color groups on the main diagonal (those of the form $\Gamma_{i, i}$ for $i \in [\chi]$) as that would ruin the symmetry. Hence, the naive optimization that we present in Section \ref{sec:procedure}, which may combine along the main diagonal, would generically destroy this fold-symmetry.

We can, however, preserve fold-symmetry if we respect it in our reduction schedule.

\begin{prop}[Fold-symmetry preservation] \label{prop:fold preservation}
    Let $\mathcal{Q}=\mathrm{HGP}(H,H)$ be a symmetric HGP code, and let $\Gamma=\{\Gamma_{i,j}\}_{i, j \in [\chi]}$ be the product coloring of the \Csectorname{} qubits. Then any reduction protocol with the following properties:
    \begin{enumerate}
        \item No diagonal color group $\Gamma_{i,i}$ is used for stabilizer combination.
        \item Whenever $X$-checks are combined on some off-diagonal color group
        $\Gamma_{i,j}$, $Z$-checks are combined on its mirror color group $\Gamma_{j,i}$.
    \end{enumerate}
    retains the same fold-symmetry as $\mathcal{Q}$. Consequently, the fold-transversal Hadamard-type and phase-type logical gates of the symmetric HGP code are preserved under the reduction.
\end{prop}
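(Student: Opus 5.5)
The plan is to encode the fold as a permutation matrix and show that the reduction intertwines with it, i.e.\ that it maps the reduced $X$-stabilizer space onto the reduced $Z$-stabilizer space. Let $\tau$ be the involution on qubits that sends the \Bsectorname{} qubit $(a,b)$ to $(b,a)$ and the \Csectorname{} qubit $(r,s)$ to $(s,r)$, and let $P_\tau$ be its permutation matrix. For $\mathcal{Q}=\mathrm{HGP}(H,H)$, \eqref{eq:HGP H_X,H_Z} gives $H_X P_\tau = \Pi H_Z$ for a permutation $\Pi$ between check labels. Indeed, the $X$-check indexed by (row-check $r$, column-bit $b$) is mapped by $\tau$ onto the $Z$-check indexed by (row-bit $b$, column-check $r$). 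The fold-symmetry of $\mathcal{Q}$ is exactly the statement $\rs(H_X P_\tau)=\rs H_Z$, together with the fact that $\tau$ fixes the diagonal. The goal is to prove the same identity for the reduced matrices $\widetilde H_X=W_XH_XV$ and $\widetilde H_Z=W_ZH_ZV$ from \eqref{eq:tildeH,H relation}.

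\textbf{Step 1 (the puncture is $\tau$-invariant).} Using the same coloring $\Gamma^{(1)}=\Gamma^{(2)}$ for both copies of $H$, $\tau$ maps the color group $\Gamma_{i,j}$ onto $\Gamma_{j,i}$. Let $R$ be the set of removed \Csectorname{} qubits. By condition 2, the removed $X$-groups and removed $Z$-groups are swapped by $\tau$, and by condition 1 no group is mapped to itself. Hence $\tau(R)=R$, so $P_\tau$ restricts to a permutation $\tilde P_\tau$ of the kept qubits with $P_\tau V = V\tilde P_\tau$. The diagonal \Bsectorname{} qubits remain fixed points.

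\textbf{Step 2 (the combination matrices correspond).} For an $X$-cleaned qubit $q\in\Gamma_{i,j}$, the $X$-checks touching $q$ (the column of $H_X$ at $q$) are sent by $\Pi$ to exactly the $Z$-checks touching $\tau(q)\in\Gamma_{j,i}$. These are precisely the $Z$-checks combined there by condition 2. I will choose the local repetition orderings \eqref{eq:1D rep PCM} compatibly: order the checks at $\tau(q)$ as the $\Pi$-images of the ordering at $q$. With this choice there is a permutation $\tilde\Pi$ of new check labels such that $W_X\Pi = \tilde\Pi W_Z$ on the combined blocks, and this holds trivially on the identity blocks. Chaining the identities gives
\begin{align}
\widetilde H_X \tilde P_\tau = W_X H_X P_\tau V = W_X\Pi H_Z V = \tilde\Pi\, W_Z H_Z V = \tilde\Pi\,\widetilde H_Z .
\end{align}
Multiplying by $\tilde P_\tau$ (an involution) then gives the reverse relation. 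The main obstacle is this step. The matching orderings must be chosen globally consistently, and since distinct removed qubits use disjoint checks by Restriction \#2, per-qubit choices never conflict. I also need to confirm that $\Pi$ is an honest bijection of check labels, which follows from the product structure.

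\textbf{Step 3 (gates).} Once the reduced code has the $ZX$-duality $\tilde P_\tau$, I will rerun the Quintavalle et al.\ argument~\cite{Quintavalle_2023_fold} directly on the reduced code. The Hadamard-type gate is transversal $H$ followed by SWAPs along $\tau$; by Step 2 it maps $\rs\widetilde H_X$ to $\rs\widetilde H_Z$ and vice versa. For the phase-type gate (products of $S$, $S^\dagger$ and CZ between mirror pairs) the original proof requires two things. First, each $X$-stabilizer must have even overlap with its $\tau$-image in an appropriate sense. Second, the diagonal qubits must fix the sign structure. These conditions only use the duality and the fact that the diagonal \Bsectorname{} qubits are untouched, and both survive the reduction. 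Finally, by Theorem~\ref{thm:basis preservation} the canonical logical basis restricted to $\tilde Q$ is unchanged and is itself $\tau$-symmetric. The induced logical actions (logical $\bar H$ plus logical $\overline{\text{SWAP}}$, and logical $\overline{\text{CZ}}$ plus logical $\bar S$) are therefore the same as for $\mathcal{Q}$.
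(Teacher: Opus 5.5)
Your proof is correct and follows the paper's argument: the fold involution $\tau$, $\tau(\Gamma_{i,j})=\Gamma_{j,i}$ from the common coloring, condition 1 keeping the fold axis, and condition 2 removing qubits in mirror pairs, here written as explicit push-through identities, with your step $W_X\Pi=\tilde\Pi W_Z$ supplying what the paper leaves implicit (that the combined checks themselves are exchanged) and your compatible-ordering choice losing no generality, since each block $w$ spans the even-weight code on its checks for any ordering, so $\rs(\widetilde H_X\tilde P_\tau)=\rs\widetilde H_Z$ for every protocol obeying conditions 1 and 2. The one claim you (like the paper) assert rather than check is the phase-type sign condition, a per-generator mod-4 condition that the duality does not imply and that needs rechecking because the reduced generators are new (combining $(c,b_1)$ and $(c,b_2)$ at $(c,c')$ with $H_{cb_1}=H_{cb_2}=1$ creates the mirror pair $(b_1,b_2),(b_2,b_1)$ and two \Bsectorname{} axis qubits but no \Csectorname{} axis qubit); it does hold, because each new $X$-generator is the restriction to $\tilde Q$ of an original stabilizer whose discarded part lies in a single \Csectorname{} row, so that part contains no mirror pair or axis qubit and the sign under the original fold-transversal circuit is inherited.
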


\begin{proof}
    Let $B$ and $C$ denote the bit and check nodes of the Tanner graph of $H$.
    Since $\mathcal{Q}=\mathrm{HGP}(H,H)$ is symmetric, there is a diagonal reflection of qubits
    \begin{align}
    \rho_Q:(B\times B)\cup(C\times C)\to(B\times B)\cup(C\times C)
    \end{align}
    defined by
    \begin{align}
    \rho_Q(b_1,b_2)=(b_2,b_1) \quad,\quad
    \rho_Q(c_1,c_2)=(c_2,c_1)
    \end{align}
    with $b_i \in B$ and $c_i \in C$ for $i = 1, 2$. (i.e., $\rho_Q$ will swap two bits and checks.) There is likewise a reflection between $X$- and $Z$-checks:
    \begin{align}
    \rho_S:C\times B \longleftrightarrow B\times C \quad,\quad
    \rho_S(c,b)=(b,c)
    \end{align}
    with $b\in B$ and $c\in C$. By symmetry of the HGP construction \eqref{eq:HGP H_X,H_Z} when $H_1=H_2=H$, $\rho_Q$ exchanges $X$- and $Z$-supports:
    for every $X$-check $s_X$ and every data qubit $q$,
    \begin{align}
    q\in \operatorname{supp}(s_X)
    \quad\Longleftrightarrow\quad
    \rho_Q(q)\in \operatorname{supp}(\rho_S(s_X)) \, ,
    \end{align}
    and the analogous statement holds with $X$ and $Z$ interchanged, which is just a restatement of the usual $ZX$-duality used to define the fold-transversal gates in the unreduced HGP code.
    
    Because the same $\chi$-coloring is used on both input codes, the induced
    product coloring is fold-symmetric itself:
    \[
    \rho_Q(\Gamma_{i,j})=\Gamma_{j,i}
    \qquad
    \text{for all } i,j\in[\chi].
    \]
    Hence the diagonal color groups $\Gamma_{i,i}$ are exactly the fixed points of
    the fold. If one were to combine stabilizers on any $\Gamma_{i,i}$, then the
    resulting reduction would single out qubits lying on the fold axis and destroy
    the mirror pairing required for the fold-transversal symmetry. Therefore
    all diagonal groups must be excluded.
    Since condition 1 preserves the fold axis, and condition 2 removes off-axis qubits only in mirror pairs, this reduction schedule preserves the $ZX$-fold-symmetry.
\end{proof}

Note that Proposition~\ref{prop:fold preservation} only gives generic constraints on fold-symmetric reduction scheduling but does not specify any specific schedule. However, observe that the net effect of the two restrictions is that we now only have freedom to schedule our reduction in the upper-triangular color sectors, with the lower-triangular part completely determined by the fold symmetry. Therefore, we can run Algorithm \ref{alg:max_color_groups} on the upper-triangular product-color groups $\{\Gamma_{i,j>i}\}$ to obtain a maximal fold-symmetric qubit reduction.
    
The physical fold-transversal circuits~\cite{Quintavalle_2023_fold} are defined entirely in terms of this diagonal reflection: qubits on the fold axis are fixed, and off-axis qubits are paired with their mirrors. Since Theorem~\ref{thm:basis preservation} tells us that the canonical logical basis on the \Bsectorname{} sector is unchanged by the reduction, these fold-transversal circuits induce the same logical Hadamard-type and phase-type actions on $\widetilde{\mathcal{Q}}$ as on $\mathcal{Q}$.

\subsection{Automorphism gates}\label{sec:automorphisms}

A (permutation) automorphism of a classical or quantum code is a relabeling of its data bits or qubits which preserves its codespace. Automorphisms can expand the group of logical actions from transversal gates by enabling the physical support of the gates to act on different sets of data qubits related by a code automorphism~\cite{Grassl_2013, gong2024reed, SHYPS_codes, phantom_codes}. More general stabilizer-code automorphisms, including local Clifford transformations, and algorithms for finding them are reviewed in \cite{Sayginel_2025_AutQEC}. We focus below on the permutation automorphisms inherited from the input Tanner graphs. For a classical linear code with parity-check matrix $H \in \mathbb{F}^{m\times n}_2$, an automorphism is a permutation matrix $\sigma \in \mathrm{S}_n$ (in its defining representation) such that
\begin{align}\label{eq:aut condition}
    A H &= H \sigma \, ,
\end{align}
for invertible matrix $A \in \mathrm{GL}(m,\mathbb{F}_2)$. The collection of $\{\sigma\}$ that satisfy \eqref{eq:aut condition} form a group called the automorphism group of the code. When $A$ is also a permutation matrix, then $\sigma$ and $A$ describe an automorphism of the Tanner graph induced by $H$, with a corresponding Tanner-graph automorphism subgroup. For a quantum CSS code, there are two automorphism conditions \eqref{eq:aut condition}: one for the $X$-type sector and one for the $Z$-type sector. A subgroup of the automorphism group is the logical automorphism group, whose elements represent distinct logical actions that arise from the physical permutations. For classical linear and quantum CSS codes, these logical actions are restricted to the affine class of reversible gates~\cite{aaronson2015reversible}; i.e. those generated by CNOTs.

For HGP codes, it has been shown that Tanner-graph automorphisms of the classical codes lift to Tanner-graph automorphisms of the quantum HGP code~\cite{HGP_aut_gadgets}. Specifically, given classical Tanner-graph automorphisms
\begin{subequations}
\begin{align}
    \tau_1 H_1 &= H_1\sigma_1  \\
    \tau_2 H_2 &= H_2\sigma_2
\end{align}
\end{subequations}
for $\tau_i \in \mathrm{S}_{m_i}$, we obtain a HGP Tanner-graph automorphism
\begin{subequations}\label{eq:HGP TGraph Aut}
\begin{align}
    T_X H_X &= H_X \Sigma  \label{eq:HGP TGraph Aut X} \\
    T_Z H_Z &= H_Z \Sigma  \label{eq:HGP TGraph Aut Z}
\end{align}
\end{subequations}
with right (physical) permutation
\begin{align}
    \Sigma = (\sigma_1\otimes\sigma_2) \oplus (\tau_1\otimes\tau_2)
\end{align}
and left (stabilizer) permutations
\begin{subequations}
\begin{align}
    T_X &= \tau_1 \otimes \sigma_2  \\
    T_Z &= \sigma_1 \otimes \tau_2 \, .
\end{align}
\end{subequations}
The physical permutations of the first input code become row permutations in the HGP code, and the physical permutations of the second input code become column permutations. The group $\mathcal{A}_{\rm T} \simeq \langle \Sigma \rangle$ of inherited HGP Tanner-graph automorphisms is a direct product of those of the input codes. The logical Tanner-graph automorphism group $\bar{\mathcal{A}}_{\rm T}$ is also a direct product of those of the input codes as a result of the canonical logical basis \eqref{eq:HGP G_X,G_Z B}.

\begin{prop}[Tanner-graph symmetry preservation conditions]
    Suppose a reduced HGP code is obtained by \eqref{eq:tildeH,H relation}
    \[
    \widetilde H_X = W_X H_X V \, ,
    \qquad
    \widetilde H_Z = W_Z H_Z V \, .
    \]
    Suppose for some $g\in \mathcal{A}_{\rm T}$, there exist permutation matrices $\widetilde T_X(g), \widetilde T_Z(g), \widetilde \Sigma(g)$ such that
    \begin{subequations}\label{eq:equivariance push-through}
    \begin{align}
    V\,\widetilde \Sigma(g) &= \Sigma(g) V  \label{eq:qubit-equivariance}  \\
    \widetilde T_X(g)\,W_X &= W_X\,T_X(g)  \label{eq:X-equivariance}  \\
    \widetilde T_Z(g)\,W_Z &= W_Z\,T_Z(g) \, .  \label{eq:Z-equivariance} 
    \end{align}
    \end{subequations}
    Then,
    \begin{equation}
    \widetilde T_X(g)\,\widetilde H_X = \widetilde H_X\,\widetilde \Sigma(g) \, ,
    \quad\;
    \widetilde T_Z(g)\,\widetilde H_Z = \widetilde H_Z\,\widetilde \Sigma(g) \, ,
    \label{eq:reduced-TG-aut}
    \end{equation}
    and so \(g\in \mathcal{A}_{\rm T}\) descends to a Tanner-graph automorphism of the reduced HGP code.
\end{prop}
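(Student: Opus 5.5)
The plan is a direct matrix push-through computation, using only associativity and the three equivariance relations \eqref{eq:equivariance push-through} together with the unreduced HGP automorphism relations \eqref{eq:HGP TGraph Aut}. For the $X$-sector I will start from $\widetilde T_X(g)\widetilde H_X = \widetilde T_X(g) W_X H_X V$. I then apply \eqref{eq:X-equivariance} to move the stabilizer permutation past $W_X$, which gives $W_X T_X(g) H_X V$. Next I apply \eqref{eq:HGP TGraph Aut X} to replace $T_X(g) H_X$ by $H_X \Sigma(g)$, yielding $W_X H_X \Sigma(g) V$. Finally I apply \eqref{eq:qubit-equivariance} to rewrite $\Sigma(g) V = V \widetilde\Sigma(g)$. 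The result is $W_X H_X V \widetilde\Sigma(g) = \widetilde H_X \widetilde\Sigma(g)$, which is the first identity in \eqref{eq:reduced-TG-aut}. The $Z$-sector is handled identically using \eqref{eq:Z-equivariance} and \eqref{eq:HGP TGraph Aut Z}.

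Next I must argue that these identities define a genuine Tanner-graph automorphism of the reduced code. The hypothesis already gives that $\widetilde T_X(g)$, $\widetilde T_Z(g)$ and $\widetilde\Sigma(g)$ are permutation matrices of the correct sizes. Indeed, $V$ is $n\times\tilde n$, so \eqref{eq:qubit-equivariance} forces $\widetilde\Sigma(g)\in \mathrm S_{\tilde n}$, and similarly $\widetilde T_P(g)\in\mathrm S_{\tilde m_P}$. So \eqref{eq:reduced-TG-aut} is exactly the automorphism condition \eqref{eq:aut condition} with permutation $A$, simultaneously for both CSS sectors. Because the same qubit permutation $\widetilde\Sigma(g)$ appears in both sectors, it maps $\rs\widetilde H_X$ and $\rs\widetilde H_Z$ to themselves. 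It therefore preserves the reduced codespace and the reduced Tanner graph, since the check permutations relabel check vertices compatibly with the edges.

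I do not expect a substantive obstacle, because the statement is essentially a commutative-diagram chase. The only point requiring care is the ordering convention in \eqref{eq:qubit-equivariance}. $V$ multiplies on the right of $H_X$, so the permutation must be pushed from the left of $V$ to its right. This is why the hypothesis is written as $V\widetilde\Sigma = \Sigma V$ rather than in the opposite direction, and I will check that the dimensions match at each step. I may also remark, though it is not needed for the proof, that \eqref{eq:qubit-equivariance} says $\Sigma(g)$ maps the retained qubit set $\tilde Q$ to itself. Similarly, \eqref{eq:X-equivariance} and \eqref{eq:Z-equivariance} say the combination pattern of $W_X$ and $W_Z$ is permuted into itself. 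This interpretation explains when the hypotheses can hold, for example when $g$ preserves the chosen color groups.
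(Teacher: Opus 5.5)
Your proposal is correct and follows essentially the same route as the paper's proof: the same push-through chain $\widetilde T_X W_X H_X V = W_X T_X H_X V = W_X H_X \Sigma V = W_X H_X V \widetilde\Sigma$, applying the same relations in the same order, with the $Z$-sector handled analogously. Your added remarks on why this gives a genuine Tanner-graph automorphism (permutation checks, preserved row spaces, invariance of $\tilde Q$) are sound, though the paper's proof stops at the matrix identities.
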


\begin{proof}
    Direct computation gives (dropping $g$ label)
    \begin{align}
        \widetilde T_X \widetilde H_X &= \widetilde T_XW_XH_XV  \notag \\
        &= W_XT_XH_XV  \notag \\
        &= W_XH_X\Sigma V  \notag \\
        &= W_XH_XV \widetilde\Sigma  \notag \\
        &= \widetilde H_X \widetilde\Sigma \, ,
    \end{align}
    where we have used \eqref{eq:X-equivariance}, \eqref{eq:HGP TGraph Aut X}, \eqref{eq:qubit-equivariance}, and \eqref{eq:tildeHx} in the second, third, fourth, and fifth lines respectively.
    Similarly, using \eqref{eq:HGP TGraph Aut Z} and \eqref{eq:Z-equivariance}, we obtain $\widetilde T_Z \widetilde H_Z = \widetilde H_Z \widetilde\Sigma$.
\end{proof}

Let $\mathcal A_{\rm desc}\subseteq\mathcal A_{\rm T}$ contain the original Tanner-graph automorphisms that satisfy the push-through relations~\eqref{eq:equivariance push-through}. Restriction to the kept qubits defines a homomorphism $\rho:\mathcal A_{\rm desc}\rightarrow\operatorname{Aut}_{\rm T}(\widetilde{\mathcal Q})$. The inherited reduced group is its image, $\widetilde{\mathcal A}_{\rm T}:=\rho(\mathcal A_{\rm desc})\simeq\mathcal A_{\rm desc}/\ker\rho$. This need not be the full automorphism group of the reduced code. Automorphisms of the original code that do not preserve the reduction are lost, while the reduced parity-check matrices may acquire automorphisms that do not lift to the original HGP code. Thus, there is no general subgroup relation between the two full automorphism groups. For the full stabilizer-code automorphism group, including local Clifford transformations, compatible original automorphisms again give only an inherited subgroup. The full group of the reduced code must instead be computed from its stabilizer matrix and may contain additional elements.

Let us now explain graphically what the automorphism preservation conditions mean. Let $B_i$ and $C_i$ denote the bit and check nodes of the Tanner graph of $H_i$. The HGP qubits decompose as
\begin{align}
Q = (B_1\times B_2) \sqcup (C_1\times C_2),
\end{align}
where, as a reminder, $B_1\times B_2$ are the \Bsectorname{} qubits and $C_1\times C_2$ are the
\Csectorname{} qubits. Under $g\in \mathcal{A}_{\rm T}$, the induced qubit permutation $\Sigma(g)$ maps
\begin{subequations}
\begin{align}
    (b_1,b_2) &\xmapsto{\Sigma} (\sigma_1b_1,\sigma_2b_2)  \\
    (c_1,c_2) &\xmapsto{\Sigma} (\tau_1c_1,\tau_2c_2) \, .
\end{align} 
\end{subequations}
So \eqref{eq:qubit-equivariance} can be interpreted as demanding that the set of kept qubits $\widetilde{Q}$ is $\mathcal{A}_{\rm T}$-invariant.

Likewise, the \(X\)-checks indexed by \(C_1\times B_2\) and transform as
\begin{align}
(c_1,b_2) \xmapsto{T_X} (\tau_1c_1,\sigma_2b_2) \, ,
\end{align}
while the \(Z\)-checks are indexed by \(B_1\times C_2\) and transform by
\begin{align}
(b_1,c_2) \xmapsto{T_Z} (\sigma_1b_1,\tau_2c_2) \, .
\end{align}
Hence \eqref{eq:X-equivariance} and \eqref{eq:Z-equivariance} can be interpreted as demanding that the families of $X$-type and $Z$-type stabilizer combinations used in the reduction are $\mathcal{A}_{\rm T}$-invariant.

In particular, since $V$ removes only \Csectorname{} qubits, a sufficient combinatorial formulation of the automorphism preservation conditions (push-through relations) \eqref{eq:equivariance push-through} is:
\begin{enumerate}
\item The removed qubits form a union of $\mathcal{A}_{\rm T}$-orbits in $C_1\times C_2$.
\item The $X$-check combinations are taken on $\mathcal{A}_{\rm T}$-invariant subsets of $C_1\times B_2$.
\item The $Z$-check combinations are taken on $\mathcal{A}_{\rm T}$-invariant subsets of $B_1\times C_2$.
\end{enumerate}

Since our canonical logical basis \eqref{eq:HGP G_X,G_Z B} is preserved under the reduction as promised by Theorem \ref{thm:basis preservation}, the logical actions of $\widetilde{\mathcal{A}}_{\rm T}$ are the same as those on the unreduced code.

As an example, the quantum Tanner transformation used in~\cite{Hong_2024_Quantinuum} satisfies all three conditions above. Note that we can technically be a bit looser and only demand that $\widetilde T_X$ and $\widetilde T_Z$ be invertible matrices instead of permutations, in order to satisfy the automorphism condition \eqref{eq:aut condition} instead of the stricter Tanner-graph automorphism condition.

\subsection{Code homomorphisms}\label{sec:GPPM}

In this section, we will show how HGP code homomorphisms can be adapted to our reduced HGP codes. Homomorphic gadgets can be used to selectively measure logical Pauli products within a single HGP block, dubbed grid-Pauli-product measurement (GPPM) gadgets~\cite{Xu_2025_fast}. In particular, we will show that both the data and ancilla code blocks can be reduced in an equivariant way that respects their homomorphic chain map.

The main feature that enables these homomorphic gadgets is the existence of a chain homomorphism~\cite{Huang_2023} (Definition~\ref{def:chain homomorphism}) between the chain complex of the HGP data code $\mathcal{Q}$ and a suitably modified HGP ancilla code $\mathcal{Q}'$, which enables a ``homomorphic'' (transversal) CNOT gate between corresponding logical qubits of the codes. For HGP codes, the ancilla code will be another HGP code with data qubits removed or added. Because this ancilla code keeps the same product structure as that of the data code, the homomorphic CNOT can be realized by applying physical CNOTs between corresponding pairs of physical qubits (and by doing nothing to physical qubits without a pair). In all, this enables us to do logical measurements.

We now review code homomorphisms, how homomorphic CNOTs arise from them, and how they appear in our HGP codes. We refer the reader to Appendix \ref{app:codes algebraic} for background on the relevant mathematical terminologies such as chain complexes and their relation to CSS codes. Recall that a quantum CSS code $\mathcal{Q}$ is associated with a 3-term (co)chain complex
\begin{align}
    \mathcal{Q}: \qquad S_X \xlongrightarrow{H^\transp_X} Q \xlongrightarrow{H^{}_Z} S_Z \, ,
\end{align}
where $S_X \in \F^{m_X}_2$, $Q\in\F^{n}_2$ and $S_Z\in\F^{m_Z}_2$ are the binary vector spaces of $X$-syndromes, data qubits and $Z$-syndromes respectively.

\begin{defn}[CSS chain homomorphism]\label{def:chain homomorphism}
    Suppose we have two CSS codes $\mathcal{Q}, \mathcal{Q}'$ corresponding to two 3-term chain complexes of the form in \eqref{eq: CSS complex}. A CSS chain homomorphism $\Gamma$ is given by three linear maps $\Gamma^{}_X : S^{}_X \rightarrow S'_X$, $\Gamma_Q : Q \rightarrow Q'$ and $\Gamma^{}_Z : S^{}_Z \rightarrow S'_Z$ such that the following diagram commutes:
    \begin{equation}
        \begin{tikzcd}
    	{\mathcal{Q}:} & {S_X} & Q & {S_Z} \\
    	{\mathcal{Q}':} & {S'_X} & {Q'} & {S'_Z}
        \arrow[from=1-1, to=2-1]
    	\arrow["{H^\transp_X}", from=1-2, to=1-3]
    	\arrow["{\Gamma^{}_X}"', from=1-2, to=2-2]
    	\arrow["{H^{}_Z}", from=1-3, to=1-4]
    	\arrow["\Gamma"', from=1-3, to=2-3]
    	\arrow["{\Gamma^{}_Z}"', from=1-4, to=2-4]
    	\arrow["{H^{\prime\transp}_X}", from=2-2, to=2-3]
    	\arrow["{H'_Z}", from=2-3, to=2-4]
        \end{tikzcd}
    \end{equation}
    In other words, $\Gamma H^\transp_X = H^{\prime\transp}_X \Gamma^{}_X$ and $\Gamma^{}_Z H^{}_Z = H'_Z \Gamma$.
\end{defn}

The arrow going from $\mathcal{Q} \rightarrow \mathcal{Q}'$ corresponds to the fact that the homomorphic CNOT is controlled on a logical qubit in $\mathcal{Q}'$ targeting a logical qubit in $\mathcal{Q}$ and vice versa~\cite{Huang_2023}.

For HGP codes, chain (complex) homomorphisms on the classical input codes $\mathcal{C}_1$, $\mathcal{C}_2$ rise to a chain homomorphisms on the corresponding HGP code $\mathcal{Q}$~\cite{Xu_2025_fast}. There are two well-known modifications of classical LDPC codes which lead to classical code homomorphisms: augmenting and puncturing. We will show how each of them is compatible with our qubit reduction scheme.

\subsubsection{Augmentation} \label{subsec:augmenting}

\begin{figure}[t]
    \centering
    \includegraphics[width=0.4\textwidth]{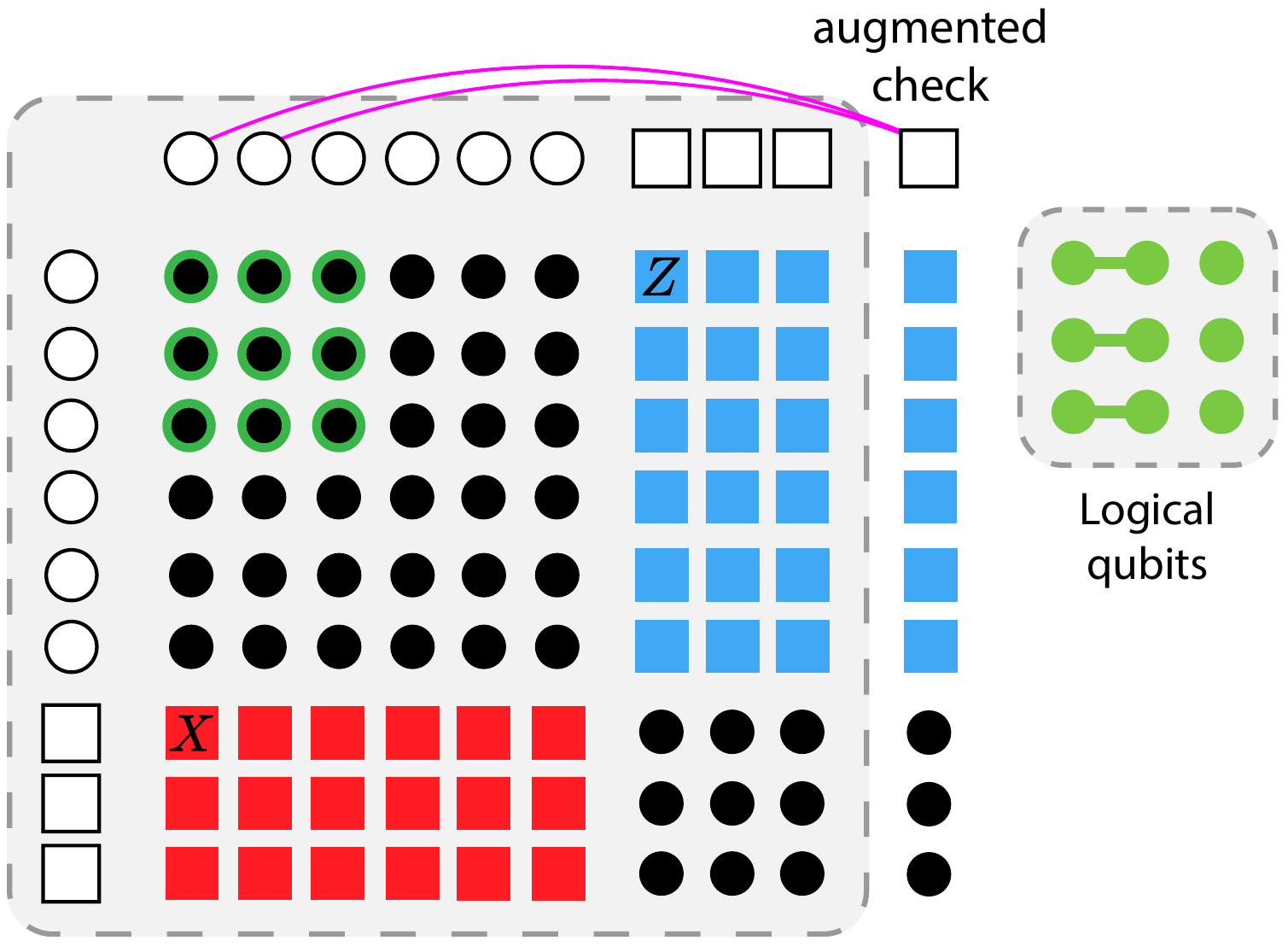}
    \caption{Augmenting the second input classical code results in a new column of $Z$-checks and \Csectorname{} qubits in the HGP code (HGP edges not shown). Qubits (black circles) with green highlights correspond to informational bits in the classical input codes. Since the new check acts on the first two information bits, the first two columns of logical qubits are fused pairwise, resulting in $9\rightarrow 6$ independent logical qubits.}
    \label{fig:augmenting}
\end{figure}

Augmentation is a procedure where a new parity check is added to a linear code, thereby enlarging the parity-check matrix $H$ by a single row. If this new check is linearly independent from the others, then the rank of $H$ increases by 1, and so as a consequence the number of logical bits decreases by 1. Without loss of generality, recall from \eqref{Gcanonical} that the canonical form of a classical generator matrix $G \in \F_2^{k \times n}$ is $\begin{pmatrix} \ident_k & A^\transpose \end{pmatrix}$ for some matrix $A$. We call the first $k$ bits of the matrix in this form informational bits. We can rearrange the columns of $H \in \F_2^{m \times n}$ to respect this same bit ordering, where the $k$ informational bits come first, to get the form of \eqref{ej}: $H = \begin{pmatrix} A & \ident_{n-k} \end{pmatrix}$. Clearly these satisfy $HG^\transpose = \mathbf{0}$. By choosing the augmented checks to only act on the $k$ informational bits, we can controllably alter the properties of the modified code to our desire. 

As an example, suppose we have an $[n,k,d]$ code $\mathcal{C}$ whose generator matrix is in canonical form $G = \begin{pmatrix} \ident_k & A^\transpose \end{pmatrix}$ with $k \geq 2$. Then suppose we augment a new parity check with support on the first two informational bits; i.e. we add the row $\begin{pmatrix} 1 & 1 & \mathbf{0}_{1\times(n-2)} \end{pmatrix}$ to $H = \begin{pmatrix} A & \ident_{n-k} \end{pmatrix}$. Of course we need a new generator matrix $G'$ to satisfy $H'G'^\transpose=0$ for 
\begin{align}
H' = \begin{pmatrix} A & \ident_{n-k} \\ \begin{matrix} 1 & 1\end{matrix} & \mathbf{0}\end{pmatrix} \, .    
\end{align}
However, since we augmented on the informational bits, we can predict what this matrix looks like. First, note that only the first two logical bits (rows of $G$) will need to be changed, since the new parity check only checks the first two bits of each codeword. While the first two rows of $G$ do not satisfy the augmented parity check, their linear combination over $\F_2$ does (since the new parity check enforces that the first two bits of codewords must be the same). So we simply replace the first two rows of $G$ with their combination; in other words, we ``fuse'' the first two logical bits. The result is a new $(k-1)\times n$ matrix $G'$, which is a valid generator matrix for the augmented code $\mathcal{C}'$. Since the augmented code is a subcode of the original code, its minimum distance cannot decrease, so $d' > d$.


Suppose the input codes of an HGP are in canonical form and thus create a $k_1k_2$ sector of ``logical'' qubits on the top-left of the \Bsectorname{} qubits, as exemplified by the highlighted green qubits in Figure~\ref{fig:augmenting}. Suppose we applied the example augmentation procedure above to the second (horizontal) input code of a HGP code, as illustrated in Figure~\ref{fig:augmenting}. The result is a modified HGP code where the logical qubits of the first two columns are horizontally ``merged'' pairwise. For each fused pair, the new logical $\bar{X}$ operator is the product of the two individual $\bar{X}$ operators, and the new logical $\bar{Z}$ operator can be either of the original $\bar{Z}$ operators, which are now stabilizer-equivalent due to the new augmented $Z$-checks. If we augmented the first input code instead, then we fuse the first two rows of logical qubits, with the roles of $X$ and $Z$ interchanged above.

Our qubit reduction scheme in Section~\ref{sec:procedure} is straightforwardly compatible with input code augmentation. To see why, observe that the addition of new checks does not change the connectivity of the original \Csectorname{} qubits, and so any qubit reduction on the original HGP code is also a valid qubit reduction on the modified HGP code. In particular, any coloring of the \Csectorname{} qubits prior to augmentation also remains a valid coloring after augmentation. If desired, one can additionally reduce the new \Csectorname{} qubits to further lower the spatial overhead.

Let $\mathcal{Q}$ denote the original HGP code and $\mathcal{Q}'$ its augmented version. Similarly, let $\widetilde{\mathcal{Q}}$ be the reduced HGP code after the procedure in Section~\ref{sec:procedure} and $\widetilde{\mathcal{Q}}'$ its augmented version. It has been shown that a chain homomorphism (Definition \ref{def:chain homomorphism}) exists between $\mathcal{Q}$ and $\mathcal{Q}'$~\cite{Xu_2025_fast}. We will now show the existence of a chain homomorphism between $\widetilde{\mathcal{Q}}$ and $\widetilde{\mathcal{Q}}'$.

\begin{figure}[t]
    \centering
    \includegraphics[width=0.3\textwidth]{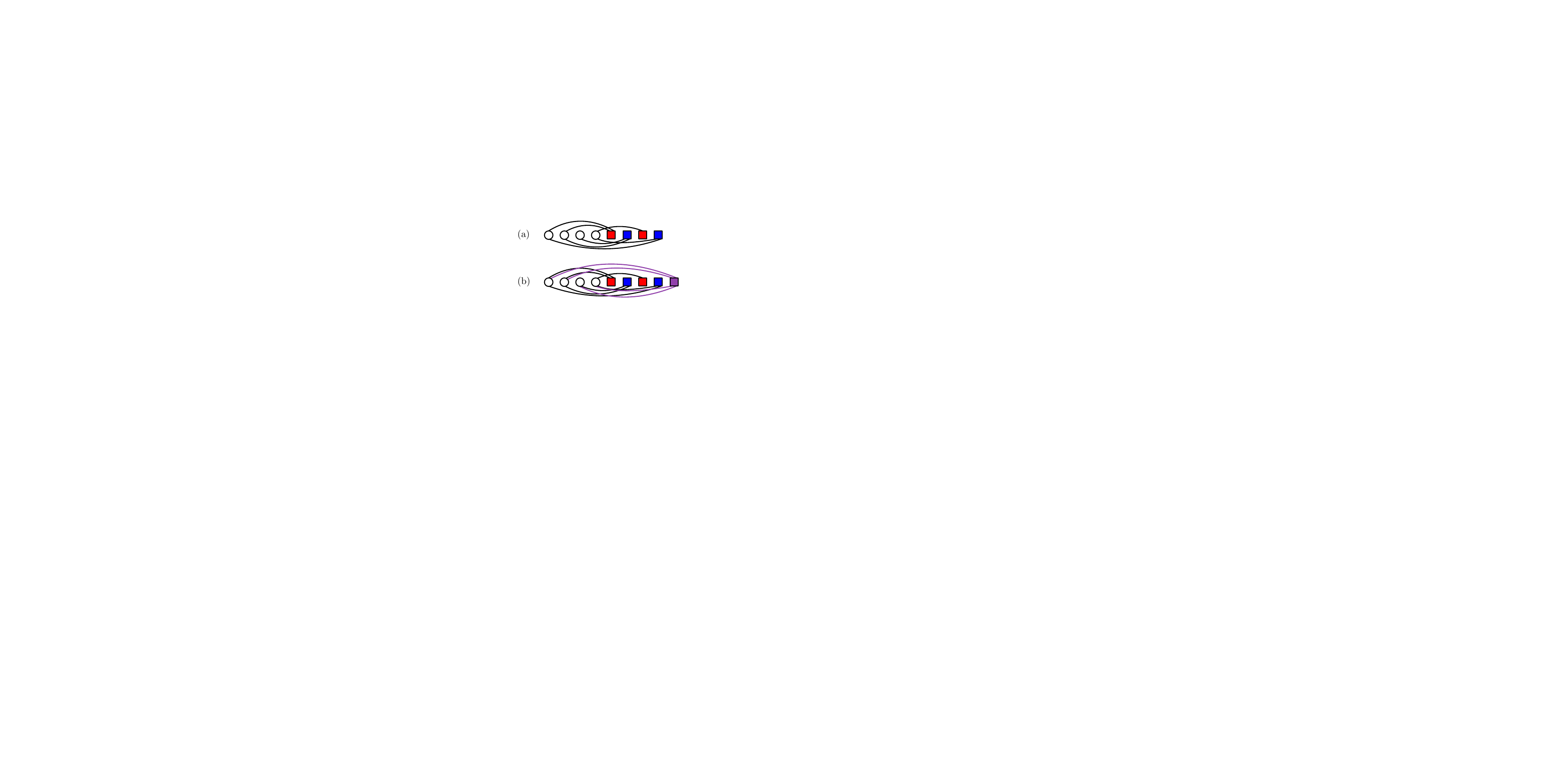}
    \caption{\textbf{(a)} Example check-coloring of a Tanner graph. \textbf{(b)} Augmenting a check (purple) does not change the coloring of the original nodes of the Tanner graph.}
    \label{augmentedcolors}
\end{figure}

\begin{lem}[Augmented check coloring]\label{lem:augmented coloring}
    Let $H \in \F^{m\times n}_2$ be a parity-check matrix associated with a Tanner graph $\mathcal{G}$ of $m$ checks and $n$ bits, and let $\chi$ be a coloring of the check nodes based on bits checks in common. Suppose $H' \in \F^{m'\times n}_2$ is an augmented version of $H$ with $m'-m>0$ additional parity checks. Then $\chi$ remains a valid coloring of the original $m$ check nodes on the augmented Tanner graph $\mathcal{G}'$.
\end{lem}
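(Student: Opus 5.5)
The plan is to argue directly from Definition~\ref{def:check-adjacency graph}. The key point is that a coloring of the check nodes is valid exactly when it is a proper vertex coloring of the check-adjacency graph. So it suffices to show that, restricted to the original $m$ check nodes, the check-adjacency graph $\mathcal{G}'_C$ of the augmented Tanner graph coincides with $\mathcal{G}_C$.

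First, I will fix notation. Augmentation only appends rows, so up to row ordering
\begin{align}
    H' = \begin{pmatrix} H \\ H_{\rm aug} \end{pmatrix},
\end{align}
with $H_{\rm aug} \in \F_2^{(m'-m)\times n}$. The bit set $B$ is unchanged. The check set becomes $C' = C \sqcup C_{\rm aug}$. The edge set becomes $E' = E \sqcup E_{\rm aug}$, where every edge of $E_{\rm aug}$ has one endpoint in $C_{\rm aug}$. In particular, for each original check $c \in C$, its neighborhood $N'(c)\subseteq B$ in $\mathcal{G}'$ equals its neighborhood $N(c)$ in $\mathcal{G}$, because the corresponding row of $H'$ is the same row of $H$.

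Next, take two original checks $c_1,c_2\in C$. By Definition~\ref{def:check-adjacency graph}, $(c_1,c_2)\in E'_C$ if and only if $N'(c_1)\cap N'(c_2)\neq\emptyset$. By the previous step this holds if and only if $N(c_1)\cap N(c_2)\neq\emptyset$, that is, if and only if $(c_1,c_2)\in E_C$. Hence the induced subgraph $\mathcal{G}'_C[C]$ equals $\mathcal{G}_C$. Since $\chi$ is a proper coloring of $\mathcal{G}_C$, any two original checks receiving the same color are non-adjacent in $\mathcal{G}'_C$, so they still share no bit. This proves the lemma.

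The only point requiring care is the scope of the claim. The new checks may well be adjacent to original checks of every color, so $\chi$ need not extend to $C_{\rm aug}$ without extra colors. The statement, however, concerns only the original $m$ nodes, and that case is exactly what the argument covers. A consequence I will note is that the product color groups $\Gamma_{i,j}$ on the original \Csectorname{} qubits stay stabilizer-disjoint in the augmented HGP code. The augmented checks only add new \Csectorname{} qubits and new $Z$-checks, and they never connect two original \Csectorname{} qubits in the same group through an original check. I do not expect a real obstacle here; the argument is essentially definitional.
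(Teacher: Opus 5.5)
Your proof is correct and follows the same idea as the paper: the original $m$ rows of $H'$ are exactly $H$, so the bit-neighborhoods of the original checks, and hence the check-adjacency graph restricted to those $m$ nodes, are unchanged. The paper phrases this through the inclusion $\Gamma_S$ and the identity $HH^\transp=\Gamma^\transp_S(H'H^{\prime\transp})\Gamma^{}_S$. Your neighborhood formulation states adjacency directly, and it avoids the subtlety that over $\F_2$ the Gram matrix records only the parity of each overlap, not whether the overlap is nonempty.
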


\begin{proof}
    Let $\Gamma_S = \begin{pmatrix} \ident_m \\ \mathbf{0}_{(m'-m)\times m} \end{pmatrix}$ represent the inclusion map $S \hookrightarrow S'$ from the original checks (that span $S$) to the new checks (that span $S'$).
    Then $\Gamma^\transp_S H' \in \F^{m\times n}_2$ is the restriction of $H'$ to the original $m$ rows corresponding to $H$. By construction, $\Gamma^\transp_S H' = H$, and so we have $HH^\transp = (\Gamma^\transp_S H')(\Gamma^\transp_S H')^\transp = \Gamma^\transp_S (H'H^{\prime\transp}) \Gamma^{}_S$, which implies that the check-adjacency graph of $\mathcal{G}$ and that of $\mathcal{G}'$ are identical when restricted to the original $m$ check nodes. Thus $\chi$ is a valid check-coloring of $\mathcal{G}'$ on the original $m$ check nodes. This can be easily understood intuitively since an augmented check will not alter the structure of the original checks; see Figure~\ref{augmentedcolors}.
\end{proof}

Now, let $\mathcal{Q}$ be a HGP code with input parity-check matrices $H_1,H_2$, and let $Q$ denote its data qubits. Let $\mathcal{Q}'$ be its augmented variant upon augmenting the horizontal code $H_2$ to $H'_2$. Let $\widetilde{\mathcal{Q}}$ denote the reduced version of $\mathcal{Q}$ with data qubits $\tilde{Q}$, and let $\widetilde{\mathcal{Q}}'$ be the reduced version of $\mathcal{Q}'$ using the same reduction scheme used from $\mathcal{Q}$ to $\mathcal{Q}'$.

\begin{thm}[Reduced HGP chain homomorphism for augmentation]\label{thm:augmentation homomorphism}
    Let $\widetilde{\mathcal{Q}}'$ be the vertically augmented version of $\widetilde{\mathcal{Q}}$. There exist sparse maps $\widetilde{\Gamma}_X, \widetilde{\Gamma}_Q, \widetilde{\Gamma}_Z$ such that the following diagram commutes:
    \begin{equation}\label{eq:aug diagram}
        \begin{tikzcd}
        \widetilde{\mathcal{Q}}: & \widetilde{S}_X \arrow[r, "\widetilde{H}^\transp_X"] & \tilde{Q} \arrow[r, "\widetilde{H}_Z"] & \widetilde{S}_Z \\
        \arrow[u] \widetilde{\mathcal{Q}}': & \widetilde{S}'_X \arrow[r, "\widetilde{H}^{\prime\transp}_X"] \arrow[u, "\widetilde{\Gamma}_X"] & \tilde{Q}' \arrow[r, "\widetilde{H}'_Z"] \arrow[u, "\widetilde{\Gamma}"] & \widetilde{S}'_Z \arrow[u, "\widetilde{\Gamma}_Z"]
        \end{tikzcd}
    \end{equation}
\end{thm}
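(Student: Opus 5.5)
The plan is to build the reduced maps by conjugating the known unreduced augmentation homomorphism of~\cite{Xu_2025_fast} with the reduction matrices $W_X, W_Z, V$ of \eqref{eq:tildeH,H relation}. I take the augmentation to be on $H_2$, as in the setup before the statement; the other orientation follows by exchanging the roles of $X$ and $Z$.

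\textbf{Unreduced maps.} First I write down the chain map $\mathcal{Q}'\to\mathcal{Q}$ explicitly. Augmenting $H_2$ to $H_2'$ adds $Z$-checks indexed by $B_1\times(C_2'\setminus C_2)$ and \Csectorname{} qubits indexed by $C_1\times(C_2'\setminus C_2)$. The $X$-check index set $C_1\times B_2$ is unchanged. So I take:
\begin{itemize}
\item $\Gamma_X=\ident$ on $S_X' = S_X$;
\item $\Gamma_Q$ the coordinate restriction that forgets the new \Csectorname{} qubits;
\item $\Gamma_Z$ the coordinate restriction that forgets the new $Z$-checks.
\end{itemize}
From the block form \eqref{eq:HGP H_X,H_Z} I then check $\Gamma_Q H_X^{\prime\transp}=H_X^\transp\Gamma_X$ and $\Gamma_Z H_Z' = H_Z\Gamma_Q$. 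The first holds because restricting the rows of $\ident_{m_1}\otimes H_2^{\prime\transp}$ to the old checks of $H_2'$ gives $\ident_{m_1}\otimes H_2^\transp$. The second holds because the old rows of $H_1^\transp\otimes\ident_{m_2'}$ touch only old \Csectorname{} qubits, so those columns can be dropped, and the old rows of $\ident_{n_1}\otimes H_2'$ are exactly $\ident_{n_1}\otimes H_2$.

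\textbf{Reduced maps and compatibility.} By Lemma~\ref{lem:augmented coloring}, the coloring and hence the reduction schedule used on $\mathcal{Q}$ is still valid on $\mathcal{Q}'$. I apply it with the new qubits and checks left untouched, so $W_X'=W_X$, $W_Z'=W_Z\oplus\ident$, and $V'$ removes exactly the old removed qubits. I then set $\widetilde\Gamma_X=\ident$, let $\widetilde\Gamma_Z$ be the projection onto the rows of $W_Z'$ that come from $W_Z$, and put $\widetilde\Gamma:=V^\transp\Gamma_QV'$. All three are sparse because they are coordinate inclusions or restrictions.

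The key intertwining identities are
\[
\Gamma_QV'=V\widetilde\Gamma,\qquad \Gamma_QV'V'^\transp=VV^\transp\Gamma_Q,\qquad \widetilde\Gamma_ZW_Z'=W_Z\Gamma_Z,\qquad W_X'^\transp\widetilde\Gamma_X^{}=W_X^\transp .
\]
Each follows because the removed set of $\mathcal{Q}'$ equals the removed set of $\mathcal{Q}$, and $\Gamma_Q$ only deletes new qubits, which are all kept by $V'$. Given these, both squares of \eqref{eq:aug diagram} reduce to the unreduced squares by a one-line computation, for example
\[
\widetilde\Gamma_Z\widetilde H_Z'=W_Z\Gamma_ZH_Z'V'=W_ZH_Z\Gamma_QV'=W_ZH_ZV\widetilde\Gamma=\widetilde H_Z\widetilde\Gamma .
\]
The $X$-square is handled the same way, using $V^\transp V=\ident$.

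\textbf{Main obstacle.} The delicate step is verifying the push-through relation $\Gamma_QV'=V\widetilde\Gamma$ together with its transpose version. This needs the reduction on $\mathcal{Q}'$ to be equivariant with the one on $\mathcal{Q}$: the kept and removed sets must correspond under $\Gamma_Q$, and the combined $Z$-check pairs must never mix an old check with a new one. I would first prove the theorem under the convention above, where the new qubits are not reduced. If one additionally reduces new \Csectorname{} qubits, I would require those combinations to lie entirely within new $Z$-checks, or within new columns in the $X$ case. Then $W_Z'$ stays block-diagonal with respect to the old/new split, and the same identities hold after replacing $\widetilde\Gamma_Z$ with the projection onto the old block.
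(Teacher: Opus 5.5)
Your proof is correct, and the maps you construct are exactly the paper's: $\widetilde\Gamma_X=\ident$, $\widetilde\Gamma=V^\transp\Gamma_QV'$ (forget the new check-type qubits), and $\widetilde\Gamma_Z$ (project onto the old $Z$-rows). What differs is how you verify commutation.

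\textbf{The paper's route.} It works directly with the reduced matrices. It first invokes Lemma~\ref{lem:augmented coloring} to see that the same schedule gives a valid CSS code $\widetilde{\mathcal Q}'$, as you also do. It then observes two facts, which are precisely the two squares: restricting $\widetilde H'_X$ to the old kept qubits returns $\widetilde H_X$, and the old rows of $\widetilde H'_Z$ are $\widetilde H_Z$ padded by zeros on the new qubits.

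\textbf{Your route.} You write the unreduced chain map $\mathcal Q'\to\mathcal Q$ explicitly, rather than citing~\cite{Xu_2025_fast}. You then transport it through $W_X$, $W_Z$, $V$ via push-through identities. This is the strategy the paper reserves for puncturing (Theorem~\ref{thm:puncturing homomorphism}).

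\textbf{Comparison.} The paper's route is shorter. Yours isolates exactly the equivariance conditions that are needed: identical removed sets, $W'_X=W_X$, and $W'_Z$ block-diagonal with respect to the old/new split. It therefore extends at once to the optional extra reduction of new check-type qubits. You rightly restrict that extension to $Z$-cleaning: $X$-cleaning a new qubit $(c_1,c_2^{\rm new})$ would combine old $X$-checks $(c_1,b_2)$ and destroy $W'_X=W_X$.

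\textbf{Minor slip.} Your fourth identity should read $\Gamma_X W_X^{\prime\transp}=W_X^\transp\widetilde\Gamma_X$ to typecheck. This is harmless here because every factor is an identity.
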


\begin{proof}
    Recall that since we have augmented $H_2$, we have additional $Z$-checks and \Csectorname{} qubits; the number of $X$-checks is unchanged. First we will show that $\widetilde{\mathcal{Q}}'$ is a valid CSS code. Let $\chi_1$ and $\chi_2$ denote the check-colorings of $H_1$ and $H_2$ respectively. By Lemma \ref{lem:augmented coloring}, $\chi_2$ is still a valid check-coloring of $H'_2$. Since $\widetilde{H}'_X$, $\widetilde{H}_X$ and $\widetilde{H}'_Z$, $\widetilde{H}_Z$ are identical when restricted to the \Csectorname{} qubits in $\tilde{Q}$, $\chi_1 \times \chi_2$ is also a valid product coloring of the \Csectorname{} qubits for qubit reduction of $\widetilde{\mathcal{Q}}'$. Hence, because $\widetilde{\mathcal{Q}}$ is a valid CSS code by assumption, $\widetilde{\mathcal{Q}}'$ is also a valid CSS code; i.e. $\widetilde{H}'_Z\widetilde{H}^{\prime\transp}_X = 0$.
    
    Let $\widetilde{\Gamma}$ be the transpose of the inclusion map associated with $\tilde{Q} \hookrightarrow \tilde{Q}'$. The relation between $\widetilde{H}_X$ and $\widetilde{H}'_X$ can be written as $\widetilde{H}'_X\Gamma^\transp = \widetilde{H}_X$. Let $\widetilde{\Gamma}_X = \ident_{\tilde{m}_X}$ since we have $\widetilde{S}^{}_X \simeq \widetilde{S}'_X$. Then we have $\widetilde{H}_X = \tilde\Gamma^\transp_X \widetilde{H}^{}_X = \widetilde{H}'_X\Gamma^\transp$, or equivalently,
    \begin{align}\label{eq:aug comm 1}
        \widetilde{H}^\transp_X \tilde\Gamma_X = \Gamma \widetilde{H}^{\prime\transp}_X \, .
    \end{align}
    Now let $\widetilde{\Gamma}_Z$ be the transpose of the inclusion map associated with $\widetilde{S}^{}_Z \hookrightarrow \widetilde{S}'_Z$. Since the augmentation did not change the structure of the original $Z$-checks acting on the original data qubits $\tilde{Q}$, we have $\tilde\Gamma^{}_Z \widetilde{H}'_Z \tilde\Gamma^\transp = \widetilde{H}_Z$, or equivalently,
    \begin{align}\label{eq:aug comm 2}
        \tilde\Gamma^{}_Z \widetilde{H}'_Z = \widetilde{H}_Z \Gamma \, .
    \end{align}
    Together, \eqref{eq:aug comm 1} and \eqref{eq:aug comm 2} show that the diagram \eqref{eq:aug diagram} commutes.
\end{proof}

Since the qubit reductions preserve the canonical logical bases of both $\tilde{\mathcal{Q}}$ and $\tilde{\mathcal{Q}}'$ individually by Theorem \ref{thm:basis preservation}, the logical action of \eqref{eq:aug diagram} is the same as that of the unreduced version.

\subsubsection{Puncturing}

\begin{figure}[t]
    \centering
    \includegraphics[width=0.38\textwidth]{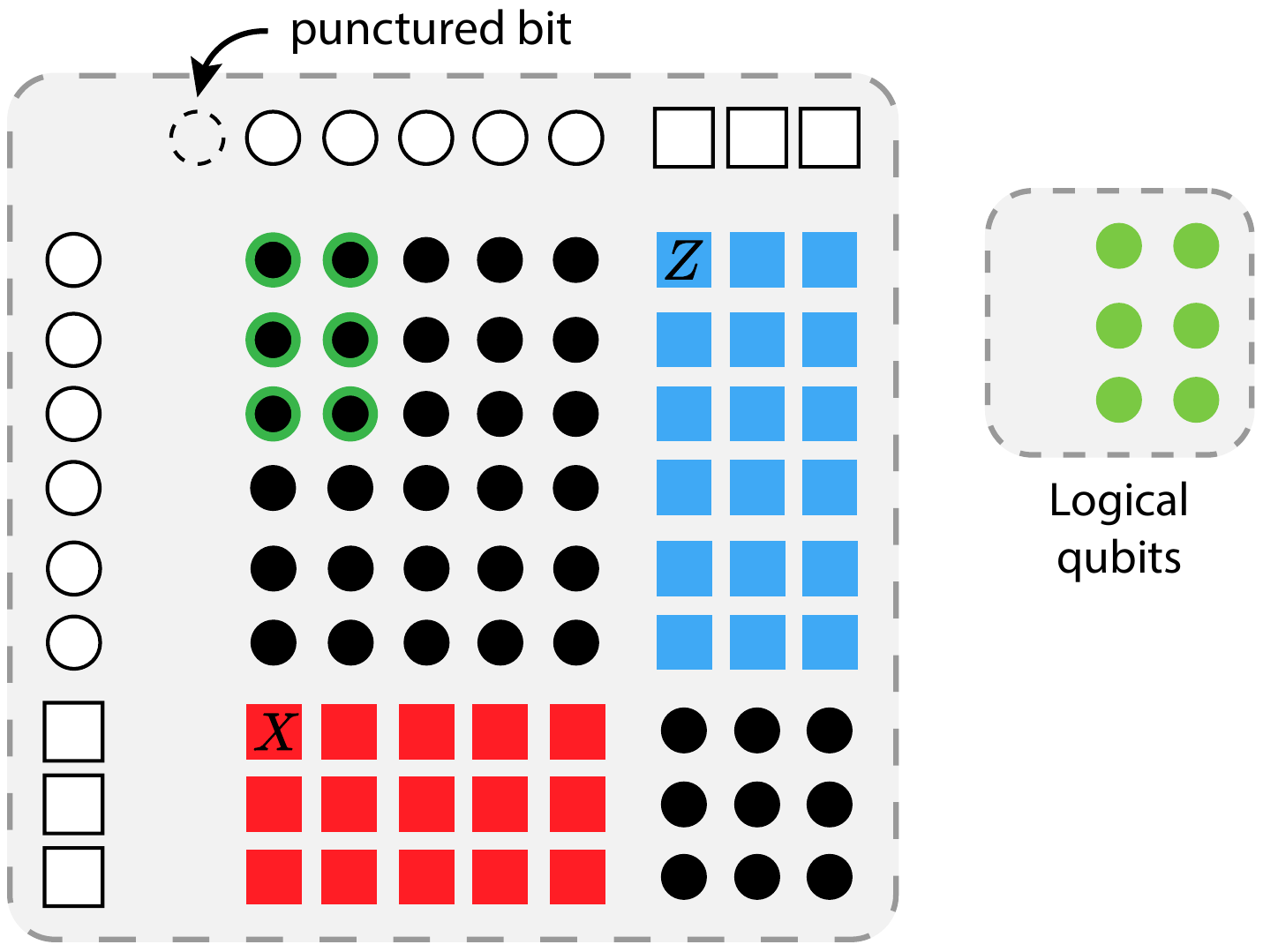}
    \caption{Puncturing the second input classical code results in the deletion of a column of \Bsectorname{} qubits and $X$-checks. In this case, since the first informational bit has been punctured, the first column of logical qubits are deleted in the HGP code.}
    \label{fig:puncturing}
\end{figure}

Puncturing is a procedure where bits are deleted from a linear code, thereby reducing the length of the code or number of columns in the parity-check and generator matrices. Similar to the case of augmentation, by puncturing only the informational bits, we maintain control over the modified code's parameters. Suppose we have a $[n,k,d]$ code $\mathcal{C}$ with canonical generator matrix $G = \begin{pmatrix} \ident_k & A^\transpose \end{pmatrix}$, and we puncture $n-n'\leq k$ columns in first the $k$ columns (informational bits). Let $H \in \F^{m\times n}_2$ be a parity-check matrix for $G$ such that $HG^\transp=0$ and $\rank H = n-k$. Let $H' \in \F^{m\times n'}_2$ denote the punctured parity-check matrix.

\begin{lem}[Puncturing informational bits does not alter $\rank H$]
\label{lem:cannot puncture check}
    $\rank H' = \rank H = n-k$.
\end{lem}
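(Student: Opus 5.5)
The plan is to compare the kernels of $H^\transp$ and $H^{\prime\transp}$, i.e.\ the left null spaces, rather than computing ranks directly. Write $H = (H_I \mid H_R)$, where $H_I$ consists of the $k$ informational columns and $H_R$ of the remaining $n-k$ columns. Puncturing deletes a subset $P$ of the informational columns, so $H' = (H_{I\setminus P} \mid H_R)$. Since deleting columns cannot enlarge a row space, $\rank H' \le \rank H = n-k$ is immediate. The content of the lemma is therefore the reverse inequality, and for that it suffices to show that $H_R$ alone already has rank $n-k$. Indeed $H_R$ is a column submatrix of $H'$, which gives $\rank H' \ge \rank H_R$.

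To show $\rank H_R = n-k$, I would use $HG^\transp = 0$ with $G = (\ident_k \mid A^\transp)$. This gives $H_I + H_R A = 0$, i.e.\ $H_I = H_R A$ over $\F_2$. Every informational column of $H$ is therefore a linear combination of the columns of $H_R$. Hence the column space of $H$ equals the column space of $H_R$, and $\rank H_R = \rank H = n-k$.

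Combining the two inequalities gives $\rank H' = n-k$. Since deleting more columns can only lower the rank, the argument does not depend on how many informational columns are removed, so the bound $n-n'\le k$ is needed only to make sense of ``puncturing informational bits.'' The one step that needs care is the column-ordering convention. I would state explicitly that $H$ has been column-permuted to match the canonical ordering of $G$, so that $H G^\transp=0$ splits cleanly as $H_I+H_RA=0$. The paper's normalization $H=(A\mid\ident_{n-k})$ need not be assumed; only $HG^\transp=0$ and $\rank H=n-k$ are used.
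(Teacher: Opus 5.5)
Your argument is correct, but it takes a different route from the paper. The paper argues by contradiction on the row side. If $\rank H'<\rank H$, some combination of checks that is nonzero in $H$ vanishes after puncturing. Then $\hat{\mathbf c}=\sum_i\mathbf c_i$ is a nonzero element of $\rs H=\mathcal C^\perp$ supported only on punctured informational bits, which the paper declares impossible. The reason, left implicit there, is that for such $\hat{\mathbf c}$ one has $\hat{\mathbf c}G^\transp=\hat{\mathbf c}_I$ when $G=(\ident_k\mid A^\transp)$, so $\hat{\mathbf c}\in\rs H$ forces $\hat{\mathbf c}=\mathbf 0$.

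You instead argue directly on the column side. The relation $H_I=H_RA$ shows that the $n-k$ non-informational columns already span the column space of $H$, so they are independent and survive any deletion of informational columns. The two arguments are dual forms of one fact: the non-informational coordinates form an information set of $\mathcal C^\perp$. In the paper this appears as injectivity of projecting $\rs H$ onto those coordinates; in yours it is full column rank of $H_R$.

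Your version derives the key step explicitly from $HG^\transp=0$ and avoids the contradiction. It also shows at once that $H'$ has the same column (syndrome) space as $H$, however many informational columns are removed. The paper's row-side phrasing is closer to how the lemma is used later: no nonzero combination of checks can be supported entirely on punctured informational bits, which is what rules out $\Delta'=0$.

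One presentational fix: your first sentence announces a comparison of the left null spaces of $H$ and $H'$, which is essentially the paper's strategy. The argument you actually give is a column-rank argument, so that sentence should be rewritten to match.
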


\begin{proof}
    Suppose, on the contrary, that $\rank H' < \rank H$ (clearly the rank cannot increase by removing columns). Then it implies the existence of a new linear dependency $\sum_i \mathbf{c}'_i = 0$ among some subset of punctured checks $\{\mathbf{c}'_i\} \subset H'$ that did not exist in the unpunctured code; i.e. $\sum_i \mathbf{c}_i \neq 0$. Let $\hat{\mathbf{c}} \equiv \sum_i \mathbf{c}_i$. Then by construction, $\hat{\mathbf{c}}$ is only supported on the punctured bits. Since the punctured bits were informational, we must then have $\hat{\mathbf{c}} \notin \rs{H}$, which contradicts its definition: $\hat{\mathbf{c}} = \sum_i \mathbf{c}_i \in \rs{H}$.
\end{proof}

Since $\rank H$ remains invariant, the number of logical bits then decreases by the number of punctured informational bits. Since the standard generator matrix takes the form of $\ident_k$ on the informational columns, if whenever we puncture a column we also remove (expurgate) the corresponding row (corresponding to the 1 in that column), then the resulting modified generator matrix is a valid generator matrix, also in standard form, for the punctured code. The support of the remaining codewords remains unchanged.

Suppose we punctured the first informational bit in the second input code of a HGP code, as illustrated in Figure~\ref{fig:puncturing}. The result is a modified HGP code where the data qubits, logical qubits and $X$-checks in the first column are deleted. If we punctured the first input code instead, then we would delete the first row of data qubits, logical qubits and $Z$-checks. Puncturing additional informational bits results in the deletion of more data qubits, logical qubits and checks.

Let $\mathcal{Q}$ denote the original HGP code and $\mathcal{Q}'$ its punctured version. Similar to the case for augmentation (Section \ref{subsec:augmenting}), there exists a chain homomorphism between $\mathcal{Q}$ and $\mathcal{Q}'$~\cite{Xu_2025_fast}. Let $\tilde{Q}$ denote the reduced version of $\mathcal{Q}$ after the procedure in Section \ref{sec:procedure}. We will now construct a reduced, punctured HGP code $\widetilde{\mathcal{Q}}'$ such that there exists a chain homomorphism between $\widetilde{\mathcal{Q}}$ and $\widetilde{\mathcal{Q}}'$.

Since the reduction procedure is independent for each \Csectorname{} qubit of interest, it suffices to analyze the case for a single \Csectorname{} qubit. We will focus on the procedure for combining $X$-checks, as the case for $Z$-checks is analogous. Recall that for a reducible \Csectorname{} qubit, we combine its $\Delta$ neighboring $X$-checks in pairs according to the local reduction matrix $w \in \F^{(\Delta-1)\times\Delta}_2$ \eqref{eq:1D rep PCM}, which is the parity-check matrix of the 1D repetition code of length $\Delta$. If some of the $X$-checks are deleted as a result of an input code puncture, we simply combine the remaining $\Delta'<\Delta$ $X$-checks in the same fashion; see Figure \ref{fig:punctured reduction} for an illustration. The local reduction $w' \in \F^{(\Delta'-1)\times\Delta'}_2$ is now the parity-check matrix of the 1D repetition code of length $\Delta'$. If $\Delta'=1$, then we simply just remove the remaining $X$-check. Note that we cannot have $\Delta'=0$ because it would imply that we have punctured the entire support of a parity check in the second input code, which would reduce $\rank H'_2$ and hence contradict Lemma \ref{lem:cannot puncture check}; a more general proof with Tanner codes is given in Appendix \ref{app:Tanner codes} (Lemma \ref{lem:Tanner code max puncture}). At the end of this procedure, we can then remove the same \Csectorname{} qubits as those in the unpunctured reduced code $\widetilde{\mathcal{Q}}$ to obtain $\widetilde{\mathcal{Q}}'$.

\begin{figure}[t]
    \centering
    \includegraphics[width=0.35\textwidth]{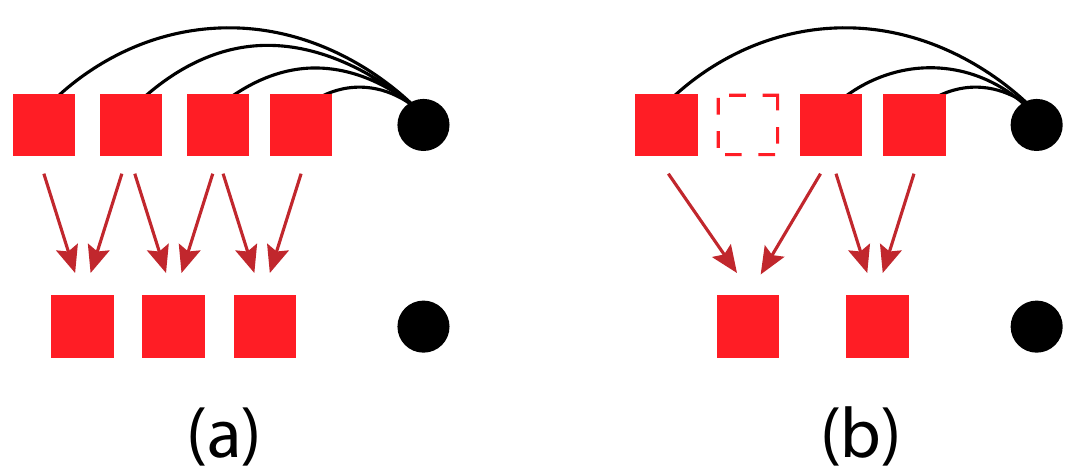}
    \caption{\textbf{(a)} The reduction procedure from Section \ref{sec:procedure} applied to the $X$-checks (red squares) supported on a \Csectorname{} qubit (black circle). Four $X$-checks are combined pairwise such that the three new $X$-checks have no support on the \Csectorname{} qubit. \textbf{(b)} The modified reduction procedure when an $X$-check is deleted as a result of an input code puncture. The three remaining $X$-checks are still combined pairwise to form two new $X$-checks that have no support on the \Csectorname{} qubit.}
    \label{fig:punctured reduction}
\end{figure}

\begin{thm}[Reduced HGP chain homomorphism for puncturing]\label{thm:puncturing homomorphism}
    Let $\widetilde{\mathcal{Q}}'$ be the vertically punctured version of $\widetilde{\mathcal{Q}}$. There exist sparse maps $\widetilde{\Gamma}_X, \widetilde{\Gamma}, \widetilde{\Gamma}_Z$ such that the following diagram commutes:
    \begin{equation}\label{eq:punc diagram}
    \begin{tikzcd}
	{\widetilde{S}_X} & {\tilde{Q}} & {\widetilde{S}_Z} \\
	{\widetilde{S}'_X} & {\tilde{Q}'} & {\widetilde{S}'_Z}
	\arrow["{\widetilde{H}^\transp_X}", from=1-1, to=1-2]
	\arrow["{\widetilde{\Gamma}_X}"', from=1-1, to=2-1]
	\arrow["{\widetilde{H}_Z}", from=1-2, to=1-3]
	\arrow["{\widetilde{\Gamma}}"', from=1-2, to=2-2]
	\arrow["{\widetilde{\Gamma}_Z}"', from=1-3, to=2-3]
	\arrow["{\widetilde{H}^{\prime\transp}_X}", from=2-1, to=2-2]
	\arrow["{\widetilde{H}'_Z}", from=2-2, to=2-3]
\end{tikzcd}
    \end{equation}
\end{thm}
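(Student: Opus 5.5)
Our plan is to take the unreduced puncturing chain map of~\cite{Xu_2025_fast} between $\mathcal{Q}$ and $\mathcal{Q}'$ and push it through the reduction, in the same spirit as the proof of Theorem~\ref{thm:augmentation homomorphism}. Puncturing informational bits of $H_2$ deletes a set of \Bsectorname{} columns together with the $X$-checks indexed by $C_1\times P$, where $P$ is the set of punctured bits. It leaves all \Csectorname{} qubits and all $Z$-checks intact. The unreduced maps are therefore restrictions: $\Gamma_Q$ forgets the deleted \Bsectorname{} qubits, $\Gamma_X$ forgets the deleted $X$-checks, and $\Gamma_Z=\ident$. We will set
\begin{align}
\widetilde{\Gamma}=V'^{\transp}\Gamma_Q V \, ,
\end{align}
which is again a coordinate restriction, since $\tilde{Q}'$ keeps exactly the \Csectorname{} qubits kept in $\tilde{Q}$. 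We will take $\widetilde{\Gamma}_Z=\ident$, because the $Z$-side reduction matrix $W_Z$ is literally unchanged: all $Z$-checks and their color groups survive. All three maps are then visibly sparse.

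The $Z$-square is routine. $\widetilde{H}'_Z=W_ZH'_ZV'$ and $H'_Z$ is $H_Z$ with the punctured \Bsectorname{} columns dropped. Because $\Gamma_Q$ is a restriction to kept columns, we have $H_Z\Gamma_Q^\transp=H'_Z$. We first check the commutation relation in transpose form, $\widetilde{H}_Z\widetilde{\Gamma}^\transp=\widetilde{H}'_Z$. Before concluding, we must verify that the deleted \Bsectorname{} columns carry no support of $\widetilde H_Z$ that is needed. Using the chain-map convention direction from Definition~\ref{def:chain homomorphism}, the identity $\widetilde\Gamma_Z\widetilde H_Z=\widetilde H'_Z\widetilde\Gamma$ follows because the original unreduced identity $\Gamma_ZH_Z=H'_Z\Gamma_Q$ holds. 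We then multiply on the left by $W_Z$ and on the right by $V$. We use that $\Gamma_Q$ commutes with the \Csectorname{} restriction, $\Gamma_QV=V'\widetilde\Gamma$, which holds since both only touch \Bsectorname{} and \Csectorname{} coordinates respectively.

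The $X$-square is the main obstacle, because $W_X$ changes to $W'_X$. For each reducible \Csectorname{} qubit, the $\Delta$ adjacent $X$-checks, combined by the repetition matrix $w$, become $\Delta'$ surviving checks combined by $w'$. Lemma~\ref{lem:cannot puncture check} guarantees $\Delta'\ge1$. We need a local map $\tilde\gamma:\F_2^{\Delta-1}\to\F_2^{\Delta'-1}$ with $\tilde\gamma w=w'\pi$, where $\pi:\F_2^\Delta\to\F_2^{\Delta'}$ is the restriction to surviving checks (the local block of $\Gamma_X$). We will choose $\tilde\gamma$ explicitly: order the surviving checks $s_1<\dots<s_{\Delta'}$, and send the combined check $(s_a,s_{a+1})$ from the $w'$ side to the sum of consecutive $w$-rows spanning from $s_a$ to $s_{a+1}$. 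Equivalently, we will take the map $\tilde\gamma^\transp$ sending each $w'$ row to the telescoping sum of $w$ rows. This yields $\tilde\gamma^{\transp}$ with $w^\transp\tilde\gamma^\transp$ matching $\pi^\transp w'^\transp$, i.e.\ the telescoping identity $\sum_{t=s_a}^{s_{a+1}-1}(e_t+e_{t+1})=e_{s_a}+e_{s_{a+1}}$. This makes $\widetilde\Gamma_X:=\bigoplus\tilde\gamma\oplus\Gamma_X|_{\text{uncombined}}$ satisfy $\widetilde\Gamma_XW_X=W'_X\Gamma_X$. Its sparsity is bounded by $\Delta$, which is LDPC-bounded.

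With this intertwining in hand, we will derive $\widetilde H_X^\transp$-square commutation by the same sandwich: $W'_XH'_XV'\widetilde\Gamma$ versus $\widetilde\Gamma_X W_XH_XV$. We reduce via $W'_X\Gamma_X=\widetilde\Gamma_XW_X$, the unreduced relation between $H_X$ and $H'_X$ through $\Gamma_X,\Gamma_Q$, and $\Gamma_QV=V'\widetilde\Gamma$. We will double-check the direction and transposes against Definition~\ref{def:chain homomorphism} at the end, since the puncture map points from $\widetilde{\mathcal Q}$ to $\widetilde{\mathcal Q}'$, opposite to the augmentation case.
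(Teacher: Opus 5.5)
Your construction is the paper's construction: restriction maps, the local intertwiner $\tilde\gamma w=w'\pi$, the relation $\Gamma_QV=V'\widetilde\Gamma$ (the paper's $V'=\Gamma V\widetilde\Gamma^\transp$), and the sandwich. Your telescoping sum is an explicit witness for the paper's containment $\rs w'_X\subset\rs w_X$ of the shorter even-weight code in the longer one.

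The gap is in the orientation of the squares. The identity you cite for the $Z$-square, $\Gamma_ZH_Z=H'_Z\Gamma_Q$ with $\Gamma_Z=\ident$, is false. It would force $H_Z$ to vanish on the punctured \Bsectorname{} columns. But column $(b_1,p)$ of $H_Z$ is $\mathbf e_{b_1}\otimes H_2\mathbf e_p\neq\mathbf 0$, since for $d_2\geq 2$ every bit lies in some check. In general this support survives in $\widetilde H_Z$, so $\widetilde\Gamma_Z\widetilde H_Z\neq\widetilde H'_Z\widetilde\Gamma$.

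The $X$-square as drawn also fails, already before reduction. The relation $\Gamma H_X^\transp=H^{\prime\transp}_X\Gamma_X$ is false because $\Gamma_X$ kills a deleted $X$-check $(c_1,p)$. That check still touches the kept \Csectorname{} qubits $(c_1,c_2)$ with $p\in\operatorname{supp}c_2$. This relation is just the transpose of $H_X\Gamma^\transp=\Gamma_X^\transp H'_X$, the relation the paper's own proof starts from. So the paper's write-up contains the same slip, and the corrected reading below is what both arguments actually establish.

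What your computations do prove are the transposed relations. Your sandwich gives $\widetilde\Gamma_X\widetilde H_X=\widetilde H'_X\widetilde\Gamma$. It uses the true unreduced identity $\Gamma_XH_X=H'_X\Gamma_Q$, which holds because surviving $X$-checks have no \Bsectorname{} support in punctured columns. Your check in ``transpose form'' gives $\widetilde H_Z\widetilde\Gamma^\transp=\widetilde H'_Z$, since $W'_Z=W_Z$.

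Together these say that $(\widetilde\Gamma_X^\transp,\widetilde\Gamma^\transp,\ident)$ is a chain homomorphism $\widetilde{\mathcal Q}'\to\widetilde{\mathcal Q}$ in the sense of Definition~\ref{def:chain homomorphism}. Equivalently, your downward restriction maps intertwine the transposed boundary maps $\widetilde H_X$ and $\widetilde H_Z^\transp$. This is the same orientation as in Theorem~\ref{thm:augmentation homomorphism}. It is also the orientation in which the classical puncture is naturally a chain map: the inclusion of bits $\mathcal C'_2\hookrightarrow\mathcal C_2$ with the identity on checks. Your remark that puncturing points opposite to augmentation is exactly where the slip enters. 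State the theorem's squares in this form and drop the false $Z$ identity, and the argument is complete.
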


\begin{proof}
    From the chain homomorphism for punctured HGP codes~\cite{Xu_2025_fast}, we have
    \begin{align}\label{eq:H_X,H'_X relation}
        H^{}_X \Gamma^\transp = \Gamma^\transp_X H'_X \, .
    \end{align}
    Recall that the global $X$-check reduction $W_X$ comprises direct sums of $w_X$ and $\ident_\Delta$, where $w_X \in \F^{(\Delta-1)\times\Delta}_2$ is the local reduction that satisfies $\rs w_X = \mathcal{C}^\perp_{\rm rep}$. Since the rows of the modified reduction $w'_X$ also span a dual-repetition code but with shorter length, we have $\rs w'_X \subset \mathcal{C}^\perp_{\rm rep} = \rs w^{}_X$ upon restoring the deleted columns as 0 columns. Hence, there exists a local map $\tilde{\gamma}^{}_X \in \F^{(\Delta'-1)\times(\Delta-1)}_2$ such that $\tilde{\gamma}^{}_X w^{}_X = w'_X \gamma^{}_X$, where $\gamma^{}_X$ is an inclusion map for the restored 0 columns. The global map $\widetilde{\Gamma}_X$ then follows by direct-summing the local maps $\tilde{\gamma}^{}_X$; the sparsity of $\widetilde{\Gamma}_X$ follows from that of $\tilde{\gamma}^{}_X$. Direct-summing all of the local relations $\tilde{\gamma}^{}_X w^{}_X = w'_X \gamma^{}_X$ then implies the global relation
    \begin{align}\label{eq:W_X,W'_X relation}
        \widetilde{\Gamma}_X W_X = W'_X \Gamma_X \, .
    \end{align}
    Now, from \eqref{eq:tildeH,H relation}, we have $\widetilde{H}_X = W_XH_XV$ for the unpunctured code and $\widetilde{H}'_X = W'_XH'_XV'$ for the punctured code. Since we remove the same \Csectorname{} qubits in both the data and ancilla code blocks, we have
    \begin{align}\label{eq:V',V relation}
        V' = \Gamma V \widetilde{\Gamma}^\transp
    \end{align}
    for inclusion map $\widetilde{\Gamma}^\transp : \tilde{Q}' \hookrightarrow \tilde{Q}$. Now we can write
    \begin{align}\label{eq:tilde H'_X derivation}
        \widetilde{\Gamma}^\transp_X \widetilde{H}'_X &= \widetilde{\Gamma}^\transp_X W'_X H'_X V'  \notag \\
        &= W^{}_X \Gamma^\transp_X H'_X \Gamma V \widetilde{\Gamma}^\transp  \notag \\
        &= W_X H_X \Gamma^\transp \Gamma V \widetilde{\Gamma}^\transp  \notag \\
        &= W_X H_X V \widetilde{\Gamma}^\transp  \notag \\
        &= \widetilde{H}_X \widetilde{\Gamma}^\transp \, ,
    \end{align}
    where in the second line we used \eqref{eq:W_X,W'_X relation} and \eqref{eq:V',V relation}; in the third line we used \eqref{eq:H_X,H'_X relation}; and in the last line we used \eqref{eq:tildeH,H relation}. Transposing both sides gives us
    \begin{align}\label{eq:tilde H_X, tilde H'_X relation}
        \widetilde{H}^{\prime\transp}_X \widetilde{\Gamma}^{}_X = \widetilde{\Gamma} \widetilde{H}^\transp_X \, .
    \end{align}
    An analogous calculation to \eqref{eq:tilde H'_X derivation} for the $Z$-sector gives us $\widetilde{H}^{}_Z \widetilde{\Gamma}^\transp = \widetilde{\Gamma}^\transp_Z \widetilde{H}'_Z$, or equivalently,
    \begin{align}\label{eq:tilde H_Z, tilde H'_Z relation}
        \widetilde{\Gamma}^{}_Z \widetilde{H}^{}_Z = \widetilde{H}'_Z \widetilde{\Gamma} \, .
    \end{align}
    Together, \eqref{eq:tilde H_X, tilde H'_X relation} and \eqref{eq:tilde H_Z, tilde H'_Z relation} show that the diagram \eqref{eq:punc diagram} commutes.
\end{proof}

Similar to the case for augmentation, since the qubit reductions preserve the canonical logical bases of both $\tilde{\mathcal{Q}}$ and $\tilde{\mathcal{Q}}'$ individually by Theorem \ref{thm:basis preservation}, the logical action of \eqref{eq:punc diagram} is the same as that of the unreduced version.


\section{Examples}\label{sec:examples}

In this section, we provide examples of HGP codes that can be reduced and assess the savings performance to set up the memory simulations in Section~\ref{sec:numerics}.

\subsection{Rotated surface code}\label{sec:rotated surface example}

Let us start by discussing how the rotated surface code is perhaps the simplest instance of this reduction procedure. Let both classical inputs be the length-$L$ repetition code, with parity-check matrix $H_{\rm rep}\in\F_2^{(L-1)\times L}$. The corresponding HGP code is the distance-$L$ unrotated surface code; see Figure~\ref{fig:surface footprint}(a). The \Csectorname{} qubits form an $(L-1)\times(L-1)$ array, which are the purple qubits in the figure. Its check-adjacency graph is bipartite, and the reduction can remove every \Csectorname{} qubit by alternating the $X$- and $Z$-check combinations. Specifically, for the distance-3 surface code, the new parity-check matrices are

\begin{subequations}
    \begin{align}
        \widetilde{H}_X
        &= \underbrace{\begin{pmatrix}
            1 & 1 & 0 & 0 & 0 & 0 \\
            0 & 0 & 1 & 0 & 0 & 0 \\
            0 & 0 & 0 & 1 & 0 & 0 \\
            0 & 0 & 0 & 0 & 1 & 1
        \end{pmatrix}}_{W_X} \notag \\
        &\quad\times \underbrace{\left(\begin{array}{ccccccccc|cccc}
            1&0&0&1&0&0&0&0&0 & 1&0&0&0 \\
            0&1&0&0&1&0&0&0&0 & 1&1&0&0 \\
            0&0&1&0&0&1&0&0&0 & 0&1&0&0 \\
            0&0&0&1&0&0&1&0&0 & 0&0&1&0 \\
            0&0&0&0&1&0&0&1&0 & 0&0&1&1 \\
            0&0&0&0&0&1&0&0&1 & 0&0&0&1
        \end{array}\right)}_{H_X} \notag \\
        &\quad\times \underbrace{\begin{pmatrix} \ident_9 \\ \mathbf{0}_{4\times 9} \end{pmatrix}}_{V} \notag \\
        &= \begin{pmatrix}
            1&1&0&1&1&0&0&0&0 \\
            0&0&1&0&0&1&0&0&0 \\
            0&0&0&1&0&0&1&0&0 \\
            0&0&0&0&1&1&0&1&1
        \end{pmatrix} \label{Wx_ex}
    \end{align}
    \begin{align}
        \widetilde{H}_Z
        &= \underbrace{\begin{pmatrix}
            1 & 0 & 0 & 0 & 0 & 0 \\
            0 & 1 & 0 & 1 & 0 & 0 \\
            0 & 0 & 1 & 0 & 1 & 0 \\
            0 & 0 & 0 & 0 & 0 & 1
        \end{pmatrix}}_{W_Z} \notag \\
        &\quad\times \underbrace{\left(\begin{array}{ccccccccc|cccc}
            1&1&0&0&0&0&0&0&0 & 1&0&0&0 \\
            0&1&1&0&0&0&0&0&0 & 0&1&0&0 \\
            0&0&0&1&1&0&0&0&0 & 1&0&1&0 \\
            0&0&0&0&1&1&0&0&0 & 0&1&0&1 \\
            0&0&0&0&0&0&1&1&0 & 0&0&1&0 \\
            0&0&0&0&0&0&0&1&1 & 0&0&0&1
        \end{array}\right)}_{H_Z} \notag \\
        &\quad\times \underbrace{\begin{pmatrix} \ident_9 \\ \mathbf{0}_{4\times 9} \end{pmatrix}}_{V} \notag \\
        &= \begin{pmatrix}
            1&1&0&0&0&0&0&0&0 \\
            0&1&1&0&1&1&0&0&0 \\
            0&0&0&1&1&0&1&1&0 \\
            0&0&0&0&0&0&0&1&1
        \end{pmatrix} \label{Wz_ex}
    \end{align}
\end{subequations}

The resulting code is the distance-$L$ rotated surface code; see Figure~\ref{fig:surface footprint}(b). Thus, the number of data qubits $n_{\rm data}$ changes from $L^2+(L-1)^2$ to $L^2$.
A single ancilla per independent stabilizer gives $n_{\rm anc}=n_{\rm data}-1$ on both sides. The full data-and-ancilla footprint $n_{\rm data}+n_{\rm anc}$ therefore changes from $(2L-1)^2$ to $2L^2-1$, which is a $2(L-1)^2$ qubit savings.

\begin{figure}[t]
    \centering
    \includegraphics[width=0.43\textwidth,pagebox=cropbox]{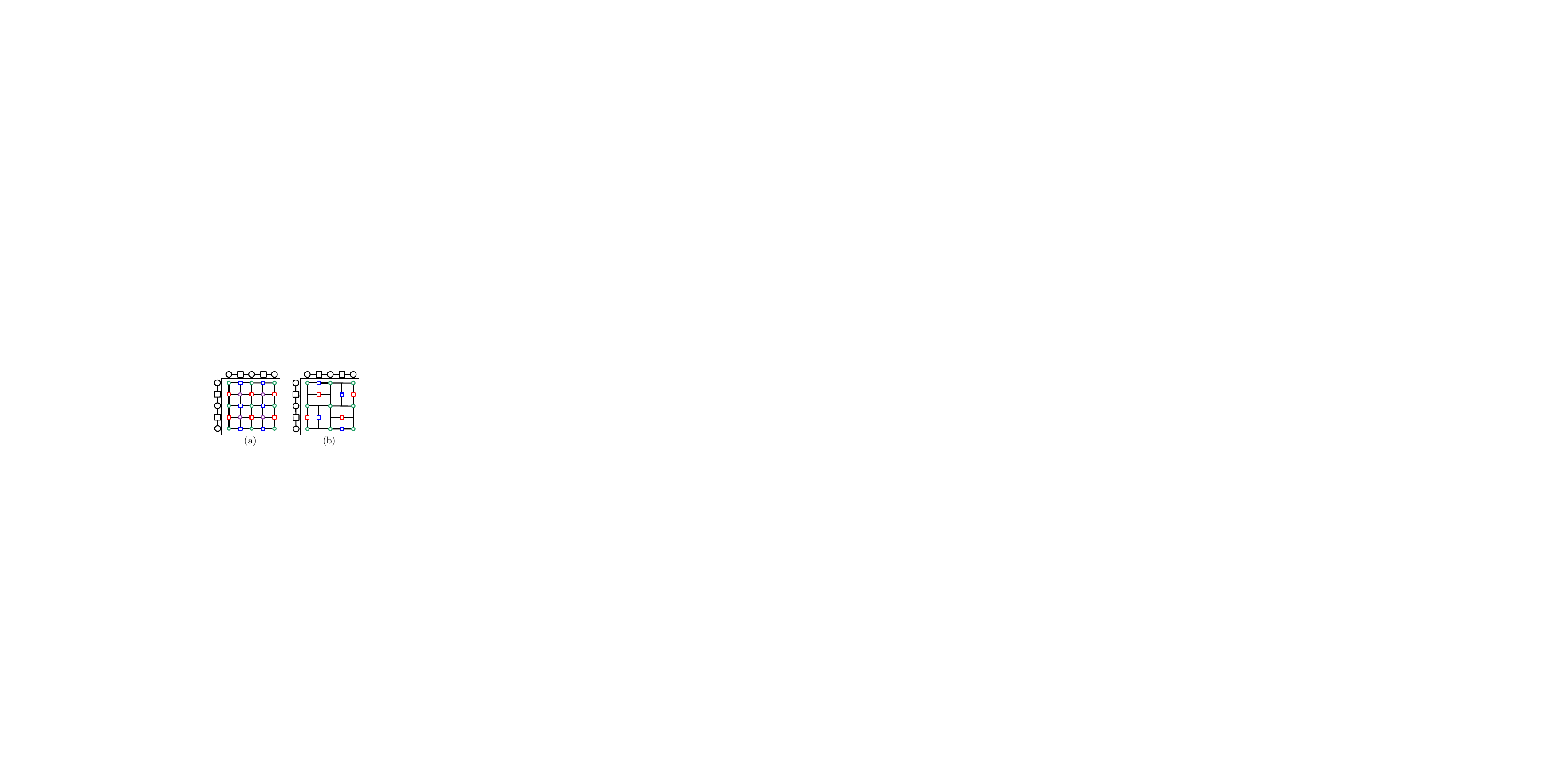}
    \caption{The distance-$L$ unrotated and rotated surface-code layouts obtained before and after the reduction. Data qubits are shown at vertices and stabilizer-measurement ancillas at plaquettes. The footprint counts include both types of physical qubits.}
    \label{fig:surface footprint}
\end{figure}

\subsection{Random LDPC codes}

\begin{table*}[t]
\renewcommand{\arraystretch}{1.35}
\begin{tabular}{c|c|c|ccc|ccc}
    \;\textbf{Input degrees}\; &
    \;\textbf{Random LDPC}\; &
    \;\,$\chi$\;\, &
    \textbf{HGP} &
    $N_{\rm 2q}$ &
    $\big(w_q, w_c\big)$ &
    \textbf{Reduced HGP} &
    $\tilde{N}_{\rm 2q}$ &
    $\big(\tilde{w}_q, \tilde{w}_c\big)$ \\
    \hline
    $(3, 4)$ & $[12, 3, 6]$ & $3$ & $\llbracket 225, 9, 6\rrbracket$ & $1218$ & $(4, 7)$ & $\llbracket 160, 9, 6\rrbracket$ & $1051$ & $(6, 12)$ \\
    $(3, 5)$ & $[20, 8, 6]$ & $5$ & $\llbracket 544, 64, 6\rrbracket$ & $3840$ & $(5, 8)$ & $\llbracket 472, 64, 6\rrbracket$ & $4344$ & $(8, 14)$ \\
    $(5, 6)$ & $[30, 5, 13]$ & $9$ & $\llbracket 1525, 25, 13\rrbracket$ & $14\,960$ & $(6, 11)$ & $\llbracket 1325, 25, 13 \rrbracket$ & $17\,630$ & $(10, 20)$
\end{tabular}
\caption{Examples of random LDPC codes, their corresponding HGP codes, and their reduced HGP variants. $\chi$ is the number of check colors in the input code (note that this is not the minimal number of colors). $N_{\rm 2q}$ ($\tilde{N}_{\rm 2q}$) is the total number of 2-qubit gates required for syndrome extraction before (after) the reduction. In $(w_q, w_c)$ ($(\tilde{w}_q, \tilde{w}_c)$), the first number is the maximum check-degree of the data qubits before (after) the reduction, and the second number is the maximum $X$ or $Z$ stabilizer weight before (after) the reduction. In implementing the reduction, we utilize the optimization presented in Section~\ref{optimization section} but do not restrict ourselves to not combine upon specific color groups to preserve fold-transversal logical operators as detailed in Section~\ref{sec:fold-transversal}.}
\label{tab:random code HGP}
\end{table*}

\begin{figure}[t]
      \centering
      \includegraphics[width=.2\textwidth]{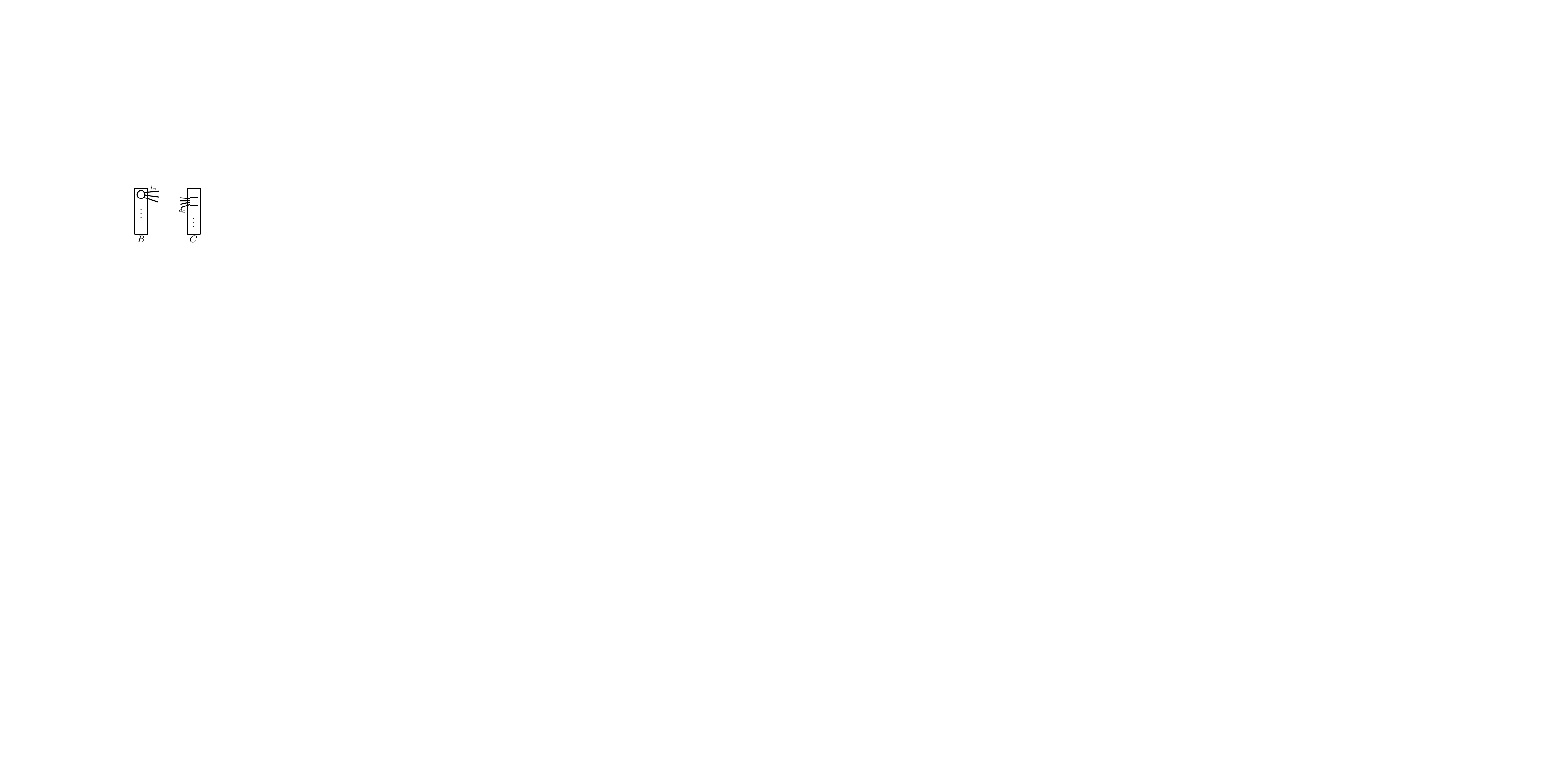}
      \caption{Using the configuration model with bit-vertex degree $d_v$ and check-vertex degree $d_c$ on a graph with mutually disjoint vertex set $V=B\cup C$ results in a $(d_v, d_c)$-LDPC code. Edges between the two sets are randomly paired. This is inherently equivalent to randomly adding $d_c$ ($d_v$) 1s to each row (column) of an $|C| \times |B|$ parity-check matrix. Note that $d_v|B|=d_c|C|$ must be satisfied so that all edges can be paired.}
      \label{configmodel}
\end{figure}

One can define a classical random LDPC code using the bipartite configuration model, which will return a random bipartite graph $G=(B \cup C, E)$, where edges do not exist between vertices within $B$ or $C$, that we can interpret as a Tanner graph with bit-vertex set $B$ and check-vertex set $C$. If we specify the number of checks each bit participates in ($d_v$), and the number of bits that each check checks ($d_c$), we can get a random classical $(d_v, d_c)$-LDPC code; see Figure~\ref{configmodel}. We also take note of the number of colors in the check-adjacency graphs (Section~\ref{def:check-adjacency graph}) of the random codes, as less check colors equates to more \Csectorname{} qubits removed. However, note that we do not necessarily get a minimal coloring (equivalent to the chromatic number $\chi(G)$) as we use a polynomial-time, greedy heuristic (independent set).

Let us provide an example. We start with a random $[20, 8, 6]$ $(3,5)$-LDPC code with parity-check matrix 
\begin{equation}
\resizebox{\columnwidth}{!}{$
H =
\begin{bmatrix}
0 & 0 & 0 & 1 & 0 & 0 & 0 & 0 & 1 & 1 & 0 & 0 & 0 & 1 & 0 & 0 & 0 & 1 & 0 & 0 \\
0 & 0 & 1 & 0 & 0 & 1 & 0 & 0 & 0 & 0 & 1 & 1 & 0 & 0 & 0 & 0 & 0 & 0 & 1 & 0 \\
0 & 0 & 0 & 0 & 0 & 0 & 0 & 0 & 1 & 0 & 0 & 0 & 0 & 0 & 0 & 1 & 1 & 0 & 1 & 1 \\
1 & 0 & 0 & 0 & 0 & 0 & 0 & 1 & 0 & 0 & 0 & 0 & 0 & 0 & 0 & 1 & 0 & 0 & 1 & 0 \\
0 & 0 & 1 & 0 & 0 & 0 & 1 & 1 & 0 & 0 & 0 & 0 & 0 & 1 & 0 & 0 & 0 & 0 & 0 & 0 \\
0 & 1 & 0 & 0 & 1 & 0 & 0 & 0 & 0 & 0 & 0 & 1 & 1 & 1 & 0 & 0 & 0 & 0 & 0 & 0 \\
0 & 1 & 0 & 1 & 0 & 0 & 0 & 0 & 0 & 1 & 0 & 1 & 0 & 0 & 0 & 0 & 1 & 0 & 0 & 0 \\
0 & 0 & 0 & 0 & 0 & 0 & 0 & 1 & 0 & 0 & 0 & 0 & 1 & 0 & 1 & 1 & 0 & 0 & 0 & 1 \\
1 & 0 & 0 & 0 & 0 & 1 & 0 & 0 & 1 & 0 & 1 & 0 & 1 & 0 & 0 & 0 & 0 & 0 & 0 & 0 \\
0 & 0 & 0 & 0 & 1 & 0 & 1 & 0 & 0 & 1 & 0 & 0 & 0 & 0 & 1 & 0 & 0 & 0 & 0 & 0 \\
0 & 1 & 0 & 1 & 1 & 0 & 0 & 0 & 0 & 0 & 1 & 0 & 0 & 0 & 0 & 0 & 0 & 0 & 0 & 1 \\
0 & 0 & 0 & 0 & 0 & 1 & 0 & 0 & 0 & 0 & 0 & 0 & 0 & 0 & 1 & 0 & 1 & 1 & 0 & 0
\end{bmatrix} \, .\label{randomldpcex1}
$}
\end{equation}
We then construct the check-adjacency graph of the code as described in Definition~\ref{def:check-adjacency graph} and find that we can 5-color it, meaning that there are five disjoint sets of parity checks where each check in a set checks no bits in common. The check color groups formed are
\begin{subequations}
    \begin{align}
    C_1 &= \{4, 6, 12\} \\
    C_2 &= \{5, 7, 9\} \\
    C_3 &= \{1, 2, 8\} \\
    C_4 &= \{3, 10\} \\
    C_5 &= \{11\} 
    \end{align}
\end{subequations}
where each number $i$ in a set represents the parity check in row $i$ of $H$. Note that it is clear that these sets have disjoint qubits. For example, to verify that the checks in color $C_1$ are disjoint, take the rows of $H$ for stabilizers 4, 6, 12:
\begin{subequations} \begin{align}
    4 &: \begin{bmatrix}1 & 0 & 0 & 0 & 0 & 0 & 0 & 1 & 0 & 0 & 0 & 0 & 0 & 0 & 0 & 1 & 0 & 0 & 1 & 0\end{bmatrix}\label{S3C1} \\
    6 &: \begin{bmatrix} 0 & 1 & 0 & 0 & 1 & 0 & 0 & 0 & 0 & 0 & 0 & 1 & 1 & 1 & 0 & 0 & 0 & 0 & 0 & 0 \end{bmatrix}\label{S5C1} \\
    12 &: \begin{bmatrix} 0 & 0 & 0 & 0 & 0 & 1 & 0 & 0 & 0 & 0 & 0 & 0 & 0 & 0 & 1 & 0 & 1 & 1 & 0 & 0 \end{bmatrix}\label{S11C1}
\end{align} \end{subequations}
Clearly there are no qubits checked in common between the stabilizers, and thus they have the same check color.
Taking the (symmetric) HGP of \eqref{randomldpcex1} with itself, we get a $\llbracket 544, 64, 6 \rrbracket$ quantum CSS code. Since we could 5-color the checks of the classical code, the \Csectorname{} qubits of the HGP code create 25 color groups.
We then reduce the HGP as described by the procedure in Section~\ref{sec:procedure}. We end up getting a $\llbracket \tilde{n}, \tilde{k}, \tilde{d} \rrbracket = \llbracket 472, 64, 6 \rrbracket$ code. We can also find that the maximum number of ($X$ or $Z$) checks that a data qubit participates in and the maximum number of data qubits that an ($X$ or $Z$) check checks goes from $\big(w_q, w_c\big) = (5, 8)$ to $\big(\tilde{w}_q, \tilde{w}_c\big) = (8, 14)$. Additionally, the number of two-qubit gates required for ($X$ and $Z$) syndrome extraction goes from $N_{\rm 2q} = 3840$ to $\tilde{N}_{\rm 2q} = 4344$. Note that this number will stay the same after the reduction iff $\tilde{\bar{w}}_q/\bar{w}_q = n/\tilde{n}$, decrease iff $\tilde{\bar{w}}_q/\bar{w}_q < n/\tilde{n}$, and increase otherwise, where $\bar{w}_q$ ($\tilde{\bar{w}}_q$) is the average number of ($X$ and $Z$) checks that a data qubit participates in before (after) the reduction.

Repeating this process for a variety of random LDPC codes, we get the following results in Table~\ref{tab:random code HGP}. Since we only ever combine at most two checks at a time, and since all combining is done in parallel, these reduced codes admit distance-preserving syndrome extraction schedules following the discussion of Section \ref{sec:hook errors}.

\subsection{Quasi-cyclic LDPC codes}

Quasi-cyclic LDPC (QC-LDPC) codes are a subclass of LDPC codes with an embedded periodic structure~\cite{tanner1999qc, tanner2001qc, Okamura_2003, Fossorier_2004}, and they are widely deployed for 5G wireless communications~\cite{Richardson_2018_5G, Patil_2020}. Like typical LDPC codes, they are presented by their parity-check matrices. There are two ingredients that make up the parity-check matrix of a QC-LDPC code: a protomatrix and a lift. The protomatrix is a binary matrix $H_{\rm proto} \in \mathbb{F}^{m_{\rm p}\times n_{\rm p}}_2$ which can be thought of as a parity-check matrix of a small code of length $n_{\rm p}$. For example, let us choose
\begin{align}\label{eq:H_proto}
    H_{\rm proto} = \begin{pmatrix}
        1&0&1&1 \\
        0&1&1&1 \\
        1&1&0&1
    \end{pmatrix}
\end{align}
as our protomatrix. For the second ingredient, the lift, we first choose a positive integer $\ell$ to be our lift size. Then, we promote each nonzero entry of $H_{\rm proto}$ from 1 to a polynomial with degree $<\ell$, or more formally an element of the quotient ring $\mathsf{R}_\ell \equiv \mathbb{F}_2[x]/(x^\ell-1)$. The new matrix $H_{\rm lift}(x) \in \mathsf{R}^{m_{\rm p}\times n_{\rm p}}_\ell$ is called the lifted matrix or matrix of circulants. For example, the following matrix \cite[Table III]{Xu_2025_fast}
\begin{align}\label{eq:H_lift}
    H_{\rm lift}(x) = \begin{pmatrix}
        x^4&0&x^4&x^3 \\
        0&x^3&x^3&x^4 \\
        x^3&x^4&0&x^3
    \end{pmatrix}
\end{align}
is a lifted version of \eqref{eq:H_proto} with lift size $\ell=5$. To convert $H_{\rm lift}$ to a binary matrix, we use the well-known isomorphism between the ring of polynomials of degree $<\ell$ and the ring of $\ell\times\ell$ circulant matrices: $x$ maps to the $\ell\times\ell$ shift matrix $\delta_{i,i+1}$ (mod $\ell$), which is the identity matrix with all columns shifted by one unit to the right and wrapping around at the ends; i.e.,
\begin{equation}
    x^0 = \begin{pmatrix} 1 \\ & 1 \\ & & \ddots \\ & & & 1 \end{pmatrix}, \quad x^1 = \begin{pmatrix} 0 & 1 \\ & 0 & \ddots \\ & & \ddots & 1 \\ 1 & & & 0\end{pmatrix}
\end{equation}
etc., where empty entries denote zeroes. Since each polynomial element of $H_{\rm lift}$ becomes a $\ell\times\ell$ circulant matrix, the full binary matrix $H_{\rm full}$ has dimensions $m_{\rm p}\ell \times n_{\rm p}\ell$. Let $w_\ell$ denote the maximum weight or number of monomials among all the nonzero elements of $H_{\rm lift}$. It follows that if $H_{\rm proto}$ is $(w_b,w_c)$-LDPC, then $H_{\rm full}$ is $(w_bw_\ell,w_cw_\ell)$-LDPC. In the above example where each element is a monomial (so $w_\ell=1$), \eqref{eq:H_lift} maps to a $(3,3)$-LDPC matrix of size $15\times20$ and is a parity-check matrix of a $[20,5,9]$ code. One of the benefits of the QC-LDPC construction is that the rate can be lower-bounded by the dimensions of the protomatrix, and the length can be adjusted by tuning the lift size.

\begin{table*}[t]
\renewcommand{\arraystretch}{1.35}
\begin{tabular}{c|c|c|c|ccc|ccc}
    \textbf{Input code}\; & $H_{\rm lift}$ &
    \;\,$\ell$\;\, & \;\,$\chi$\;\, &
    \textbf{HGP} &
    $N_{\rm 2q}$ &
    \;$\big({w}_q, {w}_c\big)$\; &
    \;\textbf{Reduced HGP}\; &
    $\tilde{N}_{\rm 2q}$ &
    $\big(\tilde{w}_q, \tilde{w}_c\big)$ \\
    \hline
    $[20, 5, 9]$ & \;$\begin{pmatrix} x^4&0&x^4&x^3 \\ 0&x^3&x^3&x^4 \\ x^3&x^4&0&x^3 \end{pmatrix}$\; & $5$ & $3$ & $\llbracket 625, 25, 9\rrbracket$ & $3150$ & $(3,6)$ & $\llbracket 475, 25, 9\rrbracket$ & $2775$ & $(5,9)$ \\
    \hline
    $[24, 6, 10]$ & $\begin{pmatrix} x^5&0&x^3&x^3 \\ 0&x^4&x^2&x \\ x^2&x&0&x \end{pmatrix}$ & $6$ & $3$ & $\llbracket 900, 36, 10\rrbracket$ & $4536$ & $(3,6)$ & $\llbracket 684, 36, 10\rrbracket$ & $3996$ & $(5,9)$ \\
    \hline
    $[28, 7, 11]$ & $\begin{pmatrix} x&0&x^2&x^3 \\ 0&x^5&x^6&x \\ x^4&x^5&0&x^5 \end{pmatrix}$ & $7$ & $3$ & \;\;$\llbracket 1225, 49, 11\rrbracket$\;\; & \;$6174$\; & $(3,6)$ & $\llbracket 931, 49, 11\rrbracket$ & $5439$ & $(5,9)$
\end{tabular}
\caption{Examples of type-I QC-LDPC codes, their corresponding HGP codes, and their reduced HGP variants. All three listed examples are built from the same protomatrix \eqref{eq:H_proto}. $\chi$ is the number of check colors in the input code. $N_{\rm 2q}$ ($\tilde{N}_{\rm 2q}$) is the number of 2-qubit gates required for syndrome extraction before (after) the reduction.}
\label{tab:QC-LDPC HGP}
\end{table*}

\begin{defn}[Type-I QC-LDPC~\cite{Smarandache_2004}]
    A QC-LDPC matrix $H_{\rm lift} \in \mathsf{R}^{m_{\rm p}\times n_{\rm p}}_\ell$ is of type I if all nonzero entries are monomials.
\end{defn}

The previous example \eqref{eq:H_lift} is of type I. We now show that a check-coloring of $H_{\rm full}$ can be derived from that of $H_{\rm proto}$ when $H_{\rm lift}$ is of type I.

\begin{prop}[Type-I QC-LDPC check-coloring] \label{prop:lifted coloring}
    For a type-I QC-LDPC code with protomatrix $H_{\rm proto}$ and full parity-check matrix $H_{\rm full}$, if the checks of $H_{\rm proto}$ can be $\chi$-colored, then the checks of $H_{\rm full}$ can be $\chi'$-colored with $\chi'\leq\chi$.
\end{prop}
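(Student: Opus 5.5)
The natural plan is to lift the coloring of $H_{\rm proto}$ directly. Given a proper $\chi$-coloring $c:[m_{\rm p}]\to[\chi]$ of the check-adjacency graph of $H_{\rm proto}$ (Definition~\ref{def:check-adjacency graph}), we color each check of $H_{\rm full}$ by the color of its protograph check. Index the checks of $H_{\rm full}$ by pairs $(a,s)$ with $a\in[m_{\rm p}]$ and $s\in\mathbb{Z}_\ell$, and the bits by $(b,t)$ with $b\in[n_{\rm p}]$ and $t\in\mathbb{Z}_\ell$. We then set $c'(a,s):=c(a)$. This uses at most $\chi$ colors, so $\chi'\leq\chi$, and it remains to show that $c'$ is a proper coloring.

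The key step is to analyze when two lifted checks $(a,s)$ and $(a',s')$ share a bit. They share a bit $(b,t)$ only if both $(H_{\rm lift})_{ab}$ and $(H_{\rm lift})_{a'b}$ are nonzero. This means $a$ and $a'$ share bit $b$ in the protograph, so either $a=a'$ or $(a,a')$ is an edge of the check-adjacency graph of $H_{\rm proto}$.

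\begin{itemize}
\item If $a\neq a'$, properness of $c$ gives $c'(a,s)=c(a)\neq c(a')=c'(a',s')$.
\item If $a=a'$ with $s\neq s'$, we use the type-I hypothesis. Each nonzero entry $(H_{\rm lift})_{ab}=x^{e}$ is a single shifted identity, a permutation matrix. Hence check $(a,s)$ touches exactly one bit in block $b$, namely $(b,s+e)$. Two distinct checks $(a,s)$ and $(a,s')$ then touch the distinct bits $(b,s+e)\neq(b,s'+e)$ in every block $b$.
\end{itemize}

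So lifts of the same protograph check never share a bit and may safely receive the same color.

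The main point to state carefully is this same-row case. It is exactly where the type-I assumption is essential: a binomial entry $x^{e_1}+x^{e_2}$ would let checks $(a,s)$ and $(a,s+e_1-e_2)$ collide on a bit, breaking the argument. Since each circulant is a permutation matrix, the rows within a single row-block of $H_{\rm full}$ have pairwise disjoint supports, and this observation settles the case.

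Finally, the inequality may be strict. The lifted check-adjacency graph can be sparser than the protograph one, so a greedy recoloring of $H_{\rm full}$ could use fewer colors. The claim is only $\chi'\leq\chi$, and the construction above already witnesses it.
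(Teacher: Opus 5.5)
Your proof is correct and follows essentially the same argument as the paper. The paper also colors each lifted check $(c,t)$ by the color of its protograph check. It uses the projection $(c,t)\mapsto c$ to show that lifts of nonadjacent protograph checks share no bit, and it uses the column-weight-one (shift-permutation) structure of the type-I monomial circulants to show that lifts of the same protograph check are bit-disjoint.
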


\begin{proof}
    We will prove the statement of the proposition by constructing a valid coloring for all the lifted checks of $H_{\rm full}$. The Tanner graph of $H_{\rm full}$ can be arranged as $\ell$ stacked copies of $H_{\rm proto}$ with intercopy edges connected according to the $\mathsf{R}_\ell$ elements in $H_{\rm lift}$, which are monomials from the type-I assumption. Index the lifted bit and check nodes by $(b,t)$ and $(c,t)$ respectively, where the $b$ and $c$ ``row coordinates'' index the nodes within each copy of $H_{\rm proto}$, and the $t$ ``column coordinate'' indexes the different copies of $H_{\rm proto}$. With this convention, there exists an edge $(c,t)\sim(b,t+a_{cb})$ whenever $(H_{\rm lift})_{cb} = x^{a_{cb}}$.
    
    We will first show that all the lifted checks with a fixed $c$ coordinate, i.e. $(c,\cdot)$, do not share any lifted bits in common and can thus take the same color. For the $c$th row in $H_{\rm lift}$, each monomial labels how its corresponding $t$ edges are connected. Since each monomial corresponds to a shift permutation whose $\ell\times\ell$ matrix column weight is 1, no checks along a row $(c,\cdot)$ will share any bits in common. Accordingly, all these lifted checks can be the same color.

    Now, define the projection $\pi:(c,t)\rightarrow c$, $(b,t)\rightarrow b$. If two lifted checks $(c,t)$ and $(c',t')$ share a lifted bit $(b,t'')$, then their projections $c$ and $c'$ will share the same projected bit $b$. The contrapositive then implies that if two checks in $H_{\rm proto}$ are nonadjacent, then all of their lifted copies will also be nonadjacent in $H_{\rm full}$. This statement, in combination with the previous paragraph, then implies that the following $\chi'$-coloring is a valid coloring for $H_{\rm full}$. For a given $\chi$-coloring of $H_{\rm proto}$, take each check $c$ and color its lifted copies $(c,\cdot)$ with the same color. By construction, we have $\chi'=\chi$.
\end{proof}

Proposition~\ref{prop:lifted coloring} is particularly useful for us because it allows us to fix a desired check chromatic number $\chi$ and LDPC weights $(w_b,w_c)$ by carefully choosing a small $H_{\rm proto}$. We can then obtain larger code lengths and distances by adjusting the lift size $\ell$ and the monomials in $H_{\rm lift}$. See Table \ref{tab:QC-LDPC HGP} for three examples of type-I QC-LDPC codes with the same protomatrix. Unfortunately, it is known that a type-I QC-LDPC with protomatrix size $m_{\rm p}\times n_{\rm p}$ has a minimum distance upper-bounded by $d \leq (m_{\rm p}+1)!$ \cite[Corollary 9]{Smarandache_2012}. However, for our examples with $m_{\rm p}=3$, this upper bound is $4!=24$, which is likely enough for the practical regime. We also note that the proof of Proposition~\ref{prop:lifted coloring} applies to more generic (e.g. non-abelian) lifts with $\ell\times\ell$ permutation matrices that need not be circulant, which is a way to circumvent the upper bound.

\subsection{Bipartite cycle codes}

Given a $\Delta$-regular, bipartite, simple graph $\mathpzc{G}=(V,E)$, its cycle code $\mathcal{C}(\mathpzc{G})$ is defined as the linear code with bits on edges and a parity check on every vertex. Codewords are spanned by closed paths or cycles on $\mathpzc{G}$, since a closed path in a graph touches every vertex an even number of times and thus satisfies all parity checks. It is also known as the Tanner code on $\mathpzc{G}$ whose local vertex code is the single-parity-check code; i.e. where the parity matrix is
\begin{equation}
    \hat{h} = \overbrace{\begin{pmatrix} 1 & \overset{1}{\cdots} & 1 \end{pmatrix}}^{\Delta}\, .
\end{equation}
For a connected graph, the cycle-code dimension is $|E|-|V|+1$ and its codewords are the even subgraphs. The special case in which the graph is a single cycle gives a repetition code on its edges.

We will restrict ourselves to connected graphs $\mathpzc{G}$, where the transpose code is also known as the repetition code on $\mathcal{G}$. As such, the sum of all vertices is zero, and we can remove this linear dependency by dropping one vertex or parity check from the code. We will also be interested in bipartite graphs since they are inherently two-colorable, and so we will be able to remove all \Csectorname{} qubits in the corresponding HGP code. Note that we can convert any simple graph into a bipartite graph of twice the size by considering its bipartite double cover; see Appendix \ref{app:double cover} for more details.

\begin{table*}[t]
\renewcommand{\arraystretch}{1.35}
\begin{tabular}{c|c|c|c|ccc|ccc}
\textbf{Graph}\; & \;$|V|$\; & \;$\Delta$\; & \;\textbf{Cycle code}\; & \textbf{HGP}                     & $N_{\rm 2q}$ & $\big({w}_q, {w}_c\big)_X$ & \;\textbf{Reduced HGP}             & $\tilde{N}_{\rm 2q}$ & $\big(\tilde{{w}}_q, \tilde{{w}}_c\big)_X$ \\ \hline
$K_{3,3}$  & 6 & 3    & $[9,4,4]$                  & \;$\llbracket 106,16,4 \rrbracket$\; & 420          & $(3,5)$                 & $\llbracket 81,16,4 \rrbracket$  & 345           & $(4,6)$                   \\
Heawood    & 14 & 3        & $[21,8,6]$                 & $\llbracket 610,64,6 \rrbracket$ & \;2652\;         & \;$(3,5)$\;                 & $\llbracket 441,64,6 \rrbracket$ & 2145          & \;$(4,6)$\;           \\
\;Tutte–Coxeter\;    & 30 & 3        & $[45,16,8]$                 & \;$\llbracket 2866,256,8 \rrbracket$\; & \;12876\;         & \;$(3,5)$\;                 & $\llbracket 2025,256,8 \rrbracket$ & 10353          & \;$(4,6)$\;     
\end{tabular}
\caption{Example bipartite cycle codes, their corresponding HGP codes, and their reduced HGP variants. $|V|$ is the number of vertices of the underlying (regular) bipartite graph, and $\Delta$ is the vertex degree. $N_{\rm 2q}$ ($\tilde{N}_{\rm 2q}$) is the number of 2-qubit gates required for syndrome extraction before (after) the reduction. The three graphs are the $(3,4)$-, $(3,6)$-, and $(3,8)$-cages, which means they are the minimum sizes for their respective girths.}
\label{tab:cycle code HGP}
\end{table*}

For biregular graphs, which are bipartite graphs with regular left and right degrees, there exists a relatively simple form for the HGP code parameters prior to and following the reduction in Section \ref{sec:procedure}. Suppose $\mathpzc{G}=(V_\ell \cup V_r, E)$ is connected and biregular with uniform left and right degree $\Delta$ and $\abs{V_\ell} = \abs{V_r} = \abs{V}/2 \equiv v$. The number of edges is $\abs{E} = \Delta V_\ell = \Delta V_r$. With bits on edges and checks on vertices, the parity-check matrix has dimensions $\abs{V}\times\abs{E}$. To make it full-rank, we remove one vertex so that $H \in \F^{(\abs{V}-1)\times\abs{E}}_2$. Upon taking the HGP of $H$ with itself, the HGP code parameters are
\begin{subequations}
\begin{align}
    n &= (v\Delta)^2 + (2v-1)^2  \\
    k &= (v(\Delta-2) + 1)^2  \\
    d &= \operatorname{girth}(\mathpzc{G}) \, ,
\end{align}
\end{subequations}
where $\operatorname{girth}(\mathpzc{G})$ is the length of the shortest cycle of $\mathpzc{G}$ and obeys the Moore bound $\operatorname{girth}(\mathpzc{G}) \leq 2\log_{\Delta-1}\abs{V} + O(1)$~\cite{biggs1993}, which implies that regular cycle codes have $d = O(\log n)$ when the graph is not a ring ($\Delta>2$). The maximum check weight is $w_c = \Delta+2$. The number of two-qubit gates required for $X$-syndrome extraction is simply the number of nonzero entries in $H_X$, which is $N^{(X)}_{\rm 2q} = \Delta(2v-1)(v\Delta+2v-1)$.

After qubit reduction, the HGP code has parameters
\begin{subequations}
\begin{align}
    \tilde{n} &= (v\Delta)^2  \\
    \tilde{k} &= k = \big[v(\Delta-2)+1\big]^2 \\
    \tilde{d} &= d = \operatorname{girth}(\mathpzc{G}) \, ,
\end{align}
\end{subequations}
and the maximum check weight is now $\tilde{w}_c = 2\Delta$, an increase of $\tilde{w}_c - w_c = \Delta-2$ compared to the unmodified HGP code. Removing all \Csectorname{} qubits according to the conventions of Figure \ref{fig:trans}, the new number of two-qubit gates required for $X$-syndrome extraction can be exactly computed to be
\begin{align}
    \tilde{N}^{(X)}_{\rm 2q} = 2v^2\Delta(\Delta-1) + 2(v-1)^2\Delta(\Delta-1) + (v-1)\Delta^2 \, ,
\end{align}
which results in a net difference of
\begin{align}\label{eq:X 2q savings}
    N^{(X)}_{\rm 2q} - \tilde{N}^{(X)}_{\rm 2q} = \frac{\Delta}{2}\Big[ (4-\Delta)(2v-1)^2 - (\Delta-2) \Big] \, .
\end{align}
A similar calculation for the $Z$-sector yields
\begin{align}\label{eq:Z 2q savings}
    N^{(Z)}_{\rm 2q} - \tilde{N}^{(Z)}_{\rm 2q} = \Delta\Big[ 1+2(4-\Delta)v(v-1) \Big] \, .
\end{align}
The total two-qubit gate savings is then given by summing \eqref{eq:X 2q savings} and \eqref{eq:Z 2q savings}:
\begin{align}\label{eq:2q savings}
    N_{\rm 2q} - \tilde{N}_{\rm 2q} = \Delta(4-\Delta)(2v-1)^2 \, ,
\end{align}
which is positive when $\Delta\leq3$, zero when $\Delta=4$, and negative when $\Delta\geq5$.

We now provide two examples of reduced HGP codes based on classical cycle codes of bipartite graphs. Since the underlying graphs are bipartite, the qubit reduction procedure of Section~\ref{sec:procedure} will eliminate all \Csectorname{} qubits as described in Section~\ref{qTanner transformation}. The code parameters are summarized in Table~\ref{tab:cycle code HGP}. The resulting reduced HGP codes can also be viewed as special instances of quantum Tanner codes~\cite{qTanner_codes, Leverrier_2025_efficient}. The number of two-qubit gates $N_{\rm 2q}$ is given by the sum of the numbers of edges in the $X$ and $Z$ Tanner graphs.


Suppose we define both $\mathcal{C}_1$, $\mathcal{C}_2$ on the complete, bipartite graph $K_{3, 3}$ using single-parity check local codes with parity-check matrices $\hat{h}_1 = \hat{h}_2 = \hat{h} = \begin{pmatrix} 1 & 1 & 1 \end{pmatrix}$. From this structure, the parity-check matrices take the form

\begin{align}
    \begin{pmatrix}
        \hat{h} \otimes \ident_3 \\
        \ident_3 \otimes \hat{h}
    \end{pmatrix} \, . \label{K33pcm_gen}
\end{align}

To make the code non-redundant (i.e., $H_1$, $H_2$ full rank), we remove the last row and get
\begin{align}
    H_1 = H_2 = \begin{pmatrix}
        1 & 1 & 1 & 0 & 0 & 0 & 0 & 0 & 0 \\ 
        0 & 0 & 0 & 1 & 1 & 1 & 0 & 0 & 0 \\
        0 & 0 & 0 & 0 & 0 & 0 & 1 & 1 & 1 \\
        1 & 0 & 0 & 1 & 0 & 0 & 1 & 0 & 0 \\ 
        0 & 1 & 0 & 0 & 1 & 0 & 0 & 1 & 0         
    \end{pmatrix} \, . \label{K33pcm}
\end{align}

The above parity-check matrix corresponds to a $[9,4,4]$ $(2,3)$-LDPC code. Taking the HGP of \eqref{K33pcm} with itself, we obtain a $\llbracket 106,16,4 \rrbracket$ $(3,5)$-LDPC HGP code with $kd^2/n \approx 2.4$. After qubit reduction, we obtain a $\llbracket 81,16,4 \rrbracket$ $(4,6)$-LDPC reduced HGP (quantum Tanner) code with $kd^2/n \approx 3.2$.

Similarly, if we defined the classical input codes to be the cycle code of the 3-regular, bipartite Heawood graph  with 14 vertices and 21 edges, after removing one vertex check, we obtain a $[21,8,6]$ $(2,3)$-LDPC code. Upon taking the HGP of this code with itself, we obtain a $\llbracket 610,64,6 \rrbracket$ $(3,5)$-LDPC HGP code with $kd^2/n \approx 3.8$. After qubit reduction, we arrive at a $\llbracket 441,64,6 \rrbracket$ $(4,6)$-LDPC reduced HGP (quantum Tanner) code with $kd^2/n \approx 5.2$.

Again, for both these examples, we only ever combine at most two checks at a time, and so both of these reduced HGP codes admit distance-preserving syndrome extraction schedules following the discussion of Section \ref{sec:hook errors}.


\section{Numerical simulations}\label{sec:numerics}

To assess the near-term prospects of our qubit reduction scheme in practice, we perform circuit-level $Z$-memory simulations on a few select code instances. For a given $\llbracket n,k,d \rrbracket$ code, we initialize all data qubits in $\ket{0}^{\otimes n}$, perform $d$ rounds of ($X$ and $Z$) syndrome extraction, followed by a transversal $Z$-measurement of all data qubits. A final $Z$-syndrome and the logical $\bar{Z}$-measurements can be inferred from this final transversal $Z$-measurement. The $d+1$ rounds of $Z$-syndromes are then processed by a classical decoder, which in turn predicts which logical $\bar{Z}$-measurements should flip at the end. If the predicted $\bar{Z}$-flips do not match the observed $\bar{Z}$-flips, then a logical failure is declared.

\begin{figure*}[t]
    \centering
    \includegraphics[width=0.49\textwidth]{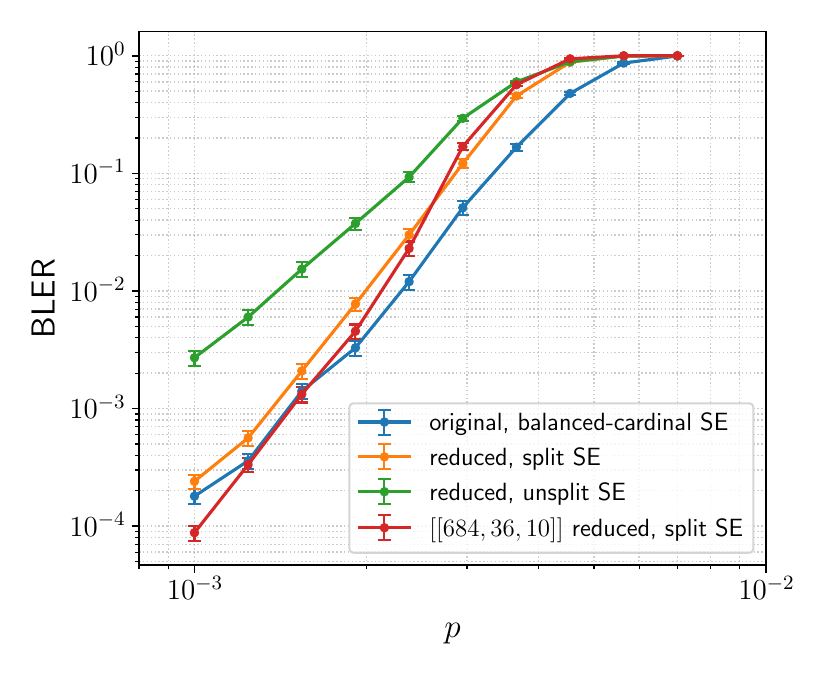} \hfill
    \includegraphics[width=0.49\textwidth]{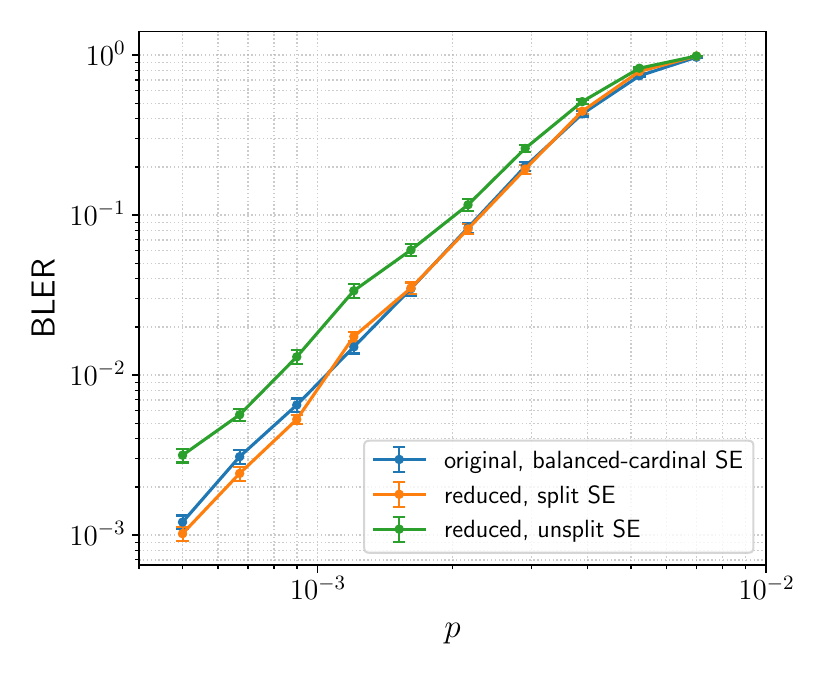}
    \caption{Block logical error rate (BLER) as a function of the noise strength $p$. The unreduced codes use depth-$8$ balanced-cardinal schedules, while the reduced codes use either an unsplit schedule or the split schedule of Algorithm~\ref{alg:split SE}. The left panel uses the $[20,5,9]$ QC-LDPC input code and also includes the reduced split schedule for the $[24,6,10]$ input code. The right panel uses the $[21,8,6]$ Heawood cycle code. Shot counts are increased at low BLER until the relative binomial standard error is at most $15\%$ for the QC-LDPC data and $10\%$ for the Heawood data. Error bars show one binomial standard error.}
    \label{fig:numerics}
\end{figure*}

For our noise model, we consider circuit-level depolarizing noise with strength $p$:
\begin{itemize}
    \item Single-qubit gates, including idling, depolarize with probability $p/10$.
    \item Two-qubit gates depolarize with probability $p$.
    \item State preparation and measurement (SPAM) flip with probability $p$.
\end{itemize}
The above noise model is chosen to closely resemble what is benchmarked in current-generation trapped-ion~\cite{Quantinuum_Helios} and neutral-atom processors~\cite{QuEra_MSD_2025}, which are two platforms capable of realizing the long-range connectivity required for high-rate HGP codes.

We choose two different HGP code examples to simulate. The input of the first HGP code is the $[20,5,9]$ QC-LDPC code from Table \ref{tab:QC-LDPC HGP}; note that the parity-check matrix listed already has full rank. The input of the second HGP code is the $[21,8,6]$ Heawood cycle code from Table \ref{tab:cycle code HGP}; we remove one row from the incidence matrix of the Heawood graph to make it full rank. For each code, we simulate both the original, unreduced HGP code as well as its reduced version. For syndrome extraction, we initialize an ancilla qubit for each parity check, couple the ancilla qubits to corresponding data qubit using CNOT gates, and finally measure the ancilla. For the unreduced code, we use the balanced-cardinal schedule of QUITS~\cite{Kang_2025_QUITS}. This schedule interleaves the $X$- and $Z$-check CNOTs in four directional blocks and has total CNOT depth 8. For the reduced code, we implement both a fixed unsplit schedule based on a minimum edge coloring and the distance-preserving split-SE scheduling of Algorithm \ref{alg:split SE}. The (Clifford) SE circuits as well as the corresponding depolarizing noise model are all performed using the \textsf{Stim} Python package~\cite{Stim}.

For one full syndrome-extraction round on the $[20,5,9]$ QC-LDPC HGP instance, the original code has total CNOT depth 12 when the $X$- and $Z$-checks are extracted separately and depth 8 with the balanced-cardinal schedule. The unsplit and split schedules on the reduced code have depths 18 and 22, respectively. For the $[21,8,6]$ Heawood HGP instance, the corresponding original-code depths are 10 and 8, while both reduced-code schedules have depth 12. The combined $X/Z$ Tanner graph of each original code has maximum degree 6, so the depth-8 balanced-cardinal schedules are within two layers of the degree lower bound. The reduced-code lower bounds are 9 and 8 for the QC-LDPC and Heawood examples, respectively.

To decode the $d+1$ rounds of $Z$-syndromes, we construct a spacetime phenomenological decoding graph by taking the HGP of the $Z$-Tanner graph with a length-($d+1$) repetition code. We use this simpler graph rather than the full circuit-level detector graph to reduce the decoding cost. Detector samples are still drawn from the full noisy syndrome-extraction circuit. For each circuit and value of $p$, we obtain its detector error model from \textsf{Stim} and project each error mechanism onto the phenomenological graph while preserving its detector syndrome and action on the canonical logical observables. The prior for each phenomenological variable is the probability that it is flipped by an odd number of these projected mechanisms. Relay-BP treats the variables independently, so the priors retain the circuit- and schedule-dependent marginal error rates but omit correlations among the projected variables. The curves therefore benchmark the chosen circuit--decoder pairs rather than optimal decoder performance.
For the decoder, we use the \textsf{Relay-BP} Rust/Python package~\cite{RelayBP_Github}, which is based off of the Relay-BP algorithm~\cite{RelayBP}. Given a parity-check (or detector) matrix $H\in\mathbb{F}^{M\times N}_2$ and a syndrome vector $\mathbf{s}=H\mathbf{e} \in \mathbb{F}^M_2$ for some error $\mathbf{e}\in\mathbb{F}^N_2$, the decoder tries to infer a correction $\hat{\mathbf{e}}\in\mathbb{F}^N_2$ such that $\mathbf{s}=H\hat{\mathbf{e}}$. In the memory simulations, decoding is successful if the logical-observable flips inferred from $\hat{\mathbf{e}}$ match those sampled by \textsf{Stim}. Belief propagation (BP) is an efficient marginalization algorithm that can be used to estimate the most likely error by passing messages along the Tanner graph associated with $H$~\cite{MacKay_1999}. However, despite its excellent performance for classical LDPC codes, BP behaves poorly for quantum LDPC codes due to short cycles and degeneracy, often resulting in an invalid correction~\cite{Murphy_1999, Roffe_2020}. Ordered-statistics (OS) postprocessing resolves the aforementioned pitfalls by forcing a valid correction at the expense of costly matrix inversion~\cite{LP_codes}. More recently, memory effects have been proposed as a fast way to converge BP without relying on matrix inversion~\cite{Kuo_2022, Chen_2025}. The Relay-BP algorithm chains consecutive BP runs with disordered memory strengths and has been shown to beat BP+OSD in both accuracy and runtime for certain families of LDPC codes and parameter regimes~\cite{RelayBP}. For our relay parameters, we used the configuration: \{gamma0=0.65, pre\_iter=80, num\_sets=100, set\_max\_iter=60, gamma\_dist\_interval=(-0.24, 0.66), stop\_nconv=5\}. The same Relay-BP hyperparameters are used for every circuit and value of $p$.


The results are plotted in Figure~\ref{fig:numerics}. For a decoder that corrects every fault set of size at most $\lfloor(d_{\rm circ}-1)/2\rfloor$, the first possible uncorrectable contribution occurs at order $p^{\lceil d_{\rm circ}/2\rceil}$~\cite{Bravyi_2013_sim}. This repeated-circuit fault distance is related, and often equal to \cite{shaw2026optimising}, the effective distance established by Theorem~\ref{thm:d_circ preservation}, which concerns propagated syndrome hook errors in one syndrome-extraction circuit. The present simulations therefore benchmark finite repeated circuits using a phenomenological graph with independent DEM-derived priors and do not numerically verify that theorem. We do not fit slopes or thresholds from these finite-size data.

At low and intermediate noise strengths, where the BLERs remain away from one, the split schedule gives a lower BLER estimate than the unsplit schedule for both code instances. For the $[20,5,9]$ QC-LDPC instance, the split and unsplit depths are $22$ and $18$, respectively; the split curve is lower than the unsplit curve, while the depth-$8$ balanced-cardinal curve remains lower than the split curve. Thus, circuit depth alone does not determine the observed BLER. For the Heawood instance, both reduced schedules have depth $12$, but the split curve is lower than the unsplit curve and closely follows the balanced-cardinal curve at low and intermediate $p$.

So far we have only compared the unreduced HGP code with its reduced version directly. Another comparison we can do is to (roughly) fix the number of data qubits and ask if there is a reduced HGP code with better performance than an unreduced code. In the left plot of Figure \ref{fig:numerics}, we also plot a $\llbracket 684,36,10 \rrbracket$ reduced HGP code with split-SE scheduling from the $[24,6,10]$ QC-LDPC input code in Table \ref{tab:QC-LDPC HGP}, which has slightly more data qubits than the unreduced $\llbracket 625,25,9 \rrbracket$ HGP code but encodes more logical qubits with a slightly larger minimum distance. The data-qubit counts differ by about $9\%$, and the blocks encode $36$ and $25$ logical qubits, so their BLERs compare different sizes and rates. At the three smallest sampled noise strengths, the larger reduced block has the lower BLER estimate, with nonoverlapping one-standard-error intervals at $p=10^{-3}$. The point-estimate curves cross between the sampled values $1.54\times10^{-3}$ and $1.91\times10^{-3}$, above which the smaller unreduced block has the lower BLER estimate. This finite-size crossover hence showcases a tradeoff among data-qubit count, code rate, distance, and circuit cost at these code lengths.


\section{Outlook}\label{sec:outlook}

We introduced a general-purpose procedure to reduce the number of physical qubits in HGP codes while maintaining code parameters such as code dimension, logical operator basis and minimum distance. We also show compatibility with a variety of existing HGP fault-tolerant gadgets such as distance-preserving syndrome extraction, fold-transversal gates and code homomorphisms. We then demonstrate that carefully-structured input codes allow the most effective reductions both in terms of qubit overhead, degrees, and circuit-level tradeoffs.

There are several possible extensions of the qubit reduction that would promote its practicality for fault-tolerant quantum computation. On the numerical front, an improvement in decoding would directly lower the overhead for fault-tolerant quantum computation with these HGP codes by enabling the usage of smaller code blocks. There is room for significant improvement because the decoder \emph{(i)} uses a phenomenological graph with independent DEM-derived priors rather than retaining the full circuit-level correlations, and \emph{(ii)} was not fine-tuned for the decoding graph which typically leads to better performance \cite{RelayBP, wills2026concatenating}. On the theoretical front, the most immediate extension would be to generalize our procedure to higher-dimensional ($D>2$) HGP codes; i.e. products of more than 2 classical codes. $D>2$ HGP codes can support single-shot QEC/initialization~\cite{Quintavalle_2021_singleshot} as well as transversal\footnote{Here ``transversal'' is a bit looser from the usual sense and means a constant-depth circuit.} non-Clifford gates~\cite{breuckmann2025cups, golowich_2025_ccz, lin2024sheaf, hsin2025non, zhu2025fountain, tan2025ccz}, and so they have been of both theoretical and practical interest. Since the qubit reduction consists only of local modifications, we also expect it to preserve the confinement~\cite{Quintavalle_2021_singleshot} and soundness~\cite{Campbell_2019} properties relevant for single-shot error correction and initialization, but leave formal verifications to future work.

There have been several recent works that introduce new HGP fault-tolerant gadgets: two based on code-switching~\cite{golowich2025switch, tan2025ccz} and two based on code surgery~\cite{zheng2025high, chang2026constant}. Another work \cite{Gu_2026_QGPU} develops a structured logical basis for lifted-product codes and a parallel product-surgery protocol that also applies to HGP codes. It would be interesting to determine which of these parallel operations survive a compatible qubit reduction. It would be of further practical relevance if our qubit reduction can be made compatible with a subset or all of these schemes. In~\cite{golowich2025switch}, addressable logical Clifford gates are performed in constant depth by code-switching between 2D and 3D HGP codes along different axes, interleaved with transversal Clifford gates. Extending the 2D qubit reduction to 3D would be the first step towards progress in this direction, then constructing chain maps between reduced 2D and 3D HGP codes to enable code-switching. In~\cite{tan2025ccz}, code-switching between structured 2D and 3D HGP codes enable logical CCZ gates in constant depth. To make qubit reduction compatible with this scheme would require finding a reduction schedule compatible with the algebraic structures that enable the transversal CCZ in the 3D code. In~\cite{zheng2025high} and~\cite{chang2026constant}, specialized HGP ancilla blocks are attached and detached to simultaneously measure extensive numbers of logical Paulis; in the first case different logical Paulis for high-rate surgery and in the second case stabilizer-equivalent logical Paulis for single-shot surgery. There are some constant-factor qubit savings when using these ``coning'' approaches compared to the homomorphic measurement scheme. Since the chain maps for coning are slightly different than for homomorphic measurements, the main challenge for showing reduction compatibility would be to prove distance-preservation for the merged data-ancilla code.

Recently, HGP codes have also been shown to achieve optimal most-likely-error (MLE) decoding under more realistic neutral-atom noise models that include atom loss, reaching the same decoding radius as that for erasures~\cite{liu2026loss}. An extension to reduced HGP codes would boost their practicality for the neutral-atom platform.

Lastly, it would be interesting if any analytical guarantees for HGP qubit reduction can be extended to more exotic product constructions such as the lifted product~\cite{LP_codes} and balanced product~\cite{Breuckmann_2021_balanced}. These constructions allow for codes with even more efficient parameters than the HGP~\cite{2BGA_codes, BB_codes, cain2026shor}. Advancing in this direction could include generalizing the K\"unneth theorem to tensor products of chain complexes based on group rings \cite{kunneth_lp}, over which the lifted and (certain) balanced products are defined.


\section*{Acknowledgments}

We thank Shi Jie Samuel Tan, Qian Xu and Guo Zheng for helpful discussions and feedback on the draft.
This material is based upon work supported in part by the Defense Advanced Research Projects Agency (DARPA) under Agreement HR00112490357, the NSF QLCI award OMA-2120757 (RQS), the NSF-funded NQVL:QSTD: Pilot: DLPQC and the DoE ASCR Quantum Testbed Pathfinder program (awards No.~DE-SC0019040 and No.~DE-SC0024220).
We also thank Victor Albert, William Gasarch and the REU-CAAR program at the University of Maryland, College Park for support during the creation and development of this project. We acknowledge the use of AI for technical discussions and some assistance in prose.


\section*{Code availability}

A Python implementation of the qubit-reduction procedure in Section~\ref{sec:procedure} and the numerical experiments done in Section~\ref{sec:numerics} are available at the following \href{https://github.com/apabla1/hgp-reduction}{GitHub repository}.


\bibliography{thebib}

\include{appendix}

\end{document}

%% file: preamble.tex
\usepackage[utf8]{inputenc}
\usepackage[T1]{fontenc}
\usepackage{graphicx}
\usepackage[usenames,dvipsnames]{xcolor}
\usepackage{mathrsfs}
\usepackage{amsmath,amssymb,graphics,amsthm}
\usepackage{bbm}
\usepackage{bm}
\usepackage{braket}
\usepackage{stmaryrd}
\usepackage{comment}
\usepackage{enumitem}
\usepackage[normalem]{ulem}

\usepackage[colorlinks=true, urlcolor=violet, linkcolor=blue, citecolor=red, hyperindex=true, linktocpage=true]{hyperref}
\allowdisplaybreaks

\newtheorem{thm}{Theorem}
\numberwithin{thm}{section}

\newtheorem{lem}[thm]{Lemma}

\newtheorem{prop}[thm]{Proposition}
\newtheorem*{prop*}{Proposition}
\newtheorem{eg}[thm]{Example}
\newtheorem{claim}[thm]{Claim}

\newtheorem{defn}[thm]{Definition}

\newtheorem{fact}[thm]{Fact}

\renewcommand{\thesection}{\arabic{section}}
\renewcommand{\thesubsection}{\thesection.\arabic{subsection}}

\makeatletter
\renewcommand{\p@subsection}{}
\renewcommand{\p@subsubsection}{}
\makeatother

\usepackage{xcolor}
\usepackage{mathtools}

\newcommand{\mbf}[1]{\mathbf{#1}}

\DeclareMathAlphabet{\mathpzc}{OT1}{pzc}{m}{it}

\providecommand{\revadd}[1]{}

\newcommand{\Bsectorname}{bit-type}
\newcommand{\Csectorname}{check-type}

\newcommand{\norm}[1]{\left\lVert#1\right\rVert}

\newcommand{\im}{\operatorname{im}}

\newcommand{\rank}{\operatorname{rank}}
\newcommand{\rs}{\operatorname{rs}}

\newcommand{\abs}[1]{\lvert{#1}\rvert}

\newcommand{\F}{\mathbb{F}}
\newcommand{\transpose}{\mathsf{T}}
\newcommand{\transp}{\mathsf{T}}

\newcommand{\ident}[0]{\mathds{1}}

\usepackage{dsfont}

\usepackage{quiver}

\usepackage[ruled,linesnumbered]{algorithm2e}

\SetCommentSty{mycommfont}
\SetKwInOut{Input}{Input}
\SetKwInOut{Return}{Return}

%% file: appendix.tex
\appendix
\renewcommand{\thesubsection}{\thesection.\arabic{subsection}}

\section{Chain complex formalism for classical and quantum codes}\label{app:codes algebraic}

In this section, we cover the algebraic formalism of classical and quantum codes which gives a connection between the linear algebraic perspective of the codes to a pictorial perspective using Tanner graphs. This section will also give the background necessary for the results of Section~\ref{sec:GPPM}.

\begin{defn}[Chain complex]\label{def:chain complex}\label{2-term complex}\label{3-term complex}
    A (based) chain complex over $\F_2$ is a sequence of vector spaces $\{A_i\}_{i\in\mathbb{Z}}$ together with linear boundary maps
    \[
        \partial_i : A_i \to A_{i-1}
    \]
    such that for every $i$,
    \begin{align}
        \partial_{i-1}\circ \partial_i = 0\, . \label{boundarycondition}
    \end{align}
    Equivalently, a chain complex is written as
    \begin{align}
        \cdots \xleftarrow{\partial_0} A_0 \xleftarrow{\partial_1} A_1 \xleftarrow{\partial_2} A_2 \xleftarrow{\partial_3} \cdots \, . \label{eq:chain complex}
    \end{align}
    The $i$th homology group is defined as $\mathcal{H}_i := \ker\partial_i/\im\partial_{i+1}$.
    The associated cochain complex on the dual spaces $\{A_i^*\}_{i\in\mathbb{Z}}$ is defined by the coboundary maps
    \[
        \delta^i := \partial_{i+1}^\transp : A_i^* \to A_{i+1}^* \, ,
    \]
    so that
    \begin{align}
        \cdots \xrightarrow{\delta^{-2}} A_{-1}^* \xrightarrow{\delta^{-1}} A_0^* \xrightarrow{\delta^0} A_1^* \xrightarrow{\delta^1} A_2^* \xrightarrow{\delta^2} \cdots \, . \label{eq:cochain complex}
    \end{align}
    Since $\delta^i = \partial_{i+1}^\transp$, the cochain condition is equivalently
    \[
        \delta^{i+1}\circ \delta^i = 0
    \]
    for every $i$. The $i$th cohomology group is defined as $\mathcal{H}^i := \ker\delta^i/\im\delta^{i-1}$.
\end{defn}

In this work, we focus on 2-term and 3-term chain complexes:
\begin{align}
    A_0 \xleftarrow{\partial_1} A_1 \, , \label{eq: 2-complex}
\end{align}
\begin{align}
    A_0 \xleftarrow{\partial_1} A_1 \xleftarrow{\partial_2} A_2 \, , \label{3-term comp}
\end{align}
with cochain complexes
\begin{align}
    A_0^* \xrightarrow{\delta^0} A_1^* \, , \label{eq: 2-complex transpose}
\end{align}
\begin{align}
    A_0^* \xrightarrow{\delta^0} A_1^* \xrightarrow{\delta^1} A_2^* \, . \label{3-term comp dual}
\end{align}

Geometrically, when $A_i=\F_2[X_i]$ is generated by a basis of $i$-cells $X_i$, the map $\partial_i$ sends a formal sum of $i$-cells to its boundary in $\F_2[X_{i-1}]$. In particular, the 2-term case corresponds to edges and vertices, while the 3-term case corresponds to faces, edges, and vertices; see Figure \ref{fig:app-3complex} for an illustration.

\begin{figure}[t]
    \centering
    \includegraphics[width=1\linewidth]{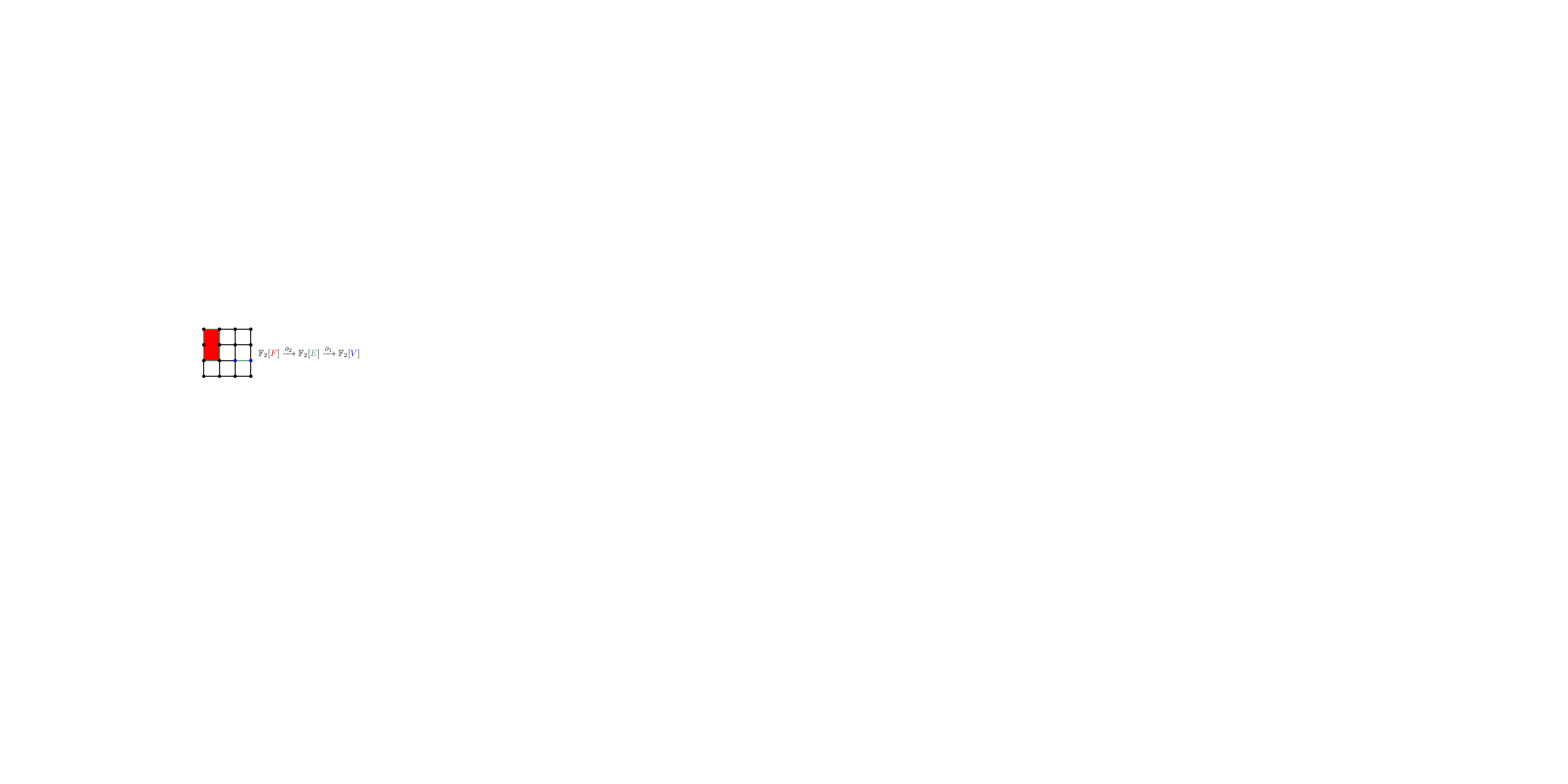}
    \caption{Geometric picture of a 3-complex as a mapping between faces to edges to 0-dimensional vertices in a 2-dimensional manifold. The generating set of faces $F$ are the nine faces shown, the generating set of edges $E$ are the twenty-four edges shown, and the generating set of vertices $V$ are the sixteen vertices shown. The highlighted face in red is the sum of the top-right and middle-right faces over $\F_2$. We denote this face as $f \in \langle F \rangle$. Applying the boundary map $\partial_2$, one gets that face's boundary -- the surrounding edges in green, which we denote $e_1 \in \langle E \rangle$. Applying the boundary map $\partial_1$ to the blue edge $e_2 \in \langle E\rangle$ on the right side, one gets that edge's boundary -- the highlighted vertices in blue, which we denote $v \in \langle V \rangle$. As equations, we have $\partial_2 f = e_1$ and $\partial_1 e_2 = v$. Note that the constraint $\partial_1 \circ \partial_2 = 0$ is satisfied here because the sum of the boundaries of edges surrounding a face in $\langle V \rangle$ is zero, because every vertex cancels each other out. In other words, the boundary of a (sum of) face(s) (a closed loop of edges) has no boundary of vertices. Indeed, $e_1$ has no boundary because summing the boundary of each green vertex cancels each other out, so $\partial_1 \partial_2 f = 0$.}
    \label{fig:app-3complex}
\end{figure}

\begin{defn}[Classical code 2-term chain complex]
    An $[n, k, d]$ classical linear code $\mathcal{C}$ is a $k$-dimensional subspace of bits $B=\F_2^n$ with minimum Hamming distance $d$. Let $H\in \F_2^{m\times n}$ be the parity-check matrix for $\mathcal{C}$ so that the code is defined as
    \begin{align}
        \mathcal{C} = \{\mathbf{g} \in B : H\mathbf{g}^\transp = \mathbf{0}\}
    \end{align}
    which equivalently means $\mathcal{C}=\ker H$. This represents the fact that the a codeword $\mathbf{g}$ satisfies all parity constraints defined by the rows of $H$.
    
    For an error $\mathbf{e}\in\F^n_2$, its error syndrome is $\mathbf{s}(\mathbf{e})=H\mathbf{e} \in \F^m_2 =: S$, where $S$ denotes the space of syndromes.
    The code $\mathcal{C}$ can then be represented by a 2-term chain complex as follows:
    \begin{align}
        \mathcal{C}: \quad B \xrightarrow{H} S\, .\label{eq: classical code 2-complex}
    \end{align}
    \begin{figure}
        \centering
        \includegraphics[width=1\linewidth]{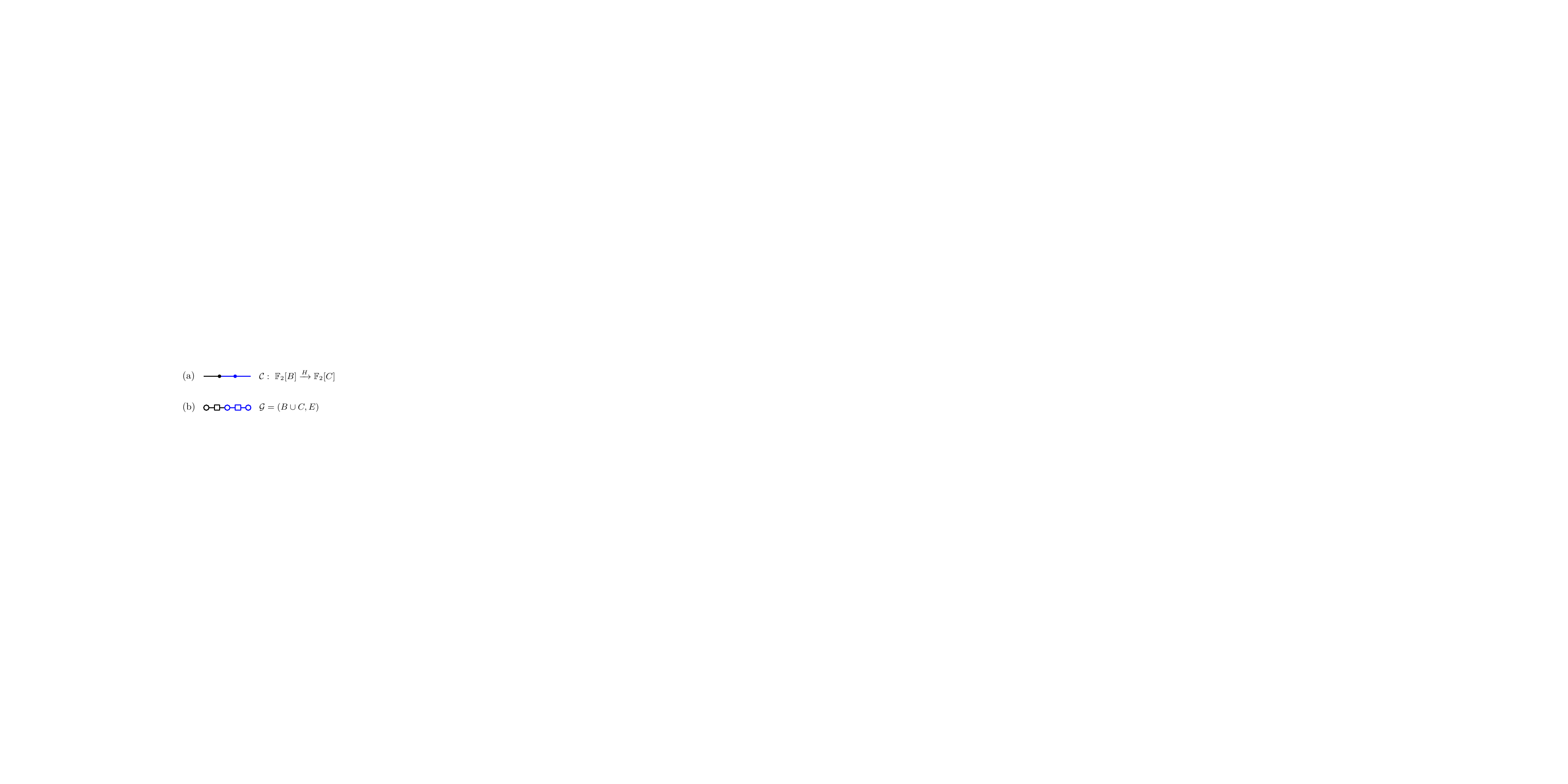}
        \caption{\textbf{(a)} Geometric picture of a 2-complex mapping between 1-dimensional edges and 0-dimensional vertices being analogous to \textbf{(b)} the bipartite Tanner graph of a classical code. The generating set of edges $B$ in (a) is analogous to the bit set $B$ in (b), and the generating set of vertices $C$ in (a) is analogous to the check set $C$ in (b). The boundary map of the complex in (a), $H : \F_2^{|B|} \rightarrow \F_2^{|C|}$, is embedded in the edges $E$ of the Tanner graph, since the entries of one row of $H$ equates to the edges coming out of one check. In both, the highlighted blue check checks the highlighted blue qubits.}
        \label{app-2complexclassicalcode}
    \end{figure}
    Note that the if you append the 0 space to both sides of \eqref{eq: classical code 2-complex} as follows:
    \begin{align}
        \mathcal{C}: \quad 0\rightarrow B \xrightarrow{H} C \rightarrow 0\, ,\label{classical 2-complex w zeros}
    \end{align} 
    where the leftmost map is the trivial map and the rightmost map is the zero map whose kernel is the entirety of $C$, the codespace $\rs G$ is defined as the first homology group of \eqref{classical 2-complex w zeros}:
    \begin{align}
       \rs G = \mathcal{H}_1(\mathcal{C}) = \ker H \, . 
    \end{align}
    Also note that we can draw the connection between the code $\mathcal{C}$ and the Tanner graph $\mathcal{G}$ by using the same edge-vertex boundary map described in Definition~\ref{2-term complex}. Such a connection is shown in Figure~\ref{app-2complexclassicalcode}.
\end{defn}

\begin{defn}[Quantum CSS code 3-term chain complex]\label{def:CSS chain complex}
    An $\llbracket n, k, d \rrbracket$ quantum CSS code $\mathcal{Q}$ is a $2^k$-dimensional subspace of the $2^n$-dimensional Hilbert space $\mathcal{H}^n$. Let $H_X \in \F^{m_X\times n}_2$ and $H_Z\in \F^{m_Z\times n}_2$ be $X$ and $Z$ parity check matrices for the code. The $2^k$-dimensional codespace is the joint $+1$ eigenspace of all $X$ and $Z$ stabilizers defined as
    \begin{subequations}
        \begin{align}
            \{X_i\}_{i=1}^{i=m_X} &= \left\{\bigotimes_{j=1}^n X^{h^{(X)}_{ij}}\right\}_{i=1}^{m_X} \, ,\\
            \{Z_i\}_{i=1}^{i=m_Z} &= \left\{\bigotimes_{j=1}^n Z^{h^{(Z)}_{ij}}\right\}_{i=1}^{m_Z}
        \end{align}
        where $h^{(X)}_{ij}$ ($h^{(Z)}_{ij}$) is the $(i,j)$-th element of $H_X$ ($H_Z$). 
    \end{subequations} 
    
    The $\bar{X}$ ($\bar{Z}$) logical groups are defined to be in the null space of $H_Z$ ($H_X$) but not in the image of $H_X^\transp$ ($H_Z^\transp$):
    \begin{subequations}\label{XZlogicals}
        \begin{align}
            \bar{\mathcal{X}}(\mathcal{Q}) = \ker{H_Z} / \rs{H_X}\label{Xlogicals} \, , \\
            \bar{\mathcal{Z}}(\mathcal{Q}) = \ker{H_X} / \rs{H_Z}\label{Zlogicals} \, .
        \end{align}
    \end{subequations}
    Equivalently, this means that the $\bar{X}$ ($\bar{Z}$) logical operators satisfy all the $Z$ ($X$) checks and cannot be generated by $X$ ($Z$) stabilizers.

    This gives us that $k = \dim \bar{\mathcal{X}}(\mathcal{Q}) = \dim \bar{\mathcal{Z}}(\mathcal{Q})$.
    
    
    Let $Q=\F_2^n$ be the space of qubits and $S_X = \F^{m_X}_2$ ($S_Z = \F^{m_Z}_2$) be the space of $X$- ($Z$-) syndromes obtained by multiplying an $Z$ ($X$) binary error vector $\mathbf{e}_Z \in Q$ ($\mathbf{e}_X \in Q$) by $H_X$ ($H_Z$) to obtain an $X$ ($Z$) syndrome $\mathbf{s}_X \in S_X$ ($\mathbf{s}_Z \in S_Z$). Equivalently, this means that $H_X\mathbf{e}_Z = \mathbf{s}_X$ ($H_Z\mathbf{e}_X = \mathbf{s}_Z$).
    
    Following Definition~\ref{2-term complex}, the code $\mathcal{Q}$ can then be represented by two 2-complexes as follows:
    \begin{subequations}
        \begin{align}
            \F_2[Q] &\xrightarrow{H_X} \F_2[S_X]\, , \\ 
            \F_2[Q] &\xrightarrow{H_Z} \F_2[S_Z] \, .
        \end{align}
    \end{subequations}
    For simplicity, we drop the $\F_2[\cdot]$ labeling, getting us
    \begin{subequations}
        \begin{align}
            Q &\xrightarrow{H_X} S_X\, , \label{sx 2-complex} \\ 
            Q &\xrightarrow{H_Z} S_Z \, . \label{sz 2-complex}
        \end{align}
    \end{subequations}
    Using the cochain complex described in \eqref{eq: 2-complex transpose} on \eqref{sx 2-complex} and the isomorphism of the dual space of $\F_2$, one gets the following 2-complexes:
    \begin{subequations}\label{dual spaces}
        \begin{align}
            Q &\xleftarrow{H_X^\transp} S_X \, \label{sx 2-complexT}\, , \\ 
            Q &\xrightarrow{H_Z} S_Z \, . \label{sz 2-complex2}
        \end{align}
    \end{subequations}
    (Recall that the $S_X$, $Q$, $S_Z$ are still a generating set for $X$-stabilizers, qubits, $Z$-stabilizers respectively and that addition is done over $\F_2$.) Combining these two, one gets the general expression of a CSS code defined as a 3-term chain complex as described in Definition~\ref{3-term complex}:
    \begin{align}
        \mathcal{Q}: \quad S_X \xrightarrow{H^\transp_X} Q \xrightarrow{H^{}_Z} S_Z \, .\label{eq: CSS complex}
    \end{align}
    The condition in \eqref{boundarycondition} is precisely the CSS commutation condition $H_ZH_X^\transp = \mathbf{0}$.

    \begin{figure}
        \centering
        \includegraphics[width=1\linewidth]{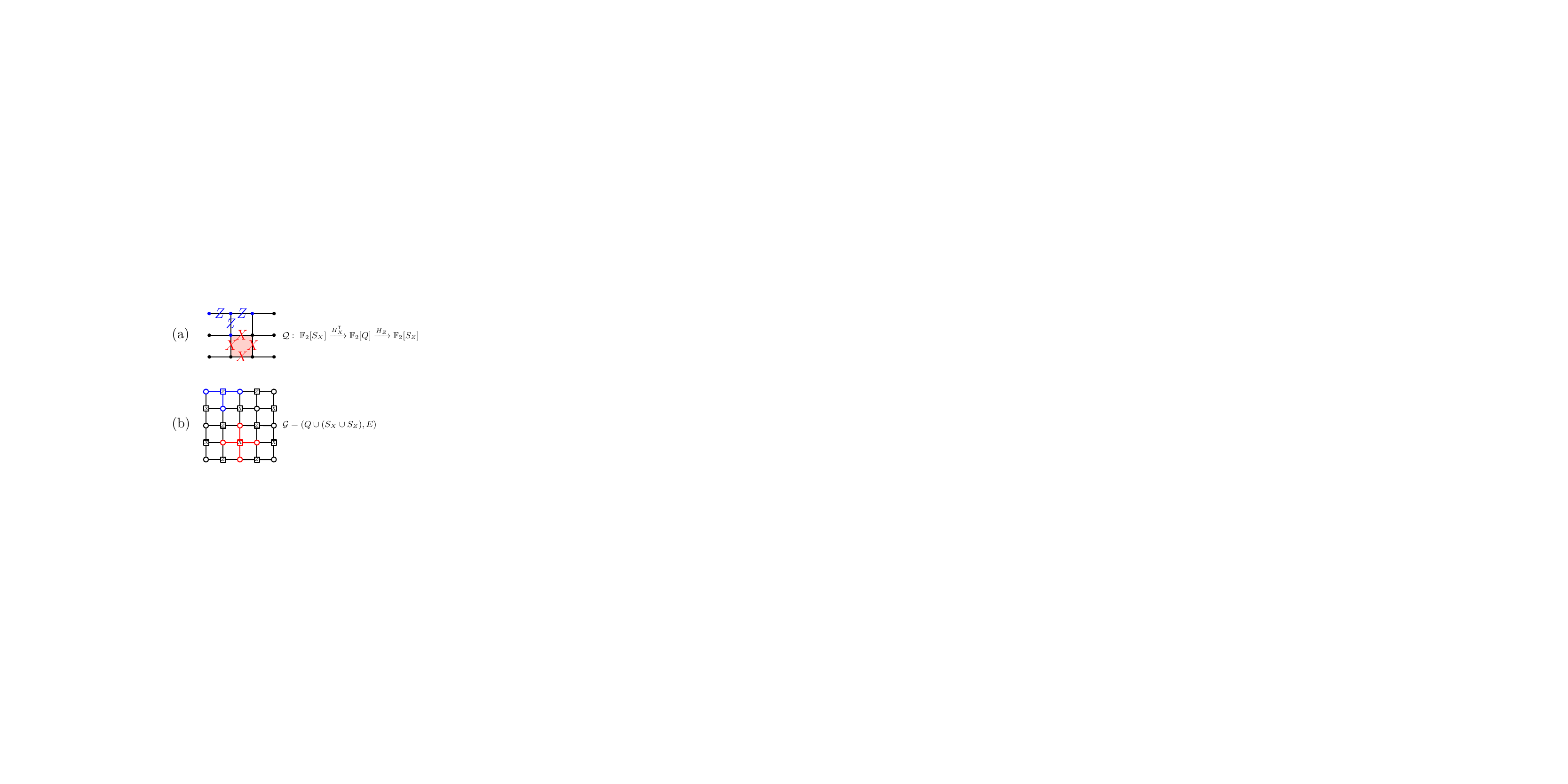}
        \caption{\textbf{(a)} Geometric picture of a 3-complex mapping between 2-dimensional faces, 1-dimensional edges, and 0-dimensional vertices being analogous to \textbf{(b)} the tripartite Tanner graph of a quantum CSS code. The generating set of faces $S_X$ in (a) is analogous to the $X$-stabilizer set $S_X$ in (b), the generating set of qubits $Q$ in (a) is analogous to the qubits $Q$ in (b), and the generating set of vertices $S_Z$ in (a) is analogous to the $Z$-stabilizer set $S_Z$ in (b). The boundary map of the complex in (a) between $S_X$ and $Q$, $H_X^\transpose : \F_2^{|S_X|} \rightarrow \F_2^{|Q|}$, implies that the boundary of faces form $X$ stabilizers. Similarly, the boundary map between $Q$ and $S_Z$, $H_Z : \F_2^{|Q|} \rightarrow \F_2^{|S_Z|}$, implies that the qubits touching vertices (that's aren't on a rough edge) forms $Z$ stabilizers. In both, an $X$ stabilizer and $Z$ stabilizer is highlighted. The typical plaquette and vertex stabilizer geometry of the surface code is apparent in (a).}
        \label{app-3complexquantumcode}
    \end{figure}
    
    One can also see that the group of $\bar{X}$ logicals as defined in \eqref{Xlogicals} is precisely the first homology group of \eqref{eq: CSS complex}
    \begin{align}\label{eq:Xlogicals hom}
        \mathcal{H}_1(\mathcal{Q}) = \ker{H_Z} / \operatorname{im}H_X^\transp = \bar{\mathcal{X}}(\mathcal{Q}) \, ,
    \end{align}
    and that the group of $\bar{Z}$ logicals as defined in \eqref{Zlogicals} is the first cohomology group of \eqref{eq: CSS complex}:
    \begin{align}\label{eq:Zlogicals cohom}
        \mathcal{H}^1(\mathcal{Q}) = \ker{H_X} / \operatorname{im}H_Z^\transp = \bar{\mathcal{Z}}(\mathcal{Q}) \, .
    \end{align}
\end{defn}

As an example, a square tessellation of the 2-torus with faces $(F)$, edges $(E)$ and vertices $(V)$ induces the 3-term chain complex
\begin{align}
    V \xleftarrow{\partial_e} E \xleftarrow{\partial_f} F
\end{align}
and cochain complex
\begin{align}
    V \xrightarrow{\delta^v} E \xrightarrow{\delta^e} F \, .
\end{align}
The 2D toric is obtained by assigning $X$-syndromes to $V$, qubits to $E$, and $Z$-syndromes to $F$.


\section{Hypergraph product codes}\label{app:HGP}

Here we re-review the basic definitions of HGP codes and derive their main properties using the algebraic formalism of Appendix~\ref{app:codes algebraic}. 

Consider two classical linear codes $\mathcal{C}_1$, $\mathcal{C}_2$ with parity-check matrices $H_1 \in \F_2^{m_1 \times n_1}$, $H_2 \in \F_2^{m_2 \times n_2}$ and parameters $[n_1, k_1, d_1]$, $[n_2, k_2, d_2]$ respectively. We define the bipartite Tanner graphs of the codes as 
\begin{subequations}\label{tan}
\begin{align}
    \mathcal{G}_1 &= \mathpzc{T}(B_1, C_1, E_1) \, , \label{tan1} \\
    \mathcal{G}_2 &= \mathpzc{T}(B_2, C_2, E_2) \label{tan2}
\end{align}
\end{subequations}
respectively, where elements of $B_i$ are bit-type vertices, $C_i$ are check-type vertices, and $E_i$ are undirected edges of the form $(b\in B_i, c\in C_i)$ ($i=1,2$). The edges indicate that a certain parity check $c\in C_i$ includes the bit $b\in B_i$.

Define the transpose codes $\mathcal{C}_1^\transpose$, $\mathcal{C}_2^\transpose$ to have parity-check matrices $H_1^\transpose \in \F_2^{n_1 \times m_1}$, $H_2^\transpose \in \F_2^{n_2 \times m_2}$, parameters $[m_1, k_1^\transpose, d_1^\transpose]$, $[m_2, k_2^\transpose, d_2^\transpose]$, and Tanner graphs 
\begin{subequations}\label{tanT}
\begin{align}
    \mathcal{G}_1^\transpose &= \mathpzc{T}(C_1, B_1, E_1) \, , \label{tan1T} \\
    \mathcal{G}_2^\transpose &= \mathpzc{T}(C_2, B_2, E_2) \label{tan2T}
\end{align}
\end{subequations}
respectively. Here, the bit-type and check-type vertices are simply switched from the Tanner graphs of $\mathcal{C}_1$, $\mathcal{C}_2$. 

Geometrically, the HGP code $\mathpzc{H}(\mathcal{C}_1, \mathcal{C}_2)$ of the two classical codes is formed by the Cartesian graph product of $\mathcal{G}_1$ and $\mathcal{G}_2$. The $n_1n_2$ elements in $B_1 \times B_2$ form \Bsectorname{} qubits, $m_1 n_2$ elements in $C_1 \times B_2$ form $X$ stabilizers, $n_1m_2$ elements in $B_1 \times C_2$ form $Z$ stabilizers, and $m_1m_2$ elements in $C_1 \times C_2$ form \Csectorname{} qubits. This makes the entire set of data qubits formed by $(B_1\times B_2) \sqcup (C_1\times C_2)$. Figure~\ref{hgp} shows an example of this construction. Figure~\ref{fig:HGP} shows the general form of this construction, where the $n_1n_2$ qubits in the red box are \Bsectorname{} qubits and $m_1m_2$ qubits in the blue box are \Csectorname{} qubits. 

\begin{figure*}[htbp]
  \centerline{\includegraphics[width=0.8\textwidth]{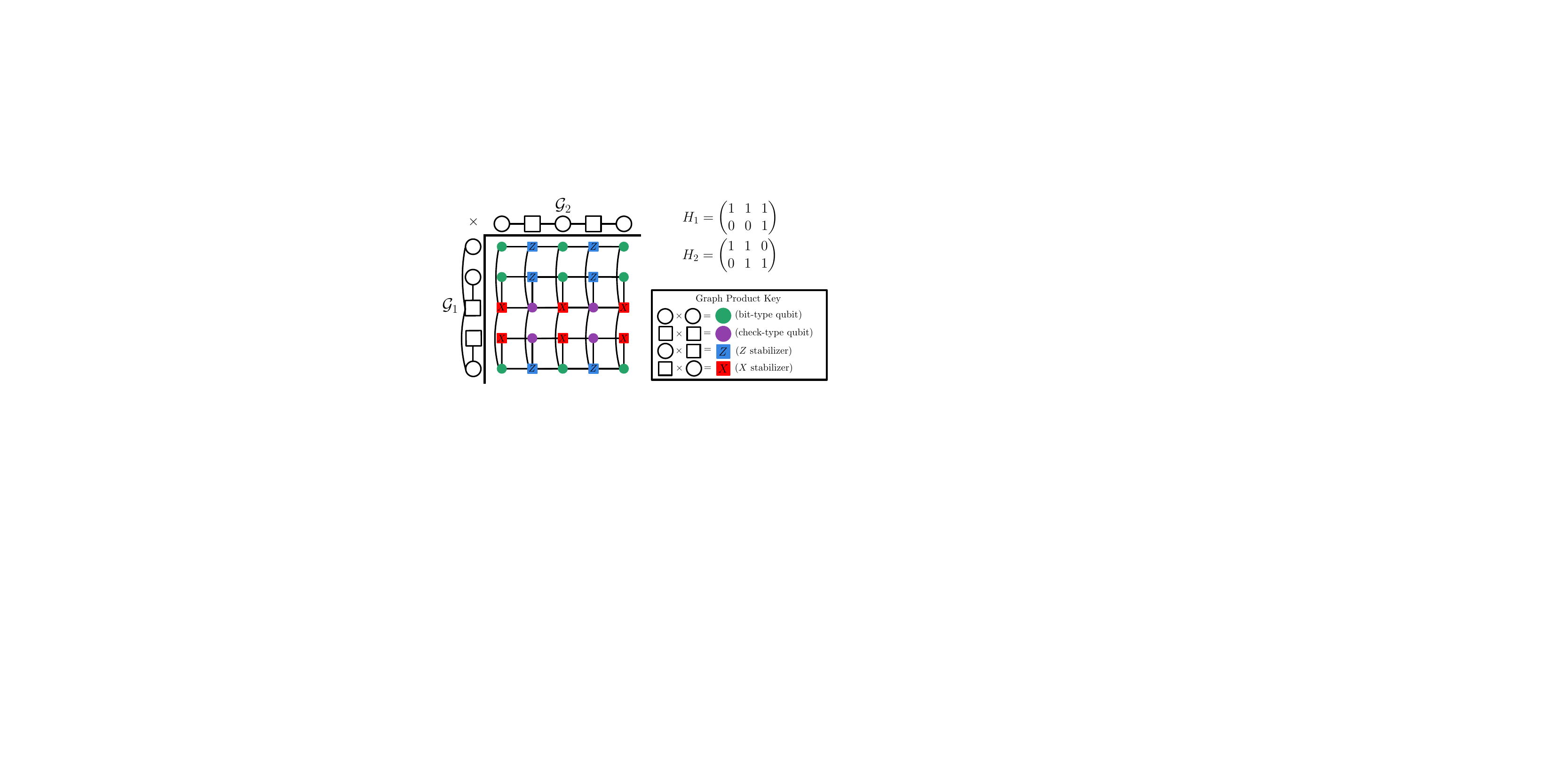}}
  \caption{Hypergraph product $\mathpzc{H}(\mathcal{C}_1, \mathcal{C}_2)$ example construction. In the bipartite classical Tanner graphs $\mathcal{G}_1$, $\mathcal{G}_2$, circles are bits and squares are classical checks. The graph product result is a tripartite quantum Tanner graph with $X$-check, $Z$-check, and qubit nodes. Here, the Tanner graphs correspond to a $[n_1, k_1, d_1] = [3, 1, 2]$, $[n_2, k_2, d_2] = [3, 1, 3]$ classical input codes which yields a $\llbracket n, k, (d_X, d_Z) \rrbracket = \llbracket 13, 1, (3, 2)\rrbracket$ HGP code.}
  \label{hgp}
\end{figure*}

Algebraically, we can show the HGP structure by using the chain complex formalism. Let the input codes $\mathcal{C}_1$, $\mathcal{C}_2$ take the 2-complex structure of \eqref{eq: classical code 2-complex} using the bit sets $B_1$, $B_2$ and check sets $C_1$, $C_2$ (note that these are the same sets from the Tanner graphs \eqref{tan}):
\begin{subequations}\label{eq:C1,C2 chains}
    \begin{align}
    \mathcal{C}_1: \quad B_1 &\xrightarrow{H_1} C_1\, , \\
    \mathcal{C}_2: \quad B_2 &\xrightarrow{H_2} C_2\, .
    \end{align}
\end{subequations}

The HGP code $\mathcal{Q}_\mathpzc{H}$ is the tensor product of the cochain (defined in \eqref{eq: 2-complex transpose}) of the first code, which we denote as $\mathcal{C}_1^\transp$, and the chain of the second, $\mathcal{C}_2$:
\begin{equation}\label{eq:Q=C1xC2}
    \mathcal{Q}_\mathpzc{H} = \mathcal{C}_1^\transp \otimes \mathcal{C}_2\, .
\end{equation}

The resulting chain complex diagram looks as follows, noting that $|B_i| = n_i$ and $|C_i| = m_i$ ($i=1, 2$) and using dual space isomorphisms of $\F_2$:

\begin{equation}
  \begin{tikzcd}[
      row sep=6em,
      column sep=6em,
      /tikz/execute at end picture={
          \path
          coordinate (Y) at ($(current bounding box.north west)!0.15!(current bounding box.south west)$)
          coordinate (X) at ($(current bounding box.north west)!0.15!(current bounding box.north east)$);
        \draw[line width=1pt]
            (Y -| X)                 
            -- (Y -| current bounding box.north east);
        \draw[line width=1pt]
            (Y -| X)                 
            -- (current bounding box.south west -| X);
      },
    ]
    &[-5em] B_2 & C_2 \\[-5em]
    B_1 &[-5em] B_1 \times B_2 & B_1 \times C_2 \\
    C_1 &[-5em] C_1 \times B_2 & C_1 \times C_2
    \arrow["H_2", from=1-2, to=1-3]
    \arrow["H_1^\transp"', from=3-1, to=2-1]
    \arrow["\ident_{n_1} \otimes H_2"', from=2-2, to=2-3]
    \arrow["H_1^\transp \otimes \ident_{n_2}"', from=3-2, to=2-2]
    \arrow["\ident_{m_1} \otimes H_2", from=3-2, to=3-3]
    \arrow["H_1^\transp \otimes \ident_{m_2}", from=3-3, to=2-3]
  \end{tikzcd}
\end{equation}

Reorienting, we can precisely match the 3-chain complex form of CSS codes of \eqref{eq: CSS complex}, which yields our expression of $\mathcal{Q}_\mathpzc{H}$ as a 3-chain complex:

\begin{equation}
  \begin{tikzcd}[
      row sep=3em,
      column sep=1em,
    ]
    S_X &&& C_1 \times B_2 & \\
    Q && B_1 \times B_2 & & C_1 \times C_2 \\
    S_Z &&& B_1 \times C_2 &
    \arrow["H_X^\transpose", from=1-1, to=2-1]
    \arrow["H_Z", from=2-1, to=3-1]
    \arrow["\ident_{m_1} \otimes H_2", from=1-4, to=2-5]{near start}
    \arrow["H_1^\transp \otimes \ident_{n_2}"', from=1-4, to=2-3]
    \arrow["\ident_{n_1} \otimes H_2"', from=2-3, to=3-4]
    \arrow["H_1^\transp \otimes \ident_{m_2}", from=2-5, to=3-4]{near start}
  \end{tikzcd}
\end{equation}

This allows us to easily show the same results as the geometric interpretation and derive the forms of the parity check matrices. Specifically, We have that the basis of $X$ stabilizers $S_X$ are formed by $C_1 \times B_2$, $Z$ stabilizers $S_Z$ by $B_1 \times C_2$, and qubits $Q$ by $(B_1\times B_2) \sqcup (C_1\times C_2)$. Additionally, we have the explicit forms for the parity-check matrices, which match the expressions in \eqref{eq:HGP H_X,H_Z}:
\begin{subequations}\label{eq:HX HZ rederived}
    \begin{align}
        H_X^\transp &= \begin{pmatrix} H_1^\transp \otimes \ident_{n_2} \\ \ident_{m_1} \otimes H_2 \end{pmatrix} \notag \\
        \Longrightarrow H_X &= \begin{pmatrix} H_1 \otimes \ident_{n_2} & \mid & \ident_{m_1} \otimes H_2^\transpose \end{pmatrix} \, , \\
        H_Z &= \begin{pmatrix} \ident_{n_1} \otimes H_2 & \mid & H_1^\transpose \otimes \ident_{m_2} \end{pmatrix} \, .
    \end{align}
\end{subequations}

Using the geometric interpretation of classical codes being 1-dimensional cellulations of edges and vertices and quantum codes being 2-dimensional cellulations of faces, edges, and vertices as done in Figure~\ref{app-2complexclassicalcode} and Figure~\ref{app-3complexquantumcode}, one can describe the HGP code as a product of two 1-dimensional classical cellulations. Such an interpretation is closely analogous to the geometric definition of the HGP code as introduced in Section~\ref{subsec:HGP review}, with Figure~\ref{app-3complexHGPcode} elaborating on that connection.

\begin{figure*}
    \centering
    \includegraphics[width=1\linewidth]{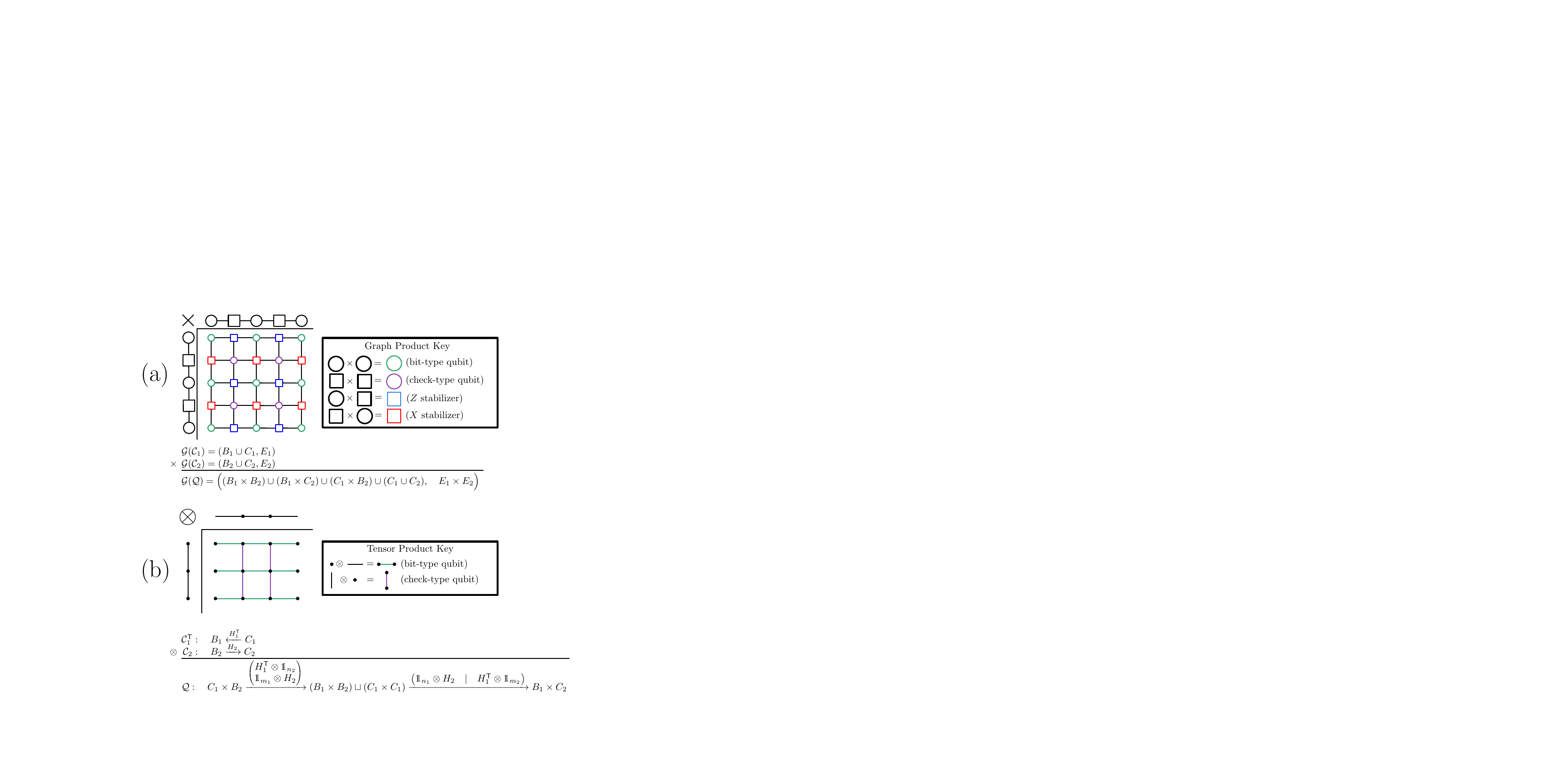}
    \caption{\textbf{(a)} The tripartite Tanner graph product construction introduced in Figure~\ref{hgp} vs. \textbf{(b)} The HGP code of Figure~\ref{app-3complexquantumcode} being represented as the tensor product of two classical codes represented as 1-dimensional cellulations of 2-complexes, as done in Figure~\ref{app-2complexclassicalcode}. As described in Figure~\ref{app-3complexquantumcode}(a), $X$ stabilizers are described by the qubits on the boundary of any (combination of) face(s), and $Z$ stabilizers are described by the qubits that neighbor any (combination of) vertex(es) that are not on a rough edge. (Recall that we are still using the shorthand that the sets in the chain complex are a generating set and algebra is done over $\F_2$.)}
    \label{app-3complexHGPcode}
\end{figure*}

\begin{eg}
    In Figure~\ref{hgp}, we have the parity-check matrices
    \begin{subequations}
        \begin{align}
            H_X &= \begin{pmatrix} \underbrace{\begin{pmatrix} 1 & 1 & 1 \\ 0 & 0 & 1 \end{pmatrix}}_{m_1 \times n_1} \otimes \underbrace{\begin{pmatrix} 1 & 0 & 0 \\ 0 & 1 & 0 \\ 0 & 0 & 1 \end{pmatrix}}_{n_2 \times n_2} & \Bigg| & \underbrace{\begin{pmatrix} 1 & 0 \\ 0 & 1 \end{pmatrix}}_{m_1 \times m_1} \otimes \underbrace{\begin{pmatrix} 1 & 0 \\ 1 & 1 \\ 0 & 1 \end{pmatrix}}_{n_2 \times m_2} \end{pmatrix} \notag \\ &= \underbrace{\left(
\begin{array}{ccccccccccccc}
1 & 0 & 0 & 1 & 0 & 0 & 1 & 0 & 0 & 1 & 0 & 0 & 0 \\
0 & 1 & 0 & 0 & 1 & 0 & 0 & 1 & 0 & 1 & 1 & 0 & 0 \\
0 & 0 & 1 & 0 & 0 & 1 & 0 & 0 & 1 & 0 & 1 & 0 & 0 \\
0 & 0 & 0 & 0 & 0 & 0 & 1 & 0 & 0 & 0 & 0 & 1 & 0 \\
0 & 0 & 0 & 0 & 0 & 0 & 0 & 1 & 0 & 0 & 0 & 1 & 1 \\
0 & 0 & 0 & 0 & 0 & 0 & 0 & 0 & 1 & 0 & 0 & 0 & 1
\end{array}
\right)}_{m_1n_2 \times (n_1n_2 + m_1m_2)} \, ,\label{hxex}
        \end{align} \\
        \begin{align}
            H_Z &= \begin{pmatrix} \underbrace{\begin{pmatrix} 1 & 0 & 0 \\ 0 & 1 & 0 \\ 0 & 0 & 1 \end{pmatrix}}_{n_1 \times n_1} \otimes \underbrace{\begin{pmatrix} 1 & 1 & 0 \\ 0 & 1 & 1 \end{pmatrix}}_{m_2 \times n_2} & \Bigg| & \underbrace{\begin{pmatrix} 1 & 0 \\ 1 & 0 \\ 1 & 1 \end{pmatrix}}_{n_1 \times m_1} \otimes \underbrace{\begin{pmatrix} 1 & 0 \\ 0 & 1 \end{pmatrix}}_{m_2 \times m_2} \end{pmatrix} \notag \\ &= \underbrace{\left(
\begin{array}{ccccccccccccc}
1 & 1 & 0 & 0 & 0 & 0 & 0 & 0 & 0 & 1 & 0 & 0 & 0 \\
0 & 1 & 1 & 0 & 0 & 0 & 0 & 0 & 0 & 0 & 1 & 0 & 0 \\
0 & 0 & 0 & 1 & 1 & 0 & 0 & 0 & 0 & 1 & 0 & 0 & 0 \\
0 & 0 & 0 & 0 & 1 & 1 & 0 & 0 & 0 & 0 & 1 & 0 & 0 \\
0 & 0 & 0 & 0 & 0 & 0 & 1 & 1 & 0 & 1 & 0 & 1 & 0 \\
0 & 0 & 0 & 0 & 0 & 0 & 0 & 1 & 1 & 0 & 1 & 0 & 1
\end{array}
\right)}_{n_1m_2 \times (n_1n_2 + m_1m_2)} \, .\label{hzex}
        \end{align}
    \end{subequations}
\end{eg}

\begin{fact}[Valid parity-check matrices]\label{comm}
    The $X$ and $Z$ parity checks as defined in \eqref{eq:HGP H_X,H_Z} commute.
    \begin{proof} We can show
    \begin{align}
        H_XH_Z^\transpose &= \begin{pmatrix} H_1 \otimes \ident_{n_2} & \mid & \ident_{m_1} \otimes H_2^\transpose \end{pmatrix} \begin{pmatrix} \ident_{n_1} \otimes H_2^\transpose \\ H_1 \otimes \ident_{m_2} \end{pmatrix} \notag \\
        &= (H_1 \otimes \ident_{n_2})(\ident_{n_1} \otimes H_2^\transpose) + (\ident_{m_1} \otimes H_2^\transpose)(H_1 \otimes \ident_{m_2}) \notag \\
        &= 2(H_1 \otimes H_2^\transpose) \equiv \mathbf{0}\text{ (mod 2)} \, . \label{hxhzt}
    \end{align}
    This is consistent with the composition property of the 3-chain describing the code.
    \end{proof}
\end{fact}

We can show the qLDPC parameters of the HGP code as follows. Let the input codes $H_1$, $H_2$ have maximum row weights $w_1^{(r)}$, $w_2^{(r)}$ and column weights $w_1^{(c)}$, $w_2^{(c)}$. Then, by simple inspection of \eqref{eq:HX HZ rederived}, the maximum row weight and column weight of $H_X$ is
\begin{subequations}
\begin{align}
    w_1^{(r)} + w_2^{(c)}& \, , \label{rw H_X} \\
    \max(w_1^{(c)}, w_2^{(r)})& \label{cw H_X}
\end{align}
\end{subequations}
respectively. For $H_Z$, we get maximum row and column weights
\begin{subequations}
\begin{align}
    w_2^{(r)} + w_1^{(c)}& \, , \label{rw H_Z} \\
    \max(w_2^{(c)}, w_1^{(r)})& \, . \label{cw H_Z}
\end{align}
\end{subequations}
respectively.

\subsection{Code dimension}

The number of data qubits in the code is given by
\begin{align} 
n &= |B_1 \times B_2| + |C_1 \times C_2| \notag \\
&= n_1n_2 + m_1m_2 \, . \label{eq:HGP n}
\end{align}

To analyze the number of logical qubits $k$, we first recite a well-known classical construction called a tensor-product code \cite{Elias_1954}.

\begin{defn}[Tensor-product code]
    Given two classical linear codes $\mathcal{C}_1$ and $\mathcal{C}_2$ with parameters $[n_1,k_1,d_1]$ and $[n_2,k_2,d_2]$ respectively, their tensor-product code $\mathcal{C}_1 \otimes \mathcal{C}_2$ has parameters $[n_1n_2,k_1k_2,d_1d_2]$. Codewords of $\mathcal{C}_1 \otimes \mathcal{C}_2$ can be arranged as a $n_1 \times n_2$ matrix whose column and row vectors are in $\mathcal{C}_1$ and $\mathcal{C}_2$ respectively.
\end{defn}

Parity checks for the tensor-product code are given by $H_1 \otimes \ident_{n_2}$ and $\ident_{n_1} \otimes H_2$. Since the code dimension of a tensor-product code is $k_1k_2$, we have the following fact.

\begin{fact}\label{fact:tensor code rank} 
    For the input parity-check matrices $H_1$ and $H_2$ in the tensor-product code, we have
    \begin{align}
        \rank H_{\rm TP} = \rank \begin{pmatrix} H_1 \otimes \ident_{n_2} \\ \ident_{n_1} \otimes H_2 \end{pmatrix} = n_1n_2 - k_1k_2 \, .
    \end{align}
\end{fact}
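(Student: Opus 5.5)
The statement to prove is Fact~\ref{fact:tensor code rank}: $\rank H_{\rm TP} = n_1n_2 - k_1k_2$. I would prove it by identifying $\ker H_{\rm TP}$ with the tensor-product code $\mathcal{C}_1\otimes\mathcal{C}_2$ and computing its dimension. The rank then follows from rank--nullity applied to $H_{\rm TP}$, which has $n_1n_2$ columns.

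\textbf{Step 1: kernel as matrices.} Identify a vector $\mathbf{x}\in\F_2^{n_1n_2}$ with an $n_1\times n_2$ matrix $M$, using row index from the first tensor factor and column index from the second, as in the paper's layout convention. Under this identification,
\begin{itemize}
\item $(H_1\otimes\ident_{n_2})\mathbf{x}$ corresponds to $H_1M$;
\item $(\ident_{n_1}\otimes H_2)\mathbf{x}$ corresponds to $MH_2^\transp$.
\end{itemize}
Hence $\mathbf{x}\in\ker H_{\rm TP}$ if and only if every column of $M$ lies in $\ker H_1=\mathcal{C}_1$ and every row of $M$ lies in $\ker H_2=\mathcal{C}_2$. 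This is exactly the description of $\mathcal{C}_1\otimes\mathcal{C}_2$ in the definition above.

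\textbf{Step 2: dimension of the kernel.} I claim $\ker H_{\rm TP}=\{G_1^\transp A G_2 : A\in\F_2^{k_1\times k_2}\}$.
\begin{itemize}
\item \emph{Containment.} Any such matrix satisfies $H_1G_1^\transp A G_2=0$ and $G_1^\transp A G_2 H_2^\transp=0$, so it lies in the kernel.
\item \emph{Conversely.} Suppose $H_1M=0$. Then each column of $M$ lies in $\rs G_1$, so $M=G_1^\transp B$ for some $B\in\F_2^{k_1\times n_2}$. Since $G_1^\transp$ has full column rank, $G_1^\transp$ is injective. Therefore $G_1^\transp BH_2^\transp=0$ forces $BH_2^\transp=0$. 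So each row of $B$ lies in $\mathcal{C}_2$, giving $B=AG_2$.
\item \emph{Injectivity.} The map $A\mapsto G_1^\transp AG_2$ is injective, because $G_1^\transp$ is injective on the left and $G_2$ has full row rank on the right. (Equivalently, $E_1^\mathscr{B}G_1^\transp=\ident$ and $G_2E_2^{\mathscr{B}\transp}=\ident$ recover $A$.)
\end{itemize}
So $\dim\ker H_{\rm TP}=k_1k_2$.

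\textbf{Step 3: conclude.} Rank--nullity gives $\rank H_{\rm TP}=n_1n_2-k_1k_2$.

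The only step needing care is the ``conversely'' part of Step 2: showing every matrix with codeword columns and codeword rows factors as $G_1^\transp A G_2$. The two-stage factorization above handles it, relying on $G_1$ having full row rank so that $G_1^\transp$ can be cancelled. Everything else is bookkeeping of the vectorization convention.
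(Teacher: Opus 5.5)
Your proof is correct and takes essentially the same route as the paper: identify $\ker H_{\rm TP}$ with the tensor-product code $\mathcal{C}_1\otimes\mathcal{C}_2$, then apply rank--nullity. The only difference is that the paper takes $\dim(\mathcal{C}_1\otimes\mathcal{C}_2)=k_1k_2$ as known from the cited definition of tensor-product codes, whereas your factorization $M=G_1^\transp A G_2$ proves it.
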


The usual formula for the code dimension of a CSS code is
\begin{align}\label{eq:CSS k app}
    k = n - \rank H_X - \rank H_Z \, .
\end{align}
Combining \eqref{eq:HGP H_X,H_Z} with Fact \ref{fact:tensor code rank} and the fact that $\rank H = \rank H^\transpose$, we obtain
\begin{subequations}\label{eq:HGP rank H_X,H_Z}
\begin{align}
    \rank H_X &= m_1n_2 - k^\transpose_1k^{}_2  \\
    \rank H_Z &= n_1m_2 - k^{}_1k^\transpose_2 \, .
\end{align}
\end{subequations}
Plugging in \eqref{eq:HGP n} and \eqref{eq:HGP rank H_X,H_Z} into \eqref{eq:CSS k app}, we arrive at the formula for the code dimension of an HGP code in terms of those of its input codes:
\begin{align}\label{eq:HGP k app}
    k_{\mathpzc{H}} &= n_1n_2 + m_1m_2 - m_1n_2 - n_1m_2 + k^\transpose_1k^{}_2 + k^{}_1k^\transpose_2  \notag \\
    &= (n_1-m_1)(n_2-m_2) + k^\transpose_1k^{}_2 + k^{}_1k^\transpose_2  \notag \\
    &= \left( k^{}_1 - k^\transpose_1 \right) \left( k^{}_2 - k^\transpose_2 \right) + k^\transpose_1k^{}_2 + k^{}_1k^\transpose_2  \notag \\
    &= k_1k_2 + k^\transpose_1k^\transpose_2 \, ,
\end{align}
where in the penultimate line we used the rank-nullity relation $n-m=k-k^\transpose$.

Alternatively, we could use the HGP algebraic formalism to show the same thing. We use the K\"unneth formula, which states the following:
\begin{subequations}
    \begin{align}
        \mathcal{H}_\ell(X \otimes Y) &\simeq \bigoplus_{\{i, j\ \mid\ i+j=\ell\}} \mathcal{H}_i(X) \otimes \mathcal{H}_j(Y)\, ,\label{eq:kunnethX} \\
        \mathcal{H}^\ell(X\otimes Y) &\simeq \bigoplus_{\{i, j\ \mid\ i+j=\ell\}} \mathcal{H}^i(X) \otimes \mathcal{H}^j(Y)\label{eq:kunnethZ}
    \end{align}
\end{subequations}
where $X$ and $Y$ are chain, cochain complexes in \eqref{eq:kunnethX}, \eqref{eq:kunnethZ} respectively.

Then, the same result as \eqref{eq:HGP k app} can be shown by applying the K\"unneth formula \eqref{eq:kunnethX} to \eqref{eq:Xlogicals hom}, using \eqref{eq:C1,C2 chains} and \eqref{eq:Q=C1xC2} (note that the same result is obtained from applying \eqref{eq:kunnethZ})
\begin{align}
    k &= \dim \bar{\mathcal{X}}(\mathcal{Q}_\mathpzc{H})\notag \\
    &= \dim \mathcal{H}_1(\mathcal{C}^\transp_1\otimes\mathcal{C}_2) \notag \\
    &= \dim\left(\mathcal{H}_0(\mathcal{C}^\transp_1) \otimes \mathcal{H}_1(\mathcal{C}_2) \right) + \dim\left( \mathcal{H}_1(\mathcal{C}^\transp_1) \otimes \mathcal{H}_0(\mathcal{C}_2) \right)  \notag \\
    &= \dim\left(\dfrac{\F^{n_1}_2}{\im H^\transp_1}\right) \cdot \dim\ker H_2 \notag \\ &\qquad\qquad + \dim\ker H^\transp_1 \cdot \dim\left(\dfrac{\F^{m_2}_2}{\im H_2}\right) \notag \\
    &= \dim\ker H_1 \cdot \dim\ker H_2 + \dim\ker H^\transp_1 \cdot \dim\ker H^\transp_2  \notag \\
    &= k^{}_1k^{}_2 + k^\transp_1 k^\transp_2 \, .
\end{align}

\subsection{Logical operators}

    We next construct the logical $\bar{X}$ and $\bar{Z}$ Pauli operator structure for the $k$ logical qubits. We do this straight from the algebraic CSS-homology correspondence of Definition \ref{def:CSS chain complex}.
\begin{subequations}
    \begin{align}
        \bar{\mathcal{X}}(\mathcal{Q}_\mathpzc{H})
        &= \mathcal{H}_1(\mathcal{C}_1^\transpose \otimes \mathcal{C}_2) \notag \\
        &\simeq \big(\mathcal{H}_0(\mathcal{C}_1^\transpose) \otimes \mathcal{H}_1(\mathcal{C}_2)\big) \oplus \big(\mathcal{H}_1(\mathcal{C}_1^\transpose) \otimes \mathcal{H}_0(\mathcal{C}_2) \big) \notag \\
        &= \left( \dfrac{\F_2^{n_1}}{\im H_1^\transpose} \otimes \ker H_2 \right) \oplus \left( \ker H_1^\transpose \otimes \dfrac{\F_2^{m_2}}{\im H_2} \right) \notag \, , \label{Xkunneth} \\
    \end{align}
    \begin{align}
        \bar{\mathcal{Z}}(\mathcal{Q}_\mathpzc{H})
        &= \mathcal{H}^1(\mathcal{C}_1^\transpose \otimes \mathcal{C}_2) \notag \\
        &\simeq \big(\mathcal{H}^0(\mathcal{C}_1^\transpose) \otimes \mathcal{H}^1(\mathcal{C}_2)\big) \oplus \big(\mathcal{H}^1(\mathcal{C}_1^\transpose) \otimes \mathcal{H}^0(\mathcal{C}_2) \big) \notag \\
        &= \left( \ker H_1 \otimes \dfrac{\F_2^{n_2}}{\im H_2^\transpose} \right) \oplus \left( \dfrac{\F_2^{m_1}}{\im H_1} \otimes \ker H_2^\transpose \right) \notag \, . \label{Zkunneth} \\
    \end{align}
\end{subequations}
    The first (second) term after the $\oplus$ defines the logical operator structure on the bit- (check-) type qubits. Using the notation that $\cdot|_\mathscr{B}$ ($\cdot|_\mathscr{C}$) is the restriction to the first $n_1n_2$ ($m_1m_2$) or bit- (check-) type qubits, we have
\begin{subequations}\label{eq:kunneth logical structure}
    \begin{align}
        \rs G^\mathscr{B}_X|_\mathscr{B} &\simeq \dfrac{\F_2^{n_1}}{\im H_1^\transpose} \otimes \ker H_2 \, ,\\
        \rs G^\mathscr{B}_Z|_\mathscr{B} &\simeq \ker H_1 \otimes \dfrac{\F_2^{n_2}}{\im H_2^\transpose} \, , \\
        \rs G^\mathscr{C}_X|_\mathscr{C} &\simeq \ker H_1^\transpose \otimes \dfrac{\F_2^{m_2}}{\im H_2} \, ,\\
        \rs G^\mathscr{C}_Z|_\mathscr{C} &\simeq \dfrac{\F_2^{m_1}}{\im H_1} \otimes \ker H_2^\transpose
    \end{align} 
\end{subequations}

    Now let us construct an explicit form for these operators. Define the $n_1$-dimensional unit row vector $\mathbf{e}_j$ to be zero in every column except column $j$:
    \begin{align}
      \includegraphics[scale=1]{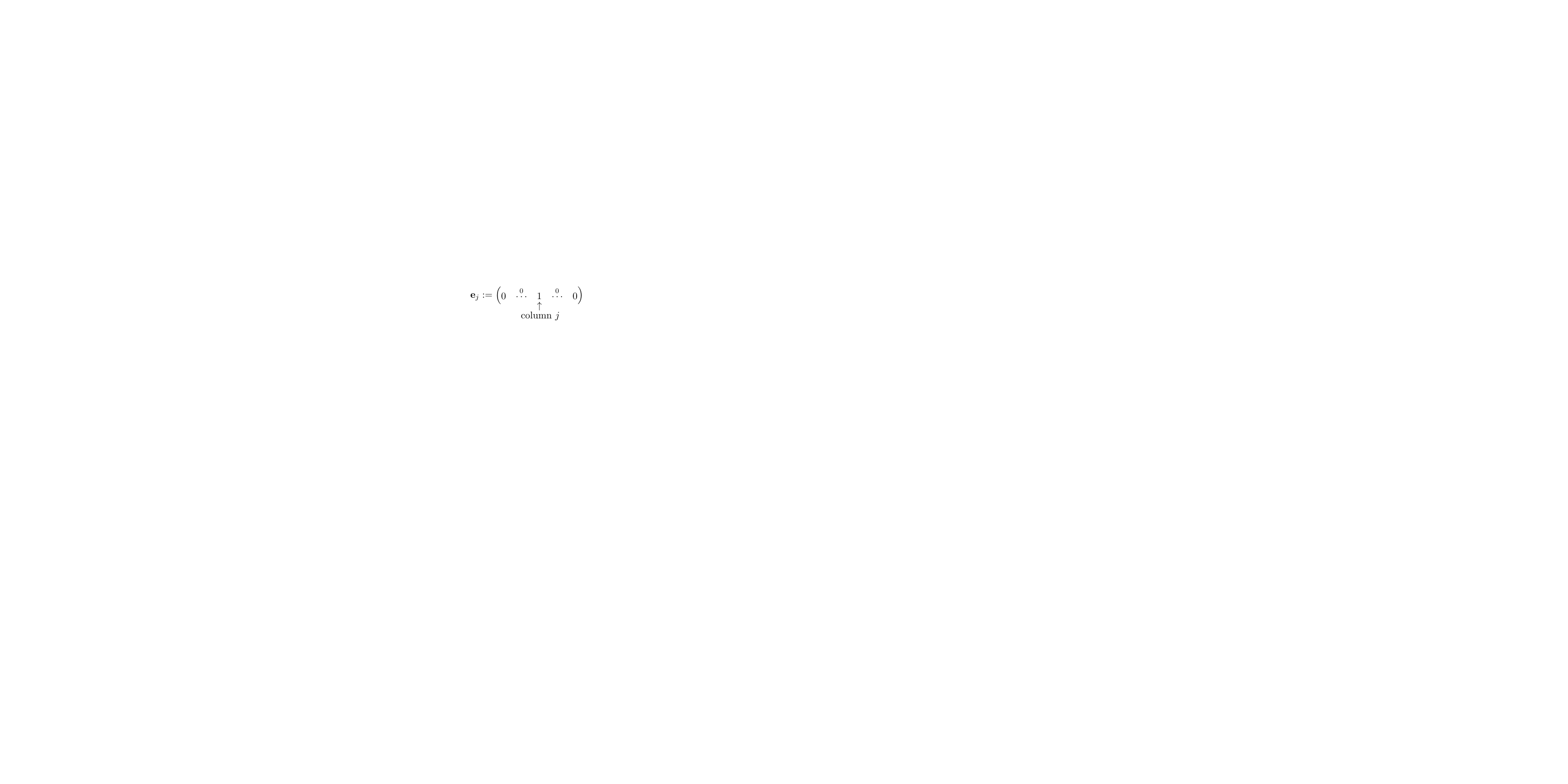} \label{ej}
    \end{align}

    \begin{lem}\label{ejlem}
        For $j \in \{1, ..., k_1\}$, $\mathbf{e}_j \notin \rs H_1$.
        \begin{proof}
            By column permuting and row reducing $H_1$ into its canonical form shown below (for some matrix $A_1$), one can see that it is not possible to take linear combinations of the rows in $H_1$ and get all 0 in the last $\operatorname{rank}H_1$ columns:
            \begin{align}
              \includegraphics[width=0.4\textwidth]{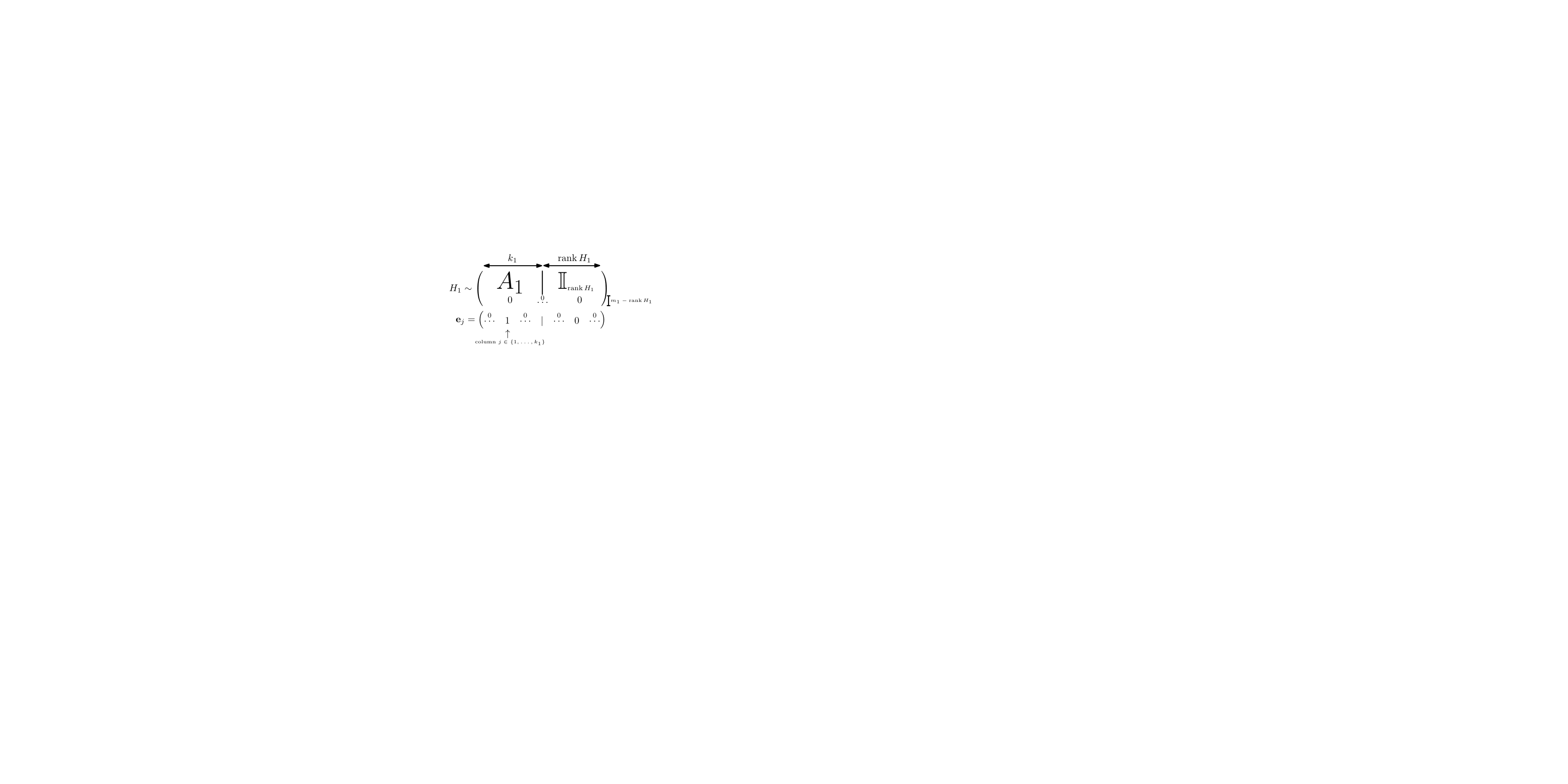} \label{h1}
            \end{align}
        \end{proof}
    \end{lem}

   Define the $k_1 \times n_1$ ($k_2 \times n_2$) matrices $E_1^\mathscr{B}$ ($E_2^\mathscr{B}$) whose rows are $\mathbf{e}_j$ for $j = 1, \dots, k_1$ ($j = 1, \dots, k_2$) so that the all rows of $E_1^\mathscr{B}$, ($E_2^\mathscr{B}$) are not contained in $\rs H_1$, ($\rs H_2$). Let $G_1 \in \F_2^{k_1 \times n_1}$ ($G_2 \in \F_2^{k_2 \times n_2}$) be the generator matrices of $\mathcal{C}_1$ ($\mathcal{C}_2$); i.e. $H_1G_1^\transpose = \mathbf{0}$ ($H_2G_2^\transpose = \mathbf{0}$). Let the canonical basis of the $\bar{X}^\mathscr{B}$ and $\bar{Z}^\mathscr{B}$ logical operators on the $k_1k_2$ \Bsectorname{} logical qubits be notated by $G_X^\mathscr{B}\in \F_2^{k_1k_2 \times n}$ and $G_Z^\mathscr{B}\in \F_2^{k_1k_2 \times n}$. 

   Now, given that $\rs G_i = \ker H_i$ ($\rs G_{i, T} = \ker H_i^\transpose$) and choosing $E_i^\mathscr{B} \in \F_2^{k_i\times n_i}$ such that $\mathbb{F}_2^{n_i} = \rs H_i \oplus \rs E_i^\mathscr{B}$ and $E_i^\mathscr{B}G_i^\transp = \ident_{k_i}$ ($i=1,2$), we can define the explicit logical operator forms for the \Bsectorname{} qubits as
    \begin{subequations} \label{eq: bit-type ops}
        \begin{align}
            G_X^\mathscr{B} &= \begin{pmatrix} E_1^\mathscr{B} \otimes G_2 & \mid & \mathbf{0}_{k_1k_2 \times m_1m_2}\end{pmatrix} \, , \label{gxB} \\
            G_Z^\mathscr{B} &= \begin{pmatrix} G_1 \otimes E_2^\mathscr{B} & \mid & \mathbf{0}_{k_1k_2 \times m_1m_2}\end{pmatrix} \, . \label{gzB}
        \end{align}
    \end{subequations}
    The zeros indicate that there is no logical support on the \Csectorname{} qubits. It is evident from the structure that $\bar{X}^\mathscr{B}$ ($\bar{Z}^\mathscr{B}$) logicals look like the codewords of $G_2$ ($G_1$). Similarly, the canonical basis of $\bar{X}^\mathscr{C}$ and $\bar{Z}^\mathscr{C}$ logical operators on the \Csectorname{} logical qubits, $G_X^\mathscr{C}\in \F_2^{k_1^\transpose k_2^\transpose \times n}$ and $G_Z^\mathscr{C}\in \F_2^{k_1^\transpose k_2^\transpose \times n}$, is defined by the rows of
    \begin{subequations} \label{eq: check-type ops}
    \begin{align}
        G_X^\mathscr{C} &= \begin{pmatrix} \mathbf{0}_{k_1^\transpose k_2^\transpose \times n_1n_2} & \mid & G_{1, T} \otimes E_2^\mathscr{C} \end{pmatrix} \label{gxC} \\
        G_Z^\mathscr{C} &= \begin{pmatrix} \mathbf{0}_{k_1^\transpose k_2^\transpose \times n_1n_2} & \mid & E_1^\mathscr{C} \otimes G_{2, T} \end{pmatrix} \label{gzC}
    \end{align}
    \end{subequations}
    for some $G_{i, T} \in \F_2^{k_i^\transpose\times m_i}$ that satisfies $H_i^\transpose G_{i, T}^\transpose = \mathbf{0}$ and $E_i^\mathscr{C} \in \F_2^{k_i^\transpose \times m_i}$ that satisfies $\mathbb{F}_2^{m_i} = \rs H_i^\transp \oplus \rs E_i^\mathscr{C}$ ($i=1, 2$), where $E_i^\mathscr{C}G_{i, T}^\transp = \ident_{k_i^\transp}$. Again, the zeros indicate that there is no logical support on the \Bsectorname{} qubits. These definitions are clearly consistent with the results from \eqref{eq:kunneth logical structure}.

    \begin{eg}
        For Figure~\ref{hgp}, note that $k_i = n_i - \rank H_i = 1$ ($i = 1, 2$). Then, we have
        \begin{subequations}
            \begin{align}
                G_X^\mathscr{B} &= \begin{pmatrix} \underbrace{\begin{pmatrix} 1 & 0 & 0 \end{pmatrix}}_{k_1 \times n_1} \otimes \underbrace{\begin{pmatrix} 1 & 1 & 1 \end{pmatrix}}_{k_2 \times n_2} & \Big| & \underbrace{\begin{pmatrix} 0 & 0 & 0 & 0 \end{pmatrix}}_{k_1k_2 \times m_1m_2} \end{pmatrix} \\
                &= \underbrace{\left( \begin{array}{cccccccccccccc} 1 & 1 & 1 & 0 & 0 & 0 & 0 & 0 & 0 & | & 0 & 0 & 0 & 0\end{array} \right)}_{k_1k_2 \times (n_1n_2 + m_1m_2)}\, ,\label{gxb}
            \end{align} \\
            \begin{align}
                G_Z^\mathscr{B} &= \begin{pmatrix} \underbrace{\begin{pmatrix} 1 & 1 & 0 \end{pmatrix}}_{k_1 \times n_1} \otimes \underbrace{\begin{pmatrix} 1 & 0 & 0 \end{pmatrix}}_{k_2 \times n_2} & \Big| & \underbrace{\begin{pmatrix} 0 & 0 & 0 & 0 \end{pmatrix}}_{k_1k_2 \times m_1m_2} \end{pmatrix} \\
                &= \underbrace{\left( \begin{array}{cccccccccccccc} 1 & 0 & 0 & 1 & 0 & 0 & 0 & 0 & 0 & | & 0 & 0 & 0 & 0\end{array} \right)}_{k_1k_2 \times (n_1n_2 + m_1m_2)} \, .
            \end{align}
        \end{subequations}
        $G_X^\mathscr{C}$ and $G_Z^\mathscr{C}$ are both trivial since $G_{1, T}$ and $G_{2, T}$ are both trivial. Hence, there are no logical operators on the \Csectorname{} qubits since $k_1^\transpose = k_2^\transpose = 0$. These logical operators are shown in Figure~\ref{hgplogis}. Note that the qubit labeling goes left-to-right then top-to-bottom. 
        \begin{figure}[htbp]
          \centerline{\includegraphics[width=0.5\textwidth]{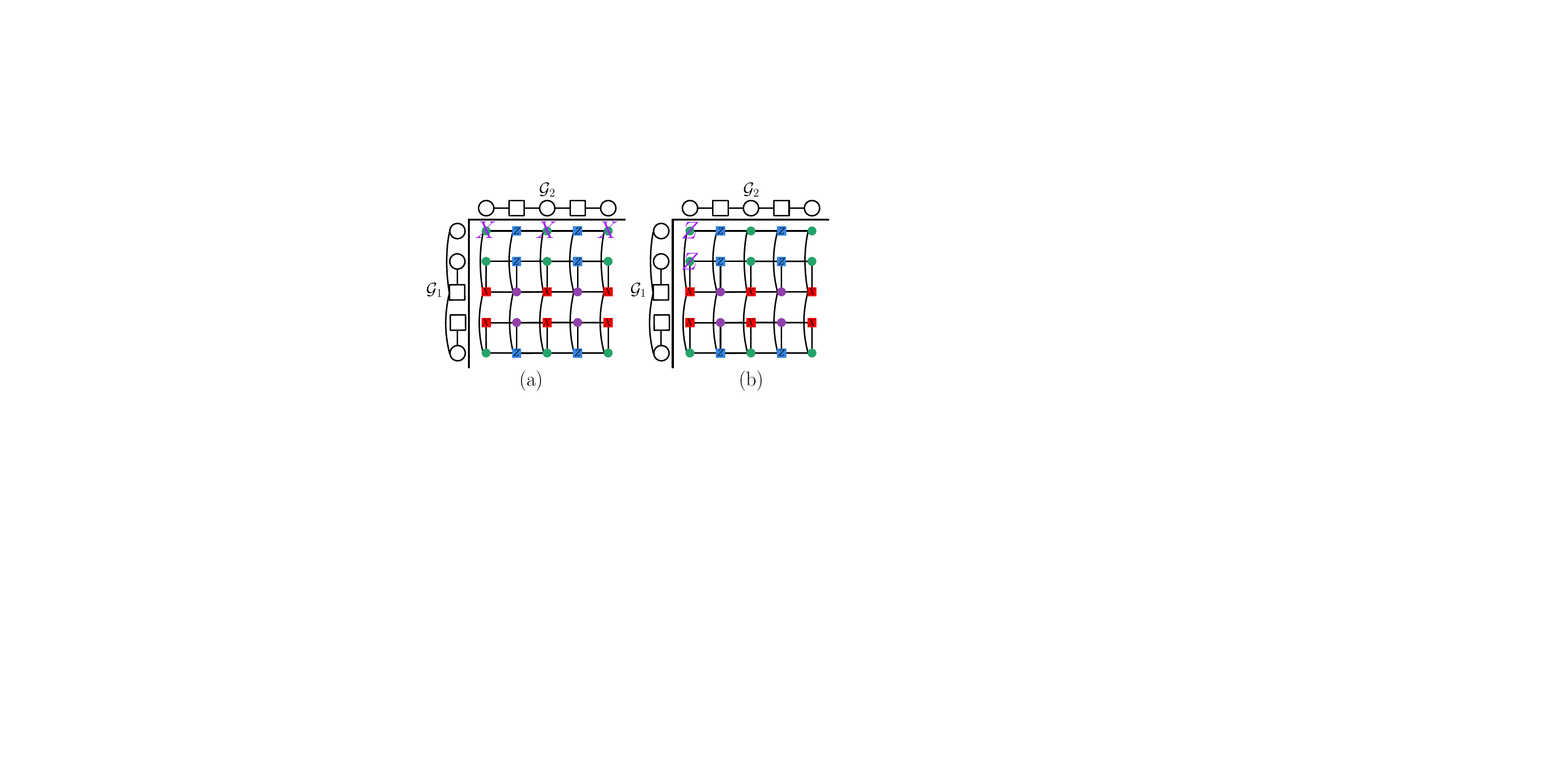}}
          \caption{For the HGP example in Figure~\ref{hgp}, \textbf{(a)} Canonical $\bar{X}^\mathscr{B}$ logical operator on \Bsectorname{} qubits; \textbf{(b)} Canonical $\bar{Z}^\mathscr{B}$ logical operator on \Bsectorname{} qubits.}
          \label{hgplogis}
        \end{figure}
    \end{eg}
    
    \begin{fact}[Anticommutation of logicals]\label{anticomm}
    The logical operators as defined in \eqref{eq: bit-type ops} and \eqref{eq: check-type ops} anticommute when acting on the same qubit and commute when acting on different qubits. Equivalently, for any logical $\bar{X}$ and $\bar{Z}$ operator on any logical qubits $i$, $j$, $\bar{X}_i\bar{Z}_j = (-1)^{\delta_{ij}}\bar{Z}_j\bar{X}_i$ for $i, j \in \{1, ..., k\}$, where $\delta_{ij} = \begin{cases} 1, & \text{if } i = j \\ 0, & \text{if } i \neq j \end{cases}$ is the Kronecker delta.
    \begin{proof} 
        We can show
        \begin{align}
            G_X^\mathscr{B}(G_Z^\mathscr{B})^\transp
            &= \begin{pmatrix} E_1^\mathscr{B} \otimes G_2 & \mid & \mathbf{0}_{k_1k_2 \times m_1 m_2}\end{pmatrix}
            \begin{pmatrix} G_1^\transp \otimes (E_2^\mathscr{B})^\transp \\ \mathbf{0}_{m_1 m_2 \times k_1 k_2} \end{pmatrix} \notag \\
            &= \left(E_1^\mathscr{B} \otimes G_2\right)
            \left(G_1^\transp \otimes (E_2^\mathscr{B})^\transp\right) \notag \\
            &= E_1^\mathscr{B}G_1^\transp \otimes G_2(E_2^\mathscr{B})^\transp \notag \\
            &= \begin{pmatrix} \ident_{k_1} & \mid & \mathbf{0}_{k_1 \times r_1}\end{pmatrix}
            \begin{pmatrix} \ident_{k_1} \\ A_1 \end{pmatrix}
            \otimes
            \begin{pmatrix} \ident_{k_2} & \mid & A_2^\transp \end{pmatrix}
            \begin{pmatrix} \ident_{k_2} \\ \mathbf{0}_{r_2 \times k_2} \end{pmatrix} \notag \\
            &= \ident_{k_1 k_2} \, . \label{gxBgzBT}
        \end{align} 
        
        Similarly,
        \begin{align}
            G_X^\mathscr{C}(G_Z^\mathscr{C})^\transp
            &= \begin{pmatrix}
                   \mathbf{0}_{k_1^\transp k_2^\transp \times n_1 n_2}
                   & \mid &
                   G_{1,T} \otimes E_2^\mathscr{C}
               \end{pmatrix}
               \begin{pmatrix}
                   \mathbf{0}_{n_1 n_2 \times k_1^\transp k_2^\transp} \\
                   (E_1^\mathscr{C})^\transp \otimes G_{2,T}^\transp
               \end{pmatrix} \notag \\
            &= \left(G_{1,T} \otimes E_2^\mathscr{C}\right)
               \left((E_1^\mathscr{C})^\transp \otimes G_{2,T}^\transp\right) \notag \\
            &= \left(G_{1,T}(E_1^\mathscr{C})^\transp\right)
               \otimes
               \left(E_2^\mathscr{C}G_{2,T}^\transp\right) \notag \\
            &= \begin{pmatrix}
                    \ident_{k_1^\transp} & \mid & B_1^\transp
               \end{pmatrix}
               \begin{pmatrix}
                    \ident_{k_1^\transp} \\
                    \mathbf{0}_{r_1 \times k_1^\transp}
               \end{pmatrix}
               \otimes
               \begin{pmatrix}
                    \ident_{k_2^\transp} & \mid & \mathbf{0}_{k_2^\transp \times r_2}
               \end{pmatrix}
               \begin{pmatrix}
                    \ident_{k_2^\transp} \\
                    B_2
               \end{pmatrix} \notag \\
            &= \ident_{k_1^\transp k_2^\transp} \, . \label{gxCgzCT}
        \end{align}
        
        Here, we use the fact that $H_i$ and $H_i^\transp$ can be put into the forms
        \begin{subequations}\label{Hcanonical}
        \begin{align}
            H_i &\sim
            \begin{pmatrix}
                A_i & \mid & \ident_{r_i} \\
                \mathbf{0}_{k_i^\transp \times k_i} & \mid & \mathbf{0}_{k_i^\transp \times r_i}
            \end{pmatrix} \, ,\\
            H_i^\transp &\sim
            \begin{pmatrix}
                B_i & \mid & \ident_{r_i} \\
                \mathbf{0}_{k_i \times k_i^\transp} & \mid & \mathbf{0}_{k_i \times r_i}
            \end{pmatrix} \, ,
        \end{align}
        \end{subequations}
        (as also done in \eqref{h1}) for an $r_i \times k_i$ matrix $A_i$ and an $r_i \times k_i^\transp$ matrix $B_i$, where $r_i = \operatorname{rank}H_i = \operatorname{rank}H_i^\transp$ ($i = 1, 2$). Given this form,
        $G_1$, $G_2$, $G_{1,T}$, and $G_{2,T}$ can be taken to be
        \begin{subequations}\label{Gcanonical}
        \begin{align}
            G_1 &= \begin{pmatrix} \ident_{k_1} & \mid & A_1^\transp \end{pmatrix}\, , \\  
            G_2 &= \begin{pmatrix} \ident_{k_2} & \mid & A_2^\transp \end{pmatrix}\, , \\ 
            G_{1,T} &= \begin{pmatrix} \ident_{k_1^\transp} & \mid & B_1^\transp \end{pmatrix}\, ,\\ 
            G_{2,T} &= \begin{pmatrix} \ident_{k_2^\transp} & \mid & B_2^\transp \end{pmatrix} \, .  
        \end{align}
        \end{subequations}
        These satisfy
        \begin{subequations}
        \begin{align}
            H_1G_1^\transp
            &= \begin{pmatrix} A_1 & \mid & \ident_{r_1} \\ \mathbf{0}_{k_1^\transp \times k_1} & \mid & \mathbf{0}_{k_1^\transp \times r_1} \end{pmatrix}
            \begin{pmatrix} \ident_{k_1} \\ A_1 \end{pmatrix}
            = \begin{pmatrix} 2A_1 \\ \mathbf{0} \end{pmatrix}
            \equiv \mathbf{0}\, , \label{h1g1t} \\
            H_2G_2^\transp
            &= \begin{pmatrix} A_2 & \mid & \ident_{r_2} \\ \mathbf{0}_{k_2^\transp \times k_2} & \mid & \mathbf{0}_{k_2^\transp \times r_2} \end{pmatrix}
            \begin{pmatrix} \ident_{k_2} \\ A_2 \end{pmatrix}
            = \begin{pmatrix} 2A_2 \\ \mathbf{0} \end{pmatrix}
            \equiv \mathbf{0}\, , \label{h2g2t} \\
            H_1^\transp G_{1,T}^\transp
            &= \begin{pmatrix} B_1 & \mid & \ident_{r_1} \\ \mathbf{0}_{k_1 \times k_1^\transp} & \mid & \mathbf{0}_{k_1 \times r_1} \end{pmatrix}
            \begin{pmatrix} \ident_{k_1^\transp} \\ B_1 \end{pmatrix}
            = \begin{pmatrix} 2B_1 \\ \mathbf{0} \end{pmatrix}
            \equiv \mathbf{0}\, , \label{h1Tg1Tt} \\
            H_2^\transp G_{2,T}^\transp
            &= \begin{pmatrix} B_2 & \mid & \ident_{r_2} \\ \mathbf{0}_{k_2 \times k_2^\transp} & \mid & \mathbf{0}_{k_2 \times r_2} \end{pmatrix}
            \begin{pmatrix} \ident_{k_2^\transp} \\ B_2 \end{pmatrix}
            = \begin{pmatrix} 2B_2 \\ \mathbf{0} \end{pmatrix}
            \equiv \mathbf{0}\, . \label{h2Tg2Tt}
        \end{align}
        \end{subequations}
    \end{proof}
\end{fact}
    
    \begin{fact}[Valid basis of logicals]\label{valid logi}
        We show that the $\bar{X}^\mathscr{B}$ logical operators \textbf{(a)} satisfy the $Z$ parity-checks and \textbf{(b)} cannot be generated by $X$ stabilizers, and claim that the same can be shown for $\bar{Z}^\mathscr{B}$, $\bar{X}^\mathscr{C}$, and $\bar{Z}^\mathscr{C}$ logical operators.
        \begin{proof}
        Let $H_X^\mathscr{B}$ ($H_Z^\mathscr{B}$) contain the $X$ ($Z$) checks on \Bsectorname{} qubits only while retaining its dimension. Mathematically,
        \begin{subequations}
            \begin{align}
                H_X^\mathscr{B} &= \begin{pmatrix} H_1 \otimes \ident_{n_2} & \mid &  \mathbf{0}_{m_1n_2\times m_1 m_2} \end{pmatrix} \, , \label{eq:HGP H_XB} \\
                H_Z^\mathscr{B} &= \begin{pmatrix} \ident_{n_1} \otimes H_2 & \mid & \mathbf{0}_{n_1m_2\times m_1 m_2} \end{pmatrix} \, . \label{eq:HGP H_ZB}
            \end{align}
        \end{subequations}
        Define the restriction of $\bar{X}$ and $\bar{Z}$ logicals to \Bsectorname{} qubits as \eqref{eq: bit-type ops}.
        \begin{enumerate}[label=(\alph*)]
            \item We can simply show that
            \begin{align} H_Z {G_X^\mathscr{B}}^\transpose 
                &= \begin{pmatrix} \ident_{n_1} \otimes H_2 & \mid & H_1^\transpose \otimes \ident_{m_2} \end{pmatrix} 
                   \begin{pmatrix} {E_1^\mathscr{B}}^\transpose \otimes G_2^\transpose \\ \mathbf{0}_{m_1 m_2 \times k_1 k_2} \end{pmatrix} \notag \\
                &= {E_1^\mathscr{B}}^\transpose \otimes H_2G_2^\transpose \notag \\
                &= \mathbf{0} \, .  \label{hzgxt} 
            \end{align} 
            \item We have that
                \begin{align} \rs G_X^\mathscr{B} 
                &= \rs  \begin{pmatrix} E_1^\mathscr{B} \otimes G_2 & \mid & \mathbf{0}_{k_1k_2 \times m_1m_2} \end{pmatrix} \notag \\
                &= \bigl\{ \begin{pmatrix} v & \mid & \mathbf{0}_{1\times m_1 m_2} \end{pmatrix} \;:\; v \in \rs E_1^\mathscr{B} \otimes \rs G_2 \bigr\} \label{rse1rsg2} \end{align}
                whose span is not contained in
                \begin{align}\rs H_X^\mathscr{B} 
                &= \rs  \begin{pmatrix} H_1 \otimes \ident_{n_2} & \mid & \mathbf{0}_{m_1n_2 \times m_1 m_2} \end{pmatrix} \notag \\
                &= \bigl\{ \begin{pmatrix} v & \mid & \mathbf{0}_{1\times m_1 m_2} \end{pmatrix} \;:\; v \in \rs H_1 \otimes \F^{n_2} \bigr\} \label{rsgxrshxb} \end{align} 
                by definition, as the rows of $E_1^\mathscr{B}$ are not contained in $\rs  H_1$. 
        \end{enumerate}
        \end{proof}
    \end{fact}

\subsection{Minimum distances}

    Here we provide a proof of Proposition~\ref{prop:HGP logical sector weights} for completeness; from here, the argument for the minimum distances $d_Z = \min\left(d^{}_1,d^\transp_2\right)$ and $d_X = \min\left(d^\transp_1,d^{}_2\right)$ follow from Section~\ref{subsec:HGP review}. 
    
    For a binary vector $\mathbf{x} \in \mathbb{F}^{n}_2$, whose nonzero elements denote the support of a logical operator, let $\mathbf{x}|_\mathscr{B} \in \F^{n_1n_2}_2$ ($\mathbf{x}|_\mathscr{C} \in \F^{m_1m_2}_2$) be its restriction to the \Bsectorname{} (\Csectorname{}) qubits. For example, let \eqref{gxb} be the $\mathbf{x}$ representing the logical operator in Figure~\ref{hgplogis}(a). Then, $\mathbf{x}|_\mathscr{B}$ would be the part on the left of the vertical line, $\begin{pmatrix} 1 & 1 & 1 & 0 & 0 & 0 & 0 & 0 & 0 \end{pmatrix}^\transp$, and $\mathbf{x}|_\mathscr{C}$ would be the part on the right of the vertical line, $\begin{pmatrix} 0 & 0 & 0 & 0 \end{pmatrix}^\transp$.

    Furthermore, let $\abs{\mathbf{x}}_\mathscr{B}^r$ and $\abs{\mathbf{x}}_\mathscr{B}^c$ ($\abs{\mathbf{x}}_\mathscr{C}^r$ and $\abs{\mathbf{x}}_\mathscr{C}^c$) denote the number of rows and columns within the \Bsectorname{} (\Csectorname{}) qubits that $\mathbf{x}|_\mathscr{B}$ ($\mathbf{x}|_\mathscr{C}$) intersects respectively. For the example $\mathbf{x}$ previously defined in \eqref{gxb}, we have $\abs{\mathbf{x}}_\mathscr{B}^r = 1$, $\abs{\mathbf{x}}_\mathscr{B}^c = 3$, and $\abs{\mathbf{x}}_\mathscr{C}^r = \abs{\mathbf{x}}_\mathscr{C}^c = 0$ (see Figure~\ref{hgplogis}).
    
    Note that $\abs{\mathbf{x}} \geq \abs{\mathbf{x}|_\mathscr{B}} \geq \abs{\mathbf{x}}_\mathscr{B}^r$ and $\abs{\mathbf{x}} \geq \abs{\mathbf{x}|_\mathscr{B}} \geq \abs{\mathbf{x}}_\mathscr{B}^c$. Likewise, $\abs{\mathbf{x}} \geq \abs{\mathbf{x}|_\mathscr{C}} \geq \abs{\mathbf{x}}_\mathscr{C}^r$ and $\abs{\mathbf{x}} \geq \abs{\mathbf{x}|_\mathscr{C}} \geq \abs{\mathbf{x}}_\mathscr{C}^c$. The following result, first shown by Quintavalle and Campbell (\cite{quintavalle2022reshape}, Prop. 2), concerns the structure of logical operators within the \Bsectorname{} and \Csectorname{} qubits.
    
    \begin{prop*}[Proposition~\ref{prop:HGP logical sector weights}]
    \label{prop:HGP logical sector weights app}
        For any nontrivial logical $\bar{X}$ with $X$-support $\mathbf{x} \in \ker{H_Z}\setminus \rs{H_X}$, we have
        \begin{align}\label{eq:B sector weights app}
            \abs{\mathbf{x}}_\mathscr{B}^c \geq d_2 \quad\text{or}\quad \abs{\mathbf{x}}_\mathscr{C}^r \geq d^\transp_1 \, .
        \end{align}
        Likewise, for any nontrivial logical $\bar{Z}$ with $Z$-support $\mathbf{z} \in \ker{H_X}\setminus \rs{H_Z}$,
        \begin{align}\label{eq:C sector weights app}
            \abs{\mathbf{z}}_\mathscr{B}^r \geq d_1 \quad\text{or}\quad \abs{\mathbf{z}}_\mathscr{C}^c \geq d^\transp_2 \, .
        \end{align}
    \end{prop*}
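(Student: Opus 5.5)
We will prove the $X$ statement \eqref{eq:B sector weights app}; the $Z$ statement then follows by exchanging the roles of $H_X$ and $H_Z$, which amounts to transposing the two input codes. We argue by contrapositive. Let $\mathbf{x}\in\ker H_Z$ and suppose both $\abs{\mathbf{x}}^c_\mathscr{B}<d_2$ and $\abs{\mathbf{x}}^r_\mathscr{C}<d_1^\transp$. The goal is to show $\mathbf{x}\in\rs H_X$, i.e.\ that $\mathbf{x}$ is a stabilizer. We view $\mathbf{x}|_\mathscr{B}$ as an $n_1\times n_2$ matrix $X_B$ and $\mathbf{x}|_\mathscr{C}$ as an $m_1\times m_2$ matrix $X_C$. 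In this picture $H_Z\mathbf{x}=0$ reads
\[
X_B H_2^\transp + H_1^\transp X_C = 0 ,
\]
and an $X$-stabilizer generated by an $m_1\times n_2$ matrix $Y$ contributes $(H_1^\transp Y,\; Y H_2^\transp)$ to $(X_B,X_C)$.

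The key step is to clean the $\mathscr{C}$-part using the row condition. Let $R$ be the set of rows of $X_C$ that are supported, so $\abs{R}<d_1^\transp$. Since every nonzero vector of $\ker H_1^\transp$ has weight at least $d_1^\transp$, the columns of $H_1^\transp$ indexed by $R$ are linearly independent. Hence the restriction map $\rs H_1^\transp\to \F_2^{R}$, or more usefully its dual, has the needed surjectivity: there is a left inverse $P$ with $P H_1^\transp|_R=\ident$. Equivalently, every vector supported on $R$ lies in $\rs H_1$ modulo nothing.

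Concretely, the plan is to find an $m_1\times n_2$ matrix $Y$, supported on the rows $R$, such that $Y H_2^\transp = X_C$. Let $S$ be the set of columns of $X_B$ that are supported, so $\abs{S}<d_2$. For the same reason as above, the columns of $H_2$ indexed by $S$ are linearly independent, and so $H_2|_S$ has full column rank and $H_2^\transp|_S$ is surjective onto $\F_2^{m_2}$ restricted to the relevant image.

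I expect this to be the main obstacle: showing that the rows of $X_C$ lie in $\im H_2$. To do it, I will apply $P$ to the commutation relation, which gives
\[
X_C|_R = P\, X_B H_2^\transp .
\]
Thus each row of $X_C$ equals (row vector)$\cdot H_2^\transp$, and we may take $Y|_R = P X_B$ (using $\rs$ conventions). Then $\mathbf{x}' = \mathbf{x} + (\text{stabilizer from } Y)$ has $X_C'=0$. The commutation relation now gives $X_B' H_2^\transp = 0$, so every row of $X_B'$ lies in $\ker H_2$.

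The next step must control the columns of $X_B'$, since adding $H_1^\transp Y$ may enlarge the column support. This is handled by choosing $P$ so that $PX_B$ is supported only on the columns $S$. That choice works because $P$ acts on the left, so it preserves the column support. Consequently $X_B'$ is supported on at most $\abs{S}<d_2$ columns. Each row of $X_B'$ is then a codeword of $\ker H_2$ of weight below $d_2$, so $X_B'=0$. This shows $\mathbf{x}\in\rs H_X$, a contradiction.

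The care needed in constructing $P$, a left inverse of $H_1^\transp|_R$ viewed as a map on $\F_2^{R}$, is where I would double-check the dimensions and the transposes.
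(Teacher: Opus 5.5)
Your proof is correct, and it takes a different route from the paper's. The paper works at the level of homology. It uses the K\"unneth decomposition to split a nontrivial class into bit-type and check-type components. In the bit-type case it reduces each column of $X_B$ modulo $\rs H_1$, i.e.\ applies functionals $\phi\in\ker H_1$ on the left. This yields a nonzero codeword $\phi X_B\in\mathcal{C}_2$ that is invariant under adding $X$-stabilizers and supported inside the occupied columns. The check-type case is handled the same way, with rows taken modulo $\rs H_2^\transp$.

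You instead prove the contrapositive by explicit cleaning. The bound $|R|<d_1^\transp$ makes $H_1^\transp|_R$ full column rank. Its left inverse $P$ turns $X_BH_2^\transp=H_1^\transp|_R\,X_C|_R$ into $X_C=YH_2^\transp$ with $Y|_R=PX_B$. Because $P$ acts on the left, $Y$ and $H_1^\transp Y$ stay inside the column set $S$. The cleaned $X_B'$ therefore has rows in $\ker H_2$ of weight below $d_2$, so it vanishes.

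What your route buys is self-containedness. It needs only elementary linear algebra and no K\"unneth theorem. It avoids having to make precise what the ``sector components'' of a class are. It also produces an explicit stabilizer witnessing triviality.

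What the paper's route buys is a sector-resolved refinement. Any representative of a class with nonzero bit-type component has $\abs{\mathbf{x}}_\mathscr{B}^c\ge d_2$, whatever its check-type support. This is the form quoted in the main text right after the proposition. Your contrapositive yields exactly the stated disjunction, which is all that is claimed.

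Two minor presentational points:
\begin{itemize}
\item $PX_B$ is supported on $S$ for every left inverse $P$, so no choice of $P$ is involved.
\item The sentences about vectors lying ``in $\rs H_1$ modulo nothing'' and about surjectivity of $H_2^\transp|_S$ are garbled. They can be dropped, since the only fact about $S$ you use is that a nonzero codeword of $\mathcal{C}_2$ supported on $S$ would have weight below $d_2$.
\end{itemize}
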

    
    \begin{proof}
        From the K\"unneth decomposition \eqref{Xkunneth}, the logical $\bar{X}$ space is the direct sum of the \Bsectorname{} and \Csectorname{} logical spaces. Thus, any nontrivial logical $\bar{X}$ has a nonzero component in at least one of these two sectors.

        First suppose its \Bsectorname{} component is nonzero. From \eqref{Xkunneth}, this component is a nonzero element of
        \begin{align}
            \frac{\F_2^{n_1}}{\im H_1^\transp} \otimes \ker H_2
            =
            \frac{\F_2^{n_1}}{\rs H_1} \otimes \mathcal{C}_2 \, .
        \end{align}
        Viewing the \Bsectorname{} qubits as an $n_1\times n_2$ array, consider each column modulo $\rs H_1$. Since the \Bsectorname{} logical component is nonzero, at least one component with respect to a basis of $\F_2^{n_1}/\rs H_1$ gives a nonzero codeword of $\mathcal{C}_2$. By definition of $d_2$, this codeword is supported on at least $d_2$ columns. Adding an $X$-stabilizer changes each \Bsectorname{} column only by an element of $\rs H_1$, so these column classes are invariant under stabilizer deformation. Therefore,
        \begin{align}
            \abs{\mathbf{x}}_\mathscr{B}^c \geq d_2 \, .
        \end{align}

        Otherwise, the \Bsectorname{} component is zero and the \Csectorname{} component must be nonzero. Again from \eqref{Xkunneth}, this component is a nonzero element of
        \begin{align}
            \ker H_1^\transp \otimes \frac{\F_2^{m_2}}{\im H_2}
            =
            \mathcal{C}_1^\transp \otimes \frac{\F_2^{m_2}}{\rs H_2^\transp} \, .
        \end{align}
        Viewing the \Csectorname{} qubits as an $m_1\times m_2$ array, consider each row modulo $\rs H_2^\transp$. At least one component with respect to a basis of $\F_2^{m_2}/\rs H_2^\transp$ gives a nonzero codeword of $\mathcal{C}_1^\transp$, which has weight at least $d_1^\transp$. Adding an $X$-stabilizer changes each \Csectorname{} row only by an element of $\rs H_2^\transp$, so these row classes are invariant under stabilizer deformation. Therefore,
        \begin{align}
            \abs{\mathbf{x}}_\mathscr{C}^r \geq d_1^\transp \, .
        \end{align}
        This proves \eqref{eq:B sector weights app}.

        The argument for $\bar{Z}$ logicals is analogous. From \eqref{Zkunneth}, a nonzero \Bsectorname{} component is an element of
        \begin{align}
            \ker H_1 \otimes \frac{\F_2^{n_2}}{\im H_2^\transp}
            =
            \mathcal{C}_1 \otimes \frac{\F_2^{n_2}}{\rs H_2} \, ,
        \end{align}
        so the same argument applied to the \Bsectorname{} rows gives
        \begin{align}
            \abs{\mathbf{z}}_\mathscr{B}^r \geq d_1 \, .
        \end{align}
        If instead its \Bsectorname{} component is zero, its nonzero \Csectorname{} component lies in
        \begin{align}
            \frac{\F_2^{m_1}}{\im H_1} \otimes \ker H_2^\transp
            =
            \frac{\F_2^{m_1}}{\rs H_1^\transp} \otimes \mathcal{C}_2^\transp \, ,
        \end{align}
        and applying the same argument to the \Csectorname{} columns gives
        \begin{align}
            \abs{\mathbf{z}}_\mathscr{C}^c \geq d_2^\transp \, .
        \end{align}
        This proves \eqref{eq:C sector weights app}.
    \end{proof}

\section{Graph theory}

\subsection{Adjacency matrix}

\begin{figure}[htbp]
    \centerline{\includegraphics[scale=1]{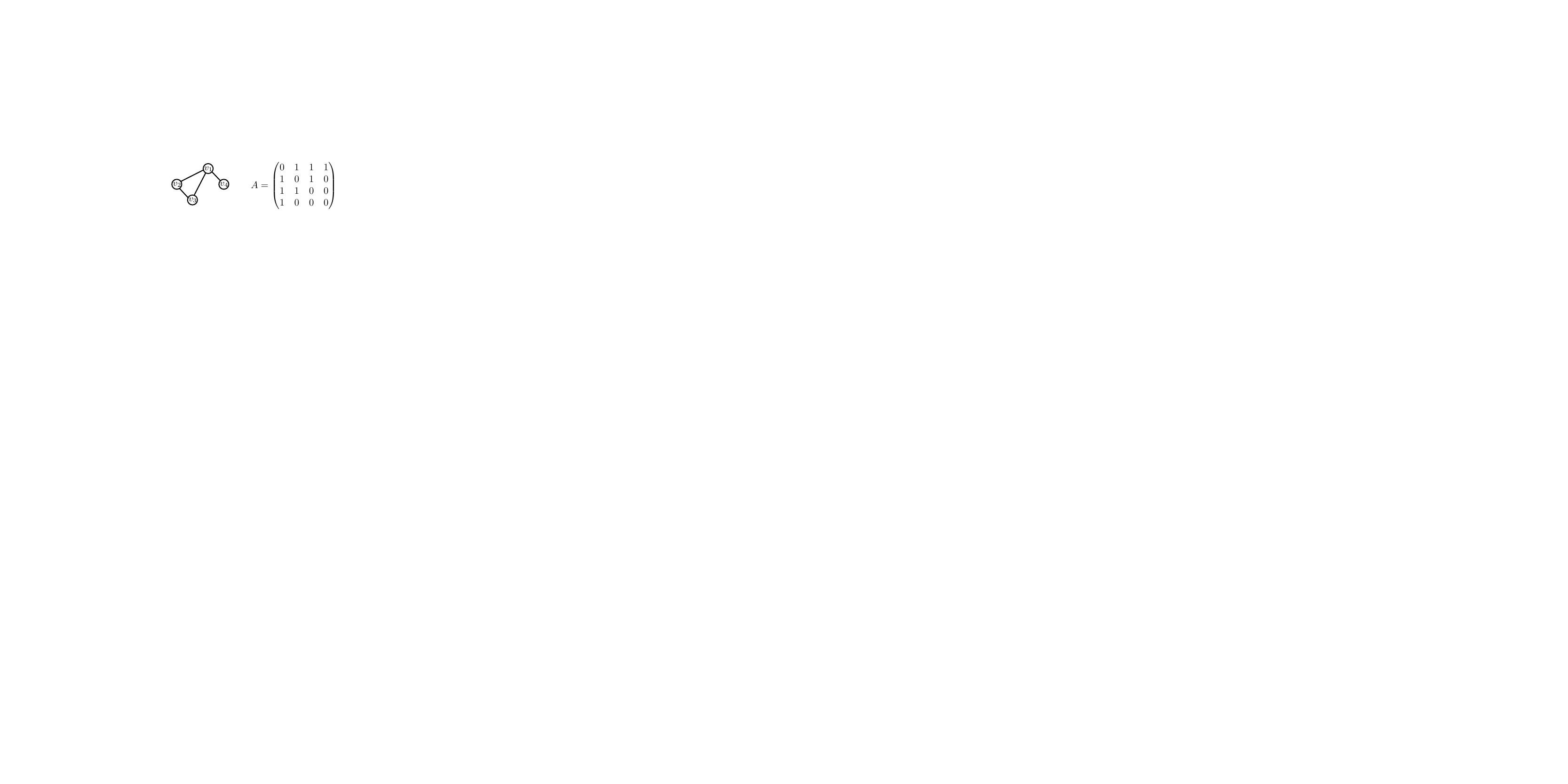}}
    \caption{Example adjacency matrix. Rows and columns are indexed by $v_1, v_2, v_3, v_4$ respectively.}
    \label{adjmtx}
\end{figure}

\begin{defn}[Adjacency matrix]
    Let $G=(V,E)$ be a simple graph with vertices $\{v_i \in V\}$ for $i=1,\dots,\abs{V}$. Its adjacency matrix $A(G)$ is a $\abs{V} \times \abs{V}$ matrix with entries
    \begin{align}
        A_{ij} = \begin{cases} 
            1 & \text{if } (v_i,v_j)\in E \\
            0 & \text{otherwise} \, .
        \end{cases}
    \end{align}
\end{defn}

\subsection{Vertex coloring}

\begin{figure}[htbp]
    \centerline{\includegraphics[scale=1]{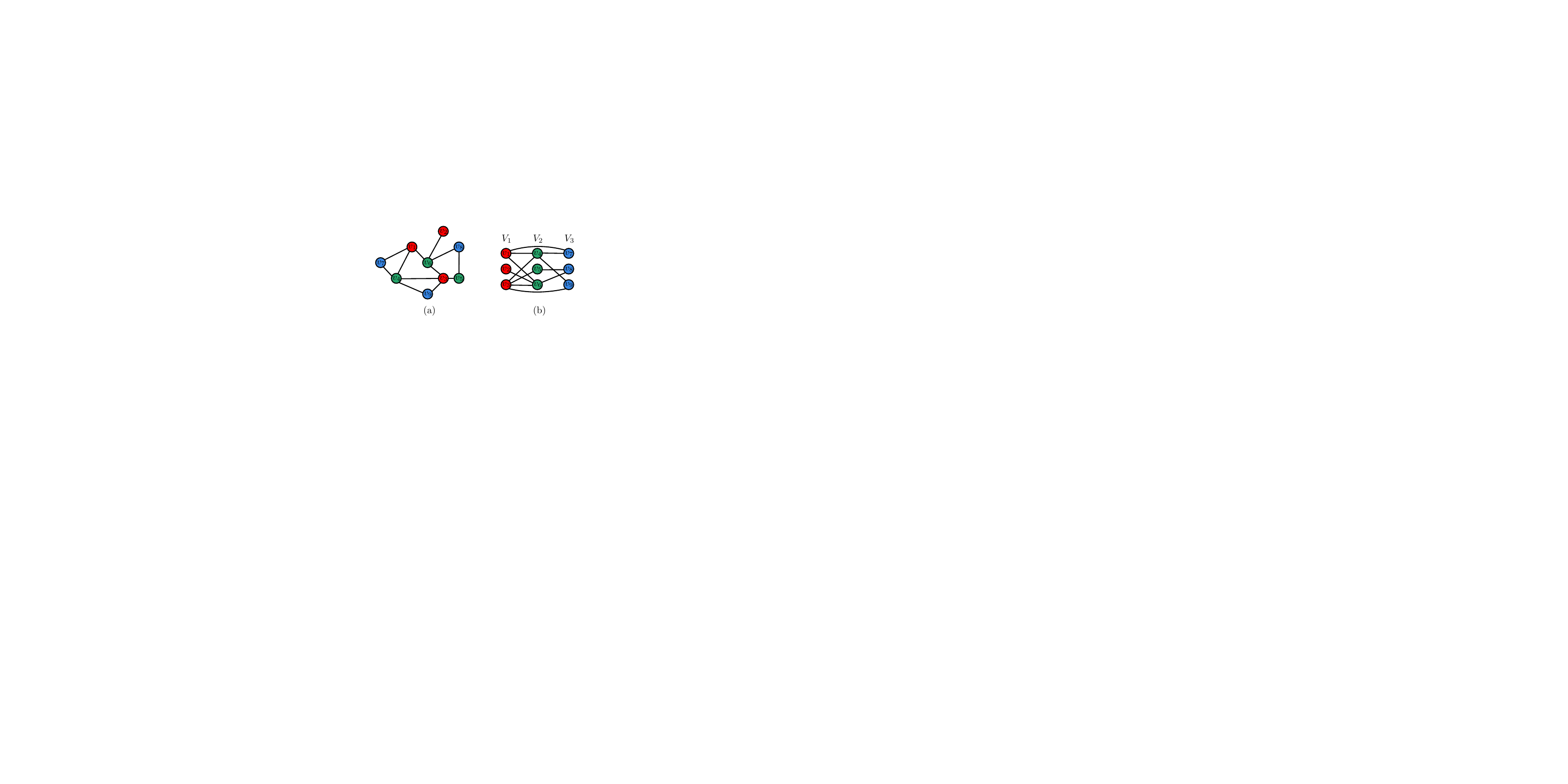}}
    \caption{Example of \textbf{(a)} 3-colorable graph; \textbf{(b)} Reorganization of vertices to show disjoint structure.}
    \label{coloring}
\end{figure}

A graph $G=(V, E)$ can be $\chi$-colored if $V$ can be partitioned into $\chi$ disjoint sets $V_1, \dots, V_\chi$ (i.e., $V=V_1 \cup \cdots \cup V_\chi$ and $V_i \cap V_j = \emptyset$ for $i, j = 1, \dots, \chi$; $i\neq j$) such that no $(v_1, v_2) \in E$ exists for some $v_1, v_2 \in V_i$, $i\in 1, \dots, \chi$ (i.e., no edg
e connects two vertices in the same color). See Fig~\ref{coloring} as an example.

\subsection{Bipartite Double-Cover}
\label{app:double cover}

\begin{figure}[htbp]
    \centerline{\includegraphics[scale=1]{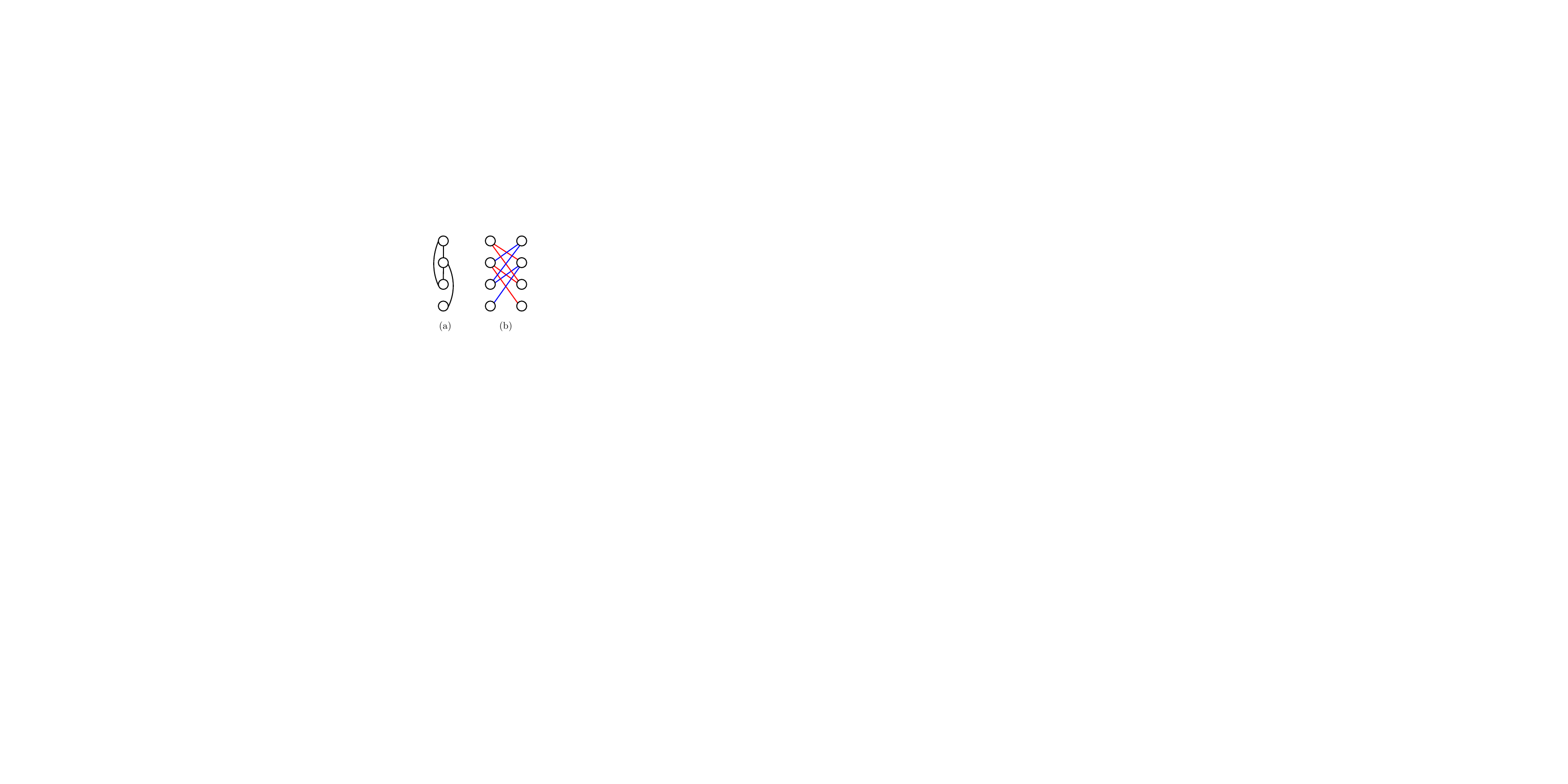}}
    \caption{ Example of \textbf{(a)} undirected graph $G$; \textbf{(b)} $G$'s bipartite double cover $G_\text{bip}$. Red edges correspond to edges $(u_A, v_B)$; blue edges correspond to edges $(u_B, v_A)$.}
    \label{doublecov}
\end{figure}

An undirected graph $G=(V, E)$'s bipartite double-cover, $G_\text{bip} = (A \sqcup B, E_\text{bip})$ is a bipartite covering of $G$, constructed as follows. Create two sets of vertices $A$ and $B$ with $|A|=|B|=|V|$. Then, for each $(u, v)\in E$, add edges $(u_A, v_B)$ and $(u_B, v_A)$ to $E_\text{bip}$, where a vertex $v_A \in A$, $v_B \in B$ corresponds to $v \in V$ in $A$ and $B$ respectively. An example construction is shown in Figure~\ref{doublecov}. If the input graph $G$ has maximum degree $\Delta$, then $G_{\rm bip}$ also has maximum degree $\Delta$. In addition, any $\ell$-cycle (cycle of length $\ell$) in $G$ induces an $\ell$-cycle in $G_{\rm bip}$ when $\ell$ is even and a $2\ell$-cycle when $\ell$ is odd. Conversely, any cycle in $G_{\rm bip}$ projects to a closed walk of the same length in $G$, which contains a cycle of length no greater. As a consequence, the minimum cycle length, or girth, of $G_{\rm bip}$ is at least that of $G$.

\section{Tanner codes}
\label{app:Tanner codes}

The Tanner construction \cite{Tanner_1981} is a procedure to build a classical linear code from a regular base graph $G$ and a local code $\hat{\mathcal{C}}$. Sipser and Spielman \cite{Sipser_1996} utilized this approach to construct the first explicit (non-probabilistic) family of classical LDPC codes with asymptotically good parameters $[n,\mathrm{\Theta}(n),\mathrm{\Theta}(n)]$.

\begin{defn}[Tanner code \cite{Tanner_1981}]
\label{defn:Tanner code}
    Let $G=(V,E)$ be a $\Delta$-regular simple graph called the \emph{base graph}, and let $\hat{\mathcal{C}}$ be a linear code of length $\Delta$ called the \emph{local code}. The Tanner code $\mathcal{T}(G,\hat{\mathcal{C}})$ is defined as
    \begin{align}
        \mathcal{T}\big(G,\hat{\mathcal{C}}\big) = \left\{ c\in\mathbb{F}^{\abs{E}}_2 : \forall v \in V \,,\, c|_{E(v)} \in \hat{\mathcal{C}} \right\} \, ,
    \end{align}
    where $E(v)$ denotes the edges incident to $v$.
\end{defn}

In other words, given a base graph where each vertex has degree $\Delta$, physical bits are placed on the edges of $G$, and parity checks of the local code $\hat{\mathcal{C}}$ are placed on the vertices. The local parity checks of $\hat{\mathcal{C}}$ enforce that all codewords of $\mathcal{T}(G,\hat{\mathcal{C}})$ are codewords of $\hat{\mathcal{C}}$ when restricted to any vertex. When the local code is a single-parity-check code with $h=\mathbf{1}_\Delta$, then the Tanner code is also known as the cycle code on $G$. Definition \ref{defn:Tanner code} requires $G$ to be a simple graph, but one can relax this definition to allow $G$ to be a Tanner graph of another code. The Tanner construction can then be viewed as ``upgrading'' each parity check of some input code to a local code, typically done as a way to boost the code distance at the cost of a lower rate.

\begin{thm}[Expander code \cite{Sipser_1996}]
\label{thm:expander code}
    Let $G$ be a simple $\Delta$-regular graph with second-largest eigenvalue $\lambda$, and let $\hat{\mathcal{C}}$ be a linear code of length $\Delta$, rate $\hat{r}$ and relative distance $\hat{\delta}$. Then the Tanner code $\mathcal{T}(G,\hat{\mathcal{C}})$ has rate at least
    \begin{align}
        r \geq 2\hat{r}-1
    \end{align}
    and relative distance at least
    \begin{align}
        \delta \geq \left(\frac{\hat{\delta}-\lambda/\Delta}{1-\lambda/\Delta} \right)^2 \, .
    \end{align}
\end{thm}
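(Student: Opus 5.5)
The plan follows the standard Sipser--Spielman argument: a constraint count for the rate, and the expander mixing lemma for the distance. Write $N=\abs{V}$, so that $\abs{E}=N\Delta/2$.

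\textbf{Rate.} I will bound the number of linear constraints directly. Each vertex $v$ requires $c|_{E(v)}\in\hat{\mathcal{C}}$. This amounts to at most $\Delta(1-\hat r)$ independent parity checks, namely the rows of a full-rank parity-check matrix of $\hat{\mathcal{C}}$. Summing over all vertices gives at most $N\Delta(1-\hat r)=2\abs{E}(1-\hat r)$ constraints. Hence
\begin{align}
\dim\mathcal{T}(G,\hat{\mathcal{C}})\geq \abs{E}-2\abs{E}(1-\hat r)=\abs{E}(2\hat r-1),
\end{align}
which gives $r\geq 2\hat r-1$. Possible redundancy among the constraints only helps this bound.

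\textbf{Distance.} Fix a nonzero codeword $c$ and let $F\subseteq E$ be its support. Let $S\subseteq V$ be the set of vertices incident to at least one edge of $F$.
\begin{itemize}
\item For every $v\in S$, the local view $c|_{E(v)}$ is a nonzero codeword of $\hat{\mathcal{C}}$. So $v$ is incident to at least $\hat\delta\Delta$ edges of $F$.
\item Both endpoints of every edge of $F$ lie in $S$. Counting edge--endpoint incidences therefore gives $2\abs{F}\geq\hat\delta\Delta\abs{S}$.
\item Since $F$ is contained in the edge set induced on $S$, the expander mixing lemma gives
\begin{align}
2\abs{F}\leq 2e(S)\leq \frac{\Delta\abs{S}^2}{N}+\lambda\abs{S}\Big(1-\frac{\abs{S}}{N}\Big).
\end{align}
\end{itemize}
Combining the two bounds on $2\abs{F}$ and dividing by $\Delta\abs{S}$ gives
\begin{align}
\hat\delta\leq \frac{\abs{S}}{N}+\frac{\lambda}{\Delta}\Big(1-\frac{\abs{S}}{N}\Big),
\qquad\text{so}\qquad
\frac{\abs{S}}{N}\geq\frac{\hat\delta-\lambda/\Delta}{1-\lambda/\Delta}.
\end{align}

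\textbf{Conversion to relative distance.} Using $2\abs{F}\geq\hat\delta\Delta\abs{S}$ once more,
\begin{align}
\frac{\abs{F}}{\abs{E}}=\frac{2\abs{F}}{N\Delta}\geq \hat\delta\,\frac{\abs{S}}{N}\geq \hat\delta\cdot\frac{\hat\delta-\lambda/\Delta}{1-\lambda/\Delta}.
\end{align}
Because $\hat\delta\leq 1$, we have $\hat\delta(1-\lambda/\Delta)\geq\hat\delta-\lambda/\Delta$, i.e.\ $\hat\delta$ is at least $\frac{\hat\delta-\lambda/\Delta}{1-\lambda/\Delta}$. Substituting this into the first factor yields the stated squared bound.

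\textbf{Main obstacle.} The delicate step is applying the expander mixing lemma with the correct constants. One must be precise about whether $e(S)$ counts ordered or unordered pairs, and must use the refined form with the $(1-\abs{S}/N)$ factor rather than the cruder bound $\Delta\abs{S}^2/N+\lambda\abs{S}$; otherwise the clean ratio $(\hat\delta-\lambda/\Delta)/(1-\lambda/\Delta)$ does not come out. I would first verify that form from the spectral decomposition of the adjacency matrix, using the test vector $\mathbf{1}_S-\frac{\abs{S}}{N}\mathbf{1}$. The remaining steps are bookkeeping.
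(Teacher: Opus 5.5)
Your approach is the standard Sipser--Spielman argument: count local constraints for the rate, and use the one-sided expander-mixing (Alon--Chung) bound for the distance. The paper relies on exactly this argument, since it gives no proof of its own and only cites \cite{Sipser_1996}. The rate bound and the incidence count $2\abs{F}\geq\hat\delta\Delta\abs{S}$ are correct. So is the mixing inequality with the $(1-\abs{S}/N)$ factor, which needs only the signed second eigenvalue via your test vector. Together they give $\abs{S}/N\geq\rho:=(\hat\delta-\lambda/\Delta)/(1-\lambda/\Delta)$, provided $\lambda<\Delta$ (i.e.\ $G$ connected); state this, since you divide by $1-\lambda/\Delta$.

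The gap is in the final conversion, and it uses hypotheses the statement does not contain.
\begin{itemize}
\item Your inequality $\hat\delta(1-\lambda/\Delta)\geq\hat\delta-\lambda/\Delta$ is equivalent to $(1-\hat\delta)\lambda\geq0$. It therefore does not follow from $\hat\delta\leq1$ alone; it needs $\lambda\geq0$.
\item Replacing $\hat\delta\rho$ by $\rho^2$ needs $\rho\geq0$, i.e.\ $\hat\delta\geq\lambda/\Delta$.
\end{itemize}
The squared bound genuinely fails without these hypotheses.
\begin{itemize}
\item Take $G=K_4$, whose eigenvalues are $3,-1,-1,-1$, so $\lambda=-1$, and let $\hat{\mathcal{C}}$ be the $[3,2,2]$ single-parity-check code. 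The formula gives $\rho^2=(3/4)^2=9/16$. But the Tanner code is the cycle code of $K_4$, with minimum distance equal to the girth $3$, so $\delta=3/6=1/2$.
\item If $\hat\delta<\lambda/\Delta$, then $\rho<0$ and $\rho^2$ can exceed $1$. For example, odd cycles $C_N$ with $N\geq 9$ and local code $\mathbb{F}_2^2$ have $\delta=1/N$.
\end{itemize}

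What your argument actually proves for every connected $G$ is the unsquared bound $\delta\geq\hat\delta\max\{0,\rho\}$, which is tight in the $K_4$ example. The stated squared form follows from it exactly under the extra hypotheses $0\leq\lambda<\Delta$ and $\hat\delta\geq\lambda/\Delta$. This is the intended regime: the paper itself remarks that the bound is useful when $\hat\delta>\lambda/\Delta$. Moreover, $\lambda\geq0$ fails only for $G=K_{\Delta+1}$. Any connected non-complete graph contains an induced path on three vertices, so $\lambda\geq0$ by interlacing. It also holds automatically if $\lambda$ is read as the second-largest eigenvalue in absolute value. State these hypotheses explicitly rather than presenting the last step as a consequence of $\hat\delta\leq1$.
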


Note that the rate and relative distance are bounded away from zero when $\hat{r}>1/2$ and $\hat{\delta}>\lambda/\Delta$. For certain spectral expander graph families \cite{Lubotzky_1988, Margulis_1988} with an upper bound on $\lambda$, choosing appropriate local codes satisfying the above inequalities leads to family of asymptotically good LDPC codes by Theorem \ref{thm:expander code}.

\begin{lem}\label{lem:Tanner code max puncture}
    If the classical input code is a Tanner code $\mathcal{T}(\mathcal{G},\hat{\mathcal{C}})$ with local parity-check matrix $\hat{h}$ and parameters $[\Delta,\hat{k},\hat{d}]$, then the number of remaining neighboring $X$-checks of a \Csectorname{} qubit in the HGP code with itself obeys $\Delta'\geq \rank\hat{h} = \Delta-\hat{k}$.
\end{lem}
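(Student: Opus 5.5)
The plan is to mirror the argument given for the repetition-code case just before Figure~\ref{fig:punctured reduction}, but at the level of a single local code. First I will fix the local picture. In $\mathrm{HGP}(H,H)$ with $H$ the parity-check matrix of $\mathcal{T}(\mathcal{G},\hat{\mathcal{C}})$, a \Csectorname{} qubit is indexed by $(c_1,c_2)\in C_1\times C_2$. Its neighboring $X$-checks are $(c_1,b_2)$ with $b_2$ ranging over the bits of the second input code checked by $c_2$. In the Tanner-code setting the checks are grouped by vertices. The relevant objects are therefore the $\Delta$ edges $E(v)$ at a vertex $v$, together with the rows of $\hat{h}$ placed on $v$. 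A check $c_2$ is one row of $\hat{h}$ at $v$, and its support lies inside $E(v)$. After puncturing, the remaining neighboring $X$-checks correspond to the unpunctured edges of $E(v)$ in the support of $c_2$, or, at the cluster level, to the unpunctured edges of $E(v)$. So $\Delta'$ equals $\Delta$ minus the number of punctured edges at $v$, and the statement reduces to bounding the number of edges of $E(v)$ that can be punctured.

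The key step is a local version of Lemma~\ref{lem:cannot puncture check}. By assumption the punctured bits are informational, so by that lemma $\rank H'=\rank H$. Let $P\subseteq E(v)$ be the punctured edges at $v$, and suppose $|P|>\Delta-\rank\hat{h}=\hat{k}$, so that $|E(v)\setminus P|<\rank\hat{h}$. Restricted to the surviving $\Delta-|P|$ columns, the $\rank\hat{h}$ independent local rows at $v$ span a space of dimension at most $\Delta-|P|<\rank\hat{h}$. Hence some nonzero combination $\hat{\mathbf{c}}$ of local checks at $v$ is supported entirely on $P$.

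I then need to turn $\hat{\mathbf{c}}$ into a contradiction. The combination $\hat{\mathbf{c}}$ is a nonzero element of $\rs H$ supported only on punctured, hence informational, bits. As in the proof of Lemma~\ref{lem:cannot puncture check}, a nonzero vector of $\rs H$ cannot be supported only on informational coordinates: in the form $H=(A\mid\ident)$, any nonzero row combination is nonzero on the last $n-k$ columns. This gives the contradiction, so $|P|\le\hat{k}$ and $\Delta'\ge\Delta-\hat{k}=\rank\hat{h}$. One technicality is that the global checks at $v$ need to be independent for the local rank to carry over. I would justify this by taking $\hat{h}$ with full row rank, so that the $\rank\hat{h}$ local rows at $v$ are linearly independent as vectors of $\F_2^{\abs{E}}$.

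The main obstacle is the bookkeeping between the two notions of $\Delta'$. In the Tanner setting the reduction of Section~\ref{qTanner transformation} acts on whole vertex clusters, and $\Delta'$ should count surviving edges at the vertex, not at a single row of $\hat{h}$. I would state this interpretation explicitly first. I would also check that the single-parity-check case recovers $\Delta'\ge1$, which matches the earlier claim that $\Delta'=0$ is impossible.
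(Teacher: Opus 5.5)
Your proposal is correct and takes essentially the same route as the paper. Both assume more than $\hat{k}$ edges at the vertex are punctured, note that the $\rank\hat{h}$ independent local rows become dependent on the fewer than $\rank\hat{h}$ surviving columns, and derive a contradiction with Lemma~\ref{lem:cannot puncture check}. The one difference is that you unfold that lemma's argument, exhibiting an explicit nonzero element of $\rs H$ supported on punctured informational bits, rather than citing the global rank drop $\rank H_2'<\rank H_2$. Your full-row-rank technicality is not actually needed, since linear independence of the local rows is automatically preserved under the coordinate embedding $\F_2^{\Delta}\hookrightarrow\F_2^{\abs{E}}$.
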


\begin{proof}
    Recall that for the quantum Tanner transformation, instead of a single \Csectorname{} qubit, the local transformation acts on a set of $\hat{m}=\rank\hat{h}=\Delta-\hat{k}$ local \Csectorname{} qubits and takes the form $w=\hat{g}$, where $\hat{g}$ is a generator matrix for $\hat{\mathcal{C}}$ that satisfies $\hat{g}\hat{h}^\transp=0$. Suppose, on the contrary, that we have deleted more than $\hat{k}$ neighboring $X$-checks so that $\Delta'<\hat{m}=\rank\hat{h}$. As a consequence, we now have more local \Csectorname{} qubits than remaining local $X$-checks, which implies that $\rank\hat{h}'<\rank\hat{h}$. However, $\rank\hat{h}'<\rank\hat{h}$ then implies that $\rank{H'_2}<\rank{H_2}$, in contradiction with Lemma \ref{lem:cannot puncture check}.
\end{proof}